\documentclass{article}
\usepackage[hidelinks]{hyperref}
\usepackage{amssymb,ComplexSystems}
\usepackage{graphicx,booktabs,mathtools,longtable}

\newtheorem{algo}{Algorithm}

\newtheorem{remark}{Remark}

\newtheorem{background}[proposition]{Proposition}

\newcommand{\FF}{\mathbb{F}_2}
\newcommand{\ZZ}{\mathbb{Z}}

\newcommand{\ANF}{\textsc{anf}}
\newcommand{\Inc}{\operatorname{Inc}}
\newcommand{\conv}{\ast}

\newcommand{\Xor}{\bigoplus}
\newcommand{\xor}{\oplus}

\newcommand{\bin}[2]{\binom{#1}{#2}}
\newcommand{\Zet}{\zeta}

\providecommand{\mathscr}[1]{\mathcal{#1}}
\newcommand{\Zm}{\mathbb{Z}_N}

\newcommand{\lead}{\operatorname{lead}}

\newcommand{\src}[1]{}
\newcommand{\roadmap}[1]{}
\newcommand{\paperhead}[2]{}

\makeatletter
\renewcommand{\verbatim@font}{\normalfont\ttfamily\scriptsize}
\providecommand*{\toclevel@title}{0}
\makeatother

\renewcommand{\mycopyright}{\hbox{\textcopyright}}

\newcommand{\Needspace}[1]{\par\begingroup
\skip0=\lastskip\vskip-\skip0\penalty-100
\dimen0=#1\relax\advance\dimen0 by\skip0
\dimen2=\pagegoal\advance\dimen2 by-\pagetotal
\ifdim\dimen0>\dimen2
  \ifdim\dimen2>0pt\vfil\fi\break
\else\vskip\skip0\fi\endgroup}
\makeatletter
\let\UThreeSubsection\subsection
\renewcommand{\subsection}{\Needspace{9\baselineskip}\UThreeSubsection}
\let\UThreeSection\section
\renewcommand{\section}{\Needspace{10\baselineskip}\UThreeSection}
\makeatother

\newcommand{\EmbeddedFigure}[2]{%
  \resizebox{#1}{!}{%
    \hbox to \csname EFWidth#2\endcsname bp{%
      \vrule width0pt height\csname EFHeight#2\endcsname bp depth0pt
      \pdfliteral{q /EF#2 Do Q}\hss}}}
\begingroup
\catcode`\#=12 \catcode`\%=12 \catcode`\_=12
\catcode`\^=12 \catcode`\~=12 \catcode`\&=12 \catcode`\$=12
\pdfobj reserveobjnum
\expandafter\xdef\csname EF1O1\endcsname{\the\pdflastobj}
\pdfobj reserveobjnum
\expandafter\xdef\csname EF1O2\endcsname{\the\pdflastobj}
\pdfobj reserveobjnum
\expandafter\xdef\csname EF1O3\endcsname{\the\pdflastobj}
\pdfobj reserveobjnum
\expandafter\xdef\csname EF1O4\endcsname{\the\pdflastobj}
\pdfobj reserveobjnum
\expandafter\xdef\csname EF1O5\endcsname{\the\pdflastobj}
\pdfobj reserveobjnum
\expandafter\xdef\csname EF1O6\endcsname{\the\pdflastobj}
\pdfobj reserveobjnum
\expandafter\xdef\csname EF1O7\endcsname{\the\pdflastobj}
\pdfobj reserveobjnum
\expandafter\xdef\csname EF1O8\endcsname{\the\pdflastobj}
\pdfobj reserveobjnum
\expandafter\xdef\csname EF1O9\endcsname{\the\pdflastobj}
\pdfobj reserveobjnum
\expandafter\xdef\csname EF1O10\endcsname{\the\pdflastobj}
\pdfobj reserveobjnum
\expandafter\xdef\csname EF1O11\endcsname{\the\pdflastobj}
\pdfobj reserveobjnum
\expandafter\xdef\csname EF1O12\endcsname{\the\pdflastobj}
\pdfobj reserveobjnum
\expandafter\xdef\csname EF1O13\endcsname{\the\pdflastobj}
\pdfobj reserveobjnum
\expandafter\xdef\csname EF1O14\endcsname{\the\pdflastobj}
\pdfobj reserveobjnum
\expandafter\xdef\csname EF1O15\endcsname{\the\pdflastobj}
\pdfobj reserveobjnum
\expandafter\xdef\csname EF1O16\endcsname{\the\pdflastobj}
\pdfobj reserveobjnum
\expandafter\xdef\csname EF1O17\endcsname{\the\pdflastobj}
\pdfobj reserveobjnum
\expandafter\xdef\csname EF1O18\endcsname{\the\pdflastobj}
\pdfobj reserveobjnum
\expandafter\xdef\csname EF1O19\endcsname{\the\pdflastobj}
\pdfobj reserveobjnum
\expandafter\xdef\csname EF1O20\endcsname{\the\pdflastobj}
\pdfobj reserveobjnum
\expandafter\xdef\csname EF1O21\endcsname{\the\pdflastobj}
\pdfobj reserveobjnum
\expandafter\xdef\csname EF1O22\endcsname{\the\pdflastobj}
\pdfobj reserveobjnum
\expandafter\xdef\csname EF1O23\endcsname{\the\pdflastobj}
\pdfobj reserveobjnum
\expandafter\xdef\csname EF1O24\endcsname{\the\pdflastobj}
\pdfobj reserveobjnum
\expandafter\xdef\csname EF1O25\endcsname{\the\pdflastobj}
\pdfobj reserveobjnum
\expandafter\xdef\csname EF1O26\endcsname{\the\pdflastobj}
\immediate\pdfobj useobjnum \csname EF1O1\endcsname {<< /F2 \csname EF1O2\endcsname\space 0 R /F1 \csname EF1O9\endcsname\space 0 R >>}
\immediate\pdfobj useobjnum \csname EF1O2\endcsname {<< /Type /Font /Subtype /Type0 /BaseFont /GCWXDV+DejaVuSans-Oblique /Encoding /Identity-H /DescendantFonts [ \csname EF1O3\endcsname\space 0 R ] /ToUnicode \csname EF1O8\endcsname\space 0 R >>}
\immediate\pdfobj useobjnum \csname EF1O3\endcsname {<< /Type /Font /Subtype /CIDFontType2 /BaseFont /GCWXDV+DejaVuSans-Oblique /CIDSystemInfo << /Registry <41646f6265> /Ordering <4964656e74697479> /Supplement 0 >> /FontDescriptor \csname EF1O4\endcsname\space 0 R /W \csname EF1O6\endcsname\space 0 R /CIDToGIDMap \csname EF1O7\endcsname\space 0 R >>}
\immediate\pdfobj useobjnum \csname EF1O4\endcsname {<< /Type /FontDescriptor /FontName /GCWXDV+DejaVuSans-Oblique /Flags 96 /FontBBox [ -1016 -351 1660 1068 ] /Ascent 929 /Descent -236 /CapHeight 0 /XHeight 0 /ItalicAngle 0 /StemV 0 /FontFile2 \csname EF1O5\endcsname\space 0 R /MaxWidth 695 >>}
\immediate\pdfobj useobjnum \csname EF1O5\endcsname stream attr{/Length1 4524 /Filter [/ASCIIHexDecode /FlateDecode]}{
789cb5366d7054d775e7be73ef7bbb6f57abddd56a2581102bad5642111244429f50bc803e40924120212440a095562bf1a10ff42d5401092218db09
18c362546c888b5decb88e423a44318ec7499cda2d653c2d30198fa7c14e98e9a4833dee0c495a4d75d5f3566b179866a6f99177f7dc7bceb9e79eaf
77eedb030c001c3471b057969557401c9801583271dd95b59beaca866a06895e41b4a7b26eebdacccb4ffc9ce87aa2c39bea96e5edcdecfe80e8fb44
37b475057a99aefe23805246f495b6a101cf8b17a73288a67db63fd4dbd1f56f7bffe33d32d648fbcf7404fa7b41a301fc1ed1d68efda3a11fbcf3d3
62a2bf0010d73adb0341ade1674e00fd0aed177612c3fa9a207bfa2744a777760d8c78eeb37500166ed0fb7bda02ca87fc23a29710edea0a8cf462b7
4af1585619fe7707bada730bd7f511bd937cb2f7f6f40fc062f60440cc7eda5fd1dbd7de7bfa39258de84be483158cdc5861fe51884230457806e890
0c29c0ca2a6aea896bd887b9b9a8ecfcfe693803aec8fec1405fa01526027d5ddd30d1da17d803136d81ee7e9a3bdbfb681eeddb0f131ded3d8477f4
b5ef8389ce4037c974b6b712675fa03b0013fb033d1e631e200d5d81814e98e8de67707a3a025d30d137d84d9203a1ee0e9a3b0dfd8f7914f58bdbd8
291094db7c711e0a598ab1ce95e32f21a45096158b8a88dca2f0680c5f3db5a1f22071b6c088ea922e36a975b1df3c228351488e440eb0932816a139
f82359345106396968835e18809188670f517357e75e9e0bcf3d3f77ea51cbd1fcd744291139350f86976d5130f8bd51500906a240b50523513091c3
5eb60ca6e106c14fe175785f791f2e41118d2df00e3ba1e4d0ceab701c3e120afc1d5c6025ccc54a68f796ea5247c5317185f677d2d9f5a4e523789e
741a9aa6e1b032a6d44208de17376192464f84ff05bcc58ec21d78016e28ebe1f77014ebe1248d49e8e720ee301da4920d970d4b34005a0916618eb8
13195fc06118837ab8ac4eab2e762be2f5abece774d37e4b3edfc29d78804e5c802bec18f7f22b7c3d9c9cf7175be0a472944df296c818a3f73a0c17
780b7b5d75c12f0c5f89534b9e86e06d8261b8c956b26378823c1b333c1077e0a656c597cd7ba58d6301c503046fc255c8c1309d8fc4a286e0821222
5bbf274f6e62196491b67eba76f46580f86baae0a83058eab14f29be0dc129ffe646cf074da9394b1f233d76cd3305b55331a39ee9b9b9da46be5034
4d89e429f499a6b8cffbe91fdbfc346769756da367ea87e56551ade52d65c4ab6b24d4a0884dfcf2b29cf9db4bf5473587841551e9dca4284d549ddf
f1af62c21d0fc2ed8e77111227e2ddf1a5b1d61833dbee84edee0a578ccb1d1fb320ce196b33513826b7754182d92d9217b895f8858bec0feeddbeef
2829719438128cc959f27558b6eac1fddbf74b9c2525cbc919cd2e3ed3e837bf34a55d6d8c65acd91fdf18db646f723439f7c5eeb3ef738cc58ed9c7
1c7a33c3d4bcc2821519deb8149610e7c55c96cdd0cb52f3c83b55e121c6cfffe4f0f0a07f6bf1e13b83394307ef8edda0927effc37f7d5b2c9bfdfc
b5a7ba06665f8a1d9a929decf099e0ec0971e7fad4857bc63ddd35f71b5e45b5b018aafdc98e6afbc6981a1754b38dbc66d141d7668f9d3940b74142
9ac7fee03d23208ac37eefc17f3fb86f970f3e5fceaaa79cf58d3f01cfdcbba4cb43c35fec282a6e625a2ef3a6a99a8dc5bbdcf97985458479d3320a
56388b56b3fc3c37aff2ad3d5bdb78b1b061d7b9c99aef357b6b5f6bbafbc5f858fdafbfb57192175c5bbe7c75c9b67f1f3f772637f75a6aeadd0fdf
64bec1da2d5e7a5333b2815f237f1de083dffa375974c56a16e95e2e548da349a4a77bd3885cec5152528537ddeb8a53e29dc4f0193ba5a94eb4d6a4
604dfcc194cd6987d30f65a465f8cc90ea75a6832d31dde7cdb0dfae9e32d7574fe9f53baaa72cc66435a61863b21953ac31d98dc951bfa3f1c7e09d
7bb7b8e9de6dbff970e2eec427126313793365c969bc64672455f7ee51aa3e773cf2cea9240ccc58e60b601ea3b9298d5d6bc9eccdbc9889cd7ef3a5
cc539947320f67f266466ffee174ae662c92c77897329fd94c2a867cfccf23235bce9696941d19da76b6a0a462f6efabfe61ec8d77867aaadeea187e
7949ff34cb9956ae3e7fa1be76d327ddcf9edbb271cbafd89e40f0d31bafca5bc39beada5a67df54757a8f47e993dd2a5e822498f4af369b34bab1aa
8ec299e0544482338920512424384b9d688e633d09f170c83c6e712d484a4ca07be2b45a74b34a373c16626317d8df7b28a1b66856296fb146deecef
2644b2b4ecfe8355efddcf7b3449ff9b1aba1b8971c6ddb0242989711d89a3eed124d1ccb4548ce46325d322a5555814e72dca8fcb476c4d96cf0dab
ea935bdf5a3c75f9bbaa3a7c61d1f9d5d37fc15b4ece5e4acb8df1974e1ef9664251aad27292ad963fa35b3077c3b8f9bc052cec907f3dcf169c290c
b3158559b2291f7ab65967c662d1359366ce36993491ad6a2acf46ae696aa9ae71c64d7048283a65cbf8ba99aca05a8d62b2d647c37ea8665423f67f
a2c08c4fc2bdbc8468d4a6f96f41044c9f3d4e53124c9c41b3ffb60095695c0837737397f81a64b10c25137d2243cd326599b3c8cb52286685bc4014
a8455a89bed252051b58252f17e56aa5b641afb2344003db2a1ad4466d9b390421165242d82e426a8739a4872c4330c886f8a0185487f521cb093cce
bf2126d4e3dab7ccc7f5a72d6721cc5ec073fcb408abe7f457e015cb8ff0476a95999b35555fc0335836cfd232f44256cc8bb542bd826de01bb40abd
91ed619d5aa73ecc47f467f909ed983ec95ee4e7b5b03ead7da02d6e66cdeceaa918d66c56bcc8bc66961fc7f8cd59c956defa4c06ef2a99f2e35bbf
64c3bc65f6e3d95fb0b76599b25171ca7dec6cb4d7509ace27f53e65db1dbbea77b0d8146908fef990edee97eb1f3e99cd8fa9303d13f9da7fd5f3d0
7f58975c443dddb13f7cf25f3f8ea98020f5b80f3f666efc171aea5f2778869a899350c443b04bb8e666d00b47f9db46e57cf51c847f61136c56f938
e2931937423674529fa2809d7a1dea38d0a6c4447b420db61b9d10377aeae5911ec9c019b8899ac715b0b18a288e90419df53cce1fc20524b283515c
8574761ad6d1ff712f8c421fec810eb23e40dfe425d40965d19a07cb69e413d64a121e584b3203f4cf3c40d2ed10802e584adc0dd04df2b984ad81fd
343cd4037da9ab3f42b5d3da4e6786680e92a4feffb05af895d57ab23444b6f6d2996e9236fc08d0993fcd6219617be95c030c92441bc90622dada23
270291883ca4a59be65e926925bd7b48ce43e77bc87a20b2f7b89eba88967ed814953f40dcf63f22e3794caa21e2613fd13d11ab79e4673e143c72fa
cbb3398f9d55a823ff1dc121aa8affeb3147ea55a1375d0015a4651bec98568ef8e77e2071ca87dfcfc337c3f8b7367ca3df26dec8c3ef497cdd87af
d9f08a0fff268cafcee02b337859e25f97e2cb12bf9b87972ed6894b61bcf8e41a71b10e5fcac3175d78218c7fa5e3a4c4f34e7c611ccf5dc7b0c433
2471661c9f9778fab94a717a1c9fabc45327178a53124f2ec4ef48fcb6c467253e23f1e91329e269892752f0a93c3c2e71c28d47257e53e237241e91
7858e22189e3d53e311ec4bf9438e6c083a3d7c54189a323cd62f43a8e1ee123c33e31d28c237e3eecc3218983611c0862bf0dfb0ef8445f100ff43a
c5011ff63ab187dcea99c16eff9cc42e89fb25ee73e3de3da5626f10f7908d3da5d8b9d1223a13b12364131d7918b2617b1083742c18c63689ad01ab
689518b062cbee24d112c4ddbbec627712eeb263b38e3b77c4889d1277c4e0763ab13d8c4d8d36d1b4041b6db86d061bb65e170d12b7d6378badd771
eb115e5fe713f5cd58efe7753edc2271736daed82cb136173791139b5cb8d1824f92574faec11a5a6a2456573944b50fab1cb841e2fa4a87582fb1d2
811512cb2596495cb7765cac93b8761cd748f4cfe01333b87a065715ae15ab24aefc004b092badc312e9efc5e2712c22b290e788c2b558207185c4fc
52cc9bc1e5565c263147e25289d9b49dfd75fc9a1db3d02eb2bcb8240533336c2233881936f4315df8f230dd9a28d2c7d12b4a8557621a5169d73195
e45317a267b145786291baed77fd937cb10553cc98e2e78bec984ce2c9615c18c605493eb1208849894e91e4c3442726b87d22610dba7d182fd12531
6e069d8e24e194e820ad8e24b44b8c9568230db630c690c19871b45aacc29a88162bea923aba52610aa34ae2aa4441518852e444f11c443bfdd3ea42
4944a623f373484636cd82c7becdb2ffbc0ffc99f5ffa9cfa2ff018f05ecd0
>}
\immediate\pdfobj useobjnum \csname EF1O6\endcsname {[ 82 [ 695 ] 99 [ 550 ] 112 [ 635 ] 116 [ 392 ] 120 [ 592 ] ]}
\immediate\pdfobj useobjnum \csname EF1O7\endcsname stream attr{ /Filter [/ASCIIHexDecode /FlateDecode]}{
789c636018028085a00a569c326c509a1d4a73000003aa001f
>}
\immediate\pdfobj useobjnum \csname EF1O8\endcsname stream attr{ /Filter [/ASCIIHexDecode /FlateDecode]}{
789c5d914d6f84201086effc8a396e0f1b76addb7a30269bedc5433f52db93e9416134241508e2c17fdf01ac4d4a026f78665e98017eab9f6aad3cf0
376744831e06a5a5c3d92c4e20f4382acdce194825fcb68bab983acb38999b75f638d57a30ac2c81bf5370f66e85c3559a1eef1800f05727d1293dc2
e1f3d624d42cd67ee384dac3895515481ce8b8e7cebe7413028fe6632d29aefc7a24db5fc6c76a11b2b83fa792849138db4ea0ebf488ac3cd1a8a01c
68540cb5fc17bf24573fece9978cd293b449bf027eb88f38489b34e2c753c441daa409e709e71bce375c245c6cb8204c35fdde1eca0b6fb9f72e16e7
a8edf8e0b1dfd0a9d2b8ff893536b8c2fc015b86869b
>}
\immediate\pdfobj useobjnum \csname EF1O9\endcsname {<< /Type /Font /Subtype /Type0 /BaseFont /BMQQDV+DejaVuSans /Encoding /Identity-H /DescendantFonts [ \csname EF1O10\endcsname\space 0 R ] /ToUnicode \csname EF1O15\endcsname\space 0 R >>}
\immediate\pdfobj useobjnum \csname EF1O10\endcsname {<< /Type /Font /Subtype /CIDFontType2 /BaseFont /BMQQDV+DejaVuSans /CIDSystemInfo << /Registry <41646f6265> /Ordering <4964656e74697479> /Supplement 0 >> /FontDescriptor \csname EF1O11\endcsname\space 0 R /W \csname EF1O13\endcsname\space 0 R /CIDToGIDMap \csname EF1O14\endcsname\space 0 R >>}
\immediate\pdfobj useobjnum \csname EF1O11\endcsname {<< /Type /FontDescriptor /FontName /BMQQDV+DejaVuSans /Flags 32 /FontBBox [ -1021 -463 1794 1233 ] /Ascent 929 /Descent -236 /CapHeight 0 /XHeight 0 /ItalicAngle 0 /StemV 0 /FontFile2 \csname EF1O12\endcsname\space 0 R /MaxWidth 974 >>}
\immediate\pdfobj useobjnum \csname EF1O12\endcsname stream attr{/Length1 11088 /Filter [/ASCIIHexDecode /FlateDecode]}{
789cd57a7b5c55d5b6f0986bacb5f6fbc9de3c64c3deb0d9203e09c227e58e7ceb314a32b53450407c82a29692079f20a9c7276866ba2b345333328f
819a69526ae4cd8edaa95b1d2b2d7b90713a56f720cc7dc75a1b8ccebdf7fbf5fd717fdfef5b8bb1e6638d39e7788f39f7021800d8e9218267d8e021
43210c7a003002081f9675dfd8b8677da9d41e0c202c1f36f6c1cca49a41a701f02cbdaffec33dd9c3639a279ea4c13984f3c07d637ba7cee8bf3106
403a4cefc74d9d9d5b3cf49bd46a00d948ef574e5d38df03d363fa036836509b17144f9b3df7ce85330074d486fdd3724b8a414337e833a86d9c366b
51c17f7c98bc82dac301227616e6e7e66927bd590e10f735bdef53481da65d9a2d00f11e6a2714ce9efff8cb132389de78a217f6cc2a9a9abbb4b694
c6c637517be2ecdcc78bc5a7e4b9005eea03cf9cdcd9f949dfdf759cda8544cfa5e2a292f9adefe4bc0ae053d6df573c2fbf78a0e6ef54f5d17a5221
28b2224ed44ba0164277b54f013df4820c10060f1d9d0de659b9f3e74024c994ae6010e0764dc59e993f6f0e68dbc7317a27a8a516047656c1645f08
46308023b450f013f85fb8823504af115c545608be46f0c9af6b054f116c2798a9beafe93ceaf668d259d04db05a19c75f02e035a1914a8d5f547bce
119cfadfa0fff75d9d65471c5e0cb57faddd7eb7e177ce37f377a131d29e9d3cc9014e08872888815870831712e84d18bd61a4710cd90448ed6364b2
7b6d7b5d47d6c4c046e33b3045c20b61e8696e1d59a109cc60012b61292b75d8e026d8020ed50617e7cecb9d022b73e7cd9e032ba7cccb9d0e2ba7e6
ce29a16761fe3c7a2e9a370b564ecb2fa2fab479f933616561ee1cc229cc9f423d3373e7e4c2ca59b9451ee549b6bc7276eefc42583967a6d253342d
7736ac9cb7600e61ce2f98338d9e85cafc9decbdb32c14ba3e83af210b328ca0f954e91496876286743984a4d43b5f1d6d56d85e7ff07708dd4cf32e
fc1dda9918c255cadbf54e73fca67f62a7f6bc5feb423f805beba99e717b6857d247b8eac95d2053d59ba233857bb11d030987f0740be89944f5aeba
75f43e51e9d555ff06affb6dbc6eb7f1926fe31129a2996d502c474a939ea2666ca8c4bf42816027ea0c32a2561404654685027f0791590543f2a8e5
89b7cb0eee60db35b3d9d5769c8eb543e002456b0087a9c5d4b608dba88c078f1aaf3c444f0fe80303e05ed2eb03900db9900fd3a10816c2a278bb6a
031ea2b83ba4403f9247074e1ee1cc8279219ce055f2c48f831f06ff1abc106c089e0ebe193c123c1c7c25f872f0a5e0febf8575a2eb7fbe42f1f842
7bcbaaae1b0254b502446948b68a5495bc2613a410f421503c8e74497c00791310a5401c51465484d50e4a1c7ea01d9c04d9ed104e90db0e11047904
f9049104d3dba10bc12c8222826802c586141ba5b8098bda2181c46a0f01f8583ad44123dda7601fec607ba85540fd73a927201c8255b0807a4eb346
5629f4a4be3dd00c1709b3021a719f086c24a4512fc0479200375936e96f07ebcf1cacbf46a62032463c2c3e20d689d7c5f3d0572c11cf8b3962094b
c3e7a471d21e82fef816d9cf39a2ae8e5d8112388adf601a1e17078b66b882e7711f7c49ab28b26c84f55003a5448b83154199502a3c403d67a4f3b0
9dee227a7f9eed641789baa36c055c866d280ac36127bb4c7c35c2cfb002b3853252519a5040f49fa1b9ced3f8ed50424e7399e9810bdda98fa8a7b5
a6a8cf18ec295d56ef6628a395b3a146ae931d1a2fada2486c0f3bcd9ae4cd10808bf808cec58fd92ad12bee1587c3fa90043007d6d3dcdb95317201
5b44bc2b77a932bbf09898c3f6c137628e660acdfd96c29162fdc203c451011c27784cb6124f03d92aac244a95b731705e3352ec4de36906cd1255c7
45980e33a8560a07e110f4c46a584f33a9fcca7da59f69e40ef173e2793d5b27fc0ce77130596781788364ad981979f76b1a59125160d0c363ad157c
23f26afdf78ff79c9d10d7b3c7bf343d568da716b26a4d8b3c75c160d678315a9a502bb96ad1a7ad157ddecfffa7979ff7ec312a6bbca7b66dc8e0f6
5987e40ca6beb1e3a9aab4a89bfa870c56df298bd64a3efa1b9153eb995ae879d2faa477c093d6fc013d157f13945d0b79bb12256af98f42a96ca7cc
d4d76f91b7c156b349036897214c6fb67e32aa362c7b7c3de88327fb4d18556b51ebe0ef37e15a6a93ad7fff3bc0dada94c264c1e9b047781385f43b
ed7d85d2f2e52b5605aaabb66c95ed5ff1bbaf5fe703bffc8ebdfdd915d6d044ebd5d07a45ea7a6ebf45a3aca7a1c46b17c3b440eb65dcfc75deb0b4
70bbd32168bc7dece9770a353465557560d58a15b2bd89675cf98c0ff8ee4bf6d6f5ebec4d9af5238aaa1f4b226559afdf063ab6438392e01421422f
3bb546eb27ad19ad194d7740ef4b5436a4305b9c33cee6b5c5a5c7d9f0a0d0b3ede2deb68b424f496cbbb84fa9ec232f65b0336867a78153c48ef21b
7127ac905164511029137197de0dd1d837cd89deb0e68b35cb1ee007f849e627d95604af8aebc9de0d145fbcfe303960878071a37d6da4ce65894597
333a9246de246aacd76e36596fa4b078c166b5a7a5da6d562129156c56f0c62b4f61cd8e679ea1bf679eb9c574fc975bb7f82f4c2765f1f3fc5d82f3
2c8dee3b595a8097f0725ec14bd83ab6882d66eb94ccf03905cf8994a749597e6726064421202dd34040a775cb2e043733582fb5eb9329fa6c6a6855
c4d3947ab3e952530a59d5847876d882165198d437ce26a5fbd214917136923fc5f2df61235b6bf68925c3eb86b75cde471390078823896317ecf427
457589c648978d9461932431d3faac6d8b29e0d828522404ab5e607a578415e5186beba85a67f6a8daf0ec8747d53ab21f264a50b1b2864b4d274fda
ecfddba9b9a952a3b14adf6ba4ef59adcb6a8be84fb4f9531f14c749e3348bc5c5d2c2e88a280dc5c928b10b398c6b3e2c94177429899eef5a0ee551
cbbb2c8f5eeeda0b7ba36d9360928f9848ef037def66e977267ae3654dfadd2c2d55743a648d0c149c4fb58e2631a6e5fee185f2472f3ebef8d2f8af
9963c8c351fce6be7dfb1e631b07ccde3ae2b1eacc7bdfbd23f5eb371fd95d1cc3bf23ee7790be4b88fbae50ecef05ce307db9ce5dee090b384d01dd
66d915f06cf66e94d73a9f4f0e7785013aa25c891eab0b1d6e9d9cac08213cbb837f9dca3f09809c2042f582a66b37af3559bfba61556f924a0af3eb
f26273ddb99ebc381126b158e6748871f18949e9b1c4481fe2aa3b4b0f557ec31e0edaf83cbfc0bf9e7c6646f6d9d927ced4ef3e78a46ae7f3dbc69e
9857726ec257ccf827f4b91b367cfaa3cf77fa8ed4eaf52babf63c565c529a9078d8e379ffd013fb9598419952ac219b12c87797f96398094d8068ca
0434680212c3653a66d4834bd68a46357218883193ca985161ec52464353aa4dd1ebb54b194da9c48baa58f11c29f79ca2d26e06da380d870994821f
83274113ceba4322eb8e7dd818769ff13ed33856c016b0c5b88a9948953a168769b634a757f56394b9c0783abf7cf95cdb64c9d77a15cfb7a6ede501
96735af5e4ab621e511e0393fd5eb18bc6566e8de912d03802d64a93108065a6b59a9ad80817d3a30bf45639d6daca3aebc5da29fa59156f2115591b
6e280eac7830a9873784b413a6c41545e6e074c06fd4a268e3538c6a0bf418dfa38525f04bfc87c9a70b279e9cf9d23befbc74ffb3d9d2e57d7c93c5
c26f7cfb77fe93c7d37847ca911d3b8e242492b4d713f5d56a3c4980f1fe8430194ce5460884cb0157f86e6bc05819bfd1b5d6678cd7b9a262c35c18
e78ef651802123baa686986badd77e351fbf837629ecbc701ecf8b8d52a34c7c1f8a1526b1492c5e763ac243b432672fe68d17b08311af47094771a9
e142cdea5dbb561330dde8a7479fbd68197868e6e74ce2cd5ff0367e8365b1e8d14fe3c0a3cf3d7becd8b3cf1d1516d52524f21ff90f0f4de23f7cf7
15ff560d5053d8ee586597b786b85aa372e585de70afdf17493c25c981d89e01fbc6d8b549cfa7441a13bab99c092e8b8e6226054e4b5c748ab5b5a1
e9664393ca4e8787a8adfee41a9d58f0f5220f4f484b0d575c5b75126f7c42fa9d7dc23a10481fc29a0dbb776fd8b06737dfbd7c2304ff76856f5cb6
e979fecb2fbff05f6a866f5cb17cf3e6e52b360a6f6dafa8d8fe7479c5f6719e434b5fbd70e1d5a5873cf16fafffe8ebaf3f5aff36cb9dbf7cf97c02
d2d332e2a882388a54f5e4d5b8a358394405f4bbc5005486bb03d68de16b7d1a972b2e2c16e2e35d26554d447e4726f88affd4a1a5f086a837bb9c8c
3ee93a19f3666c835bb3cf7edcfe8d1d494f7d558bb287994943907e27a48574139fc83ad822197c3e7ac728d2ce8043b33ee3b798f50b86ccc65fe1
5f8edec1ee6ed7a09b74c34ccc3eee1166f9ee2b16ae26915dfce158616b87fe149f6fa664725af492be34e0f29be515e21e582190b38b10a9b5524e
4d55e2f44d35117a156f6cbe4817e7a297d3987314319ac8efb474def5908ebd5150a5d755d997b12afd4b6e9b412b8445b92530bbc2a528572f1db8
ec621cc92495f2ab1a01d524d0a428b87fca214b3c2333b5c585f8bc5df1c57556721cdbcc063fffcc33cff3e3acfb968d1bb77083205e6f59fa44d5
6ede7cabed6be15cdba7156bd6ae120af8dd45f3e616ef39f94ae5730e4fe3b6b3ff4e71a22478554a220d46411f7f17d3b3e683fa2a1b7b160e8a55
111b6d6bbb68a24c90e2b07651486c579a42dbcf37520e5ba2ddd10291a7784ebbb7f4e9eb34df6e844b4905d797078137332b83e5d70b667cbf92bf
c417b37236b6fc7b69cae54727f333fc43fe113f33f9d18bc387b35d6c1a2b64bb86910e1429ae53a5180603fc91acca0a55ba6576ab5e4b474529ca
34c8062e9de808098ec4464253b3b8df6071ba9d839c8f3a5f764a8ae8d26e4b8c5282d89dd9bc6c335fb77dfb3ade8f9dbdc5180fdee2ef48bddbde
db5451be69cfd58f3ffda26d6f687d89e213adee8257fce93aad06f5b20d45946ca28899744471a2e8acd239aa4ccb0ca224a38d34196e96f45151a2
6d9043ef328a312a710d0a75b6905a331412edfd95fb76822750b3fb217f2ca9da9fb9388c492031499051233ac1c91c423846883e3a6ff984444c92
133589da449d27b60feb230c654385426981b8407a2c6cb5bc5ab34ddea6714f5253644498177b31855f5b9c473196dba2c075f794de7dfea33746ae
79fc9377d85906ad2bda2af9a6aaaa4dc2f1f00d7fe485acac7a4a5ba574f9830fd71d15ee6bbb51b162c52a7289600be9e41b9289064692570855b0
4c647eca1e7e496bbd441157d5416a0aa5117df6f8d749757ef5a7422d65125bdf7e13fc61a07383955905b7c6aaf3eb8a75bb74ba49d8be2b95c51f
da6e34b6dda09cd07259eaaef86129c5969e74ead1830f8ed31ecb6d88d099e1c508b9de6cf394bb8fbaeabd75b6b5114688c048934e6b70a3d63124
91a4feeea5a6d4543554f66eb876b395ecf56d3576da14c1fbe7a4c4a4c4a6b8533c297129f18392fc31fe58bfdbeff1c7f9e3b362b262b3dc599eac
b8acf8aca4e2a4553115b115ee0a4f45dcaaf80d4981a4e6a4d88ea11d833a06e4c4e6b8733c3971c5b1c5ee624f71dcd2d8a5eea59ea571919df3cc
5dacafcd9bae84b1448acb69719d772ce1c2892b0796153d555f5737e8f8ea038d6db798f0c2d69c23d9f92726fea359482b289d52f2d1e1e4d16dcb
f615e49e7aeef593f6b235bd7aed4b4a6a556435976435517650aa76413f7f54977a303bea25ed5a731ddb4ae6035a6198cd6e1812a3eec253d50076
4dc9290d37528ee4c42e8d0dc4a2ea2dede4d1e61c8824d62973e0737575035e79a23108c1c6275e693bf3c2a64d7bf76edaf4021e1126ffb3696f5e
2e1bccb4740fcee5cec6ebd71b09dae92a231d3a209a768a0964cfba72ed6ac9f92293ea8dec5864bdbdceb8d615ed14b44e2d8c12ec96212e95c406
7537ac449b5022bf19ca11c983628a63023117629a63a44130880d12063907454b3d34bdb5bd753df44550c48a84226751b46ed25c2532c5a95b915f
8312e5768d92e7658d58d67ac878feb51967a64cbd3093dfe4675872eb174c5327ec5ebdbdde2c4c9e78e2cc9d771eecd683f5637a16c6eee59f366c
3d7c70a722eb53e40c8b48d64a7ee84e9e70427c058e537ed08a30f4767eb8d6aa0423c5ceb3743964eb148cc248ba4aba3855479798732b203bbe51
ce2dbfce17ff1a6c159816868ad6d0d124c56fb24a7e294bca918aa566490e4d4213c88e7f3629632961696248bef130d19f28db7591169063344e63
458c07eba28f47593560b368b572964d6bc972456abb0cf5aa61a995a2a67adec8c8b87653dda4da2348be6129095909c5091b120274bf9170252198
a02339aaf6e924ea43c9b67325cda9be1493879c5cfef289fa790bd6efa99ff7d8ba3df5f5836a172dde8f954f2cfce98bb647849dcfee3851d35621
ec7ceee9379e6fab10730e4e9bf2443b07621e711046f9270a7580662657986d75c6e37a2668618ce2d943d540af1c1f7b672866ab127b38c7f99e53
c93fff1d3979754f3c5175a0be3ef3d505a7de166a140276ed5408a085f3f37e68b7cc05b4ae72621d4927d67a3bd41beb9413abdd723fda9d43fee5
c4eaf70e8a2a8552b94c53a62dd395e9cb0ca5c6325399b9cc52662db395da0351cd51b6dfee297f73b02dd972607fd5e603073637333bbfd1fc77fe
03b3e195ebe7ce5dfffaec996f76f0b3bc897f4f66d89faccdc1fa8524832389421bd016563690260d5861aed31dd7e8650aab43edcac650b535f297
4bef2a0e72382b6c57982213d5c83b09240247ba47f4d8f1427dfd80a3abc27ab9f0b0ddd678a2ed1089a360aaa4fcfa5b14bc8a6768b524b8eecf30
1905b361ac3b56ab1334fab16e776ca6de10eba67c54ce2a4547b9b332b2de26d6fb28fc768dd51bdcd11a78205a6bd6681df143ba2a545d6aba46e4
a8594f159fb2cbfbe986a23535ef99959ca731b7673e485232df6c97de6570197b911bf730f4300ed40dd40f340c341a3ce0610942577d5743b7b0de
8edece6ee15d63bbba933dc9710949e5fa7243b9b1dca4fc1ec90441d6cb0634a209cd68412b4661178c469718a34bea9d3c28f9d1e4b2e4a5c91b92
03c9cdc9919424e786b6fc8abadcea9952f6763ebcf466ca76ba0fc90ed78cd93bb1b272ca96410dbb7ff970e2e959056fe72e5f9bbfdfbf7fdb67ff
5670581c74b06bd7ec6cff883873b7a72a771cf17a4fa4a74fb87f5496cf9250b57ce78158c5cafb52e0f851da49564ed1d92c692df822d8d8716d85
de4032a6f86cb59b152bcf68a0bfd4d02630e497b4c93ff4b233b40da454e9081fc89c4afea0489d66638fb152be6a54c9ebaf5f7eaea242dac9df5c
df16a81cb37dd75f849cf5ec6e255a1d243b1faffa970306fa5dbf7ad85a3d3beea833927f390c63c8d3863a1583ef1fb2a86ba9b7ddacc8795271b3
304a0e21c3be9d2512d941c5cd5eaaabbbf79505a7ceb2f7d851614f5beeae5d276a84d25b810305539b71afc2fd5de4e365620ec870cb9f84365112
051b1324a5404106da9b00c89902c21b922ca1c0241134ca6f357ae5f409cae95339962abf9d284753507f328808fd5ad26927a56d07e97be564fda7
e1c20ca1542813ca85a5c246a146d02a0be950a7eeaaba601731914edac9982c7ab4e990ce06e00031453b1486b21138421c2a0d97fdda71308e4dc0
096296b6000ad8749c2e4e930ae51ced0298cf4ab194f65d8be555b08a5562a5582995cbd550cdb60adb719bb84dda2aef955e906bb527b557b441ed
dd647061693a96c6bc779d6693d9e4d3fc911631a7351b0fdc0a2812e2e3540959d81affbd1aada0b381456f33e8012c669b052c269bd1044a6136e9
0d7aa3cd60d0679a0c3a2b18a40a7cdd6c386e359b8c7a9d8ca0b58816833524bd51b55a5562860e112af5d00f2eea9ed94a6a6e6adf17fd3762544b
e9fb8854459ecd32485a5987a6707d84c96af29ad24d23f4f7e9c79826ea26ea67e82b4c4b4d9b4d763d101106c968301b2c11cc295845ab14a17718
1cc62ee62e962448205ff6881e2959db55e7d32718128c49a66ee66e168fad2fe9205d481153a47efa3e863ec67ea6fee6fe9614db3de0677ec18f7e
d12ff965bfc6afcdd40dd10f338d308fb0f86dd9703fbb5f7810b3c42c699cfca0669cf621dd43fa070d0f1a27982758b26c05ac4028d44f374fb7e4
d84ab58f9b1fb754c293ba558655c64a53a5b9d2f294aeca5065dc6ede6ea931d418f79bf75b6a6defd9aed882b67cd2986466a19f960631a6284fd8
3c66cb139b678dce4e8be303436a2c3cbb78fbf0f26c714ceb169ca5d8fa47424fe963e1af94cd6d476087203210ad971a48d6743a517e2af948125a
b8f0d77d0a6e6d5b2b666936917e8079d3ad61f6b03440728d92d381a37b02bc7562595bebb7b8957d21a4306cfb072f6fbbd1fa7d681c2bd1542adf
8d94fd40ede9d39aca9f4bd419b98366ac5467742a137a059a309d95946d6e0c5cd810d0547edbba9b3fc41d3c9f0d6137e84491faedbf52926e2517
4cb387d96d4e615dd944de1ad87334a0101226d8d922c1ccdbda2ef0d8d6e9df52cc5da78eab54ff1780b24d0a66a9a4a8df013bd392e6a449d1ab4c
2aac3b1dd87021d0b8b94c21e67cdbc7dcce8fb1a759137b111f86f66f7cc284a7caa32eec78d492f113b8431f9bdfffa3f95a47f9cb07ada3cd1374
9fb47f170b5d344e339bc70098f92f1fb4dc6f9ef05fbeca2589e7d56f542090f88535945fdd504b50231f26bdad819d520454107c4e504db083208f
40e95f4fb086609970089aa525704eaa8612f12295ebe19c7835d8225e8752c90173c526982bf487530a681ae1a86487a3e29730174752d91d8ad00b
7da9efa0781cee5240599bc4571bba8575f4ec7cf583a9b00e5e84ef583c7b56700af384f73103778b16b197582ebe2fd9a4a7a58bf274f9cf1ab766
98265ff3b656d08ed53ea9ebaf5ba8abd38f30f4320c36e418661b2a0dbb0daf1ade6b976f0fcc86ee500846ca1056784a91a2e814c2a954be996960
a2f29d44d411728afa7d52a93308a756a82e80960d6daf63a77eb1535d824836a6bd2e838315c0bd5004c5b008e6c1749846abcf57bfbc4e85642a53
2185ee34aa4d210c0f6412ce7c28219807f9900bb3a107f58e803984df8b6af7c02cba3df0c0edb94ad4563e95f93466213df30853ff3b56ed737b55
e5cbef425a4bf9323687b0153a7269ccffdd8a83a93683c68d83058431957073d5d9f2d511b92a471e9a650e3d8b09670acd3b9df03c34be8856cf55
dffdeb3c63d5594a88223ae7c04cea55562d51bf65cf5179e945f24bffcda88e3142c898827f54ff2fe0bf5e3d54bb50fe83c34036a1fcbf46c7ff84
44402444d1e9cd0531740e4f5457e84b6b0c83e1301aee832cb89f64f020adf6108c8709f008fcc004864c6412939946b360cef4d4b4fe99ede5bded
e5e0f672487b393454de43d656272cf5076f716c71e03f7df81fa9f84b35fe6cc69f38dee4f80f1ffe68c6bf5763b30f7f78f21ee9078e37aaf1fb6a
6c6ac1ef5af05b8edf0cc0af33f13ac7af52f1cb6b63a52fabf11a215e1b8b57bfe82d5d6dc12f7ae3e71c3fe3782515ffe6c04fabf1138e1fdbf1df
97e047c7f0438e1f10fa074bf0f2a561d2e5257869185efc4bb47491e35fa2f17d8e1738bec7f1df389eafc6771b63a5773936c6e23ba9788ee3dbab
6cd2db2e7c2b1c1b389ee6f826c7531c4f727c83e3098eaf733ccef118c7a336ac2ff749f51ceb5e3b26d5717cedc824e9b563f8da52f1c89f7dd291
49fe201ef18b7ff6e1618eaf56e3218eaf70ace5f832c78379f892190fecf74907f270ff3ebbb4df87fbecf82211fd620beee5f802c73d1c77dbb186
e3f3cf99a5e753f139333e9b8701420954e32e8e3b9f31d25e119f31e28ea7a3a41d79f8f476abf474146eb7e2537adcc6716bb549dacab1da845534
a8aa1ab76c364b5bbae266336e6ac18d1b8e491b396e583f49da700c372c15d7ffc927ad9f84ebfde29f7cb88ee3da35bda4b51cd7f4c22789cd27ef
c1cad506a9d281abe900431d1579584e922af7e12a1baee4b862b94d5ac171b90d97715ccab18ca33ff8c7254ba43f725cb2049fc8c3d26ca754eac3
c51c17717cdc8c8f1971a11e17709cdf82252d38af05e7b66031c7228e7338ce8ac3991c67d832a51963713ac7c225388d1a051cf339e6719cca710a
c7dc0198d382938d3889e3c31c27729c305e2f4d68c1f17a7c283c4a7a2815c7717c90567e3013b39d389659a5b191f88003ef1f1926ddcf31cb80f7
711cf307ab3486e31fac389ae3287a338ae3c811566964188e88314923ac38dc84c3380eadc621d53898e3bdb441b8b705338fe13da3d0cf7110c7bb
efb24b773bf0ae0c8b74971d33069aa40c7fd082034d3880637f8efdfa3aa47e2dd8b78f55eaebc03ee906a98f15d30d78672ca69930f50e8394caf1
0e03a6f436482926ec6dc05e3d75522f2bf6d4618f54ecdecd2775cfc36ec976a99b0f93edd835c92775bd07937c98e833488916f4193081a39763bc
05e388cf383b7af2d0dd82b1c4426c1ec698d0451274718c6ec12e9918458d288e91791841928ae0184e83c2a3d0c9d1c1318ca39d10ec1c6dc4ab2d
13ad4bd09287668e2663b864e268246c63381a38eaada8e3a825342d478d03e53c14e9a54816e044ea454e7b11ab24f4446645e0c8ea58deaa75acfb
ff0f17fcbf26e0ff78c5fc2762150aa6
>}
\immediate\pdfobj useobjnum \csname EF1O13\endcsname {[ 32 [ 318 ] 40 [ 390 390 ] 43 [ 838 318 ] 48 [ 636 636 636 636 636 636 636 ] 56 [ 636 636 337 ] 66 [ 686 698 ] 80 [ 603 ] 82 [ 695 ] 84 [ 611 ] 97 [ 613 ] 100 [ 635 615 ] 105 [ 278 ] 108 [ 278 974 634 612 ] 114 [ 411 521 392 634 592 ] 121 [ 592 ] 8722 [ 838 ] ]}
\immediate\pdfobj useobjnum \csname EF1O14\endcsname stream attr{ /Filter [/ASCIIHexDecode /FlateDecode]}{
789cedcf370a03411004c00679efbd77ffffa28ee3a203a14089822a98661b760736f951a3d69b6915d94ea7eaddf4d2cf20c38c322efa24d3cc6a6f
e6597cdcbf2c6655ccfae38d4d99dbecaabe2ff390634e3997e74baeb9e59e47d99edfbe04000000000000000000000000000000007fe5f506ebf202
bd
>}
\immediate\pdfobj useobjnum \csname EF1O15\endcsname stream attr{ /Filter [/ASCIIHexDecode /FlateDecode]}{
789c5d533d6f833014dcf9151ed32122408046424855ba30f4434d3ba10c603f22a4622c4306fe7d6d9f215291e07477ef0bf308cfd56b25fb99859f
7ae4179a59d74ba1691aef9a136be9d6cb208a99e8f9ec997bf2a1514168922fcb34d350c96e0c8a82855fc69c66bdb0dd8b185b7a0a1863e18716a4
7b7963bb9ff305d2e5aed42f0d24677608ca9209ea4cb9b746bd3703b1d025ef2b61fc7e5ef626ed11f1bd2862b1e31146e2a3a049359c74236f1414
077395ace8cc550624c53f3fca90d6765b7c6ce30135f0eae467c8272f6f146e0bcabdbb51e72628996470571a016240023802d2350305d02d697c01
4f4fabea828e28744c10f4a0ce4dd134f5af95fad74a1165a10642c618166aa093338c6ca106424674967a79a3703166e64f2ef3479671c89d973d15
005a4d179b63cadc1fe04a716439dae5e91a831474cd7dd7dc758de3c8a6026ae0d5aec5fafded86d875ded68fdfb5369be776dead9c5db65ed2f65b
a851d92c7bff019aa3d7f0
>}
\immediate\pdfobj useobjnum \csname EF1O16\endcsname {<< /I1 \csname EF1O17\endcsname\space 0 R /I2 \csname EF1O18\endcsname\space 0 R /I3 \csname EF1O19\endcsname\space 0 R /I4 \csname EF1O20\endcsname\space 0 R /I5 \csname EF1O21\endcsname\space 0 R /I6 \csname EF1O22\endcsname\space 0 R >>}
\immediate\pdfobj useobjnum \csname EF1O17\endcsname stream attr{/Type /XObject /Subtype /Image /Width 160 /Height 85 /ColorSpace [ /Indexed /DeviceRGB 4 <ffffff66a9b8d5b66a98526924668b> ] /BitsPerComponent 4 /Filter [/ASCIIHexDecode /FlateDecode] /DecodeParms [null << /Predictor 10 /Colors 1 /Columns 160 /BitsPerComponent 4 >>]}{
789cedd74d0a8020108661175d20e804d10df20056deff4ce940423490fd4026efb7d2419f59cd628cc949e3435cd6d3bc000216034a0001014fc0a9
0d794905042c069cfb2df1e60001015570194364f4e2c13e9a4140c0624077acfbbb2a2060bda0f23b9595668080ff0095b5426eb1dc5d540101eb05
95b542c0541e76cd00013f04572c565dd0
>}
\immediate\pdfobj useobjnum \csname EF1O18\endcsname stream attr{/Type /XObject /Subtype /Image /Width 160 /Height 85 /ColorSpace [ /Indexed /DeviceRGB 3 <ffffff66a9b8d5b66a24668b> ] /BitsPerComponent 2 /Filter [/ASCIIHexDecode /FlateDecode] /DecodeParms [null << /Predictor 10 /Colors 1 /Columns 160 /BitsPerComponent 2 >>]}{
789cedd5b10d80300c0440470c9091588182af598d86c62cc15c0864625a20504411c5bb8aecebac8f452ed5982dd7ee4d119680d18cf087d06cd599
b0124c6101ba0f9921ac0aa36d22c0f09a19c212d0c73be0cffc66086bc234529d0087d9f3415802b6be12ff9d02903d1f8455a17703ceea4554c7a7
cc10168107e1758269
>}
\immediate\pdfobj useobjnum \csname EF1O19\endcsname stream attr{/Type /XObject /Subtype /Image /Width 160 /Height 85 /ColorSpace [ /Indexed /DeviceRGB 4 <ffffff66a9b8d5b66a98526924668b> ] /BitsPerComponent 4 /Filter [/ASCIIHexDecode /FlateDecode] /DecodeParms [null << /Predictor 10 /Colors 1 /Columns 160 /BitsPerComponent 4 >>]}{
789cedd5c10dc2300c05d0025920c000101800280304c8fe33916fa95f187188c4a185fe7f4a6ce7ddac745d4b42a9c94da36d1128706cb0b8081428
f0b3956a0e7dcd39d67ca70a14383218f01042c46d81c3062af8845e162870ee20e7ed753fe4ca8f853781027f03e49b071d2cc91af27ed89bb77181
02670506346dd958e26d89c30e277e33a1bc260b14383dd09a378cda6e2417f4ee1e648fabb4f5aa40817f077adaf6085317ccb3bc42e9c481e8c2f1
237a02054e017c02aabc1f68
>}
\immediate\pdfobj useobjnum \csname EF1O20\endcsname stream attr{/Type /XObject /Subtype /Image /Width 160 /Height 85 /ColorSpace [ /Indexed /DeviceRGB 4 <ffffff66a9b8d5b66a98526924668b> ] /BitsPerComponent 4 /Filter [/ASCIIHexDecode /FlateDecode] /DecodeParms [null << /Predictor 10 /Colors 1 /Columns 160 /BitsPerComponent 4 >>]}{
789cedd7410e40301484e11207e006e20a2e60e1fe67d292cc4ba3a815ea9f55f332f9d8bc34752e27cdec336555f30208f834286b5501010193d64a
549d4f1b8d0001bf08da5a0c3e7d38d4e1a40f4d80807f07d51f43c427468080df006f5e19d77540c0d240db25edd959aeeb80800f838d66e1f9d0c5
fd440eeaa602021607ea64af063d24e6bdaa51a2befd0420e00bc0056e0657b7
>}
\immediate\pdfobj useobjnum \csname EF1O21\endcsname stream attr{/Type /XObject /Subtype /Image /Width 160 /Height 85 /ColorSpace [ /Indexed /DeviceRGB 4 <ffffff66a9b8d5b66a98526924668b> ] /BitsPerComponent 4 /Filter [/ASCIIHexDecode /FlateDecode] /DecodeParms [null << /Predictor 10 /Colors 1 /Columns 160 /BitsPerComponent 4 >>]}{
789cedd7c10d83300c85610e59a0a80b14b1010c90b6d97f26624bb8b294a8913810c1ff4e89b1bf5b24330c2d09292736b5b60510f06c30b9000202
96ad47ce38e5bce4744c05043c190c3228cc2487b7584f578a80807707a5ffbbe4f8db6aafcedf0001bb0753b92d54de46a51d10f0c2a00e59a21fb2
fc6d0704ec06d48fba10e92e649b91a6f05b51698f8080d705e5f491fabc4feb90ee5b1a9bd6212bafaefd270002f6006e56896540
>}
\immediate\pdfobj useobjnum \csname EF1O22\endcsname stream attr{/Type /XObject /Subtype /Image /Width 160 /Height 85 /ColorSpace [ /Indexed /DeviceRGB 4 <ffffff66a9b8d5b66a98526924668b> ] /BitsPerComponent 4 /Filter [/ASCIIHexDecode /FlateDecode] /DecodeParms [null << /Predictor 10 /Colors 1 /Columns 160 /BitsPerComponent 4 >>]}{
789cedd6410e832010055013bd40939ea0f600463d800af73f533b341df2a3386c3484fcbfc271782b26d03439e9fc374b566b5e0812bc18f4561e10
b39d20c13ac1e9243a4aab58cfd02f39dd43906009e0eb38bdfc0c475f1683302d94c6c4468204eb0413ef2717c608be6618bdf8658720c1db410f49
3c88e29d62b61324582738fd33c000b55ac7c1b2db09122c01d49751e8f81df61dd84979d983d83eeaeb8b20c1ea40296db27ac325e17427de295a9e
a13d0a040996007e006497f7ff
>}
\immediate\pdfobj useobjnum \csname EF1O23\endcsname {<< /A1 << /Type /ExtGState /CA 0 /ca 1 >> /A2 << /Type /ExtGState /CA 1 /ca 1 >> >>}
\immediate\pdfobj useobjnum \csname EF1O24\endcsname {<<  >>}
\immediate\pdfobj useobjnum \csname EF1O25\endcsname {<<  >>}
\immediate\pdfobj useobjnum \csname EF1O26\endcsname stream attr{/Type /XObject /Subtype /Form /FormType 1 /BBox [0 0 306 338.4] /Resources << /Font \csname EF1O1\endcsname\space 0 R /XObject \csname EF1O16\endcsname\space 0 R /ExtGState \csname EF1O23\endcsname\space 0 R /Pattern \csname EF1O24\endcsname\space 0 R /Shading \csname EF1O25\endcsname\space 0 R /ProcSet [ /PDF /Text /ImageB /ImageC /ImageI ] >> /Filter [/ASCIIHexDecode /FlateDecode]}{
78daed5b4d93dcb6119d338ef905ac9ca44ac445e31bb95976ac8a4f5979ab72707c70ad64c9aa592b926c2bfef779e0904403437238c3dd4ab62a96
5733d324ba9b8dfe78783bfad05023f1879a6732fd7f7b27aebe7afddb4fb7af5fbe78de7cf92dff74fb4950f30e3f6fb0e01d7e3e63d90bfcbc1149
c59dd0d2e175dfbd6a1d5a83f7727cf756881fc5d517b8fd13ee7a2184562de94619d946e523410119df4a63c871f19e8bb5a4d63b135432d3ade792
cec487a6528ca5b121326da4a44346498d936dd02eca183ca9e6e3ebe61fcdcfcdd5172a398787c4cfe7468a174df9f81fa0c6b6aa8b9823bc99b274
7bd75cfd8d9aafde37d7e2baf9306895085bd20cc34937020789d0b6b5704285220e4c8ae00d6110cf11fbcfe283485bf62ced9996adb6dafaa09553
b857b58eac19dc10cf6fc4d5d7d4f8e6e6c76e876e5e89ef9a277ffcc32e3c6dbe6f6ebe117fbd11d7a2734884d8dae025c2c81d61d24547826fb593
96a23496d63ab293c76e90d1ad73466b5b2546162f3a821d6e8df6de2be5745cedc94440faad8da1d5c1c4a8bafcae44531e28d33aef8287a7d1b65e
056f284643e7c5a1b7eea9d5c66967b3f52c3a61dd8516098e8485b36ec1ba99b56e4caba5f788e3683d8b4e58c73e384b2604efad5a1bf9542ef750
bcb2f5e273f35da35a8bb52deef91e065f414e06eba3933ee083923a2ab22eb8f4c168d25e59587b59977d70b817d7ca6695a5bc05896f79d1abe61b
e83636557cbac54b1b8232e943e7058ca60fa1f34279dc581817307edc258ffb1e8c4eb6cebbb9d68905abbbef786f56b0a457a647ee7b1db14ec7d2
c4f943574161dad4c4fc61bdd44a0ea9820c810394520989921227a4c479b27bf9b4b97927d01f6db051e37297494f76bff672e9a526358af79d58b5
0afdc0ba2c7fddc9cd60a39736fdddd6a8e898dc0dca43c49665b99c91ffa5d7e3a2e5cecca99f931fd47795d1cf4d3e7b653f7b05053c87b341fb62
3b558c6d08c556f622be636cf1d1244d893c5e4ff5162fab4bc1876a7372a82e9aeca6abcad3f5defc83e154a46ef8cfa7a670284a2c48bda3ab5d6d
d544871069d0a8a031a1bab62f310b9d499ba023ea04e6b978cfc521a0f7582b7d27ce5ae6e45ccddbaadd4c630c822a6dac73d54c65e2e5991a744b
0123cc2aed681bcc5046b546a94825e0e1e2455fd22234706d6d4cad7a03d240a9016449f2aaaa9a2c5ef6249a5679a09e485e872d4883d72f871b14
d4847c72873c3a2a011ad0a10836000fee0c471fdc996508523bb30187149d8d8111eecc3222a99db91896dc671fc4f46f36e193ea54c6ba09cfe451
faa0f86466facccc15983f9a4b77537329419ad573adbc99e39469cd2b300a7041df914c8af9230029340346e8bf015eaebe5643306ed7a19803c64c
e7374de8edb604994c5ef001d88c16c989413912028568c431b57ae98c399b1410abf0cba4a90eb7e88c5b36fb93f08ab818aff8d81ae36df40dacb7
3a3a8f35fc4cc3c47b2e066870da934995c494cc88b992756025531f452a3031dada9809a72811d21a68c9cb71132ee4440a5f9878d997921559ebcb
095aa4aa902c5ff6a52246563b33cb8c102089f70e0937b2034cb4cc0e60c461ea498d7b6534973023040c9260d768baff7cc2aec3480c847717d021
04a4e122dab9ce56b3e88461e05bcc62525a47e9cea74336f52d71df7883b583a22e324dc25bf0033122657b38eefc2527323757ea25674ca1cca164
1d8bbad7302301f93932234a86830213b575ff93a0433f0e70c1d061b1b73d3c2cf675c0977cf738149d8417d5f1e0b27addc6934c010e334b946cf0
f0de9892027ae4e34a811a9898630cae654e7e3ef8609448d90298fcc4982db9926df883b12265d632f9b23b155db205823062a42aa12c3fe14cc998
6c8120bc9e390e61c4c009305211035b100977668425dc93056c52bb71194029da1b4329dc8965a852fb71315eb9d746f870444999c2993f7950e032
3785e6a64b264ba686d611da3963c655ec0a073033dad750262a219391327904e8253e266a04fb87f2a45082cc515ad0222493d520e3c88a7049458a
f42afc61569eff4509b19a13292d7508c5d694c8e5dedc172392d8f3101284e0679a2ce57c4840e7500173a6e03da6a55cc3b96408dbfc2c44db1af6
fe14118259ab637063e42fe441981b59b8e846c981ac74e304055254c2285d74a3a23fd6fa510483a9c344e7e7346a1df954e459db71a3c3803d54fd
dffb1e6595c210c1e8a0a161bc1fae04a4ad96c6c5e1caa7ee8a6ea3f4983ca9170e577eea9b8c44273752694cd9feca2ffd9568a403fe5061c59a79
0f7eeeaf78ab4daa4ecc88aacde960a4d66af42d35baf1a1ff9d5b1d2755306cc805e568e454b264995249b013888d4829132ea1903cfe56091b8fa6
b364d934409af13afd3a0bfb7f099564305a224999f9ab2c59369d90bf463b0952da783e8fb4a9d53fdcd76a785bc9dc129b580fc421f1ae7a34254b
02697afa56f7af1cd499671a972f695dc31b59fe8d1a02f24aeb43b4961ed5efaaeca3a393d866f7c8996df400bdd97e72943e05c9aaf3d34505bb8d
479a00696e9646badcbf8d2c9242c6aa34d00bb8c6e81f8eb698984333a664467c3e64635411ef0359ba0c534afa68136a632c114fd22c5df4a4628e
360037461015c5324a97fd2849a3adc04d295d9c51ff8fdcce476eac8571f8c6c8a6650c57714d1b801cf384a339e6c932a4ab3cd980ebf84460e08e
79b28cf02a4f2e8679f7393e1e8e7de37d2053720f89f7a627f6cc30cebcdbf174af11e24a2450f1730cf74deb5df50da5826e7bbca04f2e8241f42d
320ff1edea833bae75a5f63f75e2d0a27b793a81115756c2c737f33523ea33d26a9dd50111bd53e32538c3369fbadd1af67ee8e8cffb8007842f063e
5edef75794444d4663f3b01a324307af0ed9518e97a92baf862bd176c332efd50fa39d508db18ffdb834d13815a274a3b6df8734d11280806fd9b0c3
5e269a09bb1ccb410acca09577da798a65eac203008c327d3ff56967140641421a2bb40d29199cf687b43ce55b9750c386fceb3850cf08b3c8e5ed53
8d6da3e5d56b1071a9a20f965c53ddcd36fb9f4f06afc99143d388a62e3c237db4a9f8c62b7f1e2cd8f415d7944597af299ef4b6bf09b046eb54ee61
b889bbfc94955877fe1a9a5ff70f15534ea6de475da00863cc06731abcdd8c00a9cefa795075d75df1adc618964131f036b43e4746d922b75781aa5f
f2137214a07c82b7e9df4fc9544f76e9f7805fee5eeff6f8d3ec7edbfd80d75ff1b9d9ddeddeef5ee1fd1eaf4d6aa1191f8c47d9d533f470d8951af7
a017a6210514405175200b1f642dde1762d4ea28664a98f86d8242e341ea78c6a593aa8b873e9cadacc688c03ad43d6b6a9ee9f16c77b84c2db56bca
4bc11075304ea943a45400ccc786eb3252ba6ba8a5785f8879a49892732285f35a9087a9bd26525446ca7447f17cf8f6aa7f767b325045d6882e508b
da10276b5ab89a8abf88132a59d7e27d21e671624ace8993431f9587f1b1264eaac2fa898fe8d2203d563f8ca54acf9c93e582d45aa917b1f3385bc9
d4eacbd885d49221ae728c8979ec989273621709213f8c9935b1d365ec2c4f199d53223de3a1ae6cbc24762bf50aa1a441c57507601ebb04a46d2dde
17621624aee48cd8297208f90180ae895d71c6bc16e2fa3ff54cce73
>}
\expandafter\gdef\csname EFWidth1\endcsname{306}
\expandafter\gdef\csname EFHeight1\endcsname{338.4}
\pdfobj reserveobjnum
\expandafter\xdef\csname EF2O1\endcsname{\the\pdflastobj}
\pdfobj reserveobjnum
\expandafter\xdef\csname EF2O2\endcsname{\the\pdflastobj}
\pdfobj reserveobjnum
\expandafter\xdef\csname EF2O3\endcsname{\the\pdflastobj}
\pdfobj reserveobjnum
\expandafter\xdef\csname EF2O4\endcsname{\the\pdflastobj}
\pdfobj reserveobjnum
\expandafter\xdef\csname EF2O5\endcsname{\the\pdflastobj}
\pdfobj reserveobjnum
\expandafter\xdef\csname EF2O6\endcsname{\the\pdflastobj}
\pdfobj reserveobjnum
\expandafter\xdef\csname EF2O7\endcsname{\the\pdflastobj}
\pdfobj reserveobjnum
\expandafter\xdef\csname EF2O8\endcsname{\the\pdflastobj}
\pdfobj reserveobjnum
\expandafter\xdef\csname EF2O9\endcsname{\the\pdflastobj}
\pdfobj reserveobjnum
\expandafter\xdef\csname EF2O10\endcsname{\the\pdflastobj}
\pdfobj reserveobjnum
\expandafter\xdef\csname EF2O11\endcsname{\the\pdflastobj}
\pdfobj reserveobjnum
\expandafter\xdef\csname EF2O12\endcsname{\the\pdflastobj}
\pdfobj reserveobjnum
\expandafter\xdef\csname EF2O13\endcsname{\the\pdflastobj}
\pdfobj reserveobjnum
\expandafter\xdef\csname EF2O14\endcsname{\the\pdflastobj}
\pdfobj reserveobjnum
\expandafter\xdef\csname EF2O15\endcsname{\the\pdflastobj}
\pdfobj reserveobjnum
\expandafter\xdef\csname EF2O16\endcsname{\the\pdflastobj}
\pdfobj reserveobjnum
\expandafter\xdef\csname EF2O17\endcsname{\the\pdflastobj}
\pdfobj reserveobjnum
\expandafter\xdef\csname EF2O18\endcsname{\the\pdflastobj}
\pdfobj reserveobjnum
\expandafter\xdef\csname EF2O19\endcsname{\the\pdflastobj}
\pdfobj reserveobjnum
\expandafter\xdef\csname EF2O20\endcsname{\the\pdflastobj}
\pdfobj reserveobjnum
\expandafter\xdef\csname EF2O21\endcsname{\the\pdflastobj}
\pdfobj reserveobjnum
\expandafter\xdef\csname EF2O22\endcsname{\the\pdflastobj}
\pdfobj reserveobjnum
\expandafter\xdef\csname EF2O23\endcsname{\the\pdflastobj}
\pdfobj reserveobjnum
\expandafter\xdef\csname EF2O24\endcsname{\the\pdflastobj}
\pdfobj reserveobjnum
\expandafter\xdef\csname EF2O25\endcsname{\the\pdflastobj}
\pdfobj reserveobjnum
\expandafter\xdef\csname EF2O26\endcsname{\the\pdflastobj}
\pdfobj reserveobjnum
\expandafter\xdef\csname EF2O27\endcsname{\the\pdflastobj}
\pdfobj reserveobjnum
\expandafter\xdef\csname EF2O28\endcsname{\the\pdflastobj}
\pdfobj reserveobjnum
\expandafter\xdef\csname EF2O29\endcsname{\the\pdflastobj}
\pdfobj reserveobjnum
\expandafter\xdef\csname EF2O30\endcsname{\the\pdflastobj}
\immediate\pdfobj useobjnum \csname EF2O1\endcsname {<< /F2 \csname EF2O2\endcsname\space 0 R /F3 \csname EF2O9\endcsname\space 0 R /F1 \csname EF2O16\endcsname\space 0 R >>}
\immediate\pdfobj useobjnum \csname EF2O2\endcsname {<< /Type /Font /Subtype /Type0 /BaseFont /EVICAO+DejaVuSans-Bold /Encoding /Identity-H /DescendantFonts [ \csname EF2O3\endcsname\space 0 R ] /ToUnicode \csname EF2O8\endcsname\space 0 R >>}
\immediate\pdfobj useobjnum \csname EF2O3\endcsname {<< /Type /Font /Subtype /CIDFontType2 /BaseFont /EVICAO+DejaVuSans-Bold /CIDSystemInfo << /Registry <41646f6265> /Ordering <4964656e74697479> /Supplement 0 >> /FontDescriptor \csname EF2O4\endcsname\space 0 R /W \csname EF2O6\endcsname\space 0 R /CIDToGIDMap \csname EF2O7\endcsname\space 0 R >>}
\immediate\pdfobj useobjnum \csname EF2O4\endcsname {<< /Type /FontDescriptor /FontName /EVICAO+DejaVuSans-Bold /Flags 32 /FontBBox [ -1070 -416 1976 1175 ] /Ascent 929 /Descent -236 /CapHeight 0 /XHeight 0 /ItalicAngle 0 /StemV 0 /FontFile2 \csname EF2O5\endcsname\space 0 R /MaxWidth 830 >>}
\immediate\pdfobj useobjnum \csname EF2O5\endcsname stream attr{/Length1 4824 /Filter [/ASCIIHexDecode /FlateDecode]}{
789cdd567b7854d5b55ffbfcce9e09936432934c8684249319920958120809414341060c504284288f0634cae44578e46112903480498d56b1888a46
0c542ca5d17211d3dc3e80440d865e408d0a1a69eb935aaed6462fb5ca63cadde93a8748b5dfedf7ddfbc7fdfa7dddfb9cb5d77badbdcedefb6c1244
e464a093634edeacd91443562291c85cf79cc2050b532bc6bdcbf424a637cd59b878e6983dd3fb98fe19d36d0b164ec85a3de7b6a3445a01d34b4aab
82b5d488fb98dec37473e9ba06efb5072786112195798b2a6a5754dda4ad3e4b24d7b37cdf8a607d2d47e378169653c48a358d154d17bffb67a63359
3fadb23c5836a2fa855b89c28a583eb99219f667ac6c1bf600d3a995550deb93d2b4db9836f2495c53531aa453e2db4cbfc6b4a32ab8bed65260c960
fa7da6bdd5c1aaf2f48819b544239814aab6a6be61e830751385334d99b575e5b51b82bb3f649ae35b9f24a3361174b9694c812c26cf786d349ea692
9637bb6011d9d7041baa298e6bc86d6888e80a666aaf2eafaba6b0613bc132cd1cc3781c6d6aa67307f7e136b4e74bcc8ca99334e34a3307ed4afc87
e8617299f1bf13ac0b96506bb0aeaa9a5a4bea822ba9b534585dcfb0b2bc8e6163dd1a6a5d515ec3f88abaf2d5d45a19ac669dcaf212e6ac0e5607a9
754db0c66b409e476b55b0a1925aab571b9c9a15c12a6aad5b5bcd9a0d15d52b18561afebf32d7bf35a13fa2f5709e24b3e5634c7a2e8f3845155a34
97303c0cb0e89aa60fcfe14a2bac9855465ecaa3c516977289766b95f8e06b3a187e13cd99131530254c5a27fe565ccb30b33279b490169b590d6343
bd435d433ffd8aa7cbdf73d257fce60dbf46560b875f9e032d1e7e2da242648b0aeaa13f704ad368378590ccd134aa60ae31ee138b6890e525acd9a2
df2516f158a5ef218de577e8fdc6ea11d95442bc52c9afef113d7488ceb0758bd822bf259719dac36b4ca373b24f9c95b95a2e15e955fa34bd4b6fd1
bb5863ad5ea1b75027c35ced84be536fd25fd59ba8c8c84c1418af9107b58b7c9142ed5abbc813f1224feba7c39c7385982edac537e54bf2251aa001
51c89afbe876cd268e8acfc4045124bad8ea1c9d13c94ce56839e253f12167bc9d4ea048daa89db68a68a67aa89ff33e439f51bdce5e69ab1cd0c6c9
01eaa3d3748af944ab84f11592902107b89fa5a7681557e6b4d0e480c565f5e915da051a14776a1dda05912234eed12299ab790bfaf5e5fa51fd5e96
727584866c246326c39b0d0d3920da398bd3960ad1c87a466fe238835a9f7680e7f81cbdc3f3e2e8dacd5a93d64eef88fde210674c7497d8af2fb796
e809d46e69d78be853a3367442ebe77a149af5b88feeb34ca473ba85cea2402cd79f322a467e795890f059f32dd1d426f2ad77f24c08575313b9587a
9c37e3e1cb9db5c22c49d4a68fc1e39cbba66dfcb26ea291fab55c94d04eb36f1307681b1da07a621748fba5d522756882d2bd8e4ecd3fb7ac337043
91f7d8525f46fadf915e87d5db49859d918dde03434385457a825cda29133be10febd4fd29a7ff91f07446fabcc222ef01317656deb0db59cbf398b9
b08851836236f367e59932236aa7f4f3337779a7b7b4d2bbd9b13965ca6647f9940c7365f2fee27d05c67a784be4f2570fa391343e104bbd23b6885e
7798e6b691cc708c27f708c4392e0d5e1a74e6729b4813063f1fccec2a8c17a25838b3b326e74c4a4b113eaf3ed21d1debd2ac16fd2aa14ddb3c78f1
c227fffd8568e3cd73fded2b2b2a56ae579ddc57e95d976efbf8bd773f1229c1867275e1c99fa8f3e50d41e36cd836f481cce03536869e0c7c83f795
2bce690bf724ebb1a24f8fed8bdbefd4f7fb773ab78ff5d8c29313ac94106f7759e3478f75bc3d78e48dc123cee8cbd9397efff9a0439d719c718ecc
651e3f998145b55789dca45c4f6e72ae37df939f9cef2db21527ddecb925f916ef32dfaac49aa41a4f4d72a5b7c65bed6b086f886888dc90bcc1bbc1
d716fe68c40e4f7bf24e6fbbaf23bc23a223726fd25ecfdee4bddebdbeb1c5c6f4c78b94d19658973b597844ac4bf78d4e1de374eb978b32418c1739
93527d59bafe52d3c7959b5b97aedd73f1a47a4bbd7ebffaddd6ad227cc3a6bb6fbae791f75f155e616f12baec5047aebea6a070ea7571beac977bce
ff69728e985570fda2f9b30b3cbecc935def9df573957ab84a7dfcb59c941d18e90c0fe3c357f6dab78fa02dd1616e5bae7013a2cdaf659422ebf341
2e43e62f0231b531bb62208a637c4e97db48cfe9733ac8e725678f6869bc7773a368512fa843aa5bbd20072eb51cdcb5eb20eeb8d4a2fad48b62329f
9197cf6e6de963c7eff79ebc356aea17941c661e712736d9dffa723c5f77697f54ff8872f35ff8b73f1e59ab5412ffbe7de7eb2e6645f57ffd4fc1cd
ca276a85795ceee5f738afc63dd423fb689be50e636d5e69797c178831adad984fe3683d9ffd1a39286044939f683e1e8d33df4acb8cbf89ce970291
69fe130c5c909ba9cb38af7e317b180779c5fc615cff0a2e294e540fe316f288bbe93aaa21be0f511dada41554490dfc871b4ba574158f59fcd7caa4
6cc64a58c34b3359a781cf8606d62ea72055f18dc04b73a99af5c7333683d670f7d28d577cd59b54398fe56cb38e61196bdafe1751275f89ba8823ad
e358abd8a69ab58d3c826cf37f8b98c7d82ab65b426b59a3947583a6b772d32268cec8cb5eaa19d6b24e09fb5dc97a5eb6afe1e84153f6f77e169a5e
ea39a31ad62ffb0752ef15f91233ab7af6556346cae2dcb229e76b765f5a655cb122f3d660dc6436f12af89f9ad55c931a7fd9315cb96b66b46acd22
8d1441f8794f41a40a1f4d21295229c4540af17e12a38779a34d3d0387f09af264be6d427838324492294da4788609e46138cae4c49b30ce84234de8
3661ac70919dbdc69a948183d7b781479b304ad86923cba34ccac0212245047d9f7991262f927a491711229c9632cf90806133f3c2858dd2986748c0
30c03c830331c2b40c33a1957790010d0b4bd7a3e3e58c186131e7254da89b5a3067a4991c61420a0c6dc4d0b5500a97fe922e2f29fc251d21858b17
e6c88b1b71610ece87704ee10b85cf15fedc8dcf14fea47056e1bf3cf854e193419bfc4461d086c180fec78f6df28f59f8d8863f84f0d1836ef991c2
8721fc6708679838a3f07b850f147ea7705ae17d85f714de0de19db7e3e43b65783b0e6f3de1916f95e1b7bff1cbdf86f01b3f7e7dc22f7f1dc2a937
5df2941b6f0e38e49b2e0c38f0c6ebe1f20d2f5e0fc749d63819c209f67fc28fd71e8e90afa5e0d5575cf2d534bcd21f2d5f71a13f1a2fb3f8e524bc
e4c28bc7bbe58b0ac78f15cbe3dd38deac1f0b0c1df5cb63c53816d08ffaf11f0abf2ac391071cf288425f225e5038acd0fbfc14d91bc2f34f27c8e7
a7e0b96747c9e7b2f06c8f533e3b0a3ddd51b2c789ee4311b23b0a87227090831d5438a0f0cb58fc221a3f57f899c2bf2b748dc44fe3d1e9c633ece7
9910f6f3b03f84a759ffe904ece361df46fc9bc2de34fc44e1298527153a147e6cc31e851fedb6cb1f29ecb6637740ff2117ea87213cc1264f78b08b
875d213cce937f3c113f50d8b9a35bee54d8d15e2c77746347b3debed52fdb8bd11ed01f53d8ceab63bbc2a3e3d1c6866d9ec0101e61d347bc783802
db98b56d1e1ee2e1218507b90e0fbaf180035bfdb85f618bc2f715ee53d8ac70afc23ddff3cb7b14bee7c7dd0a7729b466e1ce367c57a145a1391e77
d8b04961a3c20685a610be1342a3c2edeb3ae4ed0aeb3ab0b62141ae0da12101f521d46dc46d0ab535e9b2261dd5215485b02684d50aab14562a5496
46c8ca2cac50a8c84279994d962b94d95016d04b4b6cb2340225360497c7ca601b960ba75c1e8b5b6db845a158f12dd5296f56b8695982bc49611953
cb12b054a128846f2b2c613a30b44461b1c2220f16ba70e30df1f2c6106e60c10df1285c102f0b435830df2917c463be13d77b5030cf250b62312fdf
29e7b9903fd72ef39d986bc7b7429833db25e7c462b60bb342c8bbce2ef3a2709d1d3367f8e5cc1066b0cf197e04a647c980c2f46bed727a14aeb563
dad44839cd8da991f86619a628e4ba708dc2d531989c334a4ef62367924be68c424eaf3ec9162927b930a9992f261132db85ec809e158189991d72a2
4226fbcfecc084088c8f4146fa149911427aac5fa64fc1b8327ca30c57298c8dc598914e39c683342ffc1ea4a67001c6a57a90e2c4688a94a343f045
c117d0bd2e24dbe0f12029315e26f99118152313e3917880cf8c07f584488c8a9f27476d443c078d9f873885914eb8399a3b8458e6c5fae12a438c13
d10a4ea69d0a8e3244d91d322a0651bdbadd017bb31ec992c81022b210ce530b7723bc59b745c216d04728842958152cd2262d0ad20619d0f51050c6
9773a7d4149f5e9152384191100744d95d5bc4b87f8d46ffec04fe1f5b12fd1513dd5a75
>}
\immediate\pdfobj useobjnum \csname EF2O6\endcsname {[ 68 [ 830 ] 83 [ 720 ] 85 [ 812 ] ]}
\immediate\pdfobj useobjnum \csname EF2O7\endcsname stream attr{ /Filter [/ASCIIHexDecode /FlateDecode]}{
789c636018248005af2c2b10b3010001570010
>}
\immediate\pdfobj useobjnum \csname EF2O8\endcsname stream attr{ /Filter [/ASCIIHexDecode /FlateDecode]}{
789c5d503d6fc32010ddf91537a64344bebc5996aa74f1d0368ad329ea80e1b0906a40180ffef7392075a59e044ff7eede7df173fbd65a13815f8293
1d46d0c6aa80939b8344e8713096ed0fa08c8c4f2fff72149e711277cb14716cad76acae815f2938c5b0c0e655b91e5f1800f0cfa030183bc0e6ebdc
15aa9bbdffc1116d841d6b1a50a8a9dcbbf01f6244e059bc6d15c54d5cb624fbcbb82d1ee190fd7d19493a8593171283b003b27a47d640adc91a8656
fd8b1f8baad76bfae944e905ee05bf135d1d339de05eb0d055a1ab275d114d5d7eeba586e93aeb36720e8116c927cc1ba4d98dc5f5cadef9a44aef01
04487cf4
>}
\immediate\pdfobj useobjnum \csname EF2O9\endcsname {<< /Type /Font /Subtype /Type0 /BaseFont /GCWXDV+DejaVuSans-Oblique /Encoding /Identity-H /DescendantFonts [ \csname EF2O10\endcsname\space 0 R ] /ToUnicode \csname EF2O15\endcsname\space 0 R >>}
\immediate\pdfobj useobjnum \csname EF2O10\endcsname {<< /Type /Font /Subtype /CIDFontType2 /BaseFont /GCWXDV+DejaVuSans-Oblique /CIDSystemInfo << /Registry <41646f6265> /Ordering <4964656e74697479> /Supplement 0 >> /FontDescriptor \csname EF2O11\endcsname\space 0 R /W \csname EF2O13\endcsname\space 0 R /CIDToGIDMap \csname EF2O14\endcsname\space 0 R >>}
\immediate\pdfobj useobjnum \csname EF2O11\endcsname {<< /Type /FontDescriptor /FontName /GCWXDV+DejaVuSans-Oblique /Flags 96 /FontBBox [ -1016 -351 1660 1068 ] /Ascent 929 /Descent -236 /CapHeight 0 /XHeight 0 /ItalicAngle 0 /StemV 0 /FontFile2 \csname EF2O12\endcsname\space 0 R /MaxWidth 695 >>}
\immediate\pdfobj useobjnum \csname EF2O12\endcsname stream attr{/Length1 4392 /Filter [/ASCIIHexDecode /FlateDecode]}{
789cb5166b7054d5f93bf7bbe7ee33fbcee64112966c964721011248c84af1f20eef40022440201b360f48b2094920243140258c02555e3195888296
5aa44a574a9d60b4838a830ed24e4718eb54079de2d4ce446b6710c78c39e9776fc2cba933f587e7dc73eef77de77b9f2730007052278363fe9cb9f3
c00d2600964454effcfc6505692f8fff23e153088fcc2f58396bccc907df22fc24e15dcb0a26666e1e13b908204d277cd5c6da503d1bc38957ea24bc
73e3b626dfd3c7a3a301702cc9d454d457d6fe73f37f885fd6c6f757861aebc14015b88f706b654d4bc5a3fbdfa8237c02f1dcaa2a0f850dabde7401
98d3683cbb8a08d617f83b84af233cadaab669fbc873928df05d1a5e53b7318446791fe19a7f9edad0f67a8c28148fb987705f24545b9e913dbb81f0
bf918f8efabac6260830f2ddfa398d4fa96f28af3f7c484a0588a1f8652b68b9b1c25091084330ea34ad992103a6833467dee242b0d5849a22104f39
a43238087007d2b9abcb1b222409c3b232e9d2fe46fa7b74ce4448215cd2ed30aa86610e838eddb677183ac1a3db6b0d3584caa023d4501b818eb286
d026e8d8188a34525f55de407d4b430d745496d7115cd9505e0d1d55a108f154959711a53a140941474da8cea7f5e477476da8a90a3a22d51aa5ae32
540b1d0d5b23c4d95411a9a4be4ad37f4f6c770b936dec20709aaf2c7e14b2598af61f9c8b1f408544b326591444942d923c1cc39d925f31374c9415
d0a4788487751b6ad93feee3c1e196a4470e504818d37119a6e9342d7b1269a8860668d2fdd2e02d1a3c7876b07bf089c1cefb6c0ecda33a8cc93aff
50d3e0eae146b1900e209d000ab5a6e146b3c7fc6c22f4c0656a6fc069b8245d821390437505fc89ed95d269e47978043ee4129c83632c9779582e8d
beaf789416be879fa2f175249b475a3e8423a453d3d4033ba536291f2ae012bf02dd54eb74fa57f02adb0dd7e049b82ce5c12dd88d8570806a3734ca
c0af313308693c9cd42c510528a3968ce9fc9a5ebf829dd046393ba9f4281ef6beeef5f3ec2dd607ff229fdfc775b885248ec129b647f6cba7e43c38
30e42f96c2016937eb964bf5da4673d80cc7e452765af1c0db9aaf44c9274f2be0356acd70853dc0f6e05ef2ac4df3805f832b8685f2c421af0ced38
95e2016a67e02ca46317c9ebb12815704caa205bb7c8932b3807c691b646dab274aa40ec2b0a97516230c1e7884a8105e1a8babcc8f74ef1a8f409df
437d0e832f0af9d198165fcfe0607e913c82174779521403c6a81cf07ffa43839fa64f58945fe48bfe61ee9c61ad734be710ada088400d2332d1e7ce
491fdaf9b4d66885214139b45cae5094465a898fabd319f7c602f77a633d04b879ac373668b7c698d81a17acf1cef3c478bcb131896e97dd66a4708c
5e6b629cc9cb9312bd52ec8864c7cd1b57fb9cb9b9ce5c679cd6b97227c3c4e937fbaef6e5ba72732791330607ffc240dfd0af38f56c919db11235b6
c85eec287616bbaaedd58e6a679bbdcdd1e63497301c95993d75ca68bf3b85c5b9fd98c1c633f4b35199e49d22c9154c3efafacee6adeaca693baf6d
4ddfd67abded322de94b7ff9f8353e71e0cb171ead6d1a78c6be2d2aaad8cecef0c05e7ead377aec86b627f328e6cf68fecdf0a99aa77083d1c0b9cc
8d46038e93b92c8d6312e1260e66238d8159e22899c1cc682b9983462e21ee807526c564342b5c62326db78039cd42d1bf477153a45ad47d37e2ee46
ec307e61300e056efce20ec5a167c0286b1998c125af942d154885bcd058c5371977494724b395cefea01234a8d6222ce2ab0dab8d9b788db2d9d082
2d4a8b618ff438ee55f619bad15302256c51d45558f43a58062f508416aaea3467ceb46296852ccb4479a1cc35f744a5e67f0fa892745da46ea69df4
dd56f6c1c0e7033d927fe0635a0bebc51af912ed1137a4c24beae224963c82bb9cf109525c2277ba9c12931088e0e62e97d341408c4db2dab591a01d
92e31663f2626b2b2e77ed74ecf0bb3d0e2778ec89ae80d3635352fd8e9bba7fe7e90c52a715f75d3deb70b2921bfa5aa13a191c376edcb8d9e7f8d2
7977a15032efae150db943a39e32579a46993b77226d579a5412b0317feae8a9535c39335856a6372e278b22f678b3325ddafa49550c3686ab16bedb
7afa8de6babcdeca874ea4f9ce8b6be745f0a1ad676eed6c5edd99139c7d2a14feecdde758c6f6a585651b4f7dfb1daee87c8a2df9f2e0d1e54b577c
34b46af01c65c7094faa31163328b2d360729a8c692ec7cd8be7ea4da526a9e4bd73f5ce7ca754e2d2b6c06450ad8ac5004e5c6c6b35ed304ca408af
bea747180de617a989066e3419655a500a371963ac92cd427f3337998cf7447d27760a79995b5b2c6e1554a64aaa4db5ab0ed5a9baf3dda692b3c7dd
ac8465302d58f4bbb30c4389c0434979e38e7c94b6a5e7ef3dfe05971b5382c9f88add71bd77e06db9f44ca8caa0ddffbb29b232fe0c2440b73a838c
d369a59891bbe25c128f7325508be77171aea00b4d6e5617170b3b4ced164f62427c1c9d112eabc56c52e874b383dd9ee8b8b8286a2a5c143517ae5d
14b5699da5702d4dbb7df0c2b462c7056d7fd0f6e8bb39fd625fa6f3fe436168ae87e6365e0fd49220c5bb2be35bbc2d09bc841946a18dc57abc0f30
833ed7d9396e7f4e963b0bb12c491c6a5694252b5f1d193df9aca2341f4b3e3aa3e7e772e9818113a919316ab07bd7c37139a3a4d2036c86781386ef
65a9f8e84bf17f1ebbc13efd6b1869d42fd1bfeeb05dbffdffe69381ac9879c6fdfa6979e77d407740ad48a6f7d49e6f3ef9f67ccc3c08d36eb9b798
64ed2ed1d49fa6b69f2ee00390233f0779dc03eb7121e4a15fcbf79dd2ca14f68424d13da2f964c2a5301eaae86e97c041773bbd24d126c59007da9d
6e8035daab41d6deb393f4f7840633f01236044b6063f3866184d1ac701896ef8139c4b3d661588134761866d37d560f2df446d8049564bd097c3016
36d21de6834c9844358ba032e2f0c12ce269a29bad89b8cb2104b53081a80b2042fc1904cd841aaa3e7a43dcd6d5a863e5f42f27996dd48789d3fc7f
58cdbe63b5902c6d235b9b492642dc9a1f2192f97116e710b499e456c156e2d848bc215d5bb92e11d223f2919608f5f5c453467a37119f8fe4ebc87a
481ffbbe9e025d4b232c1be6df42d4f21fe0f17d8f6b95ee6123e175bad54cf2330ba6de277d5b36fd7bb2f4a21efc9ada0e5a15ffab98f4f52ad14c
4f85a594c195b0ba47daa50ebe2c301ac0df67e2992e7cc9862f36daf88b99f83b81a703f8820d4f05f0b75df87c3ffea61f4f0afc75109f13f86c26
9e385ec04f74e1f12533f9f1027c26139ff6e0b12e7cca8cdd028fbaf0c976fc552f7609ec248ece763c22f0f0a1f9fc703b1e9a8f070f8ce007051e
18818f0b7c4ce02f05ee17b86f6f0adf27706f0a3e9a898f08ecf0e26e810f0bfc85c05d02770adc21b07d5180b787f121816d4e6c6de9e5ad025bb6
97f0965e6cd9256f6f0ef0ed25b85d959b03b84de0d62e6c0a63a30d1bb604784318b7d4bbf89600d6bbb08edcaaebc7883a28b056608dc06a2f6ede
14e49bc3b8896c6c0a62d5520baf8ac7ca0a1bafccc40a1b9687314c62e12edc28b02c64e5650243562cdd90c04bc3b861bd836f48c0f50e2c31e3ba
b5317c9dc0b531b88624d6746171918d178fc5221baeeec7552b7bf92a812b0b4bf8ca5e5cb94b2e2c08f0c2122c54e58200ae10b83c3f832f17989f
81cbc889651e5c6ac125e4d59299b8987e8b052e5ae8e48b02b8d0890b04e6cd77f23c81f39d384fe05c817304ce9ed5ce670b9cd58e3305aafdf860
3fcee8c7e9d9b3f874810fbc8341828205982bd47a9cd68e398466cbe93c7b164e15384560561033fb719215270a4c173841e0781a1e3f197fe6c071
e8e0e3fc383605c78cb6f131611c6dc30033f34026a659e3795a3bfa7990fb05a61296da8ba3887fd408f48db4709f1de9b57a41ed96475a30c58429
aa9cecc024624feac2115d989810e089614c8877f18400c6bb30ce1be07133d11bc058811e81ee7e743913b84ba093b43a13d021d02ed0461a6c5d18
430663dad16ab1726b3c5aac681668a42163172ac4ae08e414050fa24c989c8ee8a09bcacca578646664aa0c49c87a5878cf636cfc4f5be027d6ff63
4bf27f0146bda70c
>}
\immediate\pdfobj useobjnum \csname EF2O13\endcsname {[ 82 [ 695 ] 107 [ 579 ] 113 [ 635 411 ] 116 [ 392 ] ]}
\immediate\pdfobj useobjnum \csname EF2O14\endcsname stream attr{ /Filter [/ASCIIHexDecode /FlateDecode]}{
789c636018028085641dac486c36067620c9010002b2001f
>}
\immediate\pdfobj useobjnum \csname EF2O15\endcsname stream attr{ /Filter [/ASCIIHexDecode /FlateDecode]}{
789c5d513b6fc32010def91537a643446cb9cd62598ad2c5431faadb29ea60c36121d580301efcef7b80eb484582d37d0fb83bf8b57d6e8d0ec0dfbd
151d0650da488fb35dbc401870d4861525482dc296a5534cbd639cccdd3a079c5aa32cab6be01f44cec1af70b8483be0030300fee6257a6d46387c5d
bb0c758b733f38a10970624d0312155df7d2bbd77e42e0c97c6c25f13aac47b2dd159fab4328535ee492849538bb5ea0efcd88ac3ed16aa056b41a86
46fee3abec1ad42e7f2c499ec32dc7ef083f0d098ee1966382cf4582cf9bfa9e66b6ca69b5b115c154c3df6bb19c38bbbd57b1784f6da601a7fe6267
dae0fe07cebae88afb17fe0683b1
>}
\immediate\pdfobj useobjnum \csname EF2O16\endcsname {<< /Type /Font /Subtype /Type0 /BaseFont /BMQQDV+DejaVuSans /Encoding /Identity-H /DescendantFonts [ \csname EF2O17\endcsname\space 0 R ] /ToUnicode \csname EF2O22\endcsname\space 0 R >>}
\immediate\pdfobj useobjnum \csname EF2O17\endcsname {<< /Type /Font /Subtype /CIDFontType2 /BaseFont /BMQQDV+DejaVuSans /CIDSystemInfo << /Registry <41646f6265> /Ordering <4964656e74697479> /Supplement 0 >> /FontDescriptor \csname EF2O18\endcsname\space 0 R /W \csname EF2O20\endcsname\space 0 R /CIDToGIDMap \csname EF2O21\endcsname\space 0 R >>}
\immediate\pdfobj useobjnum \csname EF2O18\endcsname {<< /Type /FontDescriptor /FontName /BMQQDV+DejaVuSans /Flags 32 /FontBBox [ -1021 -463 1794 1233 ] /Ascent 929 /Descent -236 /CapHeight 0 /XHeight 0 /ItalicAngle 0 /StemV 0 /FontFile2 \csname EF2O19\endcsname\space 0 R /MaxWidth 974 >>}
\immediate\pdfobj useobjnum \csname EF2O19\endcsname stream attr{/Length1 15328 /Filter [/ASCIIHexDecode /FlateDecode]}{
789cd57b09781455bae87feaafaaaede977467ed249d7416c296d821ec4a1bd9418cb208289a40089bb2055c881a9621310a4310088a081101d96422
7231c188205144644607f00e17bca8a0b8446466709c09c9c9fb4f7507027aef9db9dff7bef7bdae9c3aa7aacef2efcba90a300070d14906dfa0fe03
06c2482801609de9ae7750de5d23bd99a9afd1f560eab067d0c8d1b955d3b6c70128cdf4fcca9db78f1aecdf9b3f1f40bd4c7de6dd35323330bdd7ca
59005a213d1f33e9e182d9d21fe745d375bd9863d223f37c302dbe178049a66b5e347bcac373ba3d321dc042d7b0734a41f16c30d001961abab64c79
e8f1a2c5597f394b4d1a9ff2f7a9930b0ab509ef9601dcb2979e779f4a37ac1b0d04df2d8d749d32f5e1798fa55a52ee070838687ee9a159930aec55
311e806c373dffcdc3058fcd9677ab73e8fa79baf6cd2c787872fa0fb7d2dcd9b5d4ffe4ec59c5f3eeccfdac1b40ce327ade387beee4d97d0c7f164b
3d46384f05412b0b847e121d087de85e5f08525b3c334157ba92fa0f1c3e0a6c0f15cc9b09d1203083d656806b2dd193cd983c772668a2a517996610
b506126e153df123d9024e88839b7eadfd6fbea3ffd2ae3d3f4ae5ecaff6f9bff6bb713dfe75ebba5fde6f6b876a8145eb6bad6fb61ebad677069513
7a6b73bb519bdbd7ad95ff6b08fbb75a5b135b9fbe0e037f8dd39c3c34ef211deecdfc042759e247f5ab43ffdbb5fe65d84214a96c9dd1fae63fdfff
bfe27278b67633b5b54358fe736bfcea2fed5f93abf07a4f12ed6f1ad77aa2f5ace07518d6b3e2fadab3a3a1736bc1ff9edb37ac75b60d92f650ddd8
839f15d2f72ffd18e9a60651a4df31104b5aea857848063fa44347e8045de016084036f58aa2fb4cb7143228a00adb16fe69600cb74cfad9ac5b166b
f89e0dece0d0ed41677dfcaffd503f87e615338b194de199acfa0c620e2048ede082086ab9c10391d7a086b0cdda00bbc0addbac0505730b264265c1
dc876742e5c4b905d3a07252c1cc623a4f9d3c97ce8fcf7d082aa74c9e45ed297327cf80caa90533a9cfd4c913e9ce8c82990550f950c12c9f3893ed
237b3b6f2a54ce9c21eecc9a52f03054ce9d3f937ace2b9a3985ce53c5fcff857dd469fcd0b4290537d848194236924177bd56085337d13e193a900d
46a27f07e8a7d769701b51ad03dc4ae7346193e91ea3c31fc6f973f806f2a0af050c9f89a5a4c5b47c3ed9f853a1a545bbfdafed9a4d0db747ff1322
62a3791ff99ffbc1f8505f515f6bb79be386fbe3db5dcfbdde967a025c5d41edbed7867620de47eb7e290106c24c5d8a84ac082a4486680d42867a8a
be46f2e1d083c674302ed7a9db01c05805d0ae5f9f6bfd7a5febd7eb17fd249a9bfa29426223c5534540c0641baba4b541c9565e20c8134235fe3b14
492e1a64561135599242fc05e265f8975734a090ae7cc985aa9bbbd93ac3c3ec7cb84f1b5ca1e285901eada72ba65fcbb050af1d74c7023ec820dfdc
9be4a13f0c27ee4d83d9f0082c482ed425cf479076822ca2de1df4f41e28801930171e134f5bcf937d3ad3fa1fada75b4fb61e6ffda0f5edd6fad6b7
5af77ff6493b287eed178a137685afacfa2aa1823a7780200ad1ad1395ae54542a593a7d43b8085ede4a85e217820b4078fe282ac3c32586ca3d1092
45a1cd0554a65149a022ecd96c2ae954849c0839ec4285e21858402500905c182a90c372a0168ed1710876c07ab695ae8ae8fe1cba532ded81a5309f
ee1c66c75885d485ee6d85cb70827a96c331dc21031b4a96ee18f53f4dbcbec246c15e9aa31773b35e0695547684bc57be47ae952fcac7a1875c2c1f
97f3e562968d9b9431ca562abdf03d9281a39008b5ec1c14c37efc16b3b15eee2fdbe01c1ec71df015ad22e8740c56c0668a4e6bc1cd6641a95422dd
43778e28c7611d1db3e8f971b6819d20e8f6b325700a9e47591a0c1bd829c2eb18fc0d96e028a994c89f2d1511fc4768aee3347e1d14935a9c6226e0
5227ba47d0d35a13f5733c76514ee9c76528a59547c166b556751bfcb48aa0d856769835aaaba01a4ee0fd3807cfb0a5b25fde260f8615210a603eac
a0b9d789316a117b9c70174789985d7a54ce673be05b39df3091e67e4f60446bee95ee218c8aa09ecaa3aa8370eac3966205412a9ec6c371c3503993
c6d30c8627096b80599803d3a95502bb610f74c12a584133e9f8aa3d94bfd1c8f5f21784f30ab65cfa1b1cc7fe247945f225a235194f20ed7dd3a02a
324a0c3afb1c3552ea90c29ae0dd637d1f8c4bead2f9a64b9fc3e0ab81bc1aebe3bedad6d6bcb1729c32ae46f1d660aa5623a7fabff8af1e7ed1a5f3
b0bcb1be9a9601fdc3b30ec8ef4ff7468ea5a6b8a2db747f407ffd9958b44649a5bf21f935be49537dcf389ef1f77ec631b97797908da12899f45668
7a0dff8b54a2ba48bf7a04edeaf3b0d6663500ba548830d91c6787d5448c1a5b07a6d6833dc70dabb1eb6d08f61c7721d0e8ecd5eb16703437663155
f2b85d51fe3429a79bab875452b678c9d2eaaa35abd7aaaeaff96d172ff23e5f7dcfdefffc1c6b68a4f536d37ab3f4f51283768358cfc0c0ec922334
a0f5fa5eb93e6f4476a4cbe3960cfeeeae9c6ed2669a724d55f5d2254b545723ef7bee73defbfbafd87b172fb27769d6d36419cf2832796e7fd00946
b6de808ae49121caa47a348be36c73dfe6be8db740e649aa1bb29833c993e4f43b9372929cb85bead272625bcb09a98b22b79cd8211a3b484b196c68
75b1c3c0c9eac6042db80196a828b318885609b8931f8560ec91ed417fc4e5139b17ddc377f1832c48e30ad939a9545a42d475ee83f5924c7ed77156
f43f4938d1c285525ccb57d292cdc20a9fa1d36e5a83fabe094b2431bd4cf312a4fadcfe33274e704efc2a6f3d2faf201d3293fdf20723d46a17545b
56ba96451bbdf604f47ae2a269d4151ae7b870a5d171298b254b4e872b3be0723aa4f400381de04f1667e9d9f52fbd447f2fbd749519f9cf57aff29f
9951c9e3c7f947548eb36c3abab1ec6a5ecccb78392f66cbd9e36c015b2e60fd828ced78f2ef2400414f2e56cb52b5b2c800d5462d51f5222432b3e3
645846989091c6861022812b8d843949eab864b6d78e76599ad023c9a9e4a4660b36703694bfc0267fc886366fde21170fae1ddc746a074d405a250f
258cbdb021981e131b87d15e2731d8a92872aee365e76a6bb57ba54cd6151c268999bc510e54e31dcdc36a3ca386d5448eba6f588d7bd47d04090ac9
6d38d978f0a0d3d52b0ccd151d1a8343f9c1a0fcc06abc0e67542f822d18182d8f51c61816c80b9447e2ca630c647b63e4585242ef3c78449d1f5b1c
37cfbb18ca6216c72e8e5becdd06dbe29c1360422a2191d31d7adcc672baa5f9935543ce6d2c3b207bdcaa410532f8879a871319b30bee7cb5ecc113
8f2d3839f61be61e705f0cbfb263c78e47d9cade0faf1df26855ee1d1fdd12f8e6ddfbb7cc8ee7df13f6eb89dfc5847d07981dec0a9e08539931b1cc
1751edb1561b57a9de6adf2aff4a7599e7958c486f04a03bc69be67378d19d68543304112247b5e16fd4f12702906245e99ad578e1ca8546c7d7971c
fa4154c96241636142416281af304986092c8179dc7252725a7a4e0221d29db0eac472428d1bd0c37e2b5fe11ff36f1e38327dd4070f1f3852b765f7
be351b5e797ee481b9c547c77dcd2cbfc5d4c486cacffe929a7af89640d58adfacd9fae8ece29294b4bd3edf277b9ed829ec10794f7933c99444f660
51309e59d10a88d65c40b3a15a61b8c8c82c26f0aa9a6cd1ad919910b3ea8859046227fb3634069c82af174ef66d0c102e3a63e5a3c4dca382a51dcd
944e0c8671e4d61f8567c010c93a411aeb84ddd9087697e52eeb1856c4e6b305b894598995469684d94e523bdd36a0ca25c673f8a953475b1e50529b
cfe3f1e6ec6dbc9ae51fd6adc379b990208f8707827e39d6e02c73c4c7561bdcd58e0aab540d8baccb0c9b13a2bccc845e3039d40447336bcf17473b
8bea10da422c72345c120a2c3498d8c31b42dc1126c329680e1e37dcc016c18dcf30a6a5baf3d8ce4d2c859fe43f3e7078eaf883335efbf0c3d7ee7e
7994726a077fce6ee797befb33ffc9e73b764bd6bef5ebf7a5a411b55710f455ba3d4981b1c1940815ac6516a88e54abbd915b1cd5968ae495de65a9
9664a3372621c28b498971a9646048882ee826e642f385ebe2137453e4c38e4bc7f1b87c4c39a612de7b12a4096c024b563deec810acccd395f99325
6c43c4ef13e6282910296d7e7ae3c6a7a930e3f017877f70c2de67cf8c2f98c22f7fc95bf82596c7e286bf887df66f7af9adb75edeb45f7abc36258d
ff85ff78ef04fee3f75ff3ef740335916d4910166a1b49d354e2890a9382d18a5342099d32d90b85f8810a3232c5aac1d1fc5183ee6332dbd9012a24
2e824163dfa64450987103188847ce1e3dc7055d6325a662acd24b19ac4cc11aa8510d242dc418e66749dbf060cb9727186fc9564e8d695aa47412f1
e9b344df6775fafa2113ee08a6461375d3d5ea842ed5ae9509cbd25fc98ab6a474f47a52bc7623596f32e1f6a4b82c477343e39586469db06dbaaa5f
f522256d47ccd4ae646b52b20391c2c8e8eaea4f4ec9e9d63da2ad034986f46ce5962d95955bb7f02d8b5742eb7f9ee32b173df70afff9e79ff9cf9b
07af5cb278d5aac54b564aefad2b2f5ff76259f9ba31be3d0bdff8f8e33716eef125bfbfe2f437df9c5ef13e2b98b778f13c2a24318b08a372c2285a
9718bf21318695414cb5698b5c0d159189d58e9591cb520d5e6f52440224277badbac010f86d3ee96bfe539bbc4436c4bc1b7b30eea0f760fcbb090d
89861dae7ad7b72e2489e9a1cbb62bc246b20239dd203b2425c969ac0d2da2c117c3d70f2339e9bde7a1cff955e6f8922173f2d7f957c3d7b3dbc2b2
944852c2accc35e67e66fffe6b16a9bbb38dfcbe04696d9b2409eb23f65c0fcb7ee29701bc419bba44de4a4e98cc8e0cd19a83228680f0185742ae58
d885cb274e08872cfbb91e733ca09ca1112adc19ec28ad4719d97a4ad4442531455560bdaae42a3245850a6e535f35300952643fc95f6340f0f64a63
54c85cc9247e5ab8c8248646c9c3729872e6eadf65ad892b125ee62bf9aa7dec93adec1301f56996af9cc14d61a8ad1462a8eb654d6132f805d00d01
3dd408071be2a0b442ccd3c471d3b6cbc2b716b79e57d2899731d03d186b7dd9b6dbb4c6c95e86ddf29aa895ce65b186182b64b91db102d230fb8418
feed52d65e7b5c629c449a2db439acc1dd7b786cd72e2295f4a28b8b5b815f660e068b2f164dffe137fc35be8095b191653f28134f3df8003fc2ffc4
4ff3230f3c7862f060b6914d6153d9c64184d7518ae16a487b357042d7a007d61817b1350e4d72984089b106c06b945d7a9443baa1ebafe0cc9efc08
46f038c392929aa4d7198cadba42444ce45ff0633c9756d9c3aaf8549ec70b94ccab8fb268d6957566515bf95abe903fc5ab8826b4ba3c9d56572035
6891d6c0229979b137c822da1361cc85c6aca031cb906758880b65994d88d0a397a31f4affdefca0724ac42c210cc8ea4204452daf07738c9a014daa
930443719258e45232e741d9b3c6e85e635d649615159d46f046da14534c8cecece736792d72bc203a990172d6ce10967d45f8e4ea258ef6e64a8f59
f6041308fb60ee8208a680c214494583ec010f734b911825a7422a4b95d2305d4d33a46969465f4277d65d1ac8064a5395f9f27ce5d188a7d5a70dcf
abcf1b1227e88e3f2ac28f5d592726fc9f4f5817226c48f370f9ed25b71d3ffdced0671f3bfb21fb8041f392960afedc9a35cf49f591954ff1a9acb4
6a624b8572ead33f2ddf2fddd572a97cc992a5c2268bc87513c95a3a3c15ec6bb54836b3949098a0192583494a4c4cc835991312650f03cfcbeed5d1
6b9cf21a589d4a42d821c1644e8c3340725c8cad8b21c69ddcc171b6a1912c0a59109d2e0e3d8c21a174bcef0c2b92c326a812ae883876128309fb12
333233eecac090ccea0624f157429c4cd66653e5c1c51f3db8e58d47b72ef8f2dff967fce2f41f179634ce7dadbe7c5dc9971fb2a89fa6fd87b2f9bd
1edd173e326972624ca7d3fb4e7f9e95f9f180814f3f35f389c4e82e0777be7f218df06e6d2279fa9664c10043c9b284042a48b14050d11c27c97fea
fe2790453ec7247c8ea6fb1c0db4369f1301c6447030879468701883c6d9c68d46e3040ce72daafc63cba5632d97c8c3379d121e87c11ed2e90c5acf
49d179a42639cda0acb12d33c22297e635f5645eb8dd25224387083840041f8dcdba6e07c2fa94c5f62546ac88d8188142b4438ec4992d748a346acf
b1dd87dfdd7d8c9fe35ff3aff839e554f37c328597f1d9e6fbf959fe29ebc85284652a215e77a1dcdc04a9504f517ba239ca6883ed516a9dcde92b4b
dcefadf3d73a974559200aa3ad46cd9c889a7b401a49fc47271b038190536eb870a599accdfbba0f740aa10fcecc8acf4ac84accf265256525f74b0f
c607138289415f3029989c179f97909798e7cb4bca4bce4b9f9dbe34be3ca13cb1dc579eb434b932bd3afd727a42dbd0b6416d03f213f213f37df949
b3136627cef6cd4e5a98b03071a16f615274fbc8e556d6c3e9cf11ee288dfc6b7652fb1838523a706ed7a2592fd4d5d6f6ab7f7ad7b196ab4c7a756d
febe51930f8cffeb6529bba86462f1e9bd19c35b16ed282a38b4e9ed83aed267bb76dd919ede2c68b59f68b559755384e0859ec118acb3d88d75d19e
65f6dab8b531e0720d8ab6a85aec40dd1e04aee889d405111bbc7f296b5f7ec2c284ea042438db7c2281caf4d08af23d82355db82bfcead5e79e7b55
9496dff67ebde423686dfda8e4f5de757552e6b18b178f5191ee292ce0f5fcef74d417146e236818cc693d8f17898731d02f180765ec69d956667dda
54e794eba26a8573705961b07b0039870b6dcec1c1af5c72fc74292b68b6c739e216c655c655c729ac9dc265879d4472d849e0c5112fe5bdf1fefb6f
e4bd34e2ce2d135a4882ba3075f426396757a74ee78f1f3fdfa9d38e941442c8c65cacb79fa84550c9e3093e47885ab1756073d729da325b2d5b4b86
0e346990d3651e10affb8740e01ab51a6ea016c9738899941c033190b58b9770536d6defd79f38d60aadc79e78bde508d16ddb36a21dee931ef847e3
b6c202d69f6974f42fe09e30f9c2709512b5dc1047995a0a595e6399f6b4e2d9ce943a0b7b2bbace556b59e68df3489a47836192cb3ec0ab83d8a067
a38278a140fa4a2832cae8173f3bbe3afee3f8cbf14a3fe8c7fa49fd3cfde294ce864c2dd3d8d9340b66b159d22ccfac38e3843982c049baf9baee80
49000c3ad10d7269f31ecbf137a71f9938e9e319fc0a3fc2329abf64865a69cbd3ebea6cd203e30f1ce9d66d77c7ceac2733b1087607ffac61eddedd
1b8425c924f1fc3bd13a02c605bd8a8359b4ed2a2b87b536b5de244550586c5434abdd3cdc2dec8949d813b3486086d5d8f4b6b02d7d1b280e6970e9
0a7d21d04c769a72336162829e3c4fb547186302329e85cc993f275b2897f4f79a4977b24cfe495d4dcdeeb755f70b795327ad68cec44f568c786ba7
a0351f238f275a9b292b1e1af4c758e28daeb288c83a3bd6a5f96bd3eb8d75f6b763e3d36240b30c525d2edf800c3d9e0e8943c3859040f05382d2bd
482a3a2eec58ddf1261d8a7248d7bdc2ad2c2c2a2e1295a89c6cdcb465cdea2d5b56afd952cb7953c1aebbefde70cfbfededb5e789df3737fffe893d
bd6aa55b3f387bf6832367cf7ecfbfe4dfc627bcd1b9e3dbefdc376922ebcd44eed17be2a41d82be147aca853a7dbb91de1b016d4c2db7396b2d6b4d
4cd26084b08c03dd42ef75b5ef2b5202a78b7cdc9e7c8f1eebf89d2190a991ad6786917261ed134facd9555797fbc6fc43ef4b9b5bee97366cdc7060
734bb9ea6ed930b9f04761710ed1e28fd3ba226eec443ee980fc3ad453b4abc930f05ab47ba199e21cb3f03879c67cf23a8a1ee9e8c1efa15afac9f9
57ab55f7b76d78aca7f94c64eb33342745bc06a7aa2a7aca25e9fb341a05af580f6b8d2a45a72a690f0c340bbc4e36e8397bdfc60b6d9e5bfea12dae
09c5c07b1c1611d83c284b262d524a9732944eda18a9489aa2154b8f2a8ba50ae5b7da2aa94a59abbd22b98c8a5195cc683274c074b983d249ed6408
5aa662bea50297ca15ca727585611dae35ecc057957d86f70c9f1a7ec6cbf8b37c598e9d3007047a2cdb48999cd3bfbf4e4afdbe65b734e372cb913a
d5dd3c8d9d6fb9d2b24bf2b77c46f85ea75ff29bb05612d85cdbab0b5a1d4a50c953f295d9ca65450d118d08a6baffd118a695219e643719c607d354
9731da0e6abcc163298ff7616d5c7d8cc3004ebba6a9794ecd9ee78d26c3efd703c1e6e6c6d0be55dfbe17aee89b1d4210821159297929b3532a53aa
e97827e55c4a6b8a91244397054f7bf9b82e289e90a0640c38b8f87707eae6ce5fb1b56eeea3cbb7d6d5f5ab797cc14eac78e2919fbe1462f3f27a21
36d2864d2fbef34a4bb99cbf7bcac427ae492d61104139c30d525bffeb527ba14d6af7e67bfee0916e965bcfff20b7b4b010db90859daf6b7d14697d
845ae7823a4badd8f974d9ef469767c04d3b9f417fbf981228514b0da55aa9b1d4546a2eb1945a4b6da5f65247a9b3c4551d7339c679e3dec40d1ba4
c5ab77ed5cb36ad7ae5597998b5fbafc67fe2373e2b98b478f5efce68323dfaee71ff046fe0399d35e6435ddaca7f0e46497361384c237dd168c6bf3
4db5b665ec6dac8f27bf3448f750ed7cb9e3c28536f7143486fcd3e7099454a45e234dd891dfe0e08bebeaaefb71a9679b77dfd6b25b35ed68e7c9d9
f76d0e2ac4371c4ad039212be856cd2467662cb7d51aeb0d2695c2cd812e612475cd27af74f223e186f6e65118283816f2dfd7d91585431387745eff
2ac1b17f6944572fee75398f1d68d943cc2a9aa48837a3b3287a3842aba5c3c570b43f321cec8fbc1eec53545121bbcb3c15d122aa48adbd1eeddf13
a7d90c9a3b79400701d5c91ba27db2dd3f8930c37563b4df16ec43ba30180f7b4d5eb3d7d2959c656773674b1f631f531f731f8bd9073e9622753075
30778cc874677a3a467648e89098e1cb484a492f339599cb2c655697782f2a49aa4935a305ad68433b3a300663310ebd72bc313d33a35fc68319a519
0b332a33aa332e674453d234e7e6b442f5ff32ade84eb4c367476c1b5f51317175bf862d3fff69fce1878ade2f58bc6cf2cee0cee73fff7dd15eb9df
ee0e1d468d0a0e49b2757ca162fd3ebfff404eceb8bb87e5a5da53d62cdeb04bdfdfea4166fc2fca06d2418a816c8a66c7ede064f55ab9c94c342619
73b86c420775f71b08e7c8a12d52f21ebf0b790fe173dd917d84074ecb11bed7c91e65257ce9b0e2b7df3eb5a9bc5cd9c0df5dd1525d3162ddc63f4a
f92bd86dc277ec262d1cab6bbf1bfa04bdd7f57f9989d5bb6b2da4fd6ef308b203033d421d7b85248a928b362330cb735018810867bbdc221c58b3dd
c208bc565b7bc7ebf30f7dc0fec0f64b5b5b0a366e3cb0592ab95abdab68d265dc26b0bf952c50a99c0f2a5c0da687bd0c931451a1a402e5aa006aae
84f00eb9229498228341bc91d04314088528ee51e20dc17d6d39504354e89d40bbccba6d2746df100cfe76b0345d2a914aa53269a1b452da2c696221
231af52c3b1663e5344863199821fbb41cc861bdb1b79ca50d84816c080e91072a83d5a03606c6b071384eced38aa0884dc369f21465aa9aafcd8779
ac044b280f5fa02e85a5ac022bc85b95a95550c5d64aebf079f97965adba4d7955add10e6ae7b456ed36b1eb48be2a9bf96f3dcc1e600f1ce6f737c9
f9cda370d7d56aa2501fa2d0e3442133bb23385071aa065576a26c1095223389a1539298d9493d4d4ea38989ca6c326806a353d30cb92683cc648da8
27855b84ac254440b1517d9f481ee9e4d4c9a7b6d153b443ef581a9c5121af1e88fa557afe1a7d9f37c9b22956f698d24cb7cab79846cbf71ac69a8a
4c8fb005f2238679a6e5f262d30bf24679ade13953a5692bdb2eff4ede6278c5546df29a505614a3c91c8b1ec5638c3567609a926aec68f6597bb35e
d843e966e86eec65ceb20ec181ca00e35073d03a4ef0411a87f72a63d4718631da18e338739e7596f531566a7d91ad36ec649b0d35d63f58cf595bad
99e26d80e4a7b0c0288203b990cf603b4ef3fd7cff69f6069f7b9a65b00c39bfe55ccb2156cb074b43a5483e87add0a594bc8190523b7b3678874193
8c4eb00b3203d86d4e3bd8ad4e8b154465b39acc268bd36c36e55acd46079895727cdb66ae77d8ac16935145d0ecb2ddec686380a693dddc8eece6d0
ab1d9deae13cde79c326517bd26be25d575440d0fcb20a8aa61ad11a698ab23aac7e6b8e7588e92ed308eb78e378d37453b975a17595d5650202c2ac
58cc36b33d8a792487ec50a24c6eb3db126b8bb5a7430ad9539fec5332b40ec654538a39c5926eed68eb68f7397b88ef21a42c394be969ea6eee6ee9
69ed65eb65cf72de0e411694821894834a500d1a825aae7180699075886d883de81c0577b3bba5d19827e7117f46137fee35de6b1a6d1e6d19671b67
cf7316b12269aa699a6d9a3ddf59a23d667bcc5e01cf18979a975a2aac15b60afb0bc635e6359675b675f6cde6cd969db69df61ae71f9ce79cadcec9
c44bc5c6422fb1fa31c1cf6c69d588d54fac7a68f8a8ec24de27a44a533f58b06e70d9287944f36a7c4870722c79ce33c44923bc148cd5426f134859
72b5ed508fdb150d19c84c35b5bd96b284f422f45e477f6baeeb484320fcb6a1f117af1b82b98a1429a54983a42106c5acd9cdd118a775d27ce6eed8
4bcb320b6a0dd0a97587762f8ed31e34e7b37ca908f3e57c65a2566a5e68fe9d392efc1e42bc83644973707acb70696ff393d2de96c972feb6e633ab
b661aabe672cde8fcbdb0824138c0a76fd1f76ab3f515f35ac37324d821483ec37ebb941208cc37fb7696d147028679a837830b4e1ac6f3aff9d7fcf
2fed61b377b059cd929e01d5b434639ee139926660fe1c47842b221b908c79f1e1eafd5bab79f3f8d296e6ef702dfb52ca62d8f2575ed672a9f987d0
38566ca810dfff88f8bae6f06143c5df8af519b99b66acd067f48809fdc42e670e2b2e5d75acfae3ca6a43c577cd5bf8bddccd27b301ec92948681ef
6e8624c7414e23db15e1727aa4e5a5e37973f5d6fdd5029008c9c51e976cbca5e5639ed03ced3b8a1296ebe32ac4d74722aecdc23c1d14fd9baef6b0
647b6852f48b49a5e587ab2b3fae3eb6aa540073bce50c77f1b7d88bac916dc7fbc279b7b283f28b14181c8c48d3d36c4b52b43541735a921ceee1a9
fadebef0ec0ef1114343c32d10741aadceed2e29b61ca2d7aa89ae7ab33db3efd78100ef7b2940297720eb8634fb7aaaad3b7f837820621365475bde
cd0d7aeabdbb66527a1afbc70d39785b1efe42870e532785f2f14cf1865e155f44c6c03d4167dc4088d222ed6e59d330d2a40e8fbd0e2f17df5e045d
1a852b8e725bf481c8d76d6b8d50af3001ed25aebf490c50dad91a571d5719b730cea16f1ffd1264829828ea978786207ded8d3a01f93feaea4456da
06e39bbf1340b33ddf86691a86b147d01e3590725d9345d31cb2cb363c52c017024f4067b59bb71be57258eb34d65b250118d7a1623ae96edeac208b
0f75fccdebfb1522d1534a6edab1a0d5d5665a3d03ba50de179d3930aa93d6d111e7d1623b1a2151d552128cc969c3bb5e275443409cf54f55825171
89feed294e560e5d0e747cdd016b230d29f531f14922150c0484bd773406e82fc4e530377b74ef718d546d3c6fb7c5a210f5c4368b60efbddef4114b
88d5774a7b0439c3bc47a226b13dc4e5d1de887441cc36e2b6a126b5d156c72e0d4604233b0cb46a8ec868b7e6308ad7f54971c644fff0f47698e988
e96210edf56d4f724ae596b4b51e4352bd3d362184d295bebfc4a77bf64de4bf69bf2824a7ed7911c6e31a0e3bdbf3e51a6fc49787d2b817e6be78f1
8f0fdafbfe04899afe91e1274fd92eb4d53f7fda3cdc36ce28beabd6ae7d9348e30c0ff378001bfff9d3a6bb6de37ef10d63503eae7ffb07c2d449cf
920f49841a2a9bd5bd709aae3748e7a190ea334a149453f9824a1595f5540aa96ca0b282ca362acf525924ed81cb62ac284a1514ab1970549e0f4795
1554dc502e9f6f6d52cec01ef92294d0f57eb908e6503d476e8439d2279029da8a0bf64bbde090bc29541b8ec17e714ffe4aefbb1f8752bb13cc423f
f4a0fbbbe57ab8954a1fbdce86b1626d327535371669b9a89525b48629540cf524159937d0a3274c82e5b01dde8113ccc386b363523f69bd74158b70
a7ec969f925f924fcb4dca00e537ca56354bdda55e30a419ea354d1ba96d341a8da38d75c69f4cf1a635a637cc6e739ef965f3654bbae5536b8475b4
f56bdb36fb72fb5f1dbd1d631c931ca58e171ddb1def39e39db39cbb5df1ae2774cedc8ea3a0134c050b69a3035e109c943d5224d5e27b48038c17df
c0c946626c96fe5da9683388a4ab505b028d0d0cb7b1dd7db95d5b81683622dc56c1cd8ae00e9805b3e171980bd3600aad3e4fff3e7612d9011f0420
8b8e6c6a4da41e3ec8a53ef3a098ca5c980c05f03074a6bb436026f5ef4aaddbe1213a7c70cfb5b98af5abc9544fa6318fd0b9907a9afe8955bb5f5b
7514adf408ad25be7a9c49bd051c0534e65f5bb13fb5a6d3b831309f7a4ca2be05fa6c93f511053a463e9a65269d67539f8934ef34eae7a3f1b368f5
02fdd9cdf38cd46729268866d13183ee8a558ba9ef2c7da600ad9d0d39378c6a1b13fecebff529fdbbee5ffe6ed77558fc3781f8c63ff43f024e3dab
6dff5d7fe8bf111269ad34482779ce21caf52049ee0f0360200c82c1c49fa1300c86c308b80bf2e06ea2ce48184d90dc4b51e338b80fee8709b05a64
5c4c660a5399816970100e31233319e6cf9c16c8ee951baeef08d7fdc3f580703d3054df9e25ea81b9596d7576b8ee562b2d0cb65ee5d8e4c67fa4e2
df03f87315fecd863f71bcc2f1afa9f8171bfeb90a2fa7e28fcfdcaefcc8f15215fe50858d4df87d137ec7f1dbdef84d2e5ee4f87500bfba3052f9aa
0a2f50c70b23f1fc9799caf926fc3213bfe0f839c77301fc4f377e568567399e71e17f3c89a7dfc23f71fc94ba7ffa249e3a394839f5249e1c8427fe
18a79ce0f8c738fc84e3c71cffc0f1f71c8f57e147c712948f381e4bc00f037894e3fb4b9dcafb5e7c2f121b381ee6f82ec7431c0f727c87e3018e6f
73ace7f816c7fd4eac2b4b55ea38d6bef99652cbf1cd7d139437dfc23717cafbfe2d55d93721d88afb82f2bfa5e25e8e6f54e11e8eaf73ace1f83b8e
bb0bf1351beeda99aaec2ac49d3b5ccace54dce1c2ed04f4f626dcc6f1558e5b396e71e1668eaf6cb229af0470930d5f2ec46aea525d851b396e78c9
a26ce0f89205d7bf18a3ac2fc417d739941763709d035f30e1f31cd7565995b51cabacb88606ada9c2d5ab6ccaea0eb8ca86cf35e1cacab794951c2b
574c502adfc2ca85f28adfa62a2b26e08aa0fcdb545cce71d9b35d95651c9fed8acf109acfdc8e154f9b950a373e6dc672ba515e886544a9b2545cea
c4df705cb2d8a92ce1b8d8898b382ee458ca31d8fad4934f2a4f717cf2497ca2104b46799492545cc0f1718e8fd9f0510b3e62c2f91ce735617113ce
6dc2394d389be32c8e33393e948433384e77e62ad347e2348e539fc4297451c47132c7428e93384ee458d01bf39bf0010b4ee0781fc7f11cc78d3529
e39a70ac09ef8d8c51ee0de0188ea369e5d1b938ca8323994319198df7b8f1eea111cadd1cf3cc7817c711773a94111cef74e0708ec3e8c9308e4387
3894a1113824deaa0c71e0602b0ee238b00a0754617f8e77485d943b9a30f72dbc7d180639f6e378dbad2ee53637dedad7aedceac2be7dac4adf60ab
1dfb58b137c75e1c7bf6702b3d9bb0477787d2c38ddd73cc4a7707e698b15b02665b31708b590970bcc58c59996625cb8a9966ecdac5a87475601723
760e60a78ea94aa742ec98e1523aa662860b3ba4a72a1d6ec7f4544c4b352b69764c35630a473fc7643b26119e492ef41562621326100a0985186f45
2f51d0cb31ae09637331862e623846176214512a8a63240d8a8c410f4737c7088e2eeae0e2e8245c9db9e87812ed8568e368b5442a568e16ea6d8944
334793038d1c35eaa67134b8512d44991eca24011ea4bbc8297771285217640e048eac96152e5dce3afdfff083ffd700fcb7bff8ff035b9b44a9
>}
\immediate\pdfobj useobjnum \csname EF2O20\endcsname {[ 32 [ 318 ] 40 [ 390 390 ] 43 [ 838 318 361 318 ] 48 [ 636 636 636 636 636 636 636 636 636 636 337 ] 60 [ 838 838 ] 67 [ 698 770 ] 76 [ 557 ] 82 [ 695 635 611 732 ] 97 [ 613 635 550 635 615 352 635 634 278 ] 107 [ 579 278 974 634 612 635 ] 114 [ 411 521 392 634 592 ] 120 [ 592 592 525 ] 8804 [ 838 ] ]}
\immediate\pdfobj useobjnum \csname EF2O21\endcsname stream attr{ /Filter [/ASCIIHexDecode /FlateDecode]}{
789cedcfc56e44410c45c1238599999939f9ff5fcb5b8e26ca6ab6558bd6b5255bee9ad0d4583dddccf0ce36d77c0b435a6ca9e5565a6dadf536da6c
abeda1bfd3eec8d45efb637b0e46f261471d77f2cf05a79d75de45975d75dd4db743efaefb1e7aeca9e7a17ae9b5b7defb18f2675f7d4ff05f000000
00000000000000000000000000e0af9f5fca330595
>}
\immediate\pdfobj useobjnum \csname EF2O22\endcsname stream attr{ /Filter [/ASCIIHexDecode /FlateDecode]}{
789c5d53cb6e833010bcf3153ea68788f0328d8490aaf4c2a10f35ed29ca01ec25422a061972e0ef6b3c36918a04a3d99d59afcd3a3c55af95ea6616
7eea419c69666da7a4a669b86b41aca15ba7822866b213b363f62bfa7a0c42633e2fd34c7da5da21280a167e99e434eb85ed5ee4d0d053c0180b3fb4
24dda91bdbfd9cce089defe3f84b3da9991d82b264925a53eead1edfeb9e5868cdfb4a9a7c372f7b637b28be9791586c798496c420691a6b41ba5637
0a8a83794a56b4e6290352f25f3e4a606bda4d1faf7ac00578b5e167848f2ebc51641b507259470540faa4d5265820a9a1f53402c480049002320007
e40074901c7d3514c7928974c5376ab329caa629b20f8a2cc42b5c80369ca1a52c43d85378b3d427ad966317dc9d92a7b07058382c1c9be2d814c7a6
f8b3f7a31cce3177ffc2518e3e398e9513a0f552ebccb164cee1f4141de4e820cfbc0616ac9dbbffe2e9d1478d288e6df3800bf0ba8e959f9f75c2d6
ebb08dafb86b6d26d7de193bb2ebb0768ab66b350ee3ea5adf3fc7b2e600
>}
\immediate\pdfobj useobjnum \csname EF2O23\endcsname {<< /I1 \csname EF2O24\endcsname\space 0 R /I2 \csname EF2O25\endcsname\space 0 R /I3 \csname EF2O26\endcsname\space 0 R >>}
\immediate\pdfobj useobjnum \csname EF2O24\endcsname stream attr{/Type /XObject /Subtype /Image /Width 100 /Height 100 /ColorSpace [ /Indexed /DeviceRGB 1 <ffffff26343e> ] /BitsPerComponent 1 /Filter [/ASCIIHexDecode /FlateDecode] /DecodeParms [null << /Predictor 10 /Colors 1 /Columns 100 /BitsPerComponent 1 >>]}{
789c63f8ff8101090c62deffff0d4390f781a1fec750e081c181a1c6fb0022edff0c7a1e384dd8ffffff6048f1c02904e807f97f839b87043e0c151e
009baf2d2c
>}
\immediate\pdfobj useobjnum \csname EF2O25\endcsname stream attr{/Type /XObject /Subtype /Image /Width 100 /Height 100 /ColorSpace [ /Indexed /DeviceRGB 1 <ffffff26343e> ] /BitsPerComponent 1 /Filter [/ASCIIHexDecode /FlateDecode] /DecodeParms [null << /Predictor 10 /Colors 1 /Columns 100 /BitsPerComponent 1 >>]}{
789c63f8ff8101090c6a5efd8fa1c80302fb3f43800774359027ff6f68f190c0a0e60da614399a76079e07007e8e6810
>}
\immediate\pdfobj useobjnum \csname EF2O26\endcsname stream attr{/Type /XObject /Subtype /Image /Width 100 /Height 100 /ColorSpace [ /Indexed /DeviceRGB 1 <ffffff26343e> ] /BitsPerComponent 1 /Filter [/ASCIIHexDecode /FlateDecode] /DecodeParms [null << /Predictor 10 /Colors 1 /Columns 100 /BitsPerComponent 1 >>]}{
789c63f8ff8101090c621e10fc187a3ca01fd8ffffff33e879607080f9ffff7f4389f701e4767ba01f1881cc41cc03a7097bb0db87100f9c420ea0a4
e441c98381074388070095388b81
>}
\immediate\pdfobj useobjnum \csname EF2O27\endcsname {<< /A1 << /Type /ExtGState /CA 0 /ca 1 >> /A2 << /Type /ExtGState /CA 1 /ca 1 >> >>}
\immediate\pdfobj useobjnum \csname EF2O28\endcsname {<<  >>}
\immediate\pdfobj useobjnum \csname EF2O29\endcsname {<<  >>}
\immediate\pdfobj useobjnum \csname EF2O30\endcsname stream attr{/Type /XObject /Subtype /Form /FormType 1 /BBox [0 0 306 309.6] /Resources << /Font \csname EF2O1\endcsname\space 0 R /XObject \csname EF2O23\endcsname\space 0 R /ExtGState \csname EF2O27\endcsname\space 0 R /Pattern \csname EF2O28\endcsname\space 0 R /Shading \csname EF2O29\endcsname\space 0 R /ProcSet [ /PDF /Text /ImageB /ImageC /ImageI ] >> /Filter [/ASCIIHexDecode /FlateDecode]}{
78dacd5c4d931b4772c519bf02e19318b69af5fdb1611f562bef86377c9144850f5e1f6888a2a825a51529ad6cff7abf2c0095af7ad0839e01358190
3833785395955f5d5999593d3fedeccee03fbbfbd4c8fffb77dbe79fbffafb9bfdab2ffff4d9ee0f5ff1a7fd87addd7d8f7faf31e17bfcfb15d3fe84
7fafb742e2ddd69b84ef6fdb776fea94f0b3e93f7db7dd7ebbfd69e7d264ecce663b1557f0c94f36ed6a9d4c76bbf7af76ffb1fb6167a66a730ad9c4
52f0a1381f5d0dd696ad99d2e91346838f29808bc5f1bbd9f8edd6e5c9bb28ebd6a99a0a0ede6d639e62a98eb0b78a853c59e7c0db5b9d4b1884fa6c
fbc56ea55845b8124eadf06c6a31be3a9b2016bec55492cb91c55a18bf9b8ddf6e93999c33a35815544a18c5ea1889d0e73e5aacdfcc5a15cbd999b5
acf393cf3373294842f4d9b7672fdb969f190c23269767165390a4d0e98f16cd54636b32b9047cf0879fbc0d102d1e65c0434ca22d8cdfcdc643b458
c0fcdc68a58289b9d13ac8a2f5e93768b58a75cdcc6aceb9c9a499d51464d1faf4db13cdb930a53ab39a8b61aa71663505796becd36f5034d9cbebdc
6a254f25ceadd64116ad4fbfbdfd5123529c2cc0219a758ca219dcd2c7304633c26e2d9a91581ab954ac8e91087deea3c5faeda3198945814be55290
84e8b36fcf5e148e48320d5c245907490a9dfe68d19e209ab1681ab848b40eb2687dfa0d5a4dc3918a46814b45539045ebd36f4f340a47249a062e12
ad83bc35f6e937289a8623124d031789d64116ad4fbfbdfd5123129ec9ecc768d6318d66b6ca0f6988668cdd5a3423b13472a958278c4538cd7dbc58
bf7d3423b12870a95c1d64214eb36fd05e148e48320d5c24d9096429faf41bb4198523124d031789760207d14ed36fd16a1a8e54340a5c2a5a0707d1
4ed36f50340a47249a062e12ed040e5be369fa2d8aa6e18844d3c045a29dc041b4d3f41b7cd63422c52999384433c5289a6537155bc66846d88d4533
164b23978ad53112a1cf7db458bf793463b12870a95c0a92107df6edd98bc21149a6818b24eb2049a1d31f2dda6f9f9b0da269e022d13ac8a2f5e937
68350d472a1a052e154d4116ad4fbf3dd1281c91681ab848b40ef2d6d8a7dfa0681a8e48340d5c245a0759b43efd9c680689dde9c9c007d79890d5e5
031e109fa1af2a7cdbddf7f8f7eb3063bb3ce3cb3fada7bdfd692bfdc24fa56118300bd25817a1683b455fabfcbc7fb7fdecc5f6f91fed0e416cf7e2
dbd61b7cf1cdf63f779f6cccb3dd7fed5efc79fbaf2fb65f30295fa798432ad0a38b167ba98d5529b95d99d1f96a814ef69309f9412cd9255278acbc
b3ab59fa7c818e3571aae1616a72f7d0f2e923e8c9fa3215ff3045f9455a7065f7584d4933f9d44466a27882e5e0f71006c31283d94c7935775f3395
f5cfdce2c8ed4779d69cc1aee71e66b0b8a00fac3825fb115cdbc1f2d13dcc486991569aa2c91f8129583bd887692a2fd14a750ae623688ab74a089a
b3bdc85359b15522daf9e0afdf2a57b254576c956b585ade2ae198fe412cd9c59032ec95d728ca8a677aff30aeec9acdf2c1ba5ada2cc54ddd03f5e6
2eee966bd8bbd5ddd2217f7ea0c9fc9aedf21af746be3981ccc3b80a6bf6cbabb84a05fbda037515d76c98d770258f4938a8ca627781d52f339556ec
98060121c747ee987830827f184f79c596b986a7355be66aa6ca9a2df31a4df196b99aabba66cbbc4a57b44daee4ca2d87171c506dfc08bae2bd6a35
5776cd5e758dae78af5acd955bb3575dc515ed55abb9f26bf6aa6bb8d2bd0ab426686e0553e1f25e65739a8ac9d7ee55eb798a97f7aa553cadd8abd6
339556ec5557698af6aaf55ce5157bd5c37575f978b79ec372e978b78abd9b3fdead57485db1655ee5deb465aee5ca2f8617de32afe34ab7ccf55cd9
155be6c3b97afe7b38f6871dbf3e608edeeecd94762148815f4af8b06fb4a7cf6f4f9fad4d5388b6bd5f80e1fab1bd582025522683f85c4f5f8f95df
e7bf7707068ee5512c3ebeddf013f6baf63604be31adfdbbddf37fb3bbcf7fdc7db1661d59c04cf5503b96ea7088a696a17f210f422cc716943c0823
235bf892a9c18b9ead990ce293f7f28285549073f59561b88a3526fadc40e782cd41c0907c2ab681b1646feb6ebf055ce0122e35b846834112c08c37
2509013b85e04db233b0c694c46db60cbb09a2d852774cd54d550a5579e0c04b3bb37a7be0a0738b27af5493d38ee5f253c1489b77ac830e8200f45e
4cb176181b2603755a3f50f513b60f93ea8c03fc9c6c7561e0d6e338974bf1835c04b20e06f8a42fa2aa9a250ed80ac4ad5a8ce452db9ef383fd79f7
680fc2574fe4a04144386c2b383b19ecd4451c54d6ac09fa673880db521db42c20480668096088b2c53730d61293a80670b131c27a02d78473360848
57477edec956102266d91958538b6a2040b09f702e0bf02aa6eaa79a8381ff310761f21ebbad3d70d0b90d53c621d31d7479922b4cd5048f278075d0
4110f0136489c90e63fd0497acf009a6ea2688576c9d712037f34c824f30b70ea13225ef07b908641d0cf0495f4455354b1cb015885bb518c9a5b63d
e707fbf3eef1a40e1a458416a92aceafc1a5b681028d258aea148de03517d1313007bd4145c0825c14b08225acdd9e5ba005e684e580d6ec444111ee
025161e20ab3a428fa19b09a0f11774b28a85727de4414e12a258adbd1da38e306697e1fd63e711961332bce4cd244682f8ad793dc1dc36c44d058c4
0d68a4348ea3780151b4f04d274e30ac6de19c557c80b844feeba39895a4518ce566f4a421a5a89ad4b559e7caa55a47a5512b9eb1f7feac173ca91b
2661df48dc2860055a68db2450a4791262144df829c9132c988b594209b0e0aa842d60c959d39e4ea0889512e180d612249000ab354b282c30474e12
4706ac56e3dbe64068c419d34bd4258a708f9a243cd3da093b474d6d6f222e13629b93a04fd2a4493c012e4e72770cb3b1c788d62c8fc4fe1083444b
a208bb542bc17258db4cd21284031097c820103f6156924631969bd1938694a26a52d7669d2b976a1d9546ad78c6defbb35ef0a46e087a265451440e
f23c1d8e93406372a2324525db8f51540b0c077e510e3098578c002c49ea252a019a8b157365517314e500c3fe2066cd88aea5887208436c36f67088
1ad068a3b80f51c431d36471335a1ba94a3687331c7199a7ec83b82e4993a78aa0061727b93b86d9509d8fe2003412e7286cc47000a588742715230e
c06b03f57299281097c0706e14b3aa3403d6e526b46b8828764dd2daac73e552ada3d2a815cfd87b7fd60b9ed40d0bd83f5480529413c6e1d0083446
2b0144d122ac060934c01c622494030c669190042c057f38ab00cdd948f04aa2e624a102188e4d12e48005d9c42c63301b4cd07685018d2e4930258a
450e93127469ed32f9e20e2735e2b24cd087047292a64c15ea868b93dc1d83392a560c120e150586d3928443a598cb84a3a784435e1ba83751c2a172
090ce72d09722acd8075b909ed1a228a5d93b436e99cb8ecd62169d48a67ecbd3feb054feb8660dfb5d26184a0251fce86406330a23245abb0ea45b5
7239b25a514e6db712c508c0520c87730ad08c0413e68aa2e62cca01866f6256b9d9658d288730980d01a9ed0a038a6d4ddc87288aba8db819ad8d3d
a3fac3f98cb8ac9893c475499a0ada465c9ce4ee18cc8115b11358cb238537230e401433a4cde200bcb614956a1007502e8119f823ccaad20c58979b
d0ae21a2d835496b93ce89cb6e1d92a65bf19cbdf767bde049ddb0ca731b252c045143399c0d8162bc041045e506a3b312688041c7122a80f9122424
014b291dce2940732812bc8228a74aa80056a393201784b695504198982d1ece48030a2d4b30558a51828f95a0ab6b0383971dce67ca25d08c43a2a4
355d1a6038764ac057b915c36c3800ac992c8f94005b251c12453800105bc7b5e100c8082439502e61562421302b49c398caada86aa853244df6b507
9d772ec93a5d1ab2e2197befcf7a01b9a1d9bd3efa5a39d62c8d144c7b01a8574d0972c7c2e9f6b3adddfd3af497dc04bdb6bdd7433e578652f07db7
35dbfa7482e8eb12b4bc2e425439de14bbbc6eb8bb2e29e9b42e41cbeb4a665b6a3b795f5e77a9ab93821c3677ede271eec57373b8ff2bfb07e63fff
2362e51485e2279bbf3edbbdf89ed93f16ac7b0d516e058fc839e69d97324e2bb61b3c70e690c83cc45c6d5d3d7d1e9755e0fe55b39f7c88f72f1a16
16d5a7ebb8a802f72f1a92bc5e31eb51dc6fa8f6d78b8494cd48eca29c59901d9299fc191bfdac3672bb3f1f3a12ed111b37e9b133316b3ce011bdbf
738101a611a74bd963730f43e1c4cd8dad738daeca3d6fa77cf94cc61f9dee93cd2f9bb79b579bdd266d0cbefee593cdd79bbf3c5bd97191d758aa27
cef1a494403d97c36795f538e14cd76524756ddf65a4d63a2f4e3b2f17d7fa08611947e409a78f3bcd17c1b1cfcdba2fd6d529f879fb45508cbcd37f
b15e6e82ce1b3082fa34efc0ccd0de7e18f0deab1828f7b6c6c007f54006ae7bc36490b0f756067d5023c63a24cf76de89111409fbac152328f6d03b
bd1879bd0859ecac19236828f36ecc80b23e46fca43da6ac9a663ed82ecc35599124248b9ff38ffd82df50287f12e7c5fe19cca143cb8d19c1ab6db5
62c22d8e1cc1b4aa3215fa05cdb955a087ae80c541c8fa56ada61682a0deb7ba36751b66686f4d0c78ef630c947bcb63e083fa2303d7bd993248d8fb
2e833ea84963a525545a0d79181d91d9b46af34039ca0d1851c28c0f9c0852ab620f5c07a97237f5b284840efa18f0ae3da24c9a263e06bb10d76445
92902c7ece3ff60b7ef3d4ce8b93966c3fb3a68dc0a5b6ca9ec23641e8d24a80da181010c94a9af56d6c4a7277bf0c8d1b01bd6d1548ed4b8c606f61
30dcbb1d4cb5b7459803eaa030b7bdd9c272f5ae0ceb801a383661874dadcac7637d7b3f7a68e108686cab1c8e1c38c95dead0c41150ea57431787c1
41070c777d2955d2ac72305841b9258ba95c64db337eb03fef1e4feda245daa8eda4ca0d1d814b6ee516856d913b01ad2ea34d0301b36b251cee2fd8
221dd356edd1568480708f5641ec3d8b11eced0d867b2784a9f696097340dd15e6b6376258aedeb1611d5073070471646ca5171e6ba5eddc8ab24415
5628ad9c337280c3b46b951fe61659be6b2522964bc141070c777d2955d2ac72305841b9258ba95c64db337eb03fef1e17b3793a39e8819cb0e5fc56
4e39b65e91d0d333a54b1376cfd278f44f2f353d2aa7979794915f0e493d63f72c2daf73db6bd37aec96b2275f93d7f7a488327bebec0c3b2b82e451
f9aae4feb838a7f7c7b5ef4ff069e94766f8a78d9672fce3caf767f9b4f283d3fcfb92f43bc9ec3c57c5d37729e1bd9caab7cbbdf97827f3b1b97a3d
e5ea5fadcfd59dcd93a32a8313f7cfcafaf1b38a7b9c7026571f495d9bab8fd45aaeee3557bfb8d64708c70e6e97e3dd5c5dfe2a824bf35c5dd09ce6
b9ba93e293bb9bab0b2e27b931571734d579ae3e437b6e3ae03d931d28f7ac77e08372e481eb9e510f12f6ec7bd007e5ea82bb30cfd5e5efebe0b03a
cbd5db5fdd29777375c18d9be7ea00a662e7b9fa80b23e46bc6b8f2893a6898fc12ec435599124248b9ff38ffd82df50a07e12e7f5fa9606e7eace43
9438bf4529680ef36b94ce3bb94f70e71ea5e021cf2f520a0a9f99dda49ca13d371df09ec90e947bd63bf04139f2c075cfa807097bf63de8837275c1
9d9bdfa87452daf2f32b958222c9b873a75270e424b34b9538394cc5cc6f550e28eb63c44fda63caaa69e683edc25c93154942b2f839ffd82ff8cd93
3baf0872b7d0846710cfe1bcd02428b4312b34e1c9955b08770a4d8287342f34090a9799159a66a81a8b71352d535637603ed869986b753196909c97
f431386f95d742668526e78bdc0c9f159a04f5e96ea149fed042adf34293a0b9ce0b4d033a3ccc03de1f7da24cdb04f1316c2ac435392f4948ce7bce
3ff60b7ef3d4ce1b4490963f70a1c9e1f8ea7cbbc9a8b080489725d669d9c24997b9b4db915ce21018e76989a25a0e11506eb05a2e348d602fb230dc
0b324cb5976e98032af330b7bd24c472f5e211eb800a4d023bd36e35f258388f6dd71f992af2a7d86e4a8e1c20f92fed5225710b309776fb92e42290
7530c0277d1155d52c71c056206ed5622417d9f68c1feccfbbc753bb28b253b97138ab85620e22c5ac162a60b6b35aa893fb38f94e2d54e01066b550
015398d54247905c54617251a54a2eaa1cb08310b7ea4c2497ba1de9805d1430d2b7b1168add0a6367b550017db8530b7511d3f2ac162a60ceb35a28
83830e18eefa52aaa459e560b082724b1653b9c8b667fc607fde3d9eda4593b4d85a01886ba1d87d100fda0d488505cca65d95d4ca9abc9c2807a7d9
fd7681a539395c701730f97653536b7b23d8eb800cf79a2153edd545e6802a91cc6daf5ab25cbdbec93aa05aa8c0b6b4db903c36b696e470d55d40ef
da0dcb9103c893da654ce6164785d46e6db25c0a0e3a60b8eb4ba992669583c10aca2d594ce522db9ef183fd79f7786a17cd10a2dc29d7bb9c100c66
e57a01539d95eb9ddcf18e77caf502073b2bd70b98dcac5c3f826a1e82d59444558d4e1cb08310b7ea4c2497ba1de9805d14b0cdb372bd93cbe66556
ae17d0db3be57a9721799c95eb05cc7156ae6770d001c35d5f4a9534ab1c0c56506ec9622a17d9f68c1feccfbbc753bb28644fb995bcf9dd0c07f5e1
c0d65e8be8b080a9b42b967affdfc12826b4db98fcaa80c0d225cffc7e8680f2b6bbe5173418a4371546f8f8520353ed6f3f3007f4a20473dbdfa960
b9facb17ac037a4f436030d35e22a1b1bef5c6931fa8fac99b763373e4008138b44b9cccadc3b476db93e5629074a030e94ba992669583c10aca2d59
4ce522db9ef183fd79f7b8d851a2fa96968d095beead48852eba2b3a4a7432d1a509bb67e9282abea2a344baa262b962f72c5da4b97d6d474972329c
6cafe928f5d23ddf15b571869d15c154a94e5dd3513a2e3e5c183dac7de1caa82efdc88ed269a3e55ba387952fdc1bd5953f6a47e94ecb65de51c1d3
77a92d73b9a32485bc940e7f32e4b10d257f6a287d3e6b28f13a88ae65e72af2ae147889c9d1226d15107a094220b9dbbcd87cd796709bb8499b4f37
effba27b7cfd61f3019f3fe0d31bfc2c8c387c7dff6cebe45d16e3532c7220f804433f60c837187c98fa162b7cc07fc0c1ef76995f3996d4eae8294a
c333e40fcfd097ed194af2fcd68808817dd71c06591df42f6d50968b8292a5dad6c4c2a0f6776df19b20a12e6553aafcd9f1c36fdcc26f86d57f3aae
6e6c8848c2d289f0b0fa3f2eac3e507adf06c96b9138d138b9b57e86d23f1d07798343780caee613b7bb47fcc69c9877d18684f4f824fb3f7cb386e3
55b2fff3a2e6cba2e67f77e418c7c0849819bdbd2ccbab23b590e42882c7a49c7ef3f2f49b80f35935cef775f6ed3772cc4ca6f8aaf26fbe3bce8935
d46c90aaafe06099da326f6f8fd4a042c420879ce9f26f96395872a1fb38f8dbc90ac6996824f05f43edc35107392379b029afb1cf0f8bbafef9c801
b20039f824f59de5751ee3216f1675fd7a513bdf3d82ebc758ee97c575963de4927deefe61a7f667f3714a2bfdc589c3436ce701e2ebcdefb08bff82
2dffcde6efd8c3df63277f893dbdc520b709a7e021ff7dd5c67ed8fc8c116f5a68903fda278515177d9427a685825787c919d369f2e76df237f8edeb
166d64a99720751c6ddb6b0f2de42d8710eb7110721723de7f9fd6fdf767f20e894526836d42b268895e2fc18484b8d718208cfc0c46deb4a886cfcf
b6f309ffb3f95b0b7187a8f8f33186ca77d878888d62a8634c7cb9221e5a272fdddd6f9d3f343defeeeaf9ed91e17758eafde6afc4d9b732a5196907
8e7fc4bfff135ddf25f223c9ffbff8f9474cdd35fdec416ac5e43dbebe85ebbcc3b08945fd62bbfde2ff01bd49bddd
>}
\expandafter\gdef\csname EFWidth2\endcsname{306}
\expandafter\gdef\csname EFHeight2\endcsname{309.6}
\pdfobj reserveobjnum
\expandafter\xdef\csname EF3O1\endcsname{\the\pdflastobj}
\pdfobj reserveobjnum
\expandafter\xdef\csname EF3O2\endcsname{\the\pdflastobj}
\pdfobj reserveobjnum
\expandafter\xdef\csname EF3O3\endcsname{\the\pdflastobj}
\pdfobj reserveobjnum
\expandafter\xdef\csname EF3O4\endcsname{\the\pdflastobj}
\pdfobj reserveobjnum
\expandafter\xdef\csname EF3O5\endcsname{\the\pdflastobj}
\pdfobj reserveobjnum
\expandafter\xdef\csname EF3O6\endcsname{\the\pdflastobj}
\pdfobj reserveobjnum
\expandafter\xdef\csname EF3O7\endcsname{\the\pdflastobj}
\pdfobj reserveobjnum
\expandafter\xdef\csname EF3O8\endcsname{\the\pdflastobj}
\pdfobj reserveobjnum
\expandafter\xdef\csname EF3O9\endcsname{\the\pdflastobj}
\pdfobj reserveobjnum
\expandafter\xdef\csname EF3O10\endcsname{\the\pdflastobj}
\pdfobj reserveobjnum
\expandafter\xdef\csname EF3O11\endcsname{\the\pdflastobj}
\pdfobj reserveobjnum
\expandafter\xdef\csname EF3O12\endcsname{\the\pdflastobj}
\pdfobj reserveobjnum
\expandafter\xdef\csname EF3O13\endcsname{\the\pdflastobj}
\pdfobj reserveobjnum
\expandafter\xdef\csname EF3O14\endcsname{\the\pdflastobj}
\pdfobj reserveobjnum
\expandafter\xdef\csname EF3O15\endcsname{\the\pdflastobj}
\pdfobj reserveobjnum
\expandafter\xdef\csname EF3O16\endcsname{\the\pdflastobj}
\pdfobj reserveobjnum
\expandafter\xdef\csname EF3O17\endcsname{\the\pdflastobj}
\immediate\pdfobj useobjnum \csname EF3O1\endcsname {<< /F1 \csname EF3O2\endcsname\space 0 R >>}
\immediate\pdfobj useobjnum \csname EF3O2\endcsname {<< /Type /Font /Subtype /Type0 /BaseFont /BMQQDV+DejaVuSans /Encoding /Identity-H /DescendantFonts [ \csname EF3O3\endcsname\space 0 R ] /ToUnicode \csname EF3O8\endcsname\space 0 R >>}
\immediate\pdfobj useobjnum \csname EF3O3\endcsname {<< /Type /Font /Subtype /CIDFontType2 /BaseFont /BMQQDV+DejaVuSans /CIDSystemInfo << /Registry <41646f6265> /Ordering <4964656e74697479> /Supplement 0 >> /FontDescriptor \csname EF3O4\endcsname\space 0 R /W \csname EF3O6\endcsname\space 0 R /CIDToGIDMap \csname EF3O7\endcsname\space 0 R >>}
\immediate\pdfobj useobjnum \csname EF3O4\endcsname {<< /Type /FontDescriptor /FontName /BMQQDV+DejaVuSans /Flags 32 /FontBBox [ -1021 -463 1794 1233 ] /Ascent 929 /Descent -236 /CapHeight 0 /XHeight 0 /ItalicAngle 0 /StemV 0 /FontFile2 \csname EF3O5\endcsname\space 0 R /MaxWidth 770 >>}
\immediate\pdfobj useobjnum \csname EF3O5\endcsname stream attr{/Length1 10204 /Filter [/ASCIIHexDecode /FlateDecode]}{
789cd579797c1455b6f0b975ea767755ef9dee2ca4d3e92c9d852db14358a3b411904d8c1290a068024908b224107654108684084c8248104488b2c8
36189127094446461419cca803f88ddfe87314c6e54d6498798c3890dcbc53d50982f3defce6fdf5fdbeae9cbafb3dfbb9e7a68001808b5e32f8ef1d
367c0404611000eb4dbdde7bf3ee1f9f74b8f74a6a8f2478f7def1137237bdb0f720003a68fcea7d77e78f548666fd8c16cbd4fefefef119c1197f2f
d7daedd49e386d765105f6b33d0bc0f369fd23d316cef7c38c38dadfd0486d515a317df6dc7e0b1f0730511b0e4c2faaac00233da03c426dcbf4594b
4a9fc9ac8da7761980e789b292a262d3945f5501f82ed178ff32eab0ee30ae0388f7523bb96cf6fcc56bae582f523b44eda766954f2bbaeff8c897a8
fd16b57367172dae90d71be602f8357afd738a6697a47e77670b5593899ef315e595f3fff0d6650920e1071a5f5a31afa46288f12f544d9c4f3c9481
262b0b847f123d083da82f16e2a9ae8da9d01772401a36626c3ed86615cd9f03d1a06182ce4e809b356d269b59326f0e98b49a0e32eda0952690984f
9bc9fab149a0801dfe855fe72e82a304e7083eedfc547bdf327a2e3ce327f3ebfec97e9ffef391ee711dd3d1dbc64fdedad3794ec37ddb8a33b7eea5
cdfe5ffe18c9c44c3ab082039c10011e88d175602149315d86649cbaa43981a16b9591e4ca689d439f23e933388d92a5d188427a0bef6903e8d2e376
38086e5d8f4b8be6154d85baa279b3e740ddd4794533a06e5ad19c4a7a9795cca3f79279b3a06e7a4939d5a7cf2b990975654573684e59c954ea9959
34a708ea661595fbb537d9c3cf6617cd2f83ba3933b59ef2e945b3a16ede823934737ee99ce9f42ed3f6ff1f6c46e77fd68ce945b7d98d0c61bb61d0
5f2f3971e2062f24421ad9259294dce0d34b17c411efda9844f5587a3b6905d3df1acf7f806f200f722c60fc4c432591dfcb8524c80b61d45afdd65f
779b9575d527fc0bea23094b0bff053d4f0ecfd5ca9bf55bf6b8ad7ff22ded793fd6a58100376aa99e7373691a71a9863709bf651babd32c8567f12d
d4f4854bfc3f502ab96807b301d1244b927c7345d72faf74783184c00f4b0c6ee1665b8db3d9c5dbe66017687149c3b88b5a4c6fcbb0864a1fd92292
bdf9a11fe96d200c23c93f0845300d4a6106cc8205b044d7bf1f327f325e42e333611e2ceeececbcd8799e7cecb79da73b5b3a8f77fea2f360e781ce
7d9d7b3b7776be743bbdffcd2f1cc79aba5a361d571834ba3309fa41d8fefa77814230b00b349e8675815913481768bb3ed8055682a22ed0f69fd605
5a4c2b212825d0ce91195d401227ce80b807f26c004d970b083c048b099610c4b06ca2f92c3d27613f6c637ba8a5ed33977a1aa4c3b09a5634c1dbec
2cab91fa50df1eb802e76866359cc5fd32b0d19045bd009f7009aeb27c38427b0c626e36c86820571a271f911f949be4afe556182057caad72a15cc9
b2f0653e91ef211884ef905d9c21bf6a629f43251cc36f310b5be461b20d3ec756dc0f7f242c9accce422de97c19d1e266e5b05c5a263d483da7792b
6ca5a79cc65bd976768ea83bc656c105781e6569246c671788afb3f03dacc27c6939a9224b2a25fa4fd35eadb47e2b5452e0bac05410522fea23ea09
d754fd1d877df805fdb902cb09733eec323419dcc624c2a2496c0f7b9bb5193642039cc347702efe9ead9693e4bdf248a80d4b000ba196f6deaaad31
94b225c4bbf62cd3769716c9856c3f7c2b171aa7d2deef681c11ce23d283c45129b4102c323888a7216c35d610a5da681cb41a47cb19b49e76303e49
5c039463363c4eb56570080e431fac875ada49e7d730807f4f2bb7c95f10cfb56cbdf43db4e230488752f932c99a0217d4031c351ab88c1283de7e47
a3141855dc187a6092ffbd82843ebd7fd2f43b8cfe46c86bb42ef1377576e64d9263794123f73662c0d4280792bef89f06bfe8d37b4cde247f63c7f0
615dbb0e2f1c467de32751556b5137f50f1fa68f69481b7980fe461536faa795f99f713c9334f81947c9e03e9a5f49da894e9e8c54abeebc28d79276
cc100549a10843830b1a2c1b5ceba215afdd875e4f6cb4a3bded6adb1de0b874b5cd713993254a4e872b2be8723aa4d420381d9094a8bda5b5db5e7c
91fe5e7cf10653c4b51b37c435a6f03cd12ade27686559f4f463590da25254896a51c9d6b3256c295bafc5a72fc8a5275344572114f2e462832c35f0
a78dd0a098e20d5e847866769c1fd368cf9fd44c9343030bda4eb51341196dc1ab6de7db3249060589ec881dedb234654082936707b29c099e04c146
8b2dace4d76c74fbaefd72e5c8a691d72feca70d485ff268e2d80bdb43a9313d6231daebe432383997731d2f399fb336b837c8e4b7e05025a67aa31c
688873b48f69f4e48f698ccc7f784ca33bff61a2043bdf1a5870ea7cdb5b6f395d83baa8b9aa536374f0ef8cfc3bd6e87538a306116da1e00479229f
685c2a2fe50b63ab638ce4d531720f52af773e2c342ce851193bdfbb12aa6256f65819bbd2bb17f6c63aa7c094003191dd1f06dcc5b2fba524251a8c
d977b1aca0ec711b8c06a05072b27d2c8931abe8be57aa1e3bb778e9f949df30f7f08763c4d5fdfbf72f621b06cfde3c6a517dee3defdf11fce6578f
ecae88137f22eeb791be2b89fb34a808f5054f845aa5c457f9231a3cd60665a3c1dbe0df98b4c1b0ceb3333dd21b01e88ef1a6f81d5e74c72b86744d
0891f9ddfc2b3aff2480ab6dc42549c0d176e9eaa536c757971dfa4352c96421a5d857145fe42f4e90610af3318f5b4e484c49cdf61123fd89ab5e2c
3b5cb98d3d1cba61a7f8507cf3e8e9c7f3df9b7de274f3ee436f6cdabef3f9f127e6559e29f88a597e8e81f853759ffd351078fb8e607dedcf36ed59
5451b92c39e588dfffd1e1270e68165e4c5ade45362551c47f3a14c7ac6805446b2ea0d9d8c0193ead308b0a5e8349b6d81c9f8e69341363569d318b
c6d8f99c536d41a7a6d74be773da82c48bae58f90c29f78ca6d29e66e80923a1800e8945f00c1823592f4861bdb03f1bc7eeb7dc6f9dc84ad902b614
57332ba952610998e5ccf22439939c09d968101213d9e2c285331d8ff240fb456c6dcfda2b1a58e1db446f2d69a85ef7c86498144a8e3080b5ca020d
9186066fe46e4783a5267183775dc092a878637c115e4c888f0d908b921a2ee94e7aa9fdd28f0a08b9e95462ad522bb6ca67f959837197efb04f9ac2
a6b04483c71d19963df3f4654989126ae2078f1b92fc9a43270423a55d6b76ec5843c094b12f8c7def9c7dc8e1995f302eae7c293ac46596c762c7be
80438ebdfcd2f1e32fbd7c4c5ad2949c22fe2afefcd014f1e73f7d25fe4377f1a96cb74f3bbdd712576b75ae922003ee0905a289a7544383af4f836b
836f5deacecc684b724faf27d96b5728ea50e8b127c4663ada4fb55d3dd5a6b3d36d637a6b1019d72d2c04fa928f2467052335e7d0cd2c293139bb5f
ff88ee09645fd2dabaddbbebeaf6ec16bb576e80ce7fff5c6c78fad99de2dab56be2daae911b56addcb871e5aa0dd23b5bababb7be5055bd75a2fff0
8ad73ffcf0f51587fd89efd67ef2cd379fd4becb8ae6af5c399f80f444b706de481666a25cb56fc8039b94a7d926874972a8c063ac41f02ab24b8f9d
44f9203d425078385c18c148faceacb08c03097a99ced8c6ab2c9bc58b2fc45991cb76b0c3ac5e94893c51c4336e2c62d1ac2febcda2f688cd628578
4ad487b1cbeb75ec113038144d9835025c0ed5448921e11fead408701301c176cd4109bf1e2d4366bb27de33d4f398e7550fd72909cb2790a0b99edc
8b3993d846b17eebd6f562207bef0663a2f386f835cfe8f8e0d9eaaa67f75cfcfd675f76ecede23e9ff047501c7d2d94ad988ca81a9c282377ca32e6
52e2e241d9b349716fb23e6d96b9019d0a78236d5c8d89919d43ddaad722c7e9c49dd2a873862594a391e81aa43d370329811e450f877c24b950eed2
08c681332e19d0287bc0c3dc522446c90108b0809482a98614638a2945f1fbfab3fed20836422ae30be4057c51c41ac31ae3f386e78df153f4501415
91847d99c6af33c1afd9cd4d51e0fabb97ddd5fac92f47af5dfce9afd97b0cda5775d48867376d7a566a89ac7b4a94b1e5f5533b6af8858f7fb7fe98
747fc7e5ea55ab566b316719d9781fca4d5408400b9d2df1e628c506fba20ccd36a7bf2afe98b739a9c9b92eca0251186d554ce67834b987a79014de
3fdf160cea069e71ead2d57632f177758b776a8208cdc98ccbf465c667fa3313321387a686e242be507cc81f4a0825e6c5e5f9f2e2f3fc7909798979
a915a9abe3aa7dd5f1d5feea84d58975a90da957527ddd4bbb17752f28f415c617fa0b132a7c15f115fe8a8415be15f12bfc2b12a26f8d0e77b201ce
a46c1b858714f2a6ac845b2375a474e2f3834f976f696e6a1adab2e6e0d98e1b4c7a6573e11bf9252726ffe71529ab74d9d4ca4f8ea48fed787a7f69
d1c997df7ccbb57c6ddfbefb5353dbb5337f6ee745fc9a6415034343b150c5d6c8b62aeb1ab5d929374791907a185d5618e91edec3d17e29d89d7d88
ab971d7fbbac9970ac2376456c5d6c43ac66c27af8ea2278804723b62b7ee1d7e35ecc7bfddd775fcf7b71dc7dbba774888f591f6698f0b29c7db057
af8badad177bf5da9f9cccee6236e662839348834495bc9ca872d34db422944cd6a55499d670cf3ec69b2dec7874b3abc9b2ce1beb914c1e138c915c
f6e15eddc54fe93980466438f85e0d47dff4a17115710d711fc65d89e34361281b2a0df50c8de5bd8d19a60ca5b75a0ee5ac5c2af794c72a53e66a8c
24e8e7a2ce831e17281e1b75e68cf2f2f6c396d6a38f9f9e3aedc399e2aa38cdd2dbbf64c62669f79aadcd36e9d1c9274ef7eb77a8676f3690a92c82
dd233e3bb5f9c8a1ed9aa433c8347f30b8c9530b425eee6016d33e03ab86cd36438b2a4518c1a87093d56e1eebd6ce77554bb2cc5a9235a6d1a6d7b5
d330e7547bcea9532edd402f513c715ca6139122f01b214f9ea7c183443a1119c7286121774acaced28c45faa171da7d2c437cd4dcd878e84d837b4b
5ed9b4daf60cfca876dcf10324eb93441cdd54e96c3042af90cd70427e0d5a24ce4c328c3039dae9d4d5b0b56b21cba184943ca550a95048df115984
838ed2934df4930b6f3418dcdf6a7c1ea388b88df653c9f7d24d4e6ee046a7c1c0d1297359d2b33b1397105b60b362e04c369898094698b51874fe94
7ed2e7b45d8aea3ee7bfeb8e3d5a9d828fc3a2059fc7644935454aa9523aef659a28954ad34d95d222be52aae13f376d94eaf966d34ec9a570c52099
5135a661aa9cc67b197a194396322cb4d4e06ab986af37d41ab7e266e37e7c85bf617cc7f8b1f11a5ec16bf215b9c794b9a0b1c7b214964481e958b3
14f853c72169e6958ed3cd0677fb0c76b1e36ac74129a9e333e2f747f9251e85cd92c68dec0827c69921ab8387781e2fe415fc0a378485460233b8ff
ded6252b1c4d76ee044a110c662338cd586d6b525a8caac104a6112eede0d5e54fb67dfe7dcd988fe445ec8890c84ac3dea6eb400f0351383a7e54ef
6daf34370f3eb63aa2af178fb89c674f741c960b0f954ee39cb09593af9f266ca9f07528c76a916ce6f1f13e932219d5f1f1f1be5cd5ec8ba7485ec5
6a647795a7265a8b01018a01693ed51c1f6b8407634d36a3c99d383c4da3ea7cdb2522473f2fc241e16f5a5070756bcea69d16465bd79901a99ada66
7b55afd96be94b2ed7dbdcdb324419a20e310fb198fde067c9529a9a66ee1991e1cef0f48c4cf3a5c5a7fbd3139253abd42a7395a5caaaddff992419
5483192d68451bdad18131d80363d12bc729a919e943d31f4b5f9ebe22bd2ebd21fd4a7a341d2f737f8c49f17ad66b48ba35ebcd605abad29f64876b
c7ed9d5c5333f5b9a1a7765ffbdde4b76795be5bb4725dc981d081e7fff09bd223f2d0436969f9f9a15109b69e5b6ab6bd919474223bbbe081317901
7bf2a695db0ffa345d0e20fffe2bdf4efe3d301463e3263bee03276b3155ab6692b10c2687cb36829c5b77e2605726124e6f07651e7ed5a3e7239ae7
ba2387687e9c92ad79b0932d62cbc4ea31956fbe79e1e5ea6abe5dfcaab6a3a166dcd61dbf950a6bd95dda797788a2e524d26b04c5cb21212f2a8036
66a8b639d7a9acc5dd64a133ce6d1e2791617ab42839286c51978294d13935851d29f7bce521938a706639c3891a55ba8e1b76a8e9892736fda2a9e9
9ed7169c7c8f7dc08e497b3a8a76ec38b14b5a76a3e160e9b42bb857e37e0879fd12b910ccec9ed008eea4ab914c5988512bb8cc24864e496266edbf
89aa535199569855a3c9a8384d2663ae6a94996c825f72a9ab26990c16edb2a9e66b174ebaec39b49753bf1818c261d1ad854543f8f6772a9c5951ac
8aba25633175417704f9b1addd1c9e576559ed217bd414f54ef90e7582fc9071925aaa2e644be585c6f9ea7a79a5ba45de216f363eabd6a97bd83ef9
5579b771a7daa07a5594395754730ff4708fd2c39c8e293ca0f434fbad83d9201cc0fb19fb2b83cc99d65138820f57469b43d60298c80aa4027c884f
341418279a262a05e63c6bb975315b6e7d813d673cc076191bad1f583fb7765a33b47b8a9444a147d102905c2c66b2fd9f8863e2d827ec7531ef1396
ced2e5c28ecf3b4eb22631521a2d458ab9ac56d3c19d62229d9a8560676b43f7184d92e204bb266600bbcd6907bbd569b18256d8acaa59b538cd6635
d76a561c60e6d5f8a6cddce2b0592daa624030d965bbd9d1ad00932e76f32d6237872f9dbad41d64486d5d39d27f2b7a93760b8f0a6a32bf62006e32
28688d54a3ac0e6b9235db3a4abd5f1d679dac4c561f57abad2bac1bad2e15880833b7986d667b14f3480ed9c1a354b7d96de961eb614f85648a167e
d9cfd34d694a404d36275b52ad3d6d3ded7ee700c866d952a69cc907aafdcdfd2d03ad836c83ec99cebb21c4425208437288870c2163c894ab0c57ef
b58eb28db2879cf9f0007b409a8079721ee96702e9e721e52175827982a5c05660cf7396b252a94c9d619b612f742e332db62db6d7c033ca6af36a4b
8db5c65663dfa26c326fb26cb56db5ef32efb21cb01db0373a3f707eeeec7496902eb98d85afd74399a6cf2c69e3b8e79ed8386b6c7e568218f2367b
943dfa76d97b4bb78eacca97c7b53f87b3ba7205be9fce946418198a48d153034b42b4d567725a121ceeb1012d8a04b538e2c8d13282537740c8a958
9dfb5c528f6a88de6c8877b598ed19395f058322e77290d28460e66da9c18fe9811e6a8cda801609f9feee5c4118f574e150e3b4d414f6f7dbf286ee
dc614b5a5ad9b4700e91a1fd2f87e8d5beb83c1872c68e802853a4dd2d9b4c18a91ac6f6f8915e91432763c865a2e0e8a8b6459f887ccdb6598116ce
346a2f0bfdc61ca454a39392ca3a4a2e1d7a6af98f2413c58c920f797498d25fbcdeac51fef7e6662d13e9a6f1e8ab1ad1ecf0b75d32eda27140c81e
3582f21bd562323964976d6ca4465f983c8d3acac2f62932a5664ea5c52a6984099d2aa68beea709167920348ba33fe658dae1ce97fd24cb22ec8676
c29e0e7d6072283a6344542f534f47acc7d4a3a702f10653b24f494c19dbf747419d0a6aef765d5c51b1f149fb929d942df639d1f335076c8e3426b7
c4c42564e4e45c0a0635ff73b405e92face52e6d0ee83fe0a6a8ba757e4b5ac8497a5a6aa8a9f7216feab855a4eafba4c39a38bb748f244d527b58cb
13bc11a99a30bb85dbcd9ad42d5b9dbb1418178a4c1b61353922a3dd2687a2fd63272156894f1a9b7a0b673a63ba19447bfdfb129c52b52565b3c798
d062efe10bb37435e71ff9a16bcfede2ff498e1bb6d35b75d1c5c74d1e0edcaa979bbad1be8548055be2f356fce6317bcedf20dea47ff4f8e829dba5
eef2dac7ed636d058af69dd274f31b09ad33ce1671003671ede3eb0fd80afee1ab4abcdcaa7f7f00693fc15af2e928a826f882a09e601b4131412dc1
5a433a9c91cfc1195e4be5d7b04c2e85b9721bcc953e820c69109c945f86635a89a3e198dc0bca31090670171c925b6008c19d7c156470350cc616d2
48c66db4e4c212f881cd6267a47ed28bd2a7e8c53d728e5c2dff85dfc94bf867fc862168f8c2e833169a1215a3f298f29a72457d4c3d11e613fc980f
bda00c2c64c90ed8a24941f64891546adf338c3059fb82252b343953ff46a4d51944522b5c97c0c44674d5f1967ef9963a876836aeab6e00372b857b
a01c2a88ee793003a613f6f9e0873498463ee4872064d29345b5a934c34ffccda0f14a8279500245301b7a53ef289843f3fb52ed6e98458f1f1ebcb9
57a5de2aa1b284d62ca47731cd54ff05acfd6f62cd274c0b0997f6d5620ecdd6e828a235ff3b8cc3a8f638ad9b080b68c6349a5ba4ef56a2af28d239
f2d32e73e85d4173a6d2be33689e9fd69713f6227deca7fb8cd777a9248ae87e0b33a957c35a4973cbf59d82843b0bb26f5bd5bd26fcad1c3a9fd2bf
97fee3cfafdb85f6ad5ccb393d10095114f963e96611de71048c24d98f8571703f4ca09d1f8249f0303cc224780b4e3264b271c19c1923723333bbca
acaeb21f4093b422d47943e07537fe3d803f04f15a3d7e6fc3bf09bc2af03f03f8571bfea51eaf04f0cfcfdccdff2cf0723d7e578f6dd7f14fd7f13f
047e3b18bfc9c5af057e15c43f5e1acfff588f9768e2a5f178f1cb0c7ef13a7e99815f08fc83c0cf83f8ef6efcac1e3f15f87b17fedf27f193e3f83b
811fd3f48f9fc40be7efe5179ec4f3f7e2b9dfc6f273027f1b8b1f09fc50e007027f23b0b51edf3febe3ef0b3cebc35f07f18cc077573bf9bb5e7c27
124f097c5be0af049e14f896c05f0a3c21f04d812d028f0b3ce6c4e6aa006f16d874f4386f1278f48d29fce8713cba427ee3df02fc8d29a14e7c2324
ff5b008f087cbd1e0f0b7c4d60a3c057051e2ac65fd8f0e081003f588c07f6bbf88100ee77e13e227adf75dc2bf015817b04ee76e12e813b5fb6f19d
417cd9862f1563034d69a8c71d02b7bf68a17b04be68c16d2fc4f06dc5f8c256077f2106b73a708b8acf0bdc5c6fe59b05d65b71132dda548fcf6db4
f1e7d270a30d9fbd8e1bea8ef30d02eb6aa7f0bae358b742aefd7980d74ec1da90fcf300ae17b86e6d5fbe4ee0dabef80cb1f9ccdd58b3c6cc6bdcb8
862eb7d4515d8c5524a9aa00ae76e2cf04ae5ae9e4ab04ae74e2d30257085c2e30d4f9d4934ff2a7043ef9243e518ccbf23d7c5900970a5c2270b10d
175970a18a0b04cebf8e95d771de759c7b1d2b04960b9c23705602ce14f8b833973f3e1e67082c7b12a753a3546089c06281d3044e155834180bafe3
a3169c22f061819305164c5279c1759ca4e2439131fca1204e143881304fc8c57c0f8e670e3e3e1a1f74e303a323f80302f3cc78bfc071f739f83881
f73970acc031343246e0e8510e3e3a0247c559f928078eb4e2bd0247d4e3f07a1c26f01ea90fbfe73ae61ec7bbc76048e0508177dde9e277b9f1ce1c
3bbfd3853943ac3c27d469c721561c2c7090c08103dc7ce0751cd0dfc107b8b17fb699f77760b619fbf930cb8ac13bcc3c28f00e3366669879a61533
ccd8b78fc2fb3ab08f82bd83d8ab6780f72ac69ee92ede3380e92e4c4b0df0b4bb3135802901334fb163c08cc902930426da3181f84c70a1bf18e3af
a38f58f015639c15bd2441afc0d8ebd8231763a8112330ba18a348525102236951640c7a04ba05460874d10497a03b5f1feecc45c793682f469b40ab
25925b055a68b62512cd0255072a024d34cd24d0e8464331ca3428930578907a51a0446da90f32078240d6c48a57af67bdfe7ff8c1ff6b02fee92fee
bf003a33d18e
>}
\immediate\pdfobj useobjnum \csname EF3O6\endcsname {[ 32 [ 318 ] 48 [ 636 636 636 636 636 ] 54 [ 636 ] 56 [ 636 ] 68 [ 770 ] 80 [ 603 ] 82 [ 695 ] 97 [ 613 ] 99 [ 550 ] 101 [ 615 352 ] 105 [ 278 ] 107 [ 579 278 ] 114 [ 411 521 392 634 ] 120 [ 592 592 ] ]}
\immediate\pdfobj useobjnum \csname EF3O7\endcsname stream attr{ /Filter [/ASCIIHexDecode /FlateDecode]}{
789c6360a0103013906761606560636067e000b23981980b873a6e1ce23c40cc8bd7063e20e60762010641305f088885194490548832883188334880
d9920c52003fb8015d
>}
\immediate\pdfobj useobjnum \csname EF3O8\endcsname stream attr{ /Filter [/ASCIIHexDecode /FlateDecode]}{
789c5d92cb6e83301045f77c8597ed2222101e898490aa74c3a20f9576156501f61021156319b2e0ef6bfb1a221509aeee9919cf084f78ae5e2bd9cf
2cfcd423af69665d2f85a669bc6b4eaca55b2f832866a2e7b377eecb874605a129ae9769a6a192dd1814050bbf4c709af5c29e5ec4d8d273c0180b3f
b420ddcb1b7bfa39d740f55da95f1a48ce6c1f942513d499e3de1af5de0cc44257bcab8489f7f3b233658f8cef45118b9d8f30121f054daae1a41b79
a3a0d89ba76445679e322029fec5a304656db7e5c7361f72815e2d3e001f12e0d54690187258735092c166be24f3f8087cf4f8089c240e27be41e20f
49d128f5e3a47e9c141dad5ca00e6798c7ca050a8cc1ac5ca0c029b01ff061113dc19e7cf4e4710bcc3ddeac8be6982b4f115d2dfae7c91a442e7e44
ee1b3cecd5ded47a25f6d2ec866d1bc1ef5a9b65706be8b6c0de7f2f69db54352a5b65df3f5a84c077
>}
\immediate\pdfobj useobjnum \csname EF3O9\endcsname {<< /I1 \csname EF3O10\endcsname\space 0 R /I2 \csname EF3O11\endcsname\space 0 R /I3 \csname EF3O12\endcsname\space 0 R /I4 \csname EF3O13\endcsname\space 0 R >>}
\immediate\pdfobj useobjnum \csname EF3O10\endcsname stream attr{/Type /XObject /Subtype /Image /Width 113 /Height 113 /ColorSpace [ /Indexed /DeviceRGB 1 <ffffff000000> ] /BitsPerComponent 1 /Filter [/ASCIIHexDecode /FlateDecode] /DecodeParms [null << /Predictor 10 /Colors 1 /Columns 113 /BitsPerComponent 1 >>]}{
789c63f8c0800a48e5f391c8677c40199f4186443ef301caf80c1624f2d91b28e3331490c8e7a790cff08044be1c857cc603a4f1196c28e4333790c6
67a8a090cf4e229fe46c87ce1f4d03a36960b428a6bc2806001577343f
>}
\immediate\pdfobj useobjnum \csname EF3O11\endcsname stream attr{/Type /XObject /Subtype /Image /Width 113 /Height 113 /ColorSpace [ /Indexed /DeviceRGB 1 <ffffff000000> ] /BitsPerComponent 1 /Filter [/ASCIIHexDecode /FlateDecode] /DecodeParms [null << /Predictor 10 /Colors 1 /Columns 113 /BitsPerComponent 1 >>]}{
789ca5d3310ec2300c85e1541db2351740e51a0c95b816633646ae1489816b04710177eb50d53c512f9db01d2fd6373ffd81c2f1acde8c9e6b9b79b5
99fad2e4ad5f4c9e3fafdc6266268b81d462ec93aac1d8238d0dc63e898bded4971426bfb14faad7ac36f6788cd3cd6dde4f6d3a3cbba52ff65afaa2
aab4f4c59bd3d497dfeb8aced21777abcfd217bd9f5965e90b6f7199f6bef0a2cad217f688e4b1f4857de2a0b1f4853d22578769ef0bfbc4705258fa
c21e91cec56ee90bfbdc87d3e5bff970d9ea2f65a5bd67
>}
\immediate\pdfobj useobjnum \csname EF3O12\endcsname stream attr{/Type /XObject /Subtype /Image /Width 113 /Height 113 /ColorSpace [ /Indexed /DeviceRGB 1 <ffffff000000> ] /BitsPerComponent 1 /Filter [/ASCIIHexDecode /FlateDecode] /DecodeParms [null << /Predictor 10 /Colors 1 /Columns 113 /BitsPerComponent 1 >>]}{
789ca5d1bb0dc0300c03518f9ed1328a4750a94290f24102d89d8863f75adeb0b14f758aae091da2eb8476d175409be8a29ea2137abdb8e5805e2eee
d9a14bb54117f5149dd07fe3b603fa6bdcb743976a834ed16f53e210fd344576d17753d1179eac5834
>}
\immediate\pdfobj useobjnum \csname EF3O13\endcsname stream attr{/Type /XObject /Subtype /Image /Width 113 /Height 113 /ColorSpace [ /Indexed /DeviceRGB 1 <ffffff000000> ] /BitsPerComponent 1 /Filter [/ASCIIHexDecode /FlateDecode] /DecodeParms [null << /Predictor 10 /Colors 1 /Columns 113 /BitsPerComponent 1 >>]}{
789ca5d4bd1184201086612b381b60acc3c0a20c0dcd0caf25662eb00d1c1a80069803045d22f687ec09be37d9190637b48fea40b437320745b3b75a
e4a06692bd3d7789636fa538eebf12a79e2138ed278153efd278e7fdc277eefd76b4effdc6f6dd3bd02e7bc775e98d5cd79e41baee27a66bafdeb8e7
67bff0fcf4ca8d7b7ef71bcb6fef4019ec1dc7a0f7c118ecf3d74a35ec2984e1de6aba9bdedc77b38f5f2bd17f4d4ceaf9
>}
\immediate\pdfobj useobjnum \csname EF3O14\endcsname {<< /A1 << /Type /ExtGState /CA 0 /ca 1 >> /A2 << /Type /ExtGState /CA 1 /ca 1 >> >>}
\immediate\pdfobj useobjnum \csname EF3O15\endcsname {<<  >>}
\immediate\pdfobj useobjnum \csname EF3O16\endcsname {<<  >>}
\immediate\pdfobj useobjnum \csname EF3O17\endcsname stream attr{/Type /XObject /Subtype /Form /FormType 1 /BBox [0 0 464.8787779 148.7861494] /Resources << /Font \csname EF3O1\endcsname\space 0 R /XObject \csname EF3O9\endcsname\space 0 R /ExtGState \csname EF3O14\endcsname\space 0 R /Pattern \csname EF3O15\endcsname\space 0 R /Shading \csname EF3O16\endcsname\space 0 R /ProcSet [ /PDF /Text /ImageB /ImageC /ImageI ] >> /Filter [/ASCIIHexDecode /FlateDecode]}{
78dab5974d73dc360c8675e6afe0b13e0426007e1ee3baf534a73ad9991c3a3d64dcb59bd46ec71faddb7f5f90f2465056ab5defda23cbb3c2502008
be7804de5ab44e2eb46f5cfdbbb831c7a7cb7f3e5f2cdf9f9dd8ef3fe8a78b7b83f68bdc57f2c217b91fe5b533b9af4c7571637cf490534e29cbe3b5
7e449f21e588be88dd8d1f7f37e6d21cbf1537f7f2da99315c804be1142c47e040415c233940c2fac6ca78ad8d88095cf294bcd8070723739be8d6ae
bbcf0e12614adebb92bf79ba5bda8ff64f7bfc966a7cb27eb91f5b9ce3ccdcda8cc0b125b3ffb53ecfc58d3dfe09ede95ff6dc9cdbdb95532709ad8e
1de4de75b518ef20461703ab2c281b41ee93604e64471ecdada91bf9a6ee2427f0d99518880259424821a73abb395998e31fd1a2b38bcbb6678bdfcc
2ff6bbce1dd95fede29df96161ce4d0bc3206699a1e4447a17947126004482e09c2f31669f768a803bd431286729424894229138b409989046aef01b
577fac2f460b2240462a25c972587eaf99a71644119c6fcb721845716d59cfcba7524382885cab4247305827335a20661765e1cc12682c8e52dddce7
64b41579f588bc0a25790821c9b42db733397dd0aec8beeb01d0f43a2e83a9dadd508ee6c3ee952e635d9bf54a95e095ce500ce072e6dc2f0525b19c
e3aa88675676da2dbbebeea1fbb4417e2140f018b0971f123f2d609bdff747d67b702ce3ea8562fa5b265a76b6a3ce755e4ff7043f0d56f7045683c1
41e14258548e88b1824052a772a48ca33c2b0feb38ac18fa3a201fc2445343dec6c4c9a91a1669472c6220a0c616ad19659ce3922fe0c865974bda9f
8cc4aeaa366118edc8609c89804834e1ab2a3c1f4c46cc198449d18512f646e3481d8a8de8e52bb2669fccaaa02ba4927d5fe10730526b5d415285b2
85925201b54e6353d79ea09ca3db64316eaa2f61d6eec5bb1d7028b076898308273c8b703faf93e85377df5dc8ffeb4dca4a0e4296297a65bd08f5e2
78c7e7984791c0a7e2a2ee7dd873adf1cc49a54d1947b9571ea699f77500bd36f2a6666ac4e31d894791215566680d0db639da44d9c6da2ff635b027
efd86315b16861b41b8371ae15159586ec307b3cbc13140281acc5fb92f66f0547ca50bca390a1acd927b32a5f11c4d2207310eeb4cc15ee5424f3b8
234f90a263a4d7a1dd64196eaa2c01d8ee65bb9d7694657b8b8ffcdc7e4e6857c31246b62b88e9eea80a8f5dccd82e312dbbcbee73f7ef26a1a50c51
5ad0d20bed859a3eda1d805c598f5952a90f7fc1d79a9755a84c2ae3f8083c789806e030e0d50938395543a0df1181d25900358c686529e3ec699800
5b6fe80f68fa7ce0aa6d41ee684706e34c045efa1739d5486d96c39b3e8f0858884adf06ec791ed6ead007e2fadd5fb34f66553e2d51ba83069fc30e
c64aebfa643c84324f41e988a4b61d86f84a189c2cc64df5550fb53b17ef760c72c9907868c40e6bfaee84790fdd7f1b94c585a52f45b76a305f867a
59cf766eccf9ff5faa1a9f
>}
\expandafter\gdef\csname EFWidth3\endcsname{464.8787779}
\expandafter\gdef\csname EFHeight3\endcsname{148.7861494}
\pdfobj reserveobjnum
\expandafter\xdef\csname EF4O1\endcsname{\the\pdflastobj}
\pdfobj reserveobjnum
\expandafter\xdef\csname EF4O2\endcsname{\the\pdflastobj}
\pdfobj reserveobjnum
\expandafter\xdef\csname EF4O3\endcsname{\the\pdflastobj}
\pdfobj reserveobjnum
\expandafter\xdef\csname EF4O4\endcsname{\the\pdflastobj}
\pdfobj reserveobjnum
\expandafter\xdef\csname EF4O5\endcsname{\the\pdflastobj}
\pdfobj reserveobjnum
\expandafter\xdef\csname EF4O6\endcsname{\the\pdflastobj}
\pdfobj reserveobjnum
\expandafter\xdef\csname EF4O7\endcsname{\the\pdflastobj}
\pdfobj reserveobjnum
\expandafter\xdef\csname EF4O8\endcsname{\the\pdflastobj}
\pdfobj reserveobjnum
\expandafter\xdef\csname EF4O9\endcsname{\the\pdflastobj}
\pdfobj reserveobjnum
\expandafter\xdef\csname EF4O10\endcsname{\the\pdflastobj}
\pdfobj reserveobjnum
\expandafter\xdef\csname EF4O11\endcsname{\the\pdflastobj}
\pdfobj reserveobjnum
\expandafter\xdef\csname EF4O12\endcsname{\the\pdflastobj}
\pdfobj reserveobjnum
\expandafter\xdef\csname EF4O13\endcsname{\the\pdflastobj}
\pdfobj reserveobjnum
\expandafter\xdef\csname EF4O14\endcsname{\the\pdflastobj}
\immediate\pdfobj useobjnum \csname EF4O1\endcsname {<< /F1 \csname EF4O2\endcsname\space 0 R >>}
\immediate\pdfobj useobjnum \csname EF4O2\endcsname {<< /Type /Font /Subtype /Type0 /BaseFont /BMQQDV+DejaVuSans /Encoding /Identity-H /DescendantFonts [ \csname EF4O3\endcsname\space 0 R ] /ToUnicode \csname EF4O8\endcsname\space 0 R >>}
\immediate\pdfobj useobjnum \csname EF4O3\endcsname {<< /Type /Font /Subtype /CIDFontType2 /BaseFont /BMQQDV+DejaVuSans /CIDSystemInfo << /Registry <41646f6265> /Ordering <4964656e74697479> /Supplement 0 >> /FontDescriptor \csname EF4O4\endcsname\space 0 R /W \csname EF4O6\endcsname\space 0 R /CIDToGIDMap \csname EF4O7\endcsname\space 0 R >>}
\immediate\pdfobj useobjnum \csname EF4O4\endcsname {<< /Type /FontDescriptor /FontName /BMQQDV+DejaVuSans /Flags 32 /FontBBox [ -1021 -463 1794 1233 ] /Ascent 929 /Descent -236 /CapHeight 0 /XHeight 0 /ItalicAngle 0 /StemV 0 /FontFile2 \csname EF4O5\endcsname\space 0 R /MaxWidth 974 >>}
\immediate\pdfobj useobjnum \csname EF4O5\endcsname stream attr{/Length1 9548 /Filter [/ASCIIHexDecode /FlateDecode]}{
789cd579795c545796f0b9efbc57fb4e158b1455c55220ae0444458d94c45d635089ada64d530a887101c1258a06dc40a3366e60628c568c1aa3c626
c436e0168db44b08dde95633c94c32692389498718d363e234c265ce7b8062f77c33f3fdf1fdbedfbc57e7bdbbdfb39f735f0103001b3d44f08c1a3e
622404810780f5a2d6e051e94f4d765f889b40f5e1049f8c9afc745adcfed40b00e8a4feca2787658cd6a626ada5fa71aafffcd4e4be8973fe965742
8b1da5fa9459f3fdf9c20746358024d2fc7eb3962cf2c09c881400d540aaf39cfcd9f317f65bf21c8086ea7064b6bf301fd47483a691ea86d9f396e5
1c290afd27aadf01085a9e9bedcfd2cc78bf14c0398ffafbe7528371af7a33d5f7533d2677fea2e7b70e315750fd32d54be6e5cdf28fcb18497d11a3
a99e36dfff7cbeb859b590ea9ba8ee59e09f9f1df7fde3a7a94ef8b26bf979858b84fd3806c04d34c0ccfc82ecfcc1ea1fa9e8267ca45c90796580f6
4ba01a825b699341077d600808c3478ecf00d33cffa205104a3ca5abad0de04149193d37bb6001683ae631ea1394b7060496208f64c3590ed56cd49a
a2ccf3137c26df1d7b437ba9ebf3bfbedade529e73dbae2aeffd0fdaf73f7ccbbbc8abf1b738d564683b2fb7f0fdfc2aa7f9fcca7fbfcfff9babed2a
d17eb593dace12b4f3cc0466b080151cd00dc2c1496d666a63c453ec18253e584802156917a37e5bc708b1a34d035a92a09e9e06303e58b35daedb60
07d815b92ef717f867c25a7fc1fc05b07666817f0eac9de55f5048cfdcec027a2e2b98076b6767e751797641f65c589beb5f406372b36752cb5cff02
3fac9de7cff3c84fd28fb5f3fd8b7261ed82b9724bde6cff7c585bb078018d5c94b360363d73e5f5bbe8d0c3ab1daf3fc337900e430ca0fe5c6e1456
13a99944e4f50eb2331fe562679de576949ffeef39cf4cb4ee92ff8188a6b78f95df0fca5dd678a47d7a977ac1c3b2403ee07e3995873c98da9da8d4
3da0989ea2896d2189819424bd4c5557fb1bff097204b21641af42d48882203e98d171a5e78cc8021f79b6c52a3bb7b35deaf9ece62363b00364bf26
efb8836a4ca98bf03cbdbb913e20f578c82287c170980419e0876c980d73200f162bf2f1404297be594adf3c28686b6bbbd976adada1adbeed6cdb99
b6b7dbaada7ed376aced68dbe14771fcbbabddcf04bad43c1d20e39900edbe41a6745807c8bc1dde016682491d6021c8e8002b81bf0364ff328b209b
20886076070413cce9801002d9d7e6118412c8f25a2cf38525430dd4d37d1e0ec36e76906a39d4be905a024235aca351357081d5b30d426f6a3b0877
e02a8d2c837a3c2c021b0b49d40af0a924c05d9601c7698d146667296a9508e204f1b83849ac116f890d30402c141bc44cb19025e13e698a74902005
7f47f2be42deb7867d01857012bfc5243c2d0e174df00536e061f88a769179530fe5b01f8a08173bcb8362a14898442d97a406d845771ef537b03dec
2a617792ad81ebf0128ac268d8c3ae135df5f033acc10ca198589e24e410fe9768ad069abf0b0ac9715c673ae0424f6a23ec69af99ca33027b4bd795
fb0e14d3ce19b05f55a3b2aba36917996307d905d6a4da4e92bd8abfc485f82f6c9d182d1e124743793b073013ca69ed5df21c550e5b46b4cb7791bc
bab054cc6487e15b31533d93d6fe9d4c11ed795c984414e5c06982a52a0bd13498adc30d84a9dc1b010deab1625f9a4f2ba8572ab2ccc364788e4a45
700caaa137564239ada4d0ab1a20fd4c33778b3788e672b659f8191a7038c4438e789b780d768afc00efaa55928802835e1e4b95e01d9355e59b38d5
73795a64ef5e7f57f558d49e2a48af322ef3d4b4b5a54f15c3a5695592b30abd9a2ad11b7de3ffd479a377af71e9533d55ad238677ac3a227338b54d
9e4a45b946cdd43e62b8d2276f5a2579e93726b3ca332bd7f3a2e5c5e8412f5ab207f596ed4790a32b592952a9aceda6584ed2d1937647fb8254011b
040c5b6d9b42b54eb30b9d8ef0504b4bd3dda6c7c0d278b7c9723b814509568b2d29d166b508718960b5407494fc1436ee7ef555fabdfaea7da6e5f7
eedfe7f798564ae70dfc4382069644773f9614e085bc9497f142b6992d63cbd966d9efdc20d39d4e9e5a073e9f230d03a2109056a921a0d5b8554e4a
2b98de726d5c9539636a2d0df60d9cd654d74208f56d4abcdb74ad298178302d8a1d37a35914660c88b44ac9de246ba42392b3b1fc6596fd011bdbb2
ffb05838ba6674f375d9cb90bcc4b144b113f6f8e2c2ba8563a8d34a29995592c434cb6bd61dc6807dab48760b169dc074ce100baa222c2de3aa1c19
e3aa82339e195765cf788630c1b67303a7d55d6b3a77ce6a4be9c0e6ae828dda227daf96be67554e8b35248570f3253e2d4e91a6a8978bcba525e165
616ab2ea30b11b89d7b90896a816772b0c5fe45c0da561abbbad0e5fed3c0487c2ad336086978848ee0f0386b2e47eb1d1512a75f2509694283aec2a
b50ac8959c6f194f6c4cf23ff946e9afae3ebffcdad46f987dc43361fceee1c38797b2ad83e6ef1cb3b432ed890f1f4bfce6fd5f1ec88fe0df11f5bb
49de85447d77c8f7f5014790ae54eb2ef504051cc68076bbca19f06c8fdeaadae4783d3ed81904680f73c67a2c4eb4bbb5aa789909c1199df46b15fa
8901779b884ae280a5a9f16e6393e5ebdb16e526ae24309f36cbe577fb3d599122cc602ee6b08b9151b171c92e22a43f51d59325b7171e210f53b7be
ce3fe2df3c7be9b98ccbf3cf5eaa3d70ec44c59ed75f9a7cb6a0f0cab4af99e1d7e875d76df9fcaf5eef85c7122bcbd7561c5c9a5f5814137bdce3f9
63f58a23b286679194f7934e0994c1acf24530231a01d19806a857072486abb4cca003a74a231a4c96cfc655e98930a342984126ecda90baa644ab2c
d7c66b439a12891645b0e21512ee1559a43df4d00346c3340a0c4be1455007b39e10cb7a627f36813d6578ca3885e5b0c56c39ae634612a596456292
35c9116d8db64626a38a0b8c27f3ebd7afb43e2b795b6e62434bd2211e60991748427b4842598479043ceb8b16bba9ada596886e01b53d60d9601402
b0cab849bddf15e2643a3a77e82c2a97a58575958b4546bfc35a2cb2b590882c75b76503962d98c4c3ebdaa51344fa6595790e0e3b3c2216591a9f63
586ba0d7d45ecd2c865fe33f3c7b2177fab9b96f7df0c15b135fcb90ae1fe6dbcc667efb2f3ff29f3c9efac7124eecde7d222696b85d4ed8572afe24
06a6fa628254602c354020581570061fb0040c1ba2b63a37790d515a67982bc88991ee702f391852a246c5c534b6343e541f9f9d622a6b101ab041ac
97ea554477b54b98c166b02895c31edc8e2b73f461d15102761212ed91dd516462b0b07ffddebdeb099876fc2be32f5f350fae9e7b8349fcce97bc95
df66e92c7cfc2b38f8e4bed74e9d7a6ddf4961594d4c2cff2bffe11733f80fdf7dcdffa238a899ec804bf65087489b7249262a98e50b95ac020a6815
c95f48240f9490890c546a4bcb877556d912fa76f10304a42eb280a69ea1d4da478bd1a18e64641d30709acf3655602aec26a548a3a5d95805552a35
690b098645b3c84378aef5cbab8cb72649d7a734af927acad9ce46e2ef4685bfd1d0179ef0794389bb71aa80ab77c0b6d5b529eef58450434c0fa723
c669d692f726176e8e0c4fb0b4d435ddad6b5218db69ab4a2d858cb40b33bd7dc8d7c4242506cb4e4631d7e8a898e47efd833a079066081bb71c38b0
65cbc103fcc0eaadd0f6af5ff0adabb6bdceefddbbc7efed1fbd75cdeaeddb57afd92afc6e5759d9ae574acb764df15497bcf3d147ef94547ba22e96
7ffacd379f965f64fe45ab572f22208d594514951145a18ac644abdd61ac14c202ba0362003604bb0396adc19bbc6aa73332c80551514ea3a230847e
674cfa9affd4a92fc17561ef773b177ece792ee27d579d5b7dd876daf6ad0d49630628ba6d0b3291ae40723f486ad792a858d64916f1e0c6f8dde348
4f0655cffb33bfcf2c5f326456fe36ff6afc6e36b44397dca425ccc86c537ec9ccdf7dcd829570b6973fe31276766a92ec7d2cf20959f13e3a78df37
0cac74b09544418e31ccaa431d58e9bc853a35051e95dca8b5a24e2377907f5257c8de49d26a28fb93730aada4a3f85717222bd690c66b4d9dbe4809
320f5e9aef3bd54d2e4f8baaf618189be14b3333b360569b3566980a4b201f368156cd34820ab562300b13a6b0a942ba6136cb159e674b841558202e
553faf2963eb8512c34bc2cb582986b4bb2f0adfd11889d1c2697e5bf0f2a2af84943fad6ffdd5faeb92a9350c8f35f764c57c15d14be762a98a28d7
5076ddc7e7800aed2a5661d108161d4861c644706a459b9251901e2ab62207cdeacc204296593ba4e28d54def18c6dbfcb92999bdfe0f53c8ded65d5
ac92e7f274ee97fade5fca42591fd68b851ce43b79097f815776ec4ede89327827bced4b261ea24e65451125ab28621aa5d30e141d155a7b8571955e
945468d58233d824e9c2c2446baa5de7348814eb9b12c95c28a859db311c22a719b614f9ee6ad64a6caff6b914362f0f6212484c22c6aa450738985d
08c610d10b5ee61562314e15ab8ed5c46a3daefeacbf30928d1472a5c5e2626969d07ad57af54baa97d4ee194a800c098ac63eac2793e38447b64262
4abb86e2e66145431b3e7d6fecc6e73ffb805d66d0b2a67503df5651b14d381dbce5059ecb8a2b67b66e90ae7ffcc9e693c253adb7cbd6ac5947beab
ad997cd7b7c413358cf599544205ac12998f62874fd258ae91bf55fc556202f9289deca3348a8fd280a6d3470581d60d166611dc6a8bd6a7cdd7eed5
6a67a01c3b084795f843ebedfad6db14119aafcb1e4a8022b2e7de94a1ebc00ba729c372eb43b426783344556bb27a4add279db5d135d64d210608c1
50a356a377a3c63e2296b8fee1b5a6c4c476075ad778b7854cfba2e2afac32e37d0b1222125c09ee044f426442546a9c2fc2e7f2b97d1e5fa42f2a3d
22dd95ee4ef7a447a647a5c7e5c7ad8b287395b9cb3c6591eba2b6c405e2eec4b93aa7764eea9c90e9ca74677a3223f35df9ee7c4f7e6489abc45de2
29890ced1a651e6703acd1c9b2eb88255f9814d9355f0916ce7e717455decbb53535a9a7d71fad6fbdcf843776669ec8c83e3bfddfee08493945330b
3f3d1e3fbe75d5e11cfff97d67ced98a37f6e973382eae458e2b0bdb6ee22de25518a4fac2a194ad174da5c6f5ba5aab581b424ceaa6b61961b47d44
374b4b6362a7bfe3776f5b7eba9de0d39bc32de125e15bc203e11221ab84c10e840738143fd71e07f1d68457d3dfb978f19df457273c7960462bff98
f566aaa7f789c9477bf6bcd9d070b367cfc331316c2833311b1b144d1224acc4e92a3b7931270cf48575ab0593bd56d26c32d5b09da4d4a01146596d
fa11118a1d2726ca46dc284797badb0927325d25ae800b156bee601a1d188018c5bac410dc575333e8ed15f56dd056bfe2edd64b6f6cdb76e8d0b66d
6fe009e1d9bf351dcaf2b3e14c43f7703f77d4dfba554fd081573171cb0ee194bdc69095694b35eb25c79b4caa35b053a1b5b61ac32667b843d03834
304eb09947381514eb940c5d665e7b7271b73d5ac4a746e44704223e8ab81321a5422a4b15521da9e1522f755f4d5f6d2f5d1ee4b13c21cf9117ae9d
b1506670a4921e29bc55fc13e51b6a85e96ab1b8a5dad0f0ee739766cefa682ebfcb2fb1f8962f99ba4638b07e57ad497876fad94bfdfa1debd18b0d
643a16c49ee09fd7ed3c7e6c8f6c2d0bf914713ad1a4a78c7cac2f3acc10a1b5950605d79ab13636ba26eeb4b6d67ca65b446c18680ca354369b6744
bc12cbdbd95ed7d8ce787e5da62885b8dfa3a447a087ccfdce98461887588487d9dde3ac432474860b0e494ec27d072a761c38b0a3e2400de7cdfea3
1327ee99f4dbe329d52b7edfd2f2fb15d52935c2e3973ffbecf2a5cf3efb8e7fc9bf8d70bdd3abc799f79e9935930d6272de3368e6acc3321de7499d
9791ce20f9999ee467ce8a6fc36941621a11466a2c2d9443cbb9744b13e9adec45d2b599e449486f83484be4c4f87c0d5d62e6fd80cafead6c190fd7
8b7a17760a4c0323454bfbb12fc167b4483e295dca94f2a53b92aa7d115a4065ff5b933c9742b03a82781a05d37db12a9b36d40caa08b5c35016e1c1
9af0d361163558cd1a8d2addaa31a73b4335dd46462b4ebfa5a5a9fd2c376448e35de500600b213d094a88498fc98fd91213a0fbbd982f62da62b4c4
61c5fa1d847d3babbb16921c4aa7183fe2dceadf9cad2d585c7eb0b660e9e683b5b5a955cb961fc10d2b96fcf465eb2f853dafed3ebbbfb54cd8b3ef
95f75e6f2d13338fcd9eb9a28302318b280882febe30d4029a98aacc64ad319cd63141031364bf39d22e232d1fcd292d202d50903d9ee9f8834390ad
ef3f4127ab66c58a8aa3b5b569ef2c3e7f51d82f23b0778f8c006d9c9df54387852d56b43184b43148556b835a438dfc35c0669e8836c788bffb1ae0
8b4e0d2b822255b1ba5853ac2dd615eb8b0cc5c66253b1b9d8526c2db205c2ee84591fcdd71ff96850b8e3e8918aed478f6ebfc36cfcf69d1ff90fcc
8a5fdcba72e5d637972f7dbb9b5fe64dfc7b32a714b21a3b1bd8ce191c4b185a818e072a3d49528f65a61aed69b54e45416ba44d360f45d7c8eeaf7d
281bfaf1f4a0bd41324fda3de4438684e058f7985ebbdfa8ad1d74725d501f271eb759ebcfb656133b72664912ed9647fef912ed1607b77c438c06c1
a49fec7669b4825a37d9ed76a5e9f42e3745fb52b641b4973a3684ca7edb4b7ebbbb4ba77787ab6152b8c6a4d6d8a3467497b1bad6d4285b694a4aa7
23ff4976e4b2d494acc22467146a53475e0171725e31dfa973ea9d863ee48e7ae97b19066b07eb06eb071bf41ef0b018a1bbaebbbe47505f7b5f478f
e0eeaeeeee784f7c644c5ca9ae545f6a2835ca5f349920a8742a3d1ad0882634a305c3b01b86a3538cd0c6f58d4f8dff557c717c49fc96f840fc9df8
504a41163e8c236ee5bcae8aee7a30eccbe403427fe21d6e9c7068fa860d3377a4d61db8f7c9f40bf3722efa576fca3ee23bf2d29f7f9f735c4c3dd6
bd7b46866f4ca4a9c7cb1b769f888e3e9b9c3c6de2b874af39a662f59ea3caa96a00398ebf4a7b48cb29ca98248d19dfa44cf9b4a64ca7271e539cb1
d84cb2960fa9a35f6247b6d87e304f49a8fe8d43c919e544c41e3c9839e4e84c1127c9ca96b222be6e5ce19933d7f79595497bf8fbe5ad810d1376ed
fd939059ce86cadeea18e9f954c5beec30d8e77c68619b74ecb4bdc640f665d74f204b1be990153ea55da31a131f98599ee39c6c664114e4da15fb41
b48b65c764337baba6e689b7179fbfccfec04e0a075bfd7bf79edd2f14dd0f1ccd9975070f757ec517a6bd5ca01be2f99579c84fe0d6281fb7fff882
a9b1f37defe396f1a6695af91f1ecd836fe1344f3d9f470098f8bd8f9b279aa6fdc397738fd8a07c7f064176cd1b29270e8132821b049504bb09b208
f61094131c22d848b04a5a0616553c5c91cae18a78b3ad59bc0545620e2c94ecb0506c6a7f0b29705e06753d9c946c7052fc0a16e2587af7843c8c86
01d476ac0b2e69b00cfe9dcd635784eec276e1131c8647c4eee26cf147295e7a8f74738aea9c3a463d5efd9cba5ea3d64cd5946b876a8bb4673af8e3
c10ce809b96050ce542fcb5c101d4230bde5efd96a982eff33216a697082f25f805c66104cb5f6b2001a36b2a38c5ddac52e650942d9848eb20aec2c
079e803c3a292d83029803b369f745e0a1e83c0be2e99d080974275169268df0107d73a8bf90a000b2c10ff3a117b58e810534be0f9586c13cba3d30
e9c15a854a2d9bded93467093db368a4ee7fb06bff07bbcaffa92ca1bde4afd60b68b48c879fe6fcdfed389c4acfd1bc29b09846cca2b17e65b56c65
865fa1c843ab2ca0673e8d9949ebcea1711e9a9f47bbfb95bebf5f67b2b24a2161441914cca55679d742e5bfa1050a2d7d887fc98fccea9c23b42b4d
db0bcaff60ff787914bd90ffab942dd701c114a742297b6effb73386fc74327995e1301246c168180be3e12948878944fdd3b4cf2fe80c0c354289af
ed3ec7663bfecd8bff9e88f72af16713fec4f12ec77ff3e25f4df86325def1e20f2f0e937ee078bb12bfafc4a666fcae19ffc2f1db41f84d1adee2f8
75227ed53859faaa121b6960e364bcf9655fe966337ed9176f70fc33c72f12f15fedf879257ec6f15f6cf8cf2bf1d353f809c78f69f8c72bf1fab551
d2f595786d145efd53b87495e39fc2f18f1c3fe2f8078ebfe7d850891fd6bba40f39d6bbf08344bcc2f1e23aab74d189bf0bc63a8e1738becff13cc7
731cdfe37896e3198ea7399ee278d28ab5a55ea99663cdbba7a41a8eef9e9821bd7b0adf2d114ffcd62b9d98e16bc3133ef1b75e3ccef19d4aace6f8
36c72a8ebfe1782c0bdf32e1d1235ee968161e396c938e78f1b00ddf24a4df6cc6431cdfe07890e3011beee7f8fa3e93f47a22ee33e16b5918a02181
4adccb71cfab06f2baf8aa0177bf1226edcec2577659a457c27097055fd6e14b1c77561aa59d1c2b8d5841932a2a71c77693b4a33b6e37e1b666dcba
e594b495e396f219d29653b8a5442cffb5572a9f81e53ef1d75edccc71d3c63ed2268e1bfbe08b44e68bc370c37abdb4c18eeb2915a086b22c2c254e
957a719d15d7725cb3da2aade1b8da8aab3896702ce6e86b7b61e54ae9058e2b57e28a2c2cca7048455e5cce7119c7e74db8d4804b74b898e3a2662c
6cc682665cd88cf91cf3382ee0382f12e7727cce9a263d3719e770cc5d89b3a992c3319b6316c7591c6772f40fc2cc667cd68033383ec3713ac76953
75d2b4669caac35f048749bf48c4291c9fa69d9f4ec30c074e66166972284eb2e3c4b141d2448ee97a7c8ae384272dd2048e4f5a703cc771d4338ee3
d83116696c108e89304a632c38da88a3388eacc41195389ce313426fe989664c3b85c3c6a18f632ac7a18fdba4a1767c7c88597adc8643061ba521be
36330e36e2208e291c070eb04b039b71407f8b34c08efd93f5527f0b26ebb19f0b938c98f8985e4ae4f8981e13faeaa50423f6d5639fde5aa98f057b
6bb15722f6ece1957a66618f789bd4c38bf136ec1ee795ba0fc3382fc67af552ac19bd7a8ce118cd31ca8c914467a40d3d59e86e461791e0cac20823
3a89834e8ee1cdd82d0dc3a812c631340b438853211c83695270183a38da390671b4d1001b472bd16a4d43cb4a3467a189a3d1102c19391a68b42118
f51c7516d472d4d0300d47b51d55592852a7481ae0406a458e02d585dec82c081c590dcb5ab799f5fcdf70c1ff6f04fecb2be23f006e0195af
>}
\immediate\pdfobj useobjnum \csname EF4O6\endcsname {[ 32 [ 318 ] 48 [ 636 636 636 636 636 636 636 636 636 636 ] 65 [ 684 ] 68 [ 770 ] 82 [ 695 ] 84 [ 611 ] 97 [ 613 ] 99 [ 550 635 615 ] 103 [ 635 ] 105 [ 278 ] 108 [ 278 974 634 612 ] 114 [ 411 521 392 634 ] ]}
\immediate\pdfobj useobjnum \csname EF4O7\endcsname stream attr{ /Filter [/ASCIIHexDecode /FlateDecode]}{
789ca58cc90a80300c441fb855ad4b5debfeff7f69c851d01e0cbc999019023f270ae431092919869c82124bf568d4aacdc78f5670af6927f40c8ce2
9330ebd5b3b0b2e9be737072dd59d301ee
>}
\immediate\pdfobj useobjnum \csname EF4O8\endcsname stream attr{ /Filter [/ASCIIHexDecode /FlateDecode]}{
789c5d923f6f833010c5773e85c7748848089846424851ba30f48f4a3ba10cc43e22a4622c4306be7d6d3f934a45829feeddbbb38d2f3e572f95ea67
167f9851d434b3ae57d2d034de8d2076a55bafa27dc2642fe610f9af185a1dc5b6b85ea699864a756354142cfeb4c969360bdb9ce478a5a7883116bf
1b49a65737b6f93ed790eabbd63f34909ad92e2a4b26a9b3ed5e5bfdd60ec4625fbcada4cdf7f3b2b5657f8eaf45134b7cbcc796c42869d2ad20d3aa
1b45c5ce3e252b3afb941129f92f6f8fe3cbaeddc39f383fd08017271f201f8e90d7700f24c00148810ce0400e3caf6d7cd714f50e0d08191d1c1ad0
cb1996716840c87067c19d0537476f1e7af3d09b63933c0b7208d3558509bb756840c847c8e127f0700a2e2077410ea104684d7a6f8e33e461f935c4
2ef2744d5edc65adb7e2eecd0dd96328c4dd183b0f7e12fd20b811e8153d86558fda55b9f717009dc0f0
>}
\immediate\pdfobj useobjnum \csname EF4O9\endcsname {<< /I1 \csname EF4O10\endcsname\space 0 R >>}
\immediate\pdfobj useobjnum \csname EF4O10\endcsname stream attr{/Type /XObject /Subtype /Image /Width 507 /Height 489 /ColorSpace [ /Indexed /DeviceRGB 1 <ffffff000000> ] /BitsPerComponent 1 /Filter [/ASCIIHexDecode /FlateDecode] /DecodeParms [null << /Predictor 10 /Colors 1 /Columns 507 /BitsPerComponent 1 >>]}{
789cedda416ec320108561aa2cbcf41172ff5be5082cbdb0421dcf8c2105c7d8c16916ff93a84a25bee51343ed9ce427dca7f5c8dd5dc238fde532fd
3eba2e0cbac2f4b30f7e59d7709b17000000407b20e871c945f783ee3bdd7bddf7ba07000000680f48a9c9cf47a5c55a934a13c4cf47fd7cfca9d000
0000001a0272c9ba2ffb88383d3c2820e9010000004e038cb1b14f863e3b3838bb66f9a5d2b24b16000000c0b700f69215e7c651ef68127bcdf2ba97
4e04000000380348a15869e9dc68c7d34acb8e030000003400e27ffb24cf5363ac348b2f3d84010000003401522abeacc7ef0facd27a5d9222020000
00f026907f11e566e2efd817000000004e07f2c46fa29e2b2d28f3f230000000c01b405e696b639f551a000000c0594029e5b1cf2acd6d2300000000
878052a53db23ef65de71d000000c01702e5e4df1ff4212853794f03000000d80d942b2d4e8ee54a03000000380358cbabb9b18a00000000d80dac55
5aa7078c7173a1f57a74f3250b000000e000b01ebb5e0d7ad8295375c5020000003800ac575a2c3300000080cf005bb1523b70c902000000d8096c55
9a3d8ba7b9a6b506000000d00ca8893d8ecbda7d1c000000e053404d270ec9dc9805000000a019509f5869b73d83270000004035505f690000000067
038410f2fff905af775ad6
>}
\immediate\pdfobj useobjnum \csname EF4O11\endcsname {<< /A1 << /Type /ExtGState /CA 0 /ca 1 >> /A2 << /Type /ExtGState /CA 1 /ca 1 >> >>}
\immediate\pdfobj useobjnum \csname EF4O12\endcsname {<<  >>}
\immediate\pdfobj useobjnum \csname EF4O13\endcsname {<<  >>}
\immediate\pdfobj useobjnum \csname EF4O14\endcsname stream attr{/Type /XObject /Subtype /Form /FormType 1 /BBox [0 0 457.676383 392.87952] /Resources << /Font \csname EF4O1\endcsname\space 0 R /XObject \csname EF4O9\endcsname\space 0 R /ExtGState \csname EF4O11\endcsname\space 0 R /Pattern \csname EF4O12\endcsname\space 0 R /Shading \csname EF4O13\endcsname\space 0 R /ProcSet [ /PDF /Text /ImageB /ImageC /ImageI ] >> /Filter [/ASCIIHexDecode /FlateDecode]}{
78dab5584d6fdc3610d599bf82c7f8107a66f831e4d1ae5ba3b93959a0872287c0769c047602c769fdf7fb24a31699dda516bb8e7675d083387c7c22
df0c796fd9127e6c5fd3f8bfbc33c767d7ff7ebebc7e7b7e6a7f7b573f5d3e18b65f70dfa0c117dc8f68768efbc68c21ee4c88ea92269f3d1e6feb47
5fc4652d510053f3f4c9988fe6f804411ed0e8dc98ac0e2d345a9f9c8f12a7b8ec8a7a95f40cde3620da64ff146f6e5e815327f7762db44fc1c55c0a
482615eb11326425b1dfafed5ff6ab3d3e919117468dfb71e2d7ea718f10d151983444484779bd97cb3b7bfc27dbb36ff6c25cd8fbffa312741c2393
cb4fb147c4148f181452aec65f6110ee69f8e6141fe2d1dc9bf1fbbd1e3f602eae1069a69c49adb0d39875ecdd9caeccf11f6c99eceae3f4a95657e6
6ffb6aa023fbdeaede98df57e6c24c340ce7e044b424a9faafc10e01ce0c29b268288a57766190d719884607f530792a0635d86120a938e24499630a
7e27063ca4750e3e2517454ae48a430d7638f8448e938f1a534cbc130719c23a8710921355a27a26d4608743f0980b8c592d25a7ddbe841fb8e6508b
8a69ed451903ca3e59cc6e616962f14fb15647568acb902a4c3f409f87bbe17ab0c38ff581562b969df7148931d2ec5dfe19dd28b738c6c41b17b3aa
8bde27498cab3756f0632cdb2423b30fc3773caae351a057c30d68fe00d1b34d73b35adad97996907c43754637ae0f0ca494a717e13b1c239ae1ea51
3da9b95de16787cbe1db703bfc0341bfe229f458c6e8a484008bac59cee8464133baa3519bc90e8bc64831c6d463f9f60853ce4d1caf87077cec2bf0
fb30dcf6b805c14c4672c80db7195d5010340352086335683e48c1d86389792c31fbd8907c0617f493e842e2d05f78fb68c7708140986a0dad195dd0
8ec585e0037227ae83b4d31e4bf26eec4043c37246fbea49292e4804e780eb851594424e2888d79a5b85f615949c5c20f589534a7a90823d97c1dc46
d3c4a5342c67744141f50854bc6039e5f2d20ac2c238654e8d5957e88282c8925e19422e99f5a282a5c71216c6481dd2f874852e281810287a098b3e
bd8f82633f5e2837ee5ca19bb84dc5a6f880941885178d79513c1e7c8f221c8c39a0a86b28cee8827c283c3ca3725834e97de4838b51496db13083db
c5a38cf457d8c70385eb655e2c4e47aaa8836b7615da178e0b5274662a8b0ebd8770f00e4711cbb0f1e50add2a1d6764e7845a6cd1927790af977619
ce45817369ecb94217e4c33640422459b4e77de483731156696c4cb942b7cb1791a0b145c98b7ebc837ca94711b64514b3346ba34217e48324582525
2c7af33ef279ef4a861d348e5ca1dbe5f3d858152a7478d1ccddc285855c49454363cc15ba20dfd8b74ad65f513633a943d5abd43872856e978f909a
2332d9e115b36c3a3978a658822bd886354bf719eb4b579097f15679e982195bc882d5e71b379ec1ad9a691e37971a0f2f9465900e3fec1be16ab0d5
9adf0cf645c34e9689947f459d8cade27850179b8f39835b859b4ec2642c040fad8fa59b28b05d440519b9f1e2195cd89f211ba790c28b94c762df3c
9d8a4ea779ed29e18623cd8dc794e6ddcea79f78f5c2988bff00b18d4246
>}
\expandafter\gdef\csname EFWidth4\endcsname{457.676383}
\expandafter\gdef\csname EFHeight4\endcsname{392.87952}
\pdfobj reserveobjnum
\expandafter\xdef\csname EF5O1\endcsname{\the\pdflastobj}
\pdfobj reserveobjnum
\expandafter\xdef\csname EF5O2\endcsname{\the\pdflastobj}
\pdfobj reserveobjnum
\expandafter\xdef\csname EF5O3\endcsname{\the\pdflastobj}
\pdfobj reserveobjnum
\expandafter\xdef\csname EF5O4\endcsname{\the\pdflastobj}
\pdfobj reserveobjnum
\expandafter\xdef\csname EF5O5\endcsname{\the\pdflastobj}
\pdfobj reserveobjnum
\expandafter\xdef\csname EF5O6\endcsname{\the\pdflastobj}
\pdfobj reserveobjnum
\expandafter\xdef\csname EF5O7\endcsname{\the\pdflastobj}
\pdfobj reserveobjnum
\expandafter\xdef\csname EF5O8\endcsname{\the\pdflastobj}
\pdfobj reserveobjnum
\expandafter\xdef\csname EF5O9\endcsname{\the\pdflastobj}
\pdfobj reserveobjnum
\expandafter\xdef\csname EF5O10\endcsname{\the\pdflastobj}
\pdfobj reserveobjnum
\expandafter\xdef\csname EF5O11\endcsname{\the\pdflastobj}
\pdfobj reserveobjnum
\expandafter\xdef\csname EF5O12\endcsname{\the\pdflastobj}
\pdfobj reserveobjnum
\expandafter\xdef\csname EF5O13\endcsname{\the\pdflastobj}
\immediate\pdfobj useobjnum \csname EF5O1\endcsname {<< /F1 \csname EF5O2\endcsname\space 0 R >>}
\immediate\pdfobj useobjnum \csname EF5O2\endcsname {<< /Type /Font /Subtype /Type0 /BaseFont /BMQQDV+DejaVuSans /Encoding /Identity-H /DescendantFonts [ \csname EF5O3\endcsname\space 0 R ] /ToUnicode \csname EF5O8\endcsname\space 0 R >>}
\immediate\pdfobj useobjnum \csname EF5O3\endcsname {<< /Type /Font /Subtype /CIDFontType2 /BaseFont /BMQQDV+DejaVuSans /CIDSystemInfo << /Registry <41646f6265> /Ordering <4964656e74697479> /Supplement 0 >> /FontDescriptor \csname EF5O4\endcsname\space 0 R /W \csname EF5O6\endcsname\space 0 R /CIDToGIDMap \csname EF5O7\endcsname\space 0 R >>}
\immediate\pdfobj useobjnum \csname EF5O4\endcsname {<< /Type /FontDescriptor /FontName /BMQQDV+DejaVuSans /Flags 32 /FontBBox [ -1021 -463 1794 1233 ] /Ascent 929 /Descent -236 /CapHeight 0 /XHeight 0 /ItalicAngle 0 /StemV 0 /FontFile2 \csname EF5O5\endcsname\space 0 R /MaxWidth 989 >>}
\immediate\pdfobj useobjnum \csname EF5O5\endcsname stream attr{/Length1 10716 /Filter [/ASCIIHexDecode /FlateDecode]}{
789cd57a0b741455b6e83eb5eb54557fd2e9ee74877c3a49279d2684f089090102086d0c7fc4280101c5494312108104c247884c002101c109080445
3e1101111003329800226a1419c88c33c2bcf13a5e4551748cc878719c09c9c9db559d20cccc7b7766adbbd65bafaa77d5f99ffd3b7bef73aa810180
931e327887e70e1d0611900cc07a5069e4f0bc7bc7f9e59e6994cf25f872f8b8f13929bb87bc0d807a7dcd3d77e58fb01debff2ae51ba9de73efb8de
198ffcad64150df629d54f98363b588a0bc39701f0fe54ffd0b485f3bdf0485c3680b289f2a2b874faecb97d16ce04d0280f07a607cb4a41a51b4c01
ca5ba7cf5a5c6c4fa87b9ef26300bafc624651b0509bf2562580f78f54df77061584ed54370024ba289f3c63f6fcc70ef48d48a23ccd07eb67954c0b
16ac7c90884b7c8ff2a366071f2b953728730192f4f6de39c1d94529dfde7992f27d089f0ba52565f37114fe8e86d2a87e66e9bca2d281ea9ff5a1e7
130d3340e7951542974439049f51a683197ac1209072878dc907dbace0fc3910453ca5abbd1de066ca68fd68d1bc39a075f4635427196f0d2496aeb7
64b9ac9472ced044edafc1ffd0d5bed478ae26f820946adffd3f3bc3bf8dd107ed7fd4c148df4cfdcb97ceb370b013a72221163c100f495466271dd6
392b75b4c29bed65e0f4543a722ab50a07173d25a3955eab50a9062692a6061643d6616033e67040a79c9f864de032e4bc24382f38155606e7cd9e03
2ba7ce0b3e022ba705e794d17346d13c7a2e9e370b564e2f2aa1f4f479458fc2ca19c139d46646d1542a79343827082b67054bbcfa93f465e5ece0fc
19b072cea37a49c9f4e06c58396fc11c6a39bf78ce747aced0c7bf45a76ee5838ed7a7f015e4c1202ba81feb85d20a22b880f4f66207f105b733af33
cf6674a4c7ff0b0cb7d1b80bfffb763039d4567fdf4cdf32c66de5936fc9cffb292dd11abe514de94137bb762389d80d7a233bd60e23e9c81db512e8
a375e3ba346dd416b8be46996c63eb75b9f34cfe2c75890fbdf17f41b1442b4cb228889a2c4972071f6f5e79c5430b21005e58a2b8848b6d5567b3cf
6f6b831de0019dfb00bb29c78cbc0cb4c6481bed5462a311b2e14ec8815c180b0f40108ae011288505b008961872f442fa6df5d36006cc8279a1faf6
cfdb2fb4ffb6bda9fd7cfbdbed27da8fb41f6e7fa5fde5f683ed07da5fba1ddf7f72856c557d47ce66cc15021def7402b2c506ffeeec00bd4d4e0784
13e47680cef5b11da0af84073a40b751c10e882098465044a0db57d22ba2940c37c12c82528268025dc60b0cfe00d11802b2d9b0a4037c2c8b703e47
f79bb01fb6b1bd942ba6f2b954522b1d8155d4bb1ede66e7d81aa92795ed856bf001b5ac8273b85f06360a32a914e04392ff75960f47698c6ce662d9
aa22833c563e2adf2fd7cb57e426e82797c94d72815cc63271179fc0f71264e33ba417672101ead9275006c7f16bccc49372ae6c834fb009f7c31734
8bceb373504d322f275c5cac042aa472e97e2a39c39b602bdd2554dfc476b00f08bbe3ec09b808cfa02c8d801dec22d1750efe024f60be5441a2c894
8a09ff33345613f5df0a6564882e323308298dca087b9a6baaf18cc39efca2715f830a9a391f762bf58a4bf5d12c3ac7f6b2b759b3b2116ae1037c08
e7e2476c95ec93f7c923a03ac4012c806a1a7babde4729668b8976fd2ed7479716c9056c3f7c2d17a85369ec77748a68cea3d2fd4451319c2458a4d8
89a6816c15ae214cf5da38685247c9bda93f8da02e25aa014a300b6652aa1c0ec111e88935504d2319f42afdf85fa8e736f912d15ccd9e92fe024d98
0ba9502c5f255eeb6a5303f09aaa701925063dbcf63ac93fb2b02e70df44ef7b93127bf6f8bbacd7ae7aeb20af2e6cb1b7bebd3d6fa21ccb27d5714f
1dfab53ad9efbbf47faabcd4b3c7e8bc89debab6a1b91da30e2dc8a5b2711329a9e7a898ca87e61a75faa475dc4fbf910575de6933bc4fda9ff40d78
d25e34a067c8ee90e7a6958c94aa6aff5cae26e95848eb7d8108a5d609b5d60dce7551264f783c7adcb151f6d6e6ebcd7780fdf2f566fbd574962439
eccecc0ca7c32ea56480c30ebe24fd29adddb67d3bfdb66fbfc14ce2c71b37c48fccc4f34493384fd0c432e9eec3326b4599a81455a28c3dc516b325
ec29dd3e5da2253d99acbe190201770ed6ca522d5fae42ad494b503c0809cc62bf30ba2e3c7f6203350ef49fd4dcd84a08f56eceb8de7ca1399d7830
29891d0dc770599ad22fd1c1b3fc998e4477a260a3c4b3ace8576c54ebeefd72d988fa112d17f7d300242f791451ec811d8194e898588cf238b80c0e
cee51cfbf38e4d61b5ae0d32ad5bb09b2566f674b1a312676f1d5de7ce1f5d1799ffe0e83a57fe838409b69fee3fa9f142f3e9d30e67760736d70d6c
543bff56e5dfb23a8fddd1259b700b648c9727f009ea1279095f185b15add2aa8e966348bc9ef9b050591053163bdfb3022aa357c4ac885de1d907fb
621d53608a9f88c8ea0bfd06b3ac3e5d7d498a9a35986566c86e97a22a40a6e4cdd631c4c6cce03d2f56feec83c7965c98f815730d7d305a5cdfbf7f
ff22b661c0ec2d2317d5e4dc7dfe8e8cafde7a684f699cf886a8df46f22e23eabb4169a017b823cc95a6844a6f44ad3bacd6b451f1d47a37fa3628eb
dc2fa4467a22005dd19eae5ebb075d092625556742647e27fd26837e62c0f566a29238606fbe7cfd72b3fdcbab76e326aea4b380a9303e9810f41626
ca3085c533b74b4e4cea9a92154f84f425aad2585628711b793864c30be27df1d5c36766e6bf37fbd499863d878e6ddef1c233e34ecd2b3b3be94b66
fd05fa131ad77ffcbddffff61d1935d52b37ef5d545a569edcf5a8d7fbdb238f1fd035bc90a4bc9b744aa2786879208e8561182086e5005ad45ace70
b98959cde05134d96ab3ff71749d85080b3308b3ea845d18d4d89ce1d0e57af9c2a0e60ca2c510ac7c96847b561769770b74871130891cc6227812d4
4896065d591af66563d9bdd67bc326b062b6802dc1552c8c4469628998e9c874fb1c3e4762162a4262224b5cbc78b6ed61ee6ffd1c9b5a33f7895a56
f03649680749a890308f8387033e39467554dae3626a5557ad7d4d98540bcbc3d6a9bbe3bb7898193d60b62bf1f65676ab5cec3afa1dabc5aeaf1612
91bdf1aabe80f5154ce2118d21e944907e39749e83db05b7894597c6c718dd56db63628f16962c2e88ef1e7e7bc6e4d38fbefcab5fbd7cdff3f9fce2
7ef17478b8b8faa73f8b1fbcde7377a41fdbb6ed587257e27635615f63d893649818488e5020acd20ab5914aad27728fbdd6ba266983679ddf9a64f2
44c74778303121d64f068694e8b261622eb75efe497d022ef2a9ac496ac226f91c3fa710dd47e2a5296c0a4b52dcaec810aeccdd8bf99224ec24c4e7
d5cd516246a4b47bf5ce9dab099869cc7363defb207ce091472f312eae7d26dac45596c762c73c87038fef7afec489e7771d9716d7277715df8bef1e
9822befbe64bf127c3404d657be2750bb58fb46906c94481698128ee905042874cf682933c9023931928aabdf57ca3435f09bd6fb10304a42eba8026
be4ea17a8006a30d23c9c8d1afffa48073a2c4148ce1d97c049f8e7550a7a8a42d2418e66389fbf074db671f30d196c92f4e6859ced3f428682df177
adc15f1ff486bb03fe28e26e8a521bdfb3d6b9217e5dca0be951d6e4ee1e77b227dc44d69b4c7878626cbabdb5b1f97a63b3c1d8ceb56ae4b26991de
c24c7f2fb235c9991991ba913196ab2f2939ab4fdf88ce06a419d2daf57bf6ac5fbf778fd8b36203b4ffe72762c3f2a75f103ffef8a3f871f7880d4f
acd8b871c5131ba477b656556d7daeb26aeb04ef9165afbefffeabcb8e7893deadfef0abaf3eac7e9705e7af58319f803466395154451445191ae353
13a2592544d79af7c8b5b02632a1d6be21729d5ff578122368b794e409331486d0eff4495f8a1f3af525b231faad98d3b1a73da7e3de8a6f4c50f73b
4f3abf7622694c3f43b79d1136d215c8ea0399212d49eaca3ac9221e5c1ab36d34e9c98023b33e153798fd3386cc210e8b2fc66c63833b742981b484
8531e7848758f8375fb248c39ded140fc64b5b3a3549b73e1fb202fe11ee2279a9e00984c13649d9266b9cc9e0d3ecad831a3308f70bbaf3d2d7a07e
53e8d72274c05dfbaeeda77e6749e39a49e3348a5ebd24675f346c369b363b97b3cde697131c164d8a884ee060f344f2684f2f13789c7222f125a395
46d5edb1e1929a752167a71f094f62b4681c89215a6f26fc89b70a3a916d64b92f6cdffe8238c9d2366dd8b0495824f94acbb2c737ef11d76eb47d25
9d6dfbb86aedba5552b1185c326f6ee9ded387d7ec7279cf3df3de7f10c5842fafebc0b757c00d9b4d84a95d93ec66e0d16119e031c94e239220fd0b
21478be448418481588734fc89c63b95b18dd759164b1097c43991c376b223ac46cc107922c87bdf58c4a2582fd68375d92bb68865e2e7a28696d459
63b77491f6c18b03f1aa83494c729097cc5125845734ae3055f2c87d55dae8980989d6e6d03225c39edd69d8bfd56152d291040b611448ef2bf55747
48c3d547a4627599a42acca4b8598c328c8d541e60139522f688b25859c59e5436536cbad36237562d11e2a087ef2cb34b358de25adbcc467ef14682
7ca9254dbe742381f0a49d89ba97f0b4b2f2c0481ea37093d924c7984d1863b698a51826592c66c5a16a2a27f3a269aae440c94aad690763cd317309
15a2c7a2592d6693160a392d2a84d92f9cefd2414f4697ec9f4c8fd601fca777284974824da7f37b4552b864a6c8cbec3477e3c966af79b03498f731
a79bc748f7f01c73c03c499a293dcaa79b0bcce55285f438afe0cbcc35d2661ea78249d2106485ebc7104c9535d054139864b3d90ab61874cb6e2dda
6ab779e544ee55bcaa57f39992cd7e8bd7e6b50d920660969cc9d3b5bea66ccb106bba6d180c63a3a4803c9407780e092e470b680153aef91e6bc016
b04d94266893ac79b662693a06e5a9bc4029500bb44253a1b9d0b20816b272e9315c24cfe78b95c5ea22ad547bcc5a61adb0554a55b85a5ec357999e
b454dbb6c83b6dafd81ed425956962fa8ff94ccc977b9e0d60d99feb8f26b1468877c45b82a4e694afead092c6ed2dd748bfcbc94af5a47d8719fc70
92e2c6044b17930d5eeaa234d81cdeca84e39e065fbd635d172b74c1a83093664940cd35b42b29dbf90b249590be355ebede4a2bf25dc30a3bb2758b
35273d2e3d3e3d21dd9b9e989e3424251017880f2404bc81c440525e5c5e7c5e429e372f312f292fa53465555c557c554295b72a7155d2fa94da946b
29f19d5d3b3b757628882f4828f0162496c69726947a4b1397c52f4b58e65d961875abefbc93f573f8b27483d8952c7c66e2ad5158a474ea9383cb4b
9e6da8af1f7272f5c1736d3798f4e2968263f945a726ffd73529b3b87c6ad9874753c7b42ddf5f1c7c73d7eba79d156b7bf5da9f92d2aa7bcbb9ed9f
e315e255340c09c442255b2ddb2ac3569b1b1c72431762528cea0c8311aea131f6d6cb199d565c5cbf6affe16a7ac0121e6b8f5d16bb3eb6369613b2
8673ef40b89fdbb0de21ef8e57c66ecf7bf5dd775fcddb3ef69e3d53dac4ef594fa68cdf25671d4c4bfbbca9e9f3b4b4fdc9c96c30b331271be02309
1256f264c545419107fa07a2631ac0e66ae0da3a5b3ddb825d64d0a4e10ea765689c61a5323274137559f7998d57d38f15c42f8baf8d47c35675308d
b641408c62b77846dc555f3fe0f0e3e7daa1fddce387dbcebcf8f4d3fbf63dfdf48b784c7af86fcdfb0a832c976974e70685fbdc952be7083af0aa20
6eb9209662f264703353a5b69abb5f62bcc1ca4e443538ebadeb3cb16e49736b305a72860ff51828361afb0e9d79a190e97ac807a60e892b8dab8d7b
3fee5a1c1f0243d81069887b482cefa1f6d67a9b7a984ba084954825ee9258d394b93a83138da0cfe0ad617d298a520da6ab7245eb116bd36b33cf4c
9df6fea3e2ba38c3525b3f636abdb467f5d6069bf4f0e45367faf439d4bd07ebcfcc2c82dd2d3e6edc72f4d00e5d03c809ca85c4eb08e813884613a0
8d29553647bd758b99491a8cd557c83097eeaff4ad25192e62b4c349b6eb4881dbf0083e47c8295322d3885123e5c2fac71fdf7cb0a121e7d5056fbe
2bed6e7b48dab173c7a9dd6d558aab6d4751e177badf7d93265f4cf3ea7e372d60534ec987e1a4c49926c330ddf136eb62bddcda4c7a6637054c79a6
0253a989f42c22539fc8ed7bb39e2eb9e046ade2fa5aa7e3a7f1925e832d12d360986c0f6d3ed3036176b25579bc8097f26be4628c416800c5f5b7e6
0e1ea87124d724981ce8aa384d51e1a0c4a96e6b559c17eb634f46db5570846b9a92e7d0c2f33c515acc309fe1c05bc93d193bca41832e5f37b6213a
630211e9c979c9a5c9eb936be97e23f993e4f6641371cae08dfb567efdc438778871a9434faf78e554c3bc05d57b1be62d7a6a6f43c390bac54b0ee0
9ac717fef099cec6e7b7e96c9476ec7aee8d17daaae48243d3a73e7e538a444104f4bd5d8a27ffb9142f774af16881fb376ee9efe5e8fe6fe44813eb
620cad880534af7e26312a10a13438a1c15aaf9f4938c3ef43a77be8df9d49047c43a2cba15ca9502bb40a5385b9c2526ead08abb0558457d82b1ce5
cedae86bd18edb770db71d5d946d3a7860f3c68307375e634e71f5da9fc577cc819f5c397bf6ca57ef9df97a9b784f348b6f49fdb349cb5dac3f6178
5c4c90771386ba2d191c88edb425f5b675ec753c19477664b8615186e9d624232384ebe54e73123085ecc9a7f1329be2bfc91ac2452263cc3ac351c2
949535340c385c7e1edadbcf971f96fa93457951877d6d8714f3fec2a03829fe4af7c920fba6d3a084e486a3083b07d0164aa108c161c12a5bbde9a4
6a56c8490f73eadb0163259015b9705e371b47f3227646e8120bd9db9fc4d50547258cecb1ed45c2e3f8aa885e1e3cea749c3bd5768484553c8d739a
ad84acfd199a2d05ae04068559259b655c42bc669254f3b88484f81cb3253e417693175823bb2add6ba2742fe0272fd02dde6c498855e1fe58cda66a
aea4a1dd74ac2e345f26748c0831e4167ed0dd82b333aab17d4bb1ad6a3cf51826458f61667bcc1e8bc7da8b8c5b0f4b0feb40d340f340cb40abc50b
5e962c753377b3748fe8edeaedee1ed92dbe5b42aa37353139a5d25c69a9b45686e9a7c14c9214b362412b86a10dc3d18ed11883b1e891e34c29bd53
87a4fe2cb5227559eafad4dad46ba951143eccfdc92b2518671a8aefd6cd736fa66fa2fa12ef70edd87d93d7ac99ba6948e39e1fff30f9ed59c5ef06
57ac2b3a1038f0cca7bf2e3e2a0f39d4ad5b7e7e6064a2adfbb36bb61df3f94e65654dba6f749e3f3c79f38a1d078d9d673f326bdff31db406c967d9
b8168e2f81839dd4aacc16e231e998dd69d3d7e0a0467d5bd11159870e2fc89abe12b2a6fa46df153990b9755f4ffe2bd3c116b172b16a74d9ebaf5f
dc5555c57788b7aadb6ad78cddbaf3775241351bacdbd243b40a271aabdf0503039e9fd6ff3a333be9aab7d2ea7759c6921d18e6d697637648a32e67
dc340225eed3ba1188200b1e5a76377d6757764837022fd7d7df7d78c19befb1dfb0e3d2deb6e0ce9da7764be5376a0f164fbb86fb74ea2700285972
0145c0df50bc3c3e142f8fa77879bc1e2f8fff97e2e537fe49bc3cbace9c3fbacea19fe639f54784feb018e727a09f99d086dd1a3a101a5d670f15eb
c72aff7698cd02ed5c8a9422799239cb3c521ac9875144fda0f4201f6fce33cf91e6f062f3628aaa1753545d253d2b3dc337994f4a27f9afa533f81b
1ec725132ab2859b358b895e56b7148d91720c8fd5624d2e8bdbea073ff349299828fb799292a4fab5148ab0132d3e6b36f695fb6ad97a5c2d8dc061
7240cee10125a006b45c8aa9732d7a4c3d0126b009529e7c1fbf5fb95fcdd3c699f2cde32dd3a090154933b1489ec9672a33d539a6a065bab5c4b600
16b0c5d2527c4c5eca972815ca12b5427d4c5b6caa30959b175a965aaba4d59c626cd8c236491b719bfc1c7f4679467d560bf4aeb1eeb4ed85bd6cb7
b41b0fc807f84bca4bea016db7f515db2fa5c3f8ba7c82d79bdeb0354a6fe379f9577cb1119fc732fdc77c16e69b50ffe5171f7ef945bdf8e8c33f7f
ffa15cd05a833375b8518b35ad3349472692b5fb8874c404db03315ae86c863422477b094ee24b5c43063253cc9d877c56439c4ae8942c423f25538c
53b28c8eb39be67f38bc09e4e832ec2a0d9746aadca2855ba230564bd3bc96be98ada55b022c200dc5801ce0776b0fe024ed67960256201563815cc0
a76a15966596572cb11da73afa892e4b9c8b33dbc648475b974a47db8ae4827dad1f6ddc877ee8f856274d7a3623af46fa59f8a01f2041333e84fdf6
e7b6cb9def1f7fdf3ac636c9a47f77d66e7e37a37eea6c110760133ffebee53edba47ff8d2962c3719dfa440da4fb096f6cb5da08ae012410dc13682
42821d04d504fb08d6122c578ec2877c299c5552e12c6f85b3ead7902b5f8172b918e67217cc959be9ed84e35236bca9837a0e8eeb79f90ba3fe388e
a2741a94a00ffa51f921652d4c903375a9ddbc726031fc95cd6267a56ed246e90f78171e205d4896a7c9aff14d4a776581f2a69aa64e57abd597359b
364a7b4efb9329d1f4be39cc7caf256839d3c1b764cc8734980156b25d767856e78eec9622e9ad7ffb5261b2feb5533651e374e37ba29e661049b950
5a028d0deb48e32de5f22d690e516c6c475a01172b86bba1044a09ff79f0084ca7d9e78317bac13448a57706a4d39d49a9a9d4c24b743e42f56504f3
a00882301b7a50e9489843ed7b51ea2e9845b717eebf395699912ba27711f55948cf426a69fe1766ed7b73d67c9a6921cda57fe19a43ad753c82d4e7
df9b31975233a9df0458402df42fbe4163b422a347d0a0c84ba3cca1a7fead782a8dfb08b5f352ff129a3d68d4fdfd38e38c51ca0823da97c0a354aa
cf5a466d4b8c913268ee4cc8baad57679f8e7f4bb4ffdcf8fefe8f57b2a11712494bf7606e88a468328af6a431c63f2fbcb48f4f813b60008d3d0c86
c308180d63e05ec883fb88fe71309ee67a80b474324c01a8979605da6f086c71e1dffcf8d70cfcb106ff62c31f045e17f85f7efcde867faec16b7efc
eec9bbf87702afd6e0b735d8dc82dfb4e09f047e3d00bfcac12b02bfccc02f2e8fe35fd4e0656a78791c7efe596ffe790b7ed61b2f09fc54e02719f8
9f2efcb806ff28f02327fec752fcf004fe41e0efa9f9ef97e2c50bc3f9c5a57861387ef0bb58fe81c0dfc5e26f05be2ff037027f2db0a906cf9f8be7
e7059e8bc75f65e05981efae72f0773df84e24360a7c5be05b02df14785ae01b024f097c5de0498127041e776043a59f3708ac7f8d2cb2c0d78e4de1
af9dc0d796c9c77ee9e7c7a604daf15840fea51f8f0a7cb5068f083c2cb04ee02b020f15e2cb363c78c0cf0f16e281fd4e7ec08ffb9df81221fd520b
ee13f8a2c0bd02f73871b7c01776d9f80b19b8cb86cf17622d35a9adc19d02776cb7521482dbadb8edb968bead109fdb6ae7cf45e3563b3e6bc66704
6ea909e35b04d684e166eab4b906376db4f14ddd70a30d9f6ec10deb4ff00d02d7574fe1eb4fe0fa6572f52ffcbc7a0a5607e45ff8f12981ebd6f6e2
eb04aeed854f12994fde856b565bf81a17aea6d0980aaa0ab1923855e9c7550e5c29f089150efe84c0150e5c2e7099c00a8181f69f2f5dca7f2e70e9
527cbc10cbf3ddbcdc8f4b042e16f8980d175971a11917089cdf82652d38af05e7b660a9c012817304ce4ac44705ce74e4f099e3f011813396e274ca
140b2c125828709ac0a9028303b0a0051fb6e214810f0a9c2c70d244339fd48213cdf84064347f200327081c4f338fcfc17c378e63763e2e0aef77e1
7da322f87d02f32c78afc0b1f7d8f95881f7d8718cc0d154335ae0a891763e2a0247c685f191761c1186c3050eabc1a135982bf06ea927bfbb05734e
e05da331207088c0c1773af96017de39289cdfe9c44103c3f8a0407b380e0cc30102b305f6efe7e2fd5bb05f5f3befe7c2be5916ded78e5916ec138f
9961987187856708bcc382e9bd2d3c3d0c7b5bb0574f13ef65c79e26ec918169ddfd3cad10bba73a79773fa63ab15b8a9f77bb0b53fcd8d56fe15dc3
d16fc164813e8149e1984874263ad15b88092d184f24c417625c187a88831e81b12d189383d19489161855885d88535d044652a7c868740b74098c10
e8a4064e810ea2d59183f6a5185e88368161d6481e26d04aadad91681168b6a349a046cd3481aa0b954294a952260d702395a2a030d2cea59ec8ec08
02593d2b5cf5144bfbffe182ffd708fc5fafb8ff0d696f7e02
>}
\immediate\pdfobj useobjnum \csname EF5O6\endcsname {[ 32 [ 318 ] 48 [ 636 636 636 636 636 636 636 636 636 636 ] 61 [ 838 ] 66 [ 686 ] 68 [ 770 ] 78 [ 748 ] 87 [ 989 ] 97 [ 613 ] 99 [ 550 635 615 ] 104 [ 634 278 ] 108 [ 278 974 634 612 635 ] 114 [ 411 521 392 634 ] 119 [ 818 ] 122 [ 525 ] ]}
\immediate\pdfobj useobjnum \csname EF5O7\endcsname stream attr{ /Filter [/ASCIIHexDecode /FlateDecode]}{
789ca58bc70a80400c441fd8bb6beff5ffffd1b08817173c181832c99b819f637d701b07170f9f80908898e426e993c944b9a1ab5e9fc2902a451535
8dbe5a3abd7b06462666f10b2b1bbbb843b3f3027a750274
>}
\immediate\pdfobj useobjnum \csname EF5O8\endcsname stream attr{ /Filter [/ASCIIHexDecode /FlateDecode]}{
789c5d923d6f83301086777e85c774880884e056424855ba30f443a59d5007c73e22a4622c4386fcfbda7e4d2215091edd7b1fbec3971e9b97460f0b
4b3fec245b5a583f6865699e2e56123bd179d049963335c8255ae12b476192d425b7d779a1b1d1fd9454154b3f9d735eec956d9ed574a287843196be
5b4576d067b6f93eb690da8b31bf34925ed82ea96ba6a877e55e85791323b134246f1be5fcc372ddbab47bc4d7d510cb839da12539299a8d9064853e
5352eddc53b3aa774f9d9056fffc5981b4537f8bcf7d3cd0813f5ede43de3f415ecd0cc8813d500007a00438f0b89641550553c5aa0a72817a1e1d08
19853d3a103241a62813e4034ef4e8c0209768d9a30321a3f7f210e56816ab8a20745fc69f7037e195c1e4f1cf45b3c488259a2cfb3526a4700ccae3
a9ab89c379b13a118b79789c87c779b8802ca22c9cec2e7abd517fe77e416f0b252fd6ba5d0a5b1c96c8afcfa0e9b6e866323ecbbf7fb20cced7
>}
\immediate\pdfobj useobjnum \csname EF5O9\endcsname {<<  >>}
\immediate\pdfobj useobjnum \csname EF5O10\endcsname {<< /A1 << /Type /ExtGState /CA 0 /ca 1 >> /A2 << /Type /ExtGState /CA 0.15 /ca 1 >> /A3 << /Type /ExtGState /CA 1 /ca 1 >> /A4 << /Type /ExtGState /CA 0.8 /ca 0.8 >> >>}
\immediate\pdfobj useobjnum \csname EF5O11\endcsname {<<  >>}
\immediate\pdfobj useobjnum \csname EF5O12\endcsname {<<  >>}
\immediate\pdfobj useobjnum \csname EF5O13\endcsname stream attr{/Type /XObject /Subtype /Form /FormType 1 /BBox [0 0 472.07952 270.47952] /Resources << /Font \csname EF5O1\endcsname\space 0 R /XObject \csname EF5O9\endcsname\space 0 R /ExtGState \csname EF5O10\endcsname\space 0 R /Pattern \csname EF5O11\endcsname\space 0 R /Shading \csname EF5O12\endcsname\space 0 R /ProcSet [ /PDF /Text /ImageB /ImageC /ImageI ] >> /Filter [/ASCIIHexDecode /FlateDecode]}{
78dacd9b4d8f1cb71186e7dcbf628ed1c12db2f87df0c18a13033e04902dc007c3076325cb36242bb29208c8afcf5b9c0fbed5333b3bb39a5dc5f2da
33b5ddd5e4cba78b45b2f47eedd70e7ffcfa0ba7ffdebc9d9e7efdea3fbfddbcfaee9b67ebbf7ecfdf6e3e4c7efd3b7e5ee386dff1f311b77d839fd7
93ba783bc522b32b2d09bebda16f52dc1cfba737b892bffd3a4dbf4c4fbf828b0fb8e79b690a75ced56549eb90e790f0ffb7938430bba037ec6c6fc8
26c9cdc1c7ee6edccdd6fe90f7eb43d7be457ca8b5e17a1fe696c5c9facf57eb1fd67fac9f7e25da28597f8b0ea2b3735d7f9cdc9c9bf32dbb52237a
6eb589654e3e79a196ef4ddc9ce9fbe9f9fa3d1e1036bd7e3d1ea01aa8e5a4af2073dd88303dc3607c9cde4f3a865fe820467428a418a44809e8d45c
522d4947f5d98be9e9dffddabbf58b5ffa70bd7839fdb8fecbca3f59ffb47ef1edf4b717d3f3deae8796c93b3f4b73c135ea1c192f95ea2e7fa7e46a
75f61e7dc54f91b3e4aa8f2e5796b9c49273e1ee0de3c572dde1ef945c3ee18e5aa5797df03970adf263eb25126771c9d908328c97ea7597bf537a89
2f738d29c756f03e9ea598ac222bc6e27b8453e77283833207f1621cf985a3af57af56ff5cfd6bf5eb6abd7afbd88330fce782eb8a8fc5c4f1613d6f
108efb0b69f64beb516cdbfe6ef4cc3929bec652b7ea41b3d9e5d87f0fedc6a860309eac5ffc3ee539a6ead04f553ccc55fa87eda56573a5ef577e06
7d11bee6165288d5084ce64b15361e87c4d6e3098d114be75662f42e20ac90c8ade4f4492287cf26728ab32f98579b1579982f16993d92c8c6e32991
358a69bc0e6879be26c9e9b389dcca1cbd2f36e31bd68b25267fa430fb3b25700b73f6b162564ce19afa96cfa5af445c185b4bde084ce64b15361e87
c4d6e3098d055dce156d0f223e5e3352b421f2fba92f6ef4f15033b41630f6150f4ea1caed53689f43ffb17ab7fa63f55fcca47f3ed1d93bb85c7dff
835fbec3c47a83ffbed35f62a277316dfe2434e0e5ea37dcf933e6df57ab0f987da74f997d038f39f2dad407dd2bc6391697a158e02fb10abcb688b5
c2774b1ec6322266acceaa43f4783b250c5e2bd2c8f866424e584bc3473662ed96b2cf615df5a1a18f71a9b3c4505a63634d7393d4b231ea4be5aa0f
adcf12a5b814f445a46cdd4b7f1812ac6eaf604174aef6020a730f76b07b1dbf983117e115452b231eaf76d1e58f8ed1c28e24cdf9da2aecc9cf9899
44ba9fd0e6e2bc4f05d703a51242edd743a4500356518beba193c3839b5e0f0e4b74bd99237bf6c5eb153506b5eb42575a0d79edabcc21e351dd4dd1
6587209783bd7f8cbecb00cd20380612404091005abb1db2c17d886acfb3c73b93fb1a18b2f954319c3dac2129f712ba1d7d4798aa1bb0b26bbea7bb
02d992484dc9faa76c56fd6329e4026566f7c29550fd71f30143a03e7ec2fbf51290662931b906acf005f958f0a1085ae640424568121d9513f87222
b8c797f3c03dbe6cdce36b8c83df366bd2d2ad036041c80a2d766d07c16ce5e52bccadb9ecbd0558ed984a625f2a10c0a2ab021f818005587cc54485
b5c30260eb6700acd737574b59006cfd0c80d5ee00c986485a4fe24543b353f09660d3dd01b02e717246ff9b0558dd24ac769c5880d90df16bdc10bf
6a4767300c965fe3861677b8c261cccb55f1c52ba4bef7e8a60db8605483ed76e6855e3a116fe3b0bf4ee43583ba675727d824a5160bafb836a34f12
65412fb29654bd5fe0eb34d54cadd505bfec85f9751a80319e71c12fae2fc1a74d04237e9da0b9691bb0895f97e03f8445f86d100f9358098bf08b7e
c01f22de825ea7f12e27a90b7ac18b945c364f257ad19184713a88bfad228f46ebfd025fd37ac657f076f8d85533fc7a55d339bfe457b707a06b5af0
0bf5d57f59c45fed57888806f59a0043f56d0cde335c34c0425868ac2faa4055cd2135fc86edafc2951836b4ee193664ef1936f318335ce782c62e43
b0659518366c530c26370b86871f66989967867b86ebfb28198a058381e9332f62b0698f8dc1a33d44b1690f531ce786d813da2208a38b012debaf38
53ece35c043d4c4b8a91fd3ba45a794131decde435fb5e528cf66750d79614533b0dc579164ca33e3e741671b8e0a13d77cc11c0a1264674184d981d
4622948c23c87234b529c2ee629b22ecac8b1461984d86b037139c6cb6f9c198bf393f3076ca0fd0d8ecf0262ef303b6339bf4580eb064b6d9c13053
764066e612d138b49cc3924b7a1f4c7445b2dd426e47b2831402165dcbec00e3d5304665991e8ce63c5876e0b7d92ba8004a61135a7ba6b08db1ae9e
0aac3d5ce2e9251a6c87f5eef420f839355c1d2db96c2672a5cfc74e16e84a5f3f65bf6497bd30bca2fb2a2d8625bde67ac2d7d8095fdd4f2fb25db6
0d7a4344625791222ee8356e885e2c23911ea45c16f806446e74449691d5f821808d9d08861fd129a02d08663b130c3f3082e105c12c33136cfc10c1
b063d95c0e376b747f606cd4dcb609348e308e1f95c2e7b927b0b8d4e933b79b44be3ff935eff4d4dc5f06eda61e1cf547d0768b2c4e2c7e78b2d645
7ed8fd03936ea2bc5cbd5b7d5cadf169bdfa123f69e57527681c683cfd2a6e5ecaedb13444d81c4a83cf32d7aa9c443f8e6ef2b62b3b9b0a0ca00a84
0f7ba30eccf632657373e2356c6173d5cd346cfa92f4ed003661d55f7a5fe811de05bd7034646bb8a1f66e4dfa8a8189a8bb563b9bd7047a7bd1de3d
1bfdced9ceb46bed1bb6ed7b454fd875fe50b71b3d447f362d3706b75b82bcbb07c709809675f3738c594e9e763d5bfdbcfa8051fe8071fd73f5efd5
9bd52b1ed9bee7d5e3ecfabefb5d5e5394de958a37aac4d0f783bd3fb46a783a66359c7f3cdae7b093abe8a656bfef44a7f3ca7127b71dfca45d92e9
b64ee7847027b6cb3b1b7778d87a77375dfd71d78c535dceba20923b4e8117a79afb2edf73657d6b77939e517501b9c3c3ca5d66eba59dc6041437f7
9eec76b5e502fb6edf7f3db6ec39806b19f364b278d723669da78e9acfeb7c5f7b6c6e3e8b72b9a5efe765f1d43142b81e184da72e8598ba7406c5b2
7267527c710638d91e1b8aeb11b3e9f5e51c53bfcf0259b4ef76c6f566c6dd86c53ee74a91c3fc21e672903f0c1b272074fb410997663b47dc5fb97e
a4fa83d2ab61bbb87ae4a4b793b52348423fb594ebc1c542030f6bafc878a95c77f93b2518d6039f5cccf5f082e1bd3da8be22e3c582dde1efa460cd
7d7235d7830b16533d2cbf22e3c5859577f83b595c99fcf5cab9826e965ea99cebe183e278006ffc8e10ce3bbf67c5c5e30e25b75d1581f178181d53
d8dfafaf7df1ae2243cad72c2270b652e3314546eccf1e643423f2b05e2c323b24918dc7532297088da4d6280559cdc315ce3da6c85ecf6cb28f982c
5865325f2cb371493a5b9fa784d6a5b73897101e3467bc22cef219958e612e2ef86a7341325fae34bb64a58dcf934aebee6214c144e9427cc03ac547
551aab0889b1b66a951ee6cb956697acb4f179526938c3854e82d354e38ad1237e3ea57513b7e886b4374a93f962a58d4b52dafa3ca5745fd9d55642
6ad2da0396853eaad2ba09293548b44a0ff3e54ab34b56daf83ca97440d2964b40d28125da35a3473e5abb287ae6f1ff50bc78f1c85fb57a71ac9d4d
da58db910a04c132e8b08031f4ca0b3d276b053aa6d80baf8297dd399935b7dd3999314bd9d73106dd4189be1f10d14ad5673db2ccb9e95166408eba
ab63c4524347af1fd9052c5ef6250880d5bb7ef2092d4d05c2ce0c97fb02043d904f583da8bd8451801066a95813f556422b2e40d85f0fb576558c0d
0dae2e6908a75563afc268053dd47d1adcbbaf62f4baf3b7294f8f5a0ab3ab62c4ade2dac6dc461123bbd103c2dd31afee6a7ae9d950d4c2e15df981
60704b3fe38b906c5fc3a8a7b6fa45edc9710d233ad093575ac199a76ee3c2fda07da81ac601b149cbf71073563e2066eb80d85807c426e1248ab5c2
a3248cf10263ad5fca887cbd5c8f377078ea218cad1fc258f44815413d5b8ecda030c88c820199c6dc803c10311cf3e5b49d42af0f716c5e07e6985e
07e2d8704918b3f341b171c214f3cb302836caf3c6063df47a145fa994f17e61588bdba4ef7e33c1823cbc86523691921036b16c206c621605e20a25
bcd483406cdc50202e40a679d72b0b89608e2a0c30873e02d82b848aec32126bcd41a8b5172010c09bbfe1d36413ff8960e3874331d1744624e6c772
2866750cc3fc267028a6f79b2836cc532836cda158cc8fbd1ec5d7ab67bc27c806d94132f34d20830af1b16d63f49db1d8124b241bc2ef8ec5d60fa1
6cc8279479416750e654865036ed1928dbf630ca3c8f33cbcc26b1cced3c27adb88565939d7040e6f481b20a936d10cac3cbc36615476a1ae900caac
11f79c9a799a232e4fdf042a9b39e452686550f9724a1a8cfd9ca481ed9cfb1a3b25bff43e71f24be681e922c7bd33f7b576ce1938f471d2c0764e1a
189833b2865b28355433a6340150dac0de396b20ef0f9736dcafc69128e62247a298ab1cefcc1b246bb158ad62215ed81bd790bb125d5d240e229007
f36eff2b264cb1f14314db1497525fbe9e283676c258407dd3bf5ab6e0584b14738d75116ead1fe2185d14cc7b2e2c38e67632c78247e179fd7ae2b8
17824a141f171cb36e6615b729b7a8cb551cb78753078ff623df6ce5cc04989f7b7eade3d1fa835baa0ae0f5ec7285bbcb1d83c4fd89de350b1eb33d
437c3e4dcfff0714908b63
>}
\expandafter\gdef\csname EFWidth5\endcsname{472.07952}
\expandafter\gdef\csname EFHeight5\endcsname{270.47952}
\pdfobj reserveobjnum
\expandafter\xdef\csname EF6O1\endcsname{\the\pdflastobj}
\pdfobj reserveobjnum
\expandafter\xdef\csname EF6O2\endcsname{\the\pdflastobj}
\pdfobj reserveobjnum
\expandafter\xdef\csname EF6O3\endcsname{\the\pdflastobj}
\pdfobj reserveobjnum
\expandafter\xdef\csname EF6O4\endcsname{\the\pdflastobj}
\pdfobj reserveobjnum
\expandafter\xdef\csname EF6O5\endcsname{\the\pdflastobj}
\pdfobj reserveobjnum
\expandafter\xdef\csname EF6O6\endcsname{\the\pdflastobj}
\pdfobj reserveobjnum
\expandafter\xdef\csname EF6O7\endcsname{\the\pdflastobj}
\pdfobj reserveobjnum
\expandafter\xdef\csname EF6O8\endcsname{\the\pdflastobj}
\pdfobj reserveobjnum
\expandafter\xdef\csname EF6O9\endcsname{\the\pdflastobj}
\pdfobj reserveobjnum
\expandafter\xdef\csname EF6O10\endcsname{\the\pdflastobj}
\pdfobj reserveobjnum
\expandafter\xdef\csname EF6O11\endcsname{\the\pdflastobj}
\pdfobj reserveobjnum
\expandafter\xdef\csname EF6O12\endcsname{\the\pdflastobj}
\pdfobj reserveobjnum
\expandafter\xdef\csname EF6O13\endcsname{\the\pdflastobj}
\pdfobj reserveobjnum
\expandafter\xdef\csname EF6O14\endcsname{\the\pdflastobj}
\pdfobj reserveobjnum
\expandafter\xdef\csname EF6O15\endcsname{\the\pdflastobj}
\immediate\pdfobj useobjnum \csname EF6O1\endcsname {<< /F1 \csname EF6O2\endcsname\space 0 R >>}
\immediate\pdfobj useobjnum \csname EF6O2\endcsname {<< /Type /Font /Subtype /Type0 /BaseFont /BMQQDV+DejaVuSans /Encoding /Identity-H /DescendantFonts [ \csname EF6O3\endcsname\space 0 R ] /ToUnicode \csname EF6O8\endcsname\space 0 R >>}
\immediate\pdfobj useobjnum \csname EF6O3\endcsname {<< /Type /Font /Subtype /CIDFontType2 /BaseFont /BMQQDV+DejaVuSans /CIDSystemInfo << /Registry <41646f6265> /Ordering <4964656e74697479> /Supplement 0 >> /FontDescriptor \csname EF6O4\endcsname\space 0 R /W \csname EF6O6\endcsname\space 0 R /CIDToGIDMap \csname EF6O7\endcsname\space 0 R >>}
\immediate\pdfobj useobjnum \csname EF6O4\endcsname {<< /Type /FontDescriptor /FontName /BMQQDV+DejaVuSans /Flags 32 /FontBBox [ -1021 -463 1794 1233 ] /Ascent 929 /Descent -236 /CapHeight 0 /XHeight 0 /ItalicAngle 0 /StemV 0 /FontFile2 \csname EF6O5\endcsname\space 0 R /MaxWidth 974 >>}
\immediate\pdfobj useobjnum \csname EF6O5\endcsname stream attr{/Length1 8492 /Filter [/ASCIIHexDecode /FlateDecode]}{
789cd5597b5c5575b65fbfbdf63eef7338e7700e0f399c73100e882f08c4777122df9a9192a316c551407cf0107ca468a82948e9e023b18c94ccd4d0
1c741c0325d3645243679ad466ba53b74cca9ac86c2ed50cc2efdcb5f701d3e633f7cebdffdccfdd9bb5f7efb97e6b7dd7e3f73b1b600060a58708ee
b1a3468f81607000b0fed41a3236fda1a99ee5fd0e517d1451c6d8a98fa4c5ed493d03203c4ffdd50fde9f314e9b9abc160013a9fec343531392e6fe
bd701531eb43f569b3f37d45c206e349aa531b1b377bc92237cc8d1c06209da33acf2d9a93bf70d09279006aaac38139be922250d30d9a455437cc59
b02c7793e68974aad3fca0b6bc1c5fb626f3ed7280b005d43f388f1a8cbbd49554df43f598bcfc454fae381564a5baccaf6841e16cdfa06b89245ff8
50aa0fcff73d592456aa16525de6ef2ef0e5e7c47d736f13d59f23792e1715962c12a609370122f2a87f6651714ed108f577548c682699f340c6ca00
814ba01a4284d226930e06c24810468d999401a605be4505104698d2e5f703dc2e29a3e7e7141780a67b1ea33e41796be8dda08c3c075f800a4cf02f
5dfe8ffec9fb12d19eeef2cfdefffbcbff91ff5260859eb722bb8a703192c4160821bd19950df414945ebc3d59d64da2760398955e81fa446a5191cd
35f4d4828ca35ee115c0740b3c073605d3e5be62df2c58eb2bce2f80b5b38a7d7361ed6c5f41093df3728ae9b9ac7801ac9d935348e539c539f3616d
9eaf80c6e4e5cca296f9be021fac5de02b74cb4fb2cdda7cdfa23c585b305f6e299ce3cb87b5c58b0b68e4a2dc8239f4cc93f9df61bf9fae805c9fc2
97900e230da0fe586e14d6907259a4dc956e45b3ee46ada7cef2bacb8ffcf74833b2bfb0e45f30c9ccc058f97dbb7c078fbbda67de512ffea92c507c
dcaaa2f2c8db53fb9096badb1ad35334b14d642b9092a517a8ea0cbcf18f902b50c4097a15a2461404f1f68cee2b3d77743678c10d8b55366e633bd4
f9ecda5d63b09b28ef282bd6508d29751156296f33b51888430a0c8161300a26c114f0410ecc857c2882c58a85dc9078576f36e4c102288462bfdf7f
cd7f99bcf57dff39ff697f93ff90ffa0bfcebfdfbfcfbff76e49ffe10ac4fac1ee9a51592540b2bc72ce4b8180570fe92659fe61dda4271ad54d32a7
49dd24f399d24db24d7cdd1444944d944364269273d05c224217e47c974f64232a242a220a21926db898288ca550ee68a1fb34d4410ddb4bb55c6a5f
482db5c2115847a31ae00c6b6195c2006adb0b37e1128dac8016ac13814d80646a05f85012a09d65c051e2318cd9d830b54a0471b278549c223688d7
c58b30442c112f8a5962094bc6ddd234692fd130fc2df9c079704103fb044ae0387e85c9d8248e124df0095ec43af89c5691716a812ad803a5248b8d
154299502a4ca196b3d245d8417721f55f643bd92592ee387b1aaec0f3280ae36027bb427ab5c00ff034660865047fb2904bf29f255e1769fe0e28a1
347285e9800bfda88da4a7b56629cf481c205d51ee9b50462b67c01e5583caa68ea65564c4f6b233ac4db5156ae1123e860bf1cf6c9d182dee17c741
550001cc822ae2bd439ea3ca65cb4877f92e95b90b4bc52c56075f8959ea59c4fbb7b246b4e651610a69940b4d444b5566d269045b879524a9dc1b09
17d513c4049a4f1cd42b157b16620acca352291c82233000aba18a3829faaa86483fd0cc1af12ae95cc5360a3fc0451c05f1902bde20ac6597a80678
43ad92441418f4779beb05cff8ec7aefc3d3dde766440de8ffb3aadbac76d7437abd7199bbc1ef4f9f2e464833ea25473d7a34f5a227faea3febbc3a
a0ffc4f4e9eefaaed1a3bab98ece1a456d53a75351ae5133b58f1ea5f4c98bd64b1efa1b9f55ef9e9de77ec6fc4cf4f067cc39c307c8b124c8bb1dc5
2d52a9c27f4dac22ebe82114a2bdc1aa5a2bd41a365b3784691d414e74d823c2cc9d6ded6df780b9b5bdcd7c2391f5162c666b7292d56216e292c062
86e8def25378b6e6a597e8efa5976e312dfff1d62dfe23d34ae9fc22bf40749125d33d8825d7f2125ece2b7809dbc896b1e56ca39c8bae5218cfa4ec
ad03afd79e86b5a2502bad5643ad56e35239105c4c6fbe3cb13e28637a230df60e9dd1d6dc490225b425b5b75d6e4b240c66f4664783304814328744
59a4144fb225ca1ec5d904fe02cb79974de8dc5327968c6b18d771a58e1890bdc409a4b103767ae3c27b456098c3228960912431cdfcb2e53963ad6d
b348710b669dc0748e5033aa22cd9d13ebed1913eb43321e9d586fcb78942441ffa9a1339a2fb79d3a65b10eeb96a65d91466d96be514bdfb07a87d9
123a8c64f3263d224e93a6a9978bcba5251115e16a8aea70b11799d7b10896a816f72a8958e45803e5e16b7aad8958e3d80ffb232c9990e921255206
c390fb58caa0d8e8de2a75ca7d2c3949b4db546a15502a39dd3989604cf63db8affc894b4f2ebf3cfd4b661bfd68386fafabab5bca360fcfdf3e7e69
75da0317ee49faf2edc75e2d8ae45f93f63564ef12d2be0f147907823d5857ae7595bb836bedc65aed5695a3d6bd357ab36a83fd95f8104730a02ddc
11eb363bd0e6d2aae2651042327af4d72afa1300ed6da42521606e6b6d6f6d337f71c3acdc844a22f36ab39d3e97cf9d1d2542267332bb4d8cea1d1b
97e22445069356fd584aa070977a98baf915fe1efff2f1b3f332cee59f3cdbf8eaa163db76bef2fcd493c525e7677cc10cbf448fab79d3c77ff578ce
dc93545db576dbdea54525a531b147ddee3f1c597140f670cae5e21ef2298132fe6a6f2433a211108d69807a75adc470b5961974e050694483c9fcd1
c47a3d2966541433c88a5d1ed9dc966491edda7a79645b12e9a218563c4fc63d2f9bb4af1efac23898419bc4527806d421ac1fc4b27e38984d660f19
1e324e63b96c315b8eeb98914ca96551986c49b6475ba22d5129a8e202e329fcca95f35d8f4b9ece6b78b133793faf655967c8423bc942d92479243c
ee8d167ba92de5e6c85eb56a5badb9d228d4c26ae306f51e67a883e9d0013ab3ca69ee6477dac52c8bdf1d2d66395ac844e6e61b7200cb114ce6e1cd
01eb04937f5964ccc16e83bbcc225be3630cefaaed3fbd7f078be197f9b78f9fc99b796afeebefbefbfac32f674857eaf896a0207ee32fdff1efddee
967b128fd5d41c8b8925e9f713ee7924bd0a667bc3248b80025a448a2c892447099948c755b5b9f342b345f699843b22868880955599fe261d49bdc4
8c7e8e90369621436778add305a6c25ed230699c3407eba15ea5265c490516cda2f6e3a9aecf2e31de952c5d99d6b15aea277bc06ac2b1823c3d0c62
60ba375aed0a67e5105eab7b55ac85ca1057ad7973c8068fdae1880a7642efde0e638487321d41d493ebbee0df77bbb137a439fced5ea7224e394e45
beed6c76a9ebac4dd6afac98c93287289859834d8cd260ca204876cbc990c0643d40da6d215727d54c3c772968f891059ff25bccfc194366e187f9e7
936ad87deb77ed5a4fe46a88896546669df6180bfafa0b16a2a4c95dfc51a7b0fdf8ee974f9c7879f77159a7f37402ac27743574d61fe8b5c336ed6a
b6cdac11cc3a90c28d49e0d08a56255f13ba0abe724a3a9215cc5826b374cbe68952def18c6d6d6729ccc5aff2169ec676b123ac9ae7f174ee93126e
2d65616c20ebcf42f7f2ed7c157f8a5793396875711ead2e81c76b10b6c16a91397038882a5a524ec8ad6d895e6da23a5dbd0a578922cb0c56f2f0f9
77853f763e215d91b36f4003f21fe597ee616f8a56a3469dca82224a1651c4343af0d851b46fd3dab61957eb454985162d38424c922e3c5cb4a4da74
0e8348d9b82da9b3594e3b96809623e58dc03a4cbeef742725fb1ef13a497b6fdaf2602681c42441856ad10e766613423054f480877984588c53c5aa
6335b15ab773301b2c8c6163843c69b1b8585a1abc5eb55efdbcea79b52b534961a1c1d13890f5637224bb43285b11b0015be3c6fb4befbbf8e15b13
9e7df2a377d939069d4f7755f22ddbb66d119a42363dc5f35859f5acae4ae9ca077fda785c78a8eb46c5d34faf93ad5a4a9e3a80ce343af04013ed49
2e7da8d604af85aa1a4d1677b9ebb8a331bac1b221d400a11866d46af42ed4d846c7120a172eb725250502a9b9b5bd939cf61dd9650916d96b0b1223
139d89ae4477625462efd4386fa4d7e97579ddde286feff4c87467ba2bdd9d1e95de3b3dae286e5d6485b3c255e1ae885ad77b535c6ddccd3867cfd4
9e493d13b29c59ae2c77565491b3c855e42e8a5ae55ce55ae55e151546d1c014772738ee65432cd1297250c4a60c1a9c1c7567860f114e7e727075e1
0b8d0d0da94deb0fb674dd62c2beed59c732724ecefc8f9b42726ee9ac920f8fc64fea5a5d97eb3bbdfbcd53d6b267070eac8b8beb94b15a4858cd54
d928b53960a837bc5723986c8d926683a9816d27738246186bb1ea47472a5190942487406b7bb39c01138f653957396b9da8c442b778749801128929
5207a4c4dd0d0dc30faf68f183bf65c5e1aeb3fbb66cd9bf7fcb967d784c78fcef6dfbb37d6c14d3d03dcac7ed2dd7afb71075cb554636b44104edac
31e45fda72cd7ac9fe1a931a0dec4458a3b5c1b0c111611734760d4c14ac41a31d8a88cdcae9414e38adcabed91ec838f1a9914591b591ef45de8c94
522195a50aa9f6d408a9bf3a4193a0edaf2b8442562814da0b23b4990b491f7b9492ba87d8492725ba29a9ab07321973b1acf388e1e21bf3cece9afd
de7cdececfb2f8cecf98ba417875fd8e4693f0f8cc9367070d3ad4b73f1bca742c983dc03f6ede7ef4d04ef95c462947cc26ac836190371cb58026a6
aa30591a0cdb754cd0c064d917c7d8e468940f880923db08688b357418651cbb9271a22d811448856465a70911b31b56acd876b0b131edd78b4fbf23
ece97a4cd8b96be7c93d5d152a5bd7ce9cec6f651b9fa6c597d1ba481b403faf4975523c0c4d82c434228cd1983b693f96934d27651bbd59ebd5a66b
b3b4455a49c937d1f2267bba812e31eb56adcaf695acc74ffc7abf01db05a68131a23970844cf41acd92574a97b2a422e9a6a40a3021062adbdfdbba
315047925d7bc34c6faccaaa0d0b0255a4da6ea8887463434453b8590d96208d46956ed104a53bc234bdc6442be9a9b3b32d702e1c39b2b55d394cc8
c078831363d2638a6236c5d4d2fd56cc2731fe182d21a56063bf13af9f80b307808b1f7d6acdaf4e36162faedadb58bc74e3dec6c6d4fa65cb0f60e5
8a25df7f26c3f8728d0ca3b073f78b6fbdd25521661d9a336b454fc42c260de413fe043ae1375aa1d1d0209ff0ad410fa3d53efa67277c6f746a7829
94aacad4659a326d99ae4c5f6a28339699ca82cacc6596526b6df8cd70cb1db14e5173d70f8192e70e1ed8b6f5e0c1ad379995dfb8f91dff9659f093
ebe7cf5ffff2dcd9af6af839dec6bf21371c46de66634349c2e37c9ab887249463fa3e6f444f4c379836b037b12992e279ac12d96394cc9f1490b5b5
27acbdda405c7feaa42dc7731b319245a0f4c37a366192949534360e3f5c7a01fcfe0ba58785a114d9fb64dadf7548a5abcbf6f126fe37ba9b7ceceb
9ec00e78004e20e92c90e8b5a9f4646f3d56981ab44d6a9d4a039a31b4e136073c92a2f9f205397c8fa607ef0a1608232504efb062284e708def5fb3
8fe438be2e78a0038f5a2d2d27bb8e90b572674bf2d7b942ff353c4babc5c175ef48a34130e9a7ba9c1aada0d64d75b99c693abdd345bb5739ab146d
e5f6cab0468bd8e8a1cda18f53a77745a8614a84c6a4d6d87a8fee234b75b9ad95c451f648c5b8f289e6fb1bb2232abba449de21d5a6ee7d12e2e47d
32dfa173e81d86819464faebfb1b466847e846e84718f46e70b318a18fae8fbe6f70822dc1de37a48fb38f2bde1d1f151357ae2bd7971bca8df2b712
26082a9d4a8f0634a20983d08ce1d80b23d021466ae312e253e39f882f8b5f15bf29be36fe667c186da90b993d80922dc4a5fc425045df79144d200c
e52d2429049f9dbc7f6665e5ace7529b5ffdf14f33cf2cc87dc7b76643ce01ef81e73ffd5dee5131f5509f3e1919def151a6be2f54d61c8b8e3e9992
32e3e189e99ea0986d6b761e74cab61c4201f15769276534da3b4c9226085f030b6bd254e8f48431f998d96a9233dac866fa4bea3e41057e0a5056fb
5520abc9c7665bc80866977737da47922d6c292be5eb2696bcf9e695dd1515d24efe7655576de5e41dbbde17b2aad87d72141ea2289c4e760da61d62
84d7f15336dda0634db60603e5529b7e3265d53176391c87053caa35a927a51e2db49fb2934b0553260d84dded3d2c961d9293eaeb0d0d0f1c5e7cfa
1cfb3d3b2eecedf2edda75728f507aabf660eeec9bb83ff0cd4f98f1427be9ad822782467e0f2e8df261ed0f4f995a7bde3f7ed039c934432b7f59d6
dcfe0e47f3d4f93c12c0c47ffca0e361d38c7ff86ee7102f2adfbb40a8237a964e78a150417495a89aa886289b6827d17ea2d5aa78382f2e86f35215
bdaf43a9648385621b2c94ac705c1806a76552b7c071f1736ab3c1719c40e57e5088d13084c61cba63ed3458067f630bd879a18f902ffc11437028be
4d48ef914cd204a958da247da74a536d569d532f551f55ff10d0071c9801fd200f0c641933bc206b2bda85107acbdfc9d43053fe0a2a6a6970a2f2bd
512e3308a15aa02c80868de92ee31dede21d6509c2d8e4eeb20a6c2c171e80422822798be947e31c5a7d11b8e977f86c88a7771224d29d4ca55934c2
4d7acda5fe12a262c8011fe4437f6a1d0f05347e2095ee870574bb61ca6d5e254a2d87de393467093db369a4ee5f5875f0ed553368a525b496fc35ac
8046cb72f868ceff6cc551549a47f3a6c1621a319bc6fa146e39ca0c9fa2919bb814d0b388c6cc22be73699c9be617d2ea3ea5efe77ca62a5c4a4822
3afdc07c6a95572da1b1850aa7245a3b1952ee9ad53327f0bf10f03fa57c73ffc7cba1f885fc1f11393eed10427b6518f4a2f6384850788ea51ffb13
61123c04e9a4f3547884f8ff02a6033408abbcfe5b1c3b6cf8770ffe2d097facc61f4cf83dc7768effe1c1bf9af0bb6abce9c16f9fb95ffa96e38d6a
fca61adb3af0eb0efc0bc7af86e39769789de31749f879eb54e9f36a6ca581ad53f1da6709d2b50efc2c01af72fc94e32749f8ef36fcb81a3fe2f867
2bfedb4afcf004fe89e30734fc839578e5f258e9ca4abc3c162fbd1f215de2f87e04fe81e37b1c7fcff1771c2f56e38516a77481638b13df4dc2f31c
df596791de71e06f43b099e3198e6f733ccdf114c7b7389ee4f826c7268e27381eb76063b9476ae4d8f0c609a981e31bc732a5374ee01babc463bff1
48c732bd7e3ce6157fe3c1a31c7f5d8d47381ee658cff1571c0f65e3eb263c78c0231dccc6037556e98007ebacf81a09fd5a07eee7b88fe35e8eaf5a
710fc757769ba4579270b7095fcec65a1a525b8dbb38ee7cc94039155f3260cd8be1524d36beb8c32cbd188e3bccf8820e9fe7b8bdda286de7586dc4
6d34695b353eb7d5243dd707b79a704b076ede7442dacc715355a6b4e9046e5a2556fdd2235565629557fca5073772dcf0ec406903c76707e233a4e6
33f763e57abd5469c3f5b4d153434536961352e51e5c67c1b51c9f5e63919ee6b8c682ab39aee258c6d1eb7f6ae54ae9298e2b57e28a6c2ccdb04ba5
1e5cce7119c7274db8d4804b74b898e3a20e2ce9c0e20e5cd881451c0b3916705c1085f339ceb3a449f3a6e25c8e792b710e557239e670cce6389be3
2c8ebee198d5818f1b3093e3a31c67729c315d27cde8c0e93afc4548b8f48b249cc6f1115af99134ccb0e3546696a686e1141b3e3c21587a9863ba1e
1fe238f941b33499e383669cc47122f54ce43861bc599a108ce3238dd278338e33e2588e63aa7174358ee2f88030407aa003d34ee0fd13d1cb3195e3
7df75aa5fb6c78efc820e95e2b8e1c6194467afd4138c288c3390ee33874884d1ada8143069ba521361c9ca297069b31458f839c986cc4a47bf45212
c77bf49898a097128d98a0c78103b4d240330ed062ff24ecd7d723f5cbc6bef156a9af07e3add827ce23f5b91fe33c18ebd14bb141e8d1630cc7688e
bd83308af48cb2a23b1b5d1de824159cd91869440721e8e018d181bdd2309c2ae11cc3b23194900ae518429342c2d1ced1c63198a3950658395a4857
4b1a9a576250369a381a0d219291a381461b4250cf5167462d470d0dd37054db50958d22758ae40176a456e428505d1880cc8cc09135b0ec751b59bf
ff0f17fc5f0bf05f5e91ff09988e668a
>}
\immediate\pdfobj useobjnum \csname EF6O6\endcsname {[ 32 [ 318 ] 48 [ 636 636 636 636 636 636 ] 55 [ 636 ] 57 [ 636 ] 68 [ 770 ] 76 [ 557 ] 82 [ 695 ] 97 [ 613 ] 100 [ 635 615 ] 104 [ 634 278 ] 108 [ 278 974 ] 111 [ 612 635 ] 114 [ 411 521 392 634 ] ]}
\immediate\pdfobj useobjnum \csname EF6O7\endcsname stream attr{ /Filter [/ASCIIHexDecode /FlateDecode]}{
789c6360a0103013906761606560636067e060e004f2b880981bab3a1e343e2f129b0faf0dfc60528041104c0b310883691106512029c6200e242518
2419a418a4013f190178
>}
\immediate\pdfobj useobjnum \csname EF6O8\endcsname stream attr{ /Filter [/ASCIIHexDecode /FlateDecode]}{
789c5d523d6f833014dcf9151ed3212240128a8490aa7461e8874a3ba10cc47e4448c558860cfcfbda3e03522dc1e9ee7cf683f7c24bf95aca6e62e1
a71e7845136b3b29348dc343736237ba77328862263a3e79e6debc6f54109a70358f13f5a56c8720cf59f865cc71d233dbbd88e1464f01632cfcd082
7427ef6cf773a920550fa57ea92739b14350144c506b8e7b6bd47bd3130b5d785f0ae377d3bc37b16dc7f7ac88c58e4728890f8246d570d28dbc5390
1fcc2a58de9a550424c53f3f4a10bbb5ebfed8ee07d4c0ab9513c8c909f24223400c4800c7652b922968ea93a99733c8999733c847842dd440c81c32
f732877cc2c5166aa093cf28cb420d848cb3cffe1b360af719d417b451b828e02cbcbb52b8ada3a9ff691b756e8a2a537fef42f1b7d2e3625e6d8396
4ed85ed9c15a07813fb43633e0a6cf35dfb6bd93b40ea81a944dd9e70fc12bbf59
>}
\immediate\pdfobj useobjnum \csname EF6O9\endcsname {<< /M0 \csname EF6O10\endcsname\space 0 R /M1 \csname EF6O11\endcsname\space 0 R >>}
\immediate\pdfobj useobjnum \csname EF6O10\endcsname stream attr{/Type /XObject /Subtype /Form /BBox [ -6.5 -6.5 6.5 6.5 ] /Filter [/ASCIIHexDecode /FlateDecode]}{
789c6d90390e03210c457b9f820b7c648b00433b65ae4113459afbb713430011a7b1bcfde745dc9bd83de963203eba8bd887920f8e3d669f73095983
f09012c589e7c42969a63b95660da35da598a4066a7d0a9cc9af68693a6f715765b66e44b595f63930225830cc706cbb616c8cfd10d84bf1e71fb05f
c32f1b7df917d14937a3cb48fa
>}
\immediate\pdfobj useobjnum \csname EF6O11\endcsname stream attr{/Type /XObject /Subtype /Form /BBox [ -6.5 -6.5 6.5 6.5 ] /Filter [/ASCIIHexDecode /FlateDecode]}{
789c6d90390e03210c457b9f820b7c648b00433b65ae4113459afbb713430011a7b1bcfde745dc9bd83de963203eba8bd887920f8e3d669f73095983
f09012c589e7c42969a63b95660da35da598a4066a7d0a9cc9af68693a6f715765b66e44b595f63930225830cc706cbb616c8cfd10d84bf1e71fb05f
c32f1b7df917d14937a3cb48fa
>}
\immediate\pdfobj useobjnum \csname EF6O12\endcsname {<< /A1 << /Type /ExtGState /CA 0 /ca 1 >> /A2 << /Type /ExtGState /CA 1 /ca 1 >> /A3 << /Type /ExtGState /CA 0.2 /ca 1 >> >>}
\immediate\pdfobj useobjnum \csname EF6O13\endcsname {<<  >>}
\immediate\pdfobj useobjnum \csname EF6O14\endcsname {<<  >>}
\immediate\pdfobj useobjnum \csname EF6O15\endcsname stream attr{/Type /XObject /Subtype /Form /FormType 1 /BBox [0 0 457.67952 191.27952] /Resources << /Font \csname EF6O1\endcsname\space 0 R /XObject \csname EF6O9\endcsname\space 0 R /ExtGState \csname EF6O12\endcsname\space 0 R /Pattern \csname EF6O13\endcsname\space 0 R /Shading \csname EF6O14\endcsname\space 0 R /ProcSet [ /PDF /Text /ImageB /ImageC /ImageI ] >> /Filter [/ASCIIHexDecode /FlateDecode]}{
78daed594d6f1b3710ddf3fe0a1ee34368cef06386c7b86e0d04edc18e811c8a1e0a477112d8756db7f5dfefe3da92c88db5d2a280dc16d15a80f944
ce721ee773f7d69071b8c8bc76e5efe2ba3f3c5efcf5f96271767264be7b578f2eee7b325ff0bdc4822ff83e60d909be977d1171dd872836498e8cd1
5535a24c9687ffae30b31e7deafb8ffde11b88b8c79a93bef7d9fa9cbd44e393f591e320966df265c112bbaa3012b2446110b75e5da3c34d6ecdd7a2
0379cbc1619a211fac24766cee16e6bdf9cd1cbee1615336fae5077a3feaffd0c227a359fd6dbf2435898d12d4e7e0494d601b445d481e17a836873f
39737cb39a4d6a457d0a9c23b6e4becdf836638f334efb5333df4b7cf112366f8b6bc00fd43cf4cea6ec2827271ae01c6d04896a49bd8337ac1d7c8d
d56edbbf1b36b474c447ef7bbc45891505d922cdb3d5c770d11f15b77d72ccd7858718ad737057170a0f4c56a24269c4bfa3f3fef00710e5ccf9c721
b09d7fe87f36af3a77607e31e76ffbefcff74415b9682930711d0c2b702e59dbe44dd1458e6c4e1c3c36eaf24e7cc5bdf3859c807d326270a5df1a9c
cdd71679937c85682590461116bf135ff4021696b375c9f9502bb8c266f335296d92adcc569ccb29122133eec6d6deed8b43b2f831843ada54e05cbe
b6c99b628ca16c4ace09ef16bc78ffc6c538d42c1a631d6e2a70365d5be44dd2a5c926149e925553d891b1bd1b98c70d203e25ae34acc0b98c6d9337
c51852234a576545fc62d98931bf7f1bf33857cd49a4567085cde66b52da245b5a5a00419d9f62a61dd9dabb7d059c295c451b052b702e5fdbe44d31
16586c40c0778c0a6cb718165afbaadd9b91949fa2a1583f543395201a093aee16ddefdd1fdda7ce74d77b37d995fc553bd8f4b86b7447a37d569e8f
b0da11fa6c6044b6416e569b9470fc4f8de9e6acfb726ca9b79e602edcb0b546e7b255cb5bb3d5c8dbc096641b492908c1d0265dfce5d8220e9622aa
2f6de8aae0b97c3512d784b512373046eca06e24463a519da22cbf20652958940759424bd91a9e4d592db1a2ac91f84c310c6774a9347809f138b3cb
0ab70cd3c58ad4c44163c85c6f24291843af9bcb3ee46b74d846b56f87a25d1d6b6ce657f07841404d41c289da056bb8527478d858b4c5392e77924b
b3e4793a6cff78800c658786bd5c11d0a2fbb5bb47243788e78beeaefbdcdd741ffea909716d423418100544c714c4251cb6af07411922734024381b
db56f528a28ecd92d04ae3434db4c96818a3f3f0961a453a2ecf167390d6d3c8974912f308aebadd0617d8a8069534c2b5b4118a618bb3530b17a034
0a198cac0d7dd58dacb8ee821a1c0375c5c45a182a464d52b659e3deb1f5e27cd96683a33ca088031bc9a96be30607f1c2e4c7dbf42236ba90c7dbf4
3959562da55b83078a1645059766b8c1ab9a69141e1cec66477bb9bbdc3cb31fcdac9e292f8d4a862778cf3c53a6cdcffb02ccc0a72c11d7f6d96eef
f2fe2d33feaf7afd37679cd6499707075b26dc4dc9bc6ae09e7d27044fddf555d3d2a99f92fdf0fa67f592e72963239d681cda0f428879bc4995c978
94c9ce0e4c0896c5e9d33ba357dd9fdd153298e972e74a5752f2d880dd60d4d6917d7ffa37bc2ba963
>}
\expandafter\gdef\csname EFWidth6\endcsname{457.67952}
\expandafter\gdef\csname EFHeight6\endcsname{191.27952}
\pdfobj reserveobjnum
\expandafter\xdef\csname EF7O1\endcsname{\the\pdflastobj}
\pdfobj reserveobjnum
\expandafter\xdef\csname EF7O2\endcsname{\the\pdflastobj}
\pdfobj reserveobjnum
\expandafter\xdef\csname EF7O3\endcsname{\the\pdflastobj}
\pdfobj reserveobjnum
\expandafter\xdef\csname EF7O4\endcsname{\the\pdflastobj}
\pdfobj reserveobjnum
\expandafter\xdef\csname EF7O5\endcsname{\the\pdflastobj}
\pdfobj reserveobjnum
\expandafter\xdef\csname EF7O6\endcsname{\the\pdflastobj}
\pdfobj reserveobjnum
\expandafter\xdef\csname EF7O7\endcsname{\the\pdflastobj}
\pdfobj reserveobjnum
\expandafter\xdef\csname EF7O8\endcsname{\the\pdflastobj}
\pdfobj reserveobjnum
\expandafter\xdef\csname EF7O9\endcsname{\the\pdflastobj}
\pdfobj reserveobjnum
\expandafter\xdef\csname EF7O10\endcsname{\the\pdflastobj}
\pdfobj reserveobjnum
\expandafter\xdef\csname EF7O11\endcsname{\the\pdflastobj}
\pdfobj reserveobjnum
\expandafter\xdef\csname EF7O12\endcsname{\the\pdflastobj}
\pdfobj reserveobjnum
\expandafter\xdef\csname EF7O13\endcsname{\the\pdflastobj}
\pdfobj reserveobjnum
\expandafter\xdef\csname EF7O14\endcsname{\the\pdflastobj}
\pdfobj reserveobjnum
\expandafter\xdef\csname EF7O15\endcsname{\the\pdflastobj}
\pdfobj reserveobjnum
\expandafter\xdef\csname EF7O16\endcsname{\the\pdflastobj}
\pdfobj reserveobjnum
\expandafter\xdef\csname EF7O17\endcsname{\the\pdflastobj}
\pdfobj reserveobjnum
\expandafter\xdef\csname EF7O18\endcsname{\the\pdflastobj}
\pdfobj reserveobjnum
\expandafter\xdef\csname EF7O19\endcsname{\the\pdflastobj}
\pdfobj reserveobjnum
\expandafter\xdef\csname EF7O20\endcsname{\the\pdflastobj}
\pdfobj reserveobjnum
\expandafter\xdef\csname EF7O21\endcsname{\the\pdflastobj}
\pdfobj reserveobjnum
\expandafter\xdef\csname EF7O22\endcsname{\the\pdflastobj}
\pdfobj reserveobjnum
\expandafter\xdef\csname EF7O23\endcsname{\the\pdflastobj}
\pdfobj reserveobjnum
\expandafter\xdef\csname EF7O24\endcsname{\the\pdflastobj}
\pdfobj reserveobjnum
\expandafter\xdef\csname EF7O25\endcsname{\the\pdflastobj}
\immediate\pdfobj useobjnum \csname EF7O1\endcsname {<< /F2 \csname EF7O2\endcsname\space 0 R /F1 \csname EF7O9\endcsname\space 0 R >>}
\immediate\pdfobj useobjnum \csname EF7O2\endcsname {<< /Type /Font /Subtype /Type0 /BaseFont /GCWXDV+DejaVuSans-Oblique /Encoding /Identity-H /DescendantFonts [ \csname EF7O3\endcsname\space 0 R ] /ToUnicode \csname EF7O8\endcsname\space 0 R >>}
\immediate\pdfobj useobjnum \csname EF7O3\endcsname {<< /Type /Font /Subtype /CIDFontType2 /BaseFont /GCWXDV+DejaVuSans-Oblique /CIDSystemInfo << /Registry <41646f6265> /Ordering <4964656e74697479> /Supplement 0 >> /FontDescriptor \csname EF7O4\endcsname\space 0 R /W \csname EF7O6\endcsname\space 0 R /CIDToGIDMap \csname EF7O7\endcsname\space 0 R >>}
\immediate\pdfobj useobjnum \csname EF7O4\endcsname {<< /Type /FontDescriptor /FontName /GCWXDV+DejaVuSans-Oblique /Flags 96 /FontBBox [ -1016 -351 1660 1068 ] /Ascent 929 /Descent -236 /CapHeight 0 /XHeight 0 /ItalicAngle 0 /StemV 0 /FontFile2 \csname EF7O5\endcsname\space 0 R /MaxWidth 989 >>}
\immediate\pdfobj useobjnum \csname EF7O5\endcsname stream attr{/Length1 3804 /Filter [/ASCIIHexDecode /FlateDecode]}{
789cb5567b7054d519ffcefdeeb9bbb9fbc826d93c37090b9b0dc4640137124dc8c0027941c04413624056b2c96e368164b36c2224c4204a10915a02
d215100cb62922be526b9db4e9cb2aa38e655aab4cc7e938ead469c7199cb1333eda389cf4bb3721225367ea1f9e73bf73bedf77bed7b9e7dc732e30
0048a246065b75456515a48002c0b2499a565d5fd790f774e12f08df443852ddb071f5c2b195af101e231caf6b58e2dd5614f90d80544eb8a9ad3b10
85316992f031c25d6d3bfb9ca747c7f3019020eb6a8f86bbffb9ed5faf52b02e121c0a077aa360a00af2bf099bc35d03edae97433f02e032f9f84747
28103434fd2119c07099c64b3a48607e8a5f0030e611ceebe8eeebcf7c89f512ae219cd1d5d3162044b9195b085bbb03fd51dcae24108e12764602dd
a1c5256b628447c8bf2ddad3db37fd278803a867687c7134168a1e3d222d20fc67cac90cdabb31c34c910821705da6910a8ba11ca48aaaf58d60ed0a
f4452083de2195e96980394ed7de1e8a45c0386bc7684cd27b23f5765d330b72094b7a1c768d86aca3abf18ec231b0ebf17607628156180ec4ba2330
dc1a0b74c2705b20d24b6d472846ed40ac0b86c3a11ee2c3b1d07618ee084448a723d44a92ed81480086bb023d4eada5bc87bb037d1d301cd9ae497a
c2816e188edd1d21cdbef64898da0ecdff3573fbba30d9ca46e89d002fe627a084e56afd7425fe15da255a35c9a420a26c92e4d939cc95faf6ca2049
ee806ec52eececa4a19bfdfd1b3a384bd9facc012a08311dcbb0887a457f3792e641cf49efa79f9d3e3ee76366dd16cdad9ea633435a36ddb344b933
175b0213f026d1cb701e5e935e83337033d5dbe1b7eca0e4a191b37000dee512bc08a75829b3b3521a7d5bb12b037c3f3f47e35bc8b686bcbc0b8f90
4fcdd304dc2b0d4af5d00eaff18b70926a8f2eff147ec5f6c125380e6f4a35f005ecc346384cf524f4cac02f31158454483b982251056825ca410fbf
a4d74fe15e188446185326143b7b5bcffa2c7b855d868f29e7b7710bee208b53708eed975df239b9060ecfe48b2d7058dac74eca2d7a1da417b20b4e
c92decbc62870b5aae24a9a74cdbe1d744bbe0225bcef6e341ca6c50cb805f828b8675f29299ac0c43b88ce60344cfc10be0c138d9eb7351dae194d4
4eb1bea04c2e6205149037fa3eed745a40ea4b0a97516250e4b48d4beeb5c171df6dcdced737cdf7145d079d3683731ceac72d03ce89e9e9fa66d9c1
378df3ec71741bc765b7ebc36f1bfcd053545bdfec1cff7965c5acd7ca960a92353413ab211293bcb2c233b327e80ba4dda3ed2ced4c29a0599a99c9
b796a3c213d40419d50444d5a44ac824934955b8c168e05ce646a341e22899499bf68fb94ce5122a085526a3d9a426186766693280c5f6ce1fd34b6f
8425e51f79bde9a54b290f838d7f62b01967897fddcfb09b16bc603031e6f7dd9fa02e920a78beba5c5ace8bd55a5ea53e24dda7ee5547a5137c9427
2a92010daac19421a51b33cd0b5981942717f042e506c322a327215fbdc15c22dfcc97194bd412d32de65aace66b8c356ab5a9c9d8640a4b9dd8a184
4d61f300f62bc3d2017c507ec0b83f61587d543a818ff1f3788e3f69bcd10ffe94e214a63dcc85cc75e1d51aa9f8c37592e7e2950e5179e52dc12f7d
a5ca5f68345528bff7954bfb7ee91c36aca4fd55c8527dbdf694e4241b4f4de4668bd9ca2d16b3d140ef8627a0768a4a9c31c8e4598eacb474293583
3bb2e6e54ace1ceab3b9c391559691684639c705c85213d6e73b773bf6645a586656764672a2953b2c28b914286066578ec3b2d0c5162abcc0919557
64fbec85d12ce67ff5c5a8a3c521f9dff159a60bce501d29e829a82b50fc972fbff3e2f3ae1697e4bf9c544a35597b88d209680b55fed167e597bd49
e9a5d72c17ff84f06cab2fd90c5d0389a5955beaa1957bc9e769f1443de8f7a54c7bdef7fcce7386ea88a7c753efa9f324f819ba5256b06537e5bb16
2886ebf862c30a56ec4d4bd7db54bb6258b26fe23ea331b8f981bfcc1b9cd8475cf3831af7b70957ed1bbdbb4fdb1adf1b1c396d94c6ae6c911e2f5c
91b639f8c613570e4b8fbb5766de19d258b9e5b9d6f0ded8ce7bce1e9f5f07b3e7abb4e9c48187cf976c4d2cff1ce619f5c3f1ad3dd6f7aff65f7e70
a5d852653ca47f1d73e73cad6ab7c801b0ecfff283fffcd2520541fa4fb8b628b2767668eecf131d22fd8fe182e14d6d3fcc9531eda6d67350f05628
840e3aa325b0814ffb0340ab64a188dad96c80cdda3729d3bdcd96eaf780c633482334c34b606555b33c423e6b9ce5e56b780e196cf72caf401e3b0a
6be8bc8ac200c4a013c214bd0f9c743fb4d119e5042f2ca55a4c5c2b69386135e9f4d1c9d547da2108d05d5144d2b51021fdc5c4ad822eaa4eba23ae
faead55188fa10d9eca436489aeaff11b5642e6a2345da49b1b6914d84b4b53c0264f3dd225610b78dec9ae06ed26823dd80ee2da45b04f41939c94b
84da28e9b492df4ed273927d0f450fe863d7fb69d0bdf442ddacfe0e9286be45c7799d56939e612fe11e3daa97f22c8665dfb0be6aebb9ce966eeee9
cf89f6d0aef85f45d1f794442b5d06f50013d25edff4cf048ebbf1792f3e17c767adf84caf953fe3c5a7059e77e353563ce7c627e378760a7f3a8563
027f52863f16f88417cf8c36f033711cddb08a8f36e0e35e3c6dc753717c4cc593024f24e3f1217c7412e3028f91c6b1217c44e0d123d5fce8101ea9
c691c30e3e22f0b0037f28f061813f107848e0430773f943020fe6e2835e3c2070380df709bc5fe07d02f70abc57e01e8143b56e3e14c47b040e26e1
ee8149be5be040bf9f0f4ce2c05eb97f979bf7fbb1df27ef72e34e8177c7b12f88bd568ced70f35810774493f90e374693b187d2ea99c2886f5a60b7
c02e81dbd3705b6719df16c44e8ad159861db79a78470686dbad3cecc5762b86821824b3601cdb04b606ccbc5560c08c2d5b33794b10b7de65e35b33
f12e1bfa55dc72a7856f1178a7053793c5e6386e6ab6f24d8bb0d98a774c61d3c649de247063a39f6f9cc48d7be5c606376ff463a34f6e70e3ed026f
ab5fcc6f1358bf18eb28893a3bde6ac20d94d58655b89ebaf5026bd725f15a37ae4bc2b5026baa93788dc0ea24ac125829b042e09ad5437c8dc0d543
b84aa06f0a574ee18a292c2f59cdcb052e7f1dcb882b6bc052e18be22d437833c112d9c34b56e3328137092c2e43ef142e35e312811e8145020b69b8
f046bcc1860568e3052e5c948b0bf3ad7c6110f3ade8662a777b31cf9cc1f386d0c5cbb84be002420b26713ee9cf77a0739e893b1391fe447eef3b29
cf33616e02e6fae41c1b66937a761c1d71ccca74f3ac20666624f34c376624637a9a9ba7afc23437a60ab40b4c99c2e4a44c9e2c3089bc2665a24d60
a2402b79b0c6d142012d43683699b939034d6654051a69c818a79f0e1b5704729a052f439990ec41b4d1df07fd9264205391f964c84636c182fb1f66
85df6f81efd9ff772d39ff0526de186f
>}
\immediate\pdfobj useobjnum \csname EF7O6\endcsname {[ 87 [ 989 ] 109 [ 974 ] ]}
\immediate\pdfobj useobjnum \csname EF7O7\endcsname stream attr{ /Filter [/ASCIIHexDecode /FlateDecode]}{
789c63601852808504b5ac000195000a
>}
\immediate\pdfobj useobjnum \csname EF7O8\endcsname stream attr{ /Filter [/ASCIIHexDecode /FlateDecode]}{
789c5d50b16ac33010ddf515372643506268b3184349160f4d439d4e21832c9d8ca096842c0ffefb9ca4d6851e488f7b77ef783c7e6acfad3511f835
38d961046dac0a38b93948841e0763d9a1026564fce9f22f47e1192771b74c11c7d66ac7ea1af8270da71816d8bc29d7e3960100ff080a83b1036cbe
4e5da1bad9fb6f1cd146d8b3a601859acebd0b7f112302cfe25dab686ee2b223d9dfc66df10855ee0fc592740a272f2406610764f59eaa815a53350c
adfa37af8aaad7ebfacb91d60bdc0b3e12fdaa329de05ef091cefd0ad3e514c36a5bce2190e39c55b69a4c1a8b6b9cdef9a44aef09357a78c7
>}
\immediate\pdfobj useobjnum \csname EF7O9\endcsname {<< /Type /Font /Subtype /Type0 /BaseFont /BMQQDV+DejaVuSans /Encoding /Identity-H /DescendantFonts [ \csname EF7O10\endcsname\space 0 R ] /ToUnicode \csname EF7O15\endcsname\space 0 R >>}
\immediate\pdfobj useobjnum \csname EF7O10\endcsname {<< /Type /Font /Subtype /CIDFontType2 /BaseFont /BMQQDV+DejaVuSans /CIDSystemInfo << /Registry <41646f6265> /Ordering <4964656e74697479> /Supplement 0 >> /FontDescriptor \csname EF7O11\endcsname\space 0 R /W \csname EF7O13\endcsname\space 0 R /CIDToGIDMap \csname EF7O14\endcsname\space 0 R >>}
\immediate\pdfobj useobjnum \csname EF7O11\endcsname {<< /Type /FontDescriptor /FontName /BMQQDV+DejaVuSans /Flags 32 /FontBBox [ -1021 -463 1794 1233 ] /Ascent 929 /Descent -236 /CapHeight 0 /XHeight 0 /ItalicAngle 0 /StemV 0 /FontFile2 \csname EF7O12\endcsname\space 0 R /MaxWidth 838 >>}
\immediate\pdfobj useobjnum \csname EF7O12\endcsname stream attr{/Length1 12980 /Filter [/ASCIIHexDecode /FlateDecode]}{
789cd57a797c54e5b9f0f39ee72cb36766329375924c3259085b624280b08e08914d8c8208283681104081040208460d4b4940a10490b08830454040
a4112924102948aa22d26a016fb9a54514c52522edc5aa2179739ff7cc04826dbfdefbc7f7fb7edf9c3ce75dcebb3cdbfb2c2707180038e92683f7de
c143f260284c01605da9d7736ffefda3933eeb82d41e0a20edbb77f443836a92777f0f8037e8f98dfbee1e33d4b7a4e0519afc058d997bffe88cace9
3f962c0550a90fc64e9e59588aebed3ba87d909eff7cf2fcb95e981e970b60e84d6d5e5c3a75e6ec1ef31f0730511b5e9d5a58560a1a5d60ba426dcb
d4190b8b2f9db70da0f67580b8afa64d292c324c7cab12a0d3457ade731a7558b769db01d25dd44e9e3673ee822fbf8b0b505bacf7d18c92c9858d7f
78f35d80ce766acf9a59b8a0547e599d4ded45d4f6ce2a9c3925ed9bfe0dd4a639ec5c6949d9dc3ff9478e04e83a57cc2f9d33a5b4aff657aa76a33e
651a085e5920f893e842c8a4bebb2087eae29909ba433f9006e78d1c03b6198573674114f1957e6d6d00b76a62247b62ca9c596010351d645a419406
90d85fc448f683d419cce0a619172137b865db45fd3a1cc2403c09dddb9677ecf9f7bfb6c2b6c17a19bc57ebf727f4fbe150591d1a79a2ed35bd3cab
df77745863c79de5ff706781ede17f3fe6ffceafed44c7d5dbce1237cfdee6a368df7a76ea360ffe573b5c0ceed271c70e3f46327591542348333c10
0749e083144887cef444f4325d0f30345a06e5d64c559c0b5d63c4cf48bac6c00131bafe80ae8b7268861869d0479868370b58c1464fc2c04ee39d10
4efb0775752bec0397aeab4f15ce299c04d5857366ce82ea49730aa743f5e4c25965749f36650edd17ce9901d553a794507dea9c294f40f5b4c25934
66da9449d4f344e1ac42a89e5158e21577d2f99fcf2c9c3b0daa673d217a4aa616ce84ea39f366d1c8b9c5b3a6d27d9a58ff5f9c0b9d4733a64f2dbc
e36cc8103c1b0c7aeaa54254b9887f49d089ce1e121f13a1875e7a219b7891085974f7d2c994a88fd11517a2f963f802f2a19f05b43feb6c5b42db17
d0d93e1f6278c19dd26c6fb369a1fa43ff5e0318715b9affefc7c184e05851deaa7758e38efe091dda736ed725b273375753bddfada99d48ce51ba3d
8a873c98a56b94d00ac185f020af4168583731d6388fee5d694e27e32a7ade856a60ac01e8302ef3d6b88c5be3baffc33889d6a6718ad0de70f15411
1830d9c6aa850e2bd9ca26c23c3e58e27f40b1e4a4496615d1204b5250be1d7ef9c5438ac04f125ca0bab88b6dd666b24fef188321f080902ac0416a
31bd2dc3462a9368aeaceb403ad1d717fac3dd3098f8310a1e8442980e33a014e6c3025dffbc846f57a2f3ce319369cc2c9823c6b47d4ab6e13fdbfe
a3ed23b2101fb67dd076aaededb63ab294bf6e3bd8f646db813b71ff17bfa0dff820d4b2ebfb060175a991d30af1b36b085402c1fdbe04e2ecf70f81
e0f4dd21107e6f7008c85710f64188201815824882074310455018826882c904d309620866842091601641a9ce4700a16f429f53081684a033cb813a
384dd709d80b5bd82e6a1553ff6cea0948076019cca39e93ec345b2175a3be5d701dced2c82a388d7b6560c3e99c9ea6f117484f6eb03124bf2d2c97
b958aea6d2711f251f941f94ebe4abf219e82597c967e402b98c65e37665acb28b20177f4bfa730a12a08e5d823238825f623636c883651b5cc233b8
173ea35d042f4fc36ad801e5848b8b954085542e3d483def286760335d25f4fc0cdbcace127647d852380f1b519686c256769ee83a0d7f87a53846aa
2011654bc584ff3bb4d6199abf19cae8489d6726e05217ea23ec69af49fa3d0ebb29e7f5eb3a54d0ce6360875aa7ba341fed2238b68b9d644dea3a08
c0597c1467e39fd832d927ef9687c2ea2007b00056d3da9bc51cb5982d24dac5552e56979e940bd85ef8522ed026d1dabf151409ed971e248a8aa181
e049d54e34f565cb7005612a9ec6c1196db89c41f36905ed19a21aa00473e071aa95c37e3800ddb00656d34a3abd6a2fe5ef34738b7c99685ecd5649
7f87333898b4b358be46bc162a4727ffb0a62a324a0cba7aedb552cab0a25aff03e3bcef8e4fecd6f5274daf5df3d6427ead75a1b7aead2d7f9c1cab
8caf553cb59862a895537c97ffd5c3cbddba8ec81fe7ad6d1d3238b4ea9082c1d4377a1c55458bbaa97fc860fd99d8b45649a1bf6105b5dec9d3bccf
d99ff3f579ce3ea58f3074640928b2a2d32eac442dff9b54ae3ac92ff6f287a91b6183cdaa013a550837d9ec1747d4868f19570fa6b6e3bdc78fa80d
d3ebe0ef3dfe4a56932337f72eb0b73465325572bb9c91be5429a787b397545eb964e9b240cdfa1736a8cecff980ab5779dfcfbe666f7f7c893536d1
7e3b68bf127dbf047f9826f6d3280c70cae106a0fdfaddb8bd6e787684d3ed92345f4f674e0f69072db9be26b06ce952d5d9c4fb5dfa98f7f9fa33f6
dbab57d95bb4ead636273b099cac6bb4df825b61a98a328b862895163af77e70bd5ed96ef4855f3fbb63f1837c1f3fcefcc487aab64fe5d5a49b66b2
0b3e7fb81a7042c0b2c6b932cae8098b478f3b368a66de68a205aedc68b25fcb644992c3eeccce723aec525a1638ece04b1277e9f92d2fbd447f2fbd
749319f9f7376ff2ef9951c9e767f8fb046758365d3d58768097f14a5ec5cbd82ab6903dc556092b7e990cdd04f2b9c458bf7b10066429a02cd62060
3424a81e840466b69f0bf19e09de3735b61042194d59379ace356592068c4f6207c3304c9626f64a74283929d98e44772267c3f92636e53d36bc65c7
5eb96c68ddd0e6f37b6901d256793851ec81adfeb4e898588cf23814191c8a220fb2ffd2f18235e05a2393d502bb4962264fa41dd5387bcb885af798
11b511631e1951eb1af308618242231acf351d3fee70e686b0b9a163a3d9956f34e51b56ebb13b227309377fd643f25865acf694fc94323fb62a5a23
9b162dc790727be6c27c755e4c59ec5ccf12a88c5e12b32476896737ec8e754c84892944444e4fe83580e5f448f525a95ace00969d25bb5daaa60219
d2132d23898dd985f7bd52f9b3b30b9e3a37ee0be61af24834bfb177efde27d99a3e33370c7bb266d03defdf95f5c55b8fee2c8de35f13f55b48de65
447d2728f5770777b8a9d29850e90d0fb8ad01e33ad513f0aef3ad5157ba5f4e8ff08403baa23da95ebb075d0946355d3021624c3bfd469d7e620029
6ca4aeb14d576e5c69b27f7ecdae5fc4954ce63716c51726147a8b126598c8e299db252726a5a6e5c413213d89aa2e2c2758b9833c1cb8e665fe01ff
e2b1771e1ff3eecc63efd4efdc7f68fdd697378e3e36a7ecd4f8cf99e5179892d058fde7bfa5a49cbc2bab66f5cfd7ef7ab2b4ac3c39f5a0d7fbe181
a75f15e7bb88a4bc83744aa273b6d81fc7ac680544eb2040b31650182e36328b093caa41b6e8a7dc4c845975c22c82b073fd1a9bb21c42ae57cef56b
ca225a74c1caa748b8a784483b9b29481f0ae3c95d3e09cf8116c1ba402aeb823dd92876bfe57eeb5856cce6b1a77019b392288d2c11b31dd96e9fc3
e748cc41954b8ce7f0f3e74fb53ea6a4b47c8a675ab277f3002b38a99fe44fe522c23c0e1ef3fbe418cd51698f8b0968ae807d85550ac062eb4a6d47
7ca48799d00326bb1a6f6f611de562ef60a9ece2b49088ec8dd7c401162798c4c31b83d20927fd72089e83db0577884548e3cf18dd1ae83aae6b334b
e6e7f8b78f9d9c36e1f813afbdf7de6b0ffc728c727e2f5f1b16c6af7df557fe9dd77bfaaecc435bb61c4a4e256eaf26ec6b747b920ce3fcc9e12a58
2b2d108850039e889df6806545d21acfca144b92d1131d1feec1c484d8143230a44457741373a5e5ca6df5f1bb28a26067a43378463ead9c5689ee03
f1d244369125a96e57441057e6eece7c4912b613e2f30a7394981521ed58be6ddb7202661cf9e2c877cf86f53df0c465a6f0eb9ff0567e8de5b3d891
2f62df23db7f79f4e82fb71f9116d625a7f2bff16f1f9ec8bffdfa73fe956ea026b19df1c242ed266d9a46325161b23f4a714828a143267ba1903c50
41263350357bcbfb8dbaedcee8600708485d8480c6bd49619b9f16d3402319397af51eef778e93988a314aae3254998ab550ab6aa42d2418e66389bb
f178eb2767196fcd56ce8f6d5eac7411b1e1f3c4dfe775fefa2802bfc79f1245dc4d5303f1dd02ce35f12bd35ece8cb22477f6b8933d6146b2de64c2
c3126333ed2d8d4d371a9b74c6b69f55bd954b87b4033353ba93ad49cece8a1046463faebea4e49c1e3dc3db07906648cf57efdc595dbd6b27dfb964
0db4fde5125fb378edcbfcfbefbfe7dfef18ba66e99275eb962c5d23fd767355d5e6172bab368ff51e58f4c6071fbcb1e88037e9edd517bef8e2c2ea
b759e1dc254be61290c62c268aaa88a2285d637c5a4234ab84e88069a71c8015110901fb9a8895299ac793181e0f49491eabae30847ebb4ffa9c7fd7
ae2f118dd16fc51c8f3dee391ef7567c6382b6d7d9e0fcd289a431bd74dd7686db485720a7076407b5242995b593453cb83c72cb08d2933e07667ccc
6f32fb270c9983bfce3f1bb9850d08e952026909b332e7d84759d8d79fb308dd9d6de38fc44b1bda3549589feba43427651fc94b038fdfa62e9577c1
5289cc8e0c51067b0b5914e1316ee82ed927ecc2f5b3f4e35cf6719aadfb727db61152fce1d252d0b6ca4b61974933282c9a9630e94be86efd4a8bf0
cae1c16574ef7e56f8775aa8f542bb8fbfc00a943fe1f6103656d822a95b645a4a069f40a69190b19f13ae545804715118decc05e0f6ddd7f7d20a76
f1be4bb7a62678cb7f3738280d576449f84ce630a1091c12595793468e54159d46079a0ce201d95b6dbdb0b68ad140b1bc88108d8a89fc7963a43828
fdae9c6b6ab7adbad3bc5518be693f3ea23e3ee980d7c2d844ffa030162685696186301847594829ac04a3c60c928a463982454b63d93829df32954d
9316b0f9d2d338477e525b60a862cba545968dd226ac912383e698c2111f26a24f6ae0d7a4145efe9994fb87e5ad3f5b7e5eb1b546e3fee62eac822f
267a4f11e5b544b9011cd0ddef86f5c6c56cbddd20d94da0445bb3c063949d7a8444e74a3ffb42aa070ac20959e60869594aa25ea633b6ee06cb6109
fc323fcd07b16dec00abe1d3783e2f54326e3ec9a25877d69545eee21bf822fe2caf215341bbcb3edadd08e97ea7ba5e96d6c362f935129d46c65f26
35202d680c6e4a5a7020cc4cdbea665d80ef144e682d95f25b6bdf13067be8ded65e105c93f26eb1e6427fbce66012931c14510cd224845f19149569
9247eea985d66f0986a319a46fb9ed4ef01b01249204b19d3fb3a7d45b1b2addab4d978ab54592a632a3ea66316a1e1ba63eccc6a953d87475a1ba8c
3da7aea72c669bd9ae5b38628e836ebe53cc2ed534f2ebad8f372ae76f26c8979bbbc8976f2684384f9e86f25b0fbceecf21fd4193ea401915872ce3
204a0cdd28bbd71b5debad8bcdb2a2a2c3089e089b628a8e961d035d268f458e130c22d347018a23289d7e226474e68aaba389d6e3b403fe785dc59e
0a670a284c21a5d26437b8994b8ac048390552588a948a696aaa966a48357ae37bb29e521ecb93a629f3e479ca93e1cbd5e5da4675a39630510f7622
c37dd89d7561c2e77b8545259a83d60657dd5d3ee0cc85df0c7f7ec1c5f7d8bb0c5a96b6aee06bd7af5f2b3544543fcba7b18a9a49ad2b94f31ffd71
d511e9fed66b554b972e1376a59c6c6537cafd4c947d3750f49a608e34da604fa45a6f73782b138e78ea7d758e95911688c428abd1604e40836b482a
71e1fd734d59594149365eb9d142eaf2b6ee0b1c8211fe59997199f1990999deccc4cca48169fe387fbc3fc1eff527fa93f2e3f2e3f313f2bdf989f9
49f969a569cbe2aae2ab12aabc5589cb92aad30269d7d3e2dba7b64f6a9f50105f9050e02d482c8d2f4d28f596262e8a5f94b0c8bb2831aaa307efcf
7a397c39c22ca7929fc94eec180b4648c72eed5b5cb2a9beae6e60c3f27da75b6f32e9950d0587c64c3936e1bfae4bd9c5e593ca2e1c4c1fd9ba786f
71e189ed6f1e77563cdfbdfbdeb4b416a1e7b3db3ec5abc4ab6818e88f854ab65cb6555a979bea1d727d24312946735a61a86b488cbd8572b9902fe1
37aed9bfbb96e93787c5da6317c556c706621542560f314208f772eb3e241863e0d5512fe5bff1f6db6fe4bf34eabe9d135bf947ac1b531fda2ee7ec
ebd2e5d333673eedd2656f72321bc06cccc9faf8488284953c41759145f5406f7f744c3dd85cf58a61a5ad8e6d2025038374afc3691e12a7db942cdd
4d5c119ebbf15ae6a182f845f18178d42d4b8869948c01318a75f0cfb8bdaeaecfeb4f9f6e83b6d34fbfdeface2b6bd7eedebd76ed2b78487aecc7a6
dd45856c3033d035b890bb4f5fbd7a9a2084570571cb05b194192493d61b2b0dcb15f71ea6d45bd8d1a87a679d65a527d62d19dc06182139c3867874
141bf5ec47302f18b8dd087ae2f48171a57181b80fe2aec7290361201b280d740f8c55ba6a19860c6357530994b012a9c45d126b9c385b3038510f3d
75deeab69262394d67ba2657b41cb09c39fcf83b93267ff004bfc1df61e92d9f30ad4edab97c73bd4d7a6cc2b1777af4d8dfb92bebcd4c2c9cddc3ff
dcb8e1e0fead420332e8c8fc40bc0e87f17e8f626716c31e955551c2ad3698a4700ac38c8ac11a661ee912998d4904cc6611308fa8b5e9751148f76b
24ffd8e8d40fce952ce16b2917a098e990df9def0eb829b4701392714cb7b724936ca1c4d20fb593ef6319fcc3fadadafd6faaae4df9d326af6ec9c0
0f578f3afaaae0351f2b4f205e9b290b1beef7455be28ccecaf088fa30ac4ff5d5a53518ebc3de8c894b8d0683e55ed5e9f40e49d7e3b7a03a345e09
2a043f2f389d4b5ad17951e74067a115ed710c7132d22edd8ee8fbb390aa50de1e1199938ddb77ae7f61e7ce17d6eface3bcb970df030f6c7df0d707
730f3cfdbb9696df3d7d20b74eeaffeec58befbe73f1e2d7fc13fe655cfc1b5d3bbff99b47264f627d988875fb4c9abc57f097421db948e76f0f7f34
1a016d4cadb239ea2c1b4c4c32c0286181f25cc2068b1708e442086f87333297fca35bf78f3e471065aa64eb9948845c54f7f4d3ebf7d5d70f7a63de
89b7a51dad8f4a5bb76d3db6a3b54a75b56e9d52f4adb0822768f385b4af8867ba5074754c7e1d1a28ba32c890772bbaa2c088ceb1dde837e61b0b8c
a5463ac7e1d90e3dd83a51473fb9e06640757d29e8b8bd5ed261d8203103e4c9f6e02b864cbfd5aef8957ca5402955ae9373d417a10554d78f4db778
40b20c879e77f2a0e19ff3e04a3b0f0e16b87fef967eca05f7bfe1825cb05f3021785ee7e93a14493a14aed63ba1de5227dedb38c31e40a77bc84fde
dbf87d03a3cba15cadd02a0c15c60a5385b9dc5261adb0558455d82b1ce5ce40f4f568c79d99d51daf77ca5ed8f7eafa75fbf6adbbce9cfcdaf5bff2
6f99032f5d3d75eaea17efbef3e516fe2e6fe2dfd0e1cca533e862bd09c323a4e53b084361e906f863db2d5d9d6d257b131be2c8caddabdbbb3cdd4b
670571bdd26eecfcc6a0b5fb385e6613536eb1867091c85574547556565fdfe7f5f2f7a1adedfdf2d7a5de64ef5e11b0bb75bf6ada5b54c81bf80f74
3514b2afdbcd5d506e389cb07300a599aa59038719ab6c75c606cda41ac090e714474ed723b271e7de1746ed607ef8b67021b1a037b82dae481c9e30
aceb9657088f23cbc2bb7bf0a0d371fa58eb011256f16445fcafa5847cd13bb45b1a5cf5f7b35a249b797442bcc12869a6d10909f1834ce6f8048a34
2ad90ad955e95e11257c540af9a84ef1267342ac060fc61a6c9ac19534a493c0ea5cd31571f27373db9dd677c269099dd2231a9b8866345b28a68134
11d3ccf4983c668fa53b99deaee6ae96bec6bea6bee6be16b317bc2c59ea64ea64ee1c9ee1ca70778ee814df2921dd9b9e989c5669aa34575a2aad4e
f19f1649524daa192d68451b86a11da3310663d123c719d332d207a6ff2cbd227d517a757a20fd7a7a14853fb36ffbcc04fdbd8feaebf88221838944
b327f10e9f1fb57bc28a15935e18d8b8f3fb3f4e3839a3f8edc2252ba7bcea7f75e3c7bf2b3e280fdcdfa9d39831fe6189b6ce9b566c39e4f31dcbc9
19ffc088fc94b0e4f54bb6eed3b3730a6ba5bf295be90c9247b5298630dc43194a83a1ca64261e938ed99d36710675639e158ad2832f78c816fd2a68
8b84057745f415f63c35475872077b9295f36523cade7cf3fcf6aa2a652b7f6b756b60c5a8cddbfe2015ac66038425da4fa7709c7efa5dd0d7efb97d
fe579a5883abce42a7df651e457620cf2d8e636e50a3ae64dd320225eee3c2088493fd0b1ebb5b9e3d95ed1746e0b5baba7b5e9f77e25df67b7644da
d55ab86ddbb11d52f9cdc0bee2c9d771b7a0be3f59a00ab90054b8e94fd35f4b4814d42ba24049058a3a01d44114dcff4651154ac0141934f13e5577
78107478ae31e2fda6787d04fa6bbdc8e01bcd0e31b2e1564e26de7efd62a8f4b8542e554895d222698db44332888d8c94848978390663e4544865e9
982e7b0d3990c3fa601f39d390079413e030394f19aafa0d63612c1b8fe3e57c433114b3e9385d9eaa4c530b0cf3602e2bc7728aa89f5297c132b602
57c82b944ab5066ad80669336e94372a1bd4ddca2b6aade1b8e192a1cd30406414d94691d2f53fc91e638f9de48f36cb052d6370dfcd0071a82f7168
2171c8cceef1e72922d991298fd04441f9abc4d02149cceca0912687d1c44461a65c5b333a0c066d904993996c20ee49a11a116b093250bc667b6444
ad5ddc1c3afbd4767e8a7af00d71a32332944045fe537efe33fe6e34c9b22946769b524dfde5bb4c0fc90f6be34cc5a6f9ec2979be36d7b44a5e62da
246f9337686b4dd5a65d6c8ffc2b79a7f6b22960f298505614a3c91c836ec56d8c31a763aa9262ec6cf65afbb05ceca5f4d07a1a73cd99d66198a70c
310e37fbade3851ca4f1f8b032561daf8d358c358e37e75b4bac0b5885f545f682f62adba1d55a7f6fbd646db36688e459f21919fd11c3e522fe04db
7b811fe1472eb037f89c0b2c9da5cb05ad975a4fb03a3e541a2e45f0d96cb5d0d2dad616ccd7d692bf02e6cbb1873bc3b3014951cb4e068eec0af096
0915ad2d5fe106f68994c9b0f5bf7865ebb5966f82f35899b642fc575478deda9327b5157f2fd357e42e5a7185bea25b2ce89368c11c5656b1ee74e0
83ea80b6e2ab969dfc61eee253d810768d32b7acaf7e8a498e9d0e44b633dce9704bab2a26f096c0ae23018148b8e4640b251b6f6dfd80c7b74cff8a
2ce02a7dde0afd6b1cb2fd9998afa3a2ff07bc232ed96e5a147d625169d5c940f50781d3eb2a0432675affc49dfc287b9135b13df848284255f652e4
910c43fde1a97a406a498cb2c61b1c9644bb6b648afe7646582d7b3f118752aaef7718ad8e3d4e29a60aa236a809ce06735846bfcfb3b278bf6b5914
9c6665de1190de0e4a75c3a68907c2ee2a7bdb2354aee941eafedac969a9ecc73ba2d5f6887553a74ed3260723d70cf1bf13557cb1120d0ffa1db179
1069880873c906034698d49131b7f1e5fdc89ffb9d0632c5f62a5bd4b188d76d1b8cd0a03081ed35aebfe3cda200ad8d52ac6a4ab5ec7aa2f58f2813
c6c4519f3c3c88e96b6fd40bcc7facaf17f15b3b8e877f25906607be0cf13484632f7f58641e4585268bc160979db6911102bf207a023b8afdf71865
4a081cc606ab2410e33a564c67dd4fc37ad266a8e7876f47f6220454ca7f12dbd3ee6a0bed9e0edd60823f2a232fb28ba1b33dd66d88e96c8404d590
1c6f4c4a1dd9fd36a31ab3c4bd456757646c826f4fb28372946ec73abf6e870d115a7243745c6246bf7e57b2b2c86d34d99bb2e82f28e590347bf5ec
758b55ed32ef908c28c43d919008f13eec491bb594447d9f7440b033247b246e92d883527ec8139e2698d9cedc76d2a476deead4a5c2287f44a73cab
c11e11e532d88de21f2989b1c604dfc8b40e94e984e96a10e5f1ee4974485596d40d6e2db1212c263e48d28d7eff484fcfec9fb0ff279955504f3bca
2244c72d1a5eed28975bb2097e2f228ddfb4e9d00ffe9f85f5fb0e12829f4f7df8aced4a7bf9fd472d236de38de2bbaff66fabf479da4c1e0760e3df
7fd4fc806dfc3f7ce191239fd1bf7700891223e9798a2d13a0966007d5b72a9150457099a086600b411181e85f4db09be07982c5d201b84eb0553d08
1794856057d3e194bc194e292d04aba97e15cae56298adb860b6dc04b3a50f2143d415271c9172e1840051973fd3c71cc1e154ef0225e8835ed4bf5f
6e80fe047dc980d5de09d22a512a4b693d5310b4069275c61d34f686c95041b9d20f14177d2d0d955e41273e8cc7e5cef23079b6bc56095346290b94
f7d5b16a93d64b5baa7d63c835bc64b86064c604e323c67dc69ba62cd36573bcb9c0fc57cb1b96df5b2e5a7eb0baac49d601d61aeb17b6345b8dedbb
908c7ae218e802d3c0a2bf79de242421bba5082a65fddb9709e2fff6b2910667eadfcb883a83086a05eb1218585ea88e1dfae50e7505a2d8a8505d05
172b867ba0044a6121cc81e93095769fab7f0934994eb217b22093ae6caa4da2115e184463e64219c11c9802853013ba52ef309845e3bb53ed6e9841
97171ebcb55699de9a42e5149a339fee4534d2f43fd8b5e7ad5dc7d04ef3692ff1a5c62c1a2df028a439ffbb1d0753ed719a3716e6d188c934b6505f
6d8a3ea350a7c84babcca27b298d9944eb4ea7715e9a5f42bb17eacf7ebace687d9532c2a884ae27a857ec5a46634bf495b268ef6cc8b96356fb1c29
a8606dcfeadfb1fde3afa7ae17e29b46f1fd62981e6d8baf2323218afc500cc4eadf4ec6d32ec99461a5eb5fe08a2fb7ee85a12493e1300246c2fdf0
007163343c443b3f0ce3603c3c22223f263385a94ca39cff389c604666d2e6cd9a9e959d3b2854de132a0787ca21a1322f58de9d29cabc4199ed6576
a8ec0150272df2b7dde4d8ecc21f53f0872cfcbe06ff6ec3ef38dee0f85f29f8371bfeb506afa7e0b7cfddad7ccbf15a0d7e53834dcdf875337ec5f1
cb3ef8c520bccaf1f32cfcecca68e5b31abc4203af8cc64f3fc9503e6dc64f32f032c78f395ecac2bfb8f0cf357891e39f9cf89fcfe085a3f8478e1f
d1f08f9ec1f3e7ee55ce3f83e7eec5b37f8855ce72fc432c7ec8f1038ebfe7f83b8e676af0fdd3f1cafb1c4fc7e37b59788ae3dbcb1ccadb1efc6d04
36723cc9f12d8e27381ee7f81b8ec738bec9b181e3518e471c585f99a2d473ac3b7c54a9e378f8d044e5f0513cbc483ef4eb14e5d0447f1b1ef2cbbf
4ec1831cdfa8c1031c5fe758cbf1571cf717e16b36dcf76a8ab2af085fddeb545e4dc1bd4edc4348ef69c6dd1c5fe1b88be34e27eee0f8f2769bf272
166eb7e12f8b3040430235b88de3d6972c9497e14b16dcf262b4b2a5085fdc6c575e8cc6cd76dc64c28d1c37d458950d1c6bacb89e26adafc117d6d9
94173ae13a1bae6dc635d54795351cab574f54aa8f62f52279f52f5294d51371b55ffe450aaee2b8f2f9eeca4a8ecf77c7e788cce7eec615cbcdca0a
172e376315755415612571aa32059739f0e71c972e71284b392e71e0628e8b385670f4b73dfbcc33cab31c9f79069f2ec2f2316ea53c059fe2b890e3
021b3e69c1f9269cc7716e339635e39c669cdd8ca51c4b38cee23823119fe0f8b86390f2f8689cce71da3338951ac51ca7702ce23899e3248e857db0
a0191fb3e0448e8f709cc071fc389332be19c799f0e18868e5e12c1ccbf121daf9a14138c68da3995d191d850fbaf081e1e1ca031cf3cd783fc751f7
d995511cefb3e3488e23e8c9088ec387d995e1e1382cceaa0cb3e3502bdecb31af0687d4e0608ef748dd947b9a71d051bc7b04fa390ee438a0bf5319
e0c2fefdc294fe4eecd7d7aaf4f3b785615f2bf6e198cbb1772f97d2bb197bf5b42bbd5cd833c7acf4b4638e197bc463b615b3ee322b591cef326366
8659c9b4628619bb77332addedd8cd885db3b04be714a54b11764e772a9d5330dd899dd252944e77635a0aa6a69895d4304c316332471fc7a4304c24
3a139de82dc284668c2712e28b30ce8a1ee2a087636c33c60cc2686a44738c2ac248e25424c7089a14118d6e8e2e8ee11c9d34c0c92943eca63806a1
fd190c2b421b47ab2542b172b4d0684b049a399aec68e468a061068e9a0bd52294e9a14c1ae046ea454e99865d91ba21b323706475ac68d92ad6e5ff
871ffcbf46e0fff88bfb6fa8470353
>}
\immediate\pdfobj useobjnum \csname EF7O13\endcsname {[ 32 [ 318 ] 40 [ 390 390 ] 44 [ 318 ] 48 [ 636 636 636 636 636 636 636 636 636 636 337 337 ] 61 [ 838 ] 65 [ 684 ] 68 [ 770 ] 70 [ 575 ] 78 [ 748 ] 82 [ 695 ] 97 [ 613 ] 99 [ 550 635 615 352 635 634 278 ] 108 [ 278 ] 110 [ 634 612 635 ] 114 [ 411 521 392 634 592 ] 120 [ 592 ] ]}
\immediate\pdfobj useobjnum \csname EF7O14\endcsname stream attr{ /Filter [/ASCIIHexDecode /FlateDecode]}{
789ca58e470e83500c449f442fa1b710123a21b9ff01b1befe8a051b2c8d9fc7b62cc3cd304edec452b4b57770f1f00908791011939092c924d71b85
caa5a83addaa359bcb0f9ea296171d6f3ef40caa3b8a266616e1cac6979d9fd4ff03beee0385
>}
\immediate\pdfobj useobjnum \csname EF7O15\endcsname stream attr{ /Filter [/ASCIIHexDecode /FlateDecode]}{
789c5d933f6f833010c5773e85c7748808909844424855ba30f48f4a3ba10c601f115231c890816f5f9b6753a948e4a77bbef33b3b47782d5e0ad5cd
2cfcd0832869666da7a4a669786841aca17ba7822866b213b38bd65fd1d763109ae2729966ea0bd50e4196b1f0d32c4eb35ed8ee590e0d3d058cb1f0
5d4bd29dbab3ddf7b584543ec6f1877a52333b0479ce24b566bbd77a7cab7b62e15abc2fa459efe6656fcafe32be969158bcc6115a1283a469ac05e9
5add29c80ee6c959d69a270f48c97feb114759d36ef9b1cd072af0b6ca67c817276f21560542e15605e4047b250d641f46400c24c01138011c480198
2517a0f69bc2432294ce43423ec2c4a20221c3c6a20221c3cfa2022113647232413ea1738b0a5c650e4bee2cb9b3e438207737e74334c2715e0e7f8e
f3f2b3afc006b85cee2e97bbcbe5682e757f940b79ebd5352945aba93b980fd1438a1ed293cf4109dc2d2af06627c78f881d223bf1db848a87d66638
d7cf629d4a3b8f9da2edcb1987d156d9f71741cfe005
>}
\immediate\pdfobj useobjnum \csname EF7O16\endcsname {<< /M0 \csname EF7O17\endcsname\space 0 R /M1 \csname EF7O18\endcsname\space 0 R /M2 \csname EF7O19\endcsname\space 0 R /M3 \csname EF7O20\endcsname\space 0 R /M4 \csname EF7O21\endcsname\space 0 R >>}
\immediate\pdfobj useobjnum \csname EF7O17\endcsname stream attr{/Type /XObject /Subtype /Form /BBox [ -6.15 -6.15 6.15 6.15 ] /Filter [/ASCIIHexDecode /FlateDecode]}{
789c6d90310ec4200c047bbf820f2cb24310d05e79dfa08922ddffdb4b085a258106b0f18ebd36b78bbaaf1c07cc5b743f511f742d79ed09f5b1a468
e90c75c97929472a5bb0141cf8aac25f50d1f420efc2b5d2c6659ad8ae23f586bf4159fee4b6abcaab1f4621667c8c63e03927383f5ec630f18ee992
30d926860ee85636918ffc0171d84eaf
>}
\immediate\pdfobj useobjnum \csname EF7O18\endcsname stream attr{/Type /XObject /Subtype /Form /BBox [ -6.15 -6.15 6.15 6.15 ] /Filter [/ASCIIHexDecode /FlateDecode]}{
789c3350c8e23250f0e2d235d43334558090b95c489c1c0807cad645e6647071397101008af70be6
>}
\immediate\pdfobj useobjnum \csname EF7O19\endcsname stream attr{/Type /XObject /Subtype /Form /BBox [ -6.15 -6.15 6.15 6.15 ] /Filter [/ASCIIHexDecode /FlateDecode]}{
789c6d90310ec4200c047bbf820f2cb24310d05e79dfa08922ddffdb4b085a258106b0f18ebd36b78bbaaf1c07cc5b743f511f742d79ed09f5b1a468
e90c75c97929472a5bb0141cf8aac25f50d1f420efc2b5d2c6659ad8ae23f586bf4159fee4b6abcaab1f4621667c8c63e03927383f5ec630f18ee992
30d926860ee85636918ffc0171d84eaf
>}
\immediate\pdfobj useobjnum \csname EF7O20\endcsname stream attr{/Type /XObject /Subtype /Form /BBox [ -6.15 -6.15 6.15 6.15 ] /Filter [/ASCIIHexDecode /FlateDecode]}{
789c3350c8e23250f0e2d235d43334558090b95c489c1c0807cad645e6647071397101008af70be6
>}
\immediate\pdfobj useobjnum \csname EF7O21\endcsname stream attr{/Type /XObject /Subtype /Form /BBox [ -6.15 -6.15 6.15 6.15 ] /Filter [/ASCIIHexDecode /FlateDecode]}{
789c3350c8e23250f00262433d4353855c2e5d300d2173b85038195c5c4e5c00e03408e4
>}
\immediate\pdfobj useobjnum \csname EF7O22\endcsname {<< /A1 << /Type /ExtGState /CA 0 /ca 1 >> /A2 << /Type /ExtGState /CA 0.14 /ca 1 >> /A3 << /Type /ExtGState /CA 1 /ca 1 >> >>}
\immediate\pdfobj useobjnum \csname EF7O23\endcsname {<<  >>}
\immediate\pdfobj useobjnum \csname EF7O24\endcsname {<<  >>}
\immediate\pdfobj useobjnum \csname EF7O25\endcsname stream attr{/Type /XObject /Subtype /Form /FormType 1 /BBox [0 0 306 244.8] /Resources << /Font \csname EF7O1\endcsname\space 0 R /XObject \csname EF7O16\endcsname\space 0 R /ExtGState \csname EF7O22\endcsname\space 0 R /Pattern \csname EF7O23\endcsname\space 0 R /Shading \csname EF7O24\endcsname\space 0 R /ProcSet [ /PDF /Text /ImageB /ImageC /ImageI ] >> /Filter [/ASCIIHexDecode /FlateDecode]}{
78daed5c5b8f5c450eeee7f32bce6346da54ea7e61b50f64592221ed4a81483c000f681242d084904980fdf9fbf9f4e92abb7afa72e6d6c08ec2303d
5fbb7d6c97cbe5aab2fbfd68468d7f667cace9bff3b7c393cf5efdf6e6fcd597cf9e8efffc8aff75fe6130e34ff8798d0ffc849fdff1b167f8793d10
8bb783d311bf2fa6dfd67b95f15ad7573f0ec30fc3934f41fe0154cf86c146654b32268cc66b6542d6c14e5cac2aa5a4c8e10b065b5b54b226ba00b8
3111f0f4b0f7e3558f00658a2a691b7d187351019fc8402f5f8d5f8f3f8f4f3eb524a11dbf8066d012b2ff3e68158b3625ea943d5496467159e5e483
ce9d1e1516920d5f0dcfc7f7788c5b1be2757b0c998590c31c8d8b2a6e2c333c9d38de8aae0384d8a76bcacaa49ca391923578b1ae07399e4c576392
f2d959e3a5680c5faced113c4fa76f00bb9c424c9d6c0d5faeef619ea7d3b744a58b29b613adc2cbb53dc4f164ba5a17942bd126274563f8626d8fe0
793a7d93073b1d6dec646bf8727d0ff33c9dbeec21562593b4e9d7d20a1fa9edd51c2d5e6fc193ae06ebe4fb81d289c7944f98a0b4d6761de734a48e
96b28ba72f86279f9b11ca8f2f7e98f286172f876fc6472b7d367e37bef862f8d78be1f949ec169d3258707296766bf062bb718ecc6e82e31ebb45ad
82b3dee5bd76b3a7b65b762a46eda297766bf062bb718ecc6e82e31ebb658357de6029dd67377f62bb59ed9571f8bf305b43975a8df36b4613fc76db
cc6aa37c0aa9ec35593cb5c9ac57708064a2b45983171b8d736456131cf798cd1a557cf231eeb55be6767b3f4cbb2d6256ea4cb7cabb92bd4b6166a4
c926c88f8805314dca12d347ab8bb3f1c54f83551ae1375b6b635c3fe7d1eaddf48e07a710632e3eb8cd3bafe77722acec6da1c8f3d828e3d70f9c9f
009dfcfa1176260f25b968bdc6986c515779be7d340b946df6c558a3cbe6b1bf6c1eabad0e185a2c14f33baf36a246ed7db188869b772e676e25eb64
6c2ebaaaf7e61a8abfdc29c1b767d35b37f364c73d1941274caeac9b2b6bca83e89533f44740504ad12728f565efe562cfc7d65b0f095c70989926c0
2cc1664a57028d0ec6a6c8d584fc28c06e124d10c3a402dae494b73e58726bb6c1e27046aaea30f056c2c5a982144777b041dc48d1784af578788617
a868adf33dceb63902b748193d5cad0bf3c63985e055428f7b981b89968d3d9e55816b941e67db0d814738c7e4d9a3818a1936d594ce986455d0c5d1
68083cc39618191a0e8967659dd1bec75be6cf61ab1d0d65cfc51aa33252481a1389c39af00dd7b3b1091942f0bac759022e700f6bc23f7c8f07cc6f
a846a3227158133e42a322f018558693841e6789b0c033ac092f293d5e345cc93ada58ca754f63521d39932e29c2cb7937eca39e033905e1cd944b25
7891cf9ebf1d9ffc5b8f9fbdaba449e5a873d29875504263b0439e12b3db21d4f7c749ffa564d10ff67820f813f9c5ad671b7856ce788c2d142335f2
2687840ad22963cb1cf716641bc8e3b088d102c1b20d9683f0bc222045c51a4cab86805b6ec25305ac2578112999e4704b59041a95c3724d0b8cc859
6a26c31771ac3a166b789f0bb00487c3588b909dd83e4360798f800b56406dfabc81a54302c6f8230b707d96c1b224911de8808d8f295b5906cb9e04
6e9c0a481bb6b20f9655091cdb138f7c622b2b61d996c48bc21a1eba74822761124f58b383ebb9b3dc4ce21e1b362407a6c75bce26707860866bf459
0fcfe5041e311b90c5f4d910cbf1789661221c12fe41279f1c6e999f84236dd192efa85b3e28618fb9a823dd1008b86689022e7049d0e6d0c1f51c9a
a15623d3816f84d0c19b8c52a27048388695070b2ccf947050096e517aea9a7d0ad8c2c3a14a97bcb19c54c00eae18a76028e176fc2ce0a29c8da91b
0496bf4a187e98a7d30b01b7ac56c2b4d3cddef7d435d71530e253813b74dab40458c0884f09de503abbb23367016715e10dfd28b46459c270407843
2f094ba10d9c2e6266649e431fb53e5cae6f74d77974fbc4b0ef1347e7d1862f90f052a73d31cfae5f206f8150df1f27fd9792453fd8e381e0c12f1e
081e081e08fe4271e239bf0db25346b4b909da7d81ceeff6af2e2f437ab5ac7a6d938fcdf75153a635bee6974aec49488ae727b55ba5ccee94a64ba5
4fcfb037a61d0c7673f8f325febd597dc4cf6fab57ab7175b9fa757581571f569fe0af578c7878b4faefeafbd53988c7d52f20b9c487dee1e323883f
e29d37740b435bc7f58dcf23907e8fb7c0eb6cfc6ed8dc634d27114a87f1fa771ece22cb9f8d8bbcd362af1ac884053b9c1ebe183c59651bbee3d362
2e0d36422135211d72f05238ec68dfee23e5e8b4692a812a2101666c87729ec082cd3d55420c54151142f17e52c7604b61f2482e93aa8e1e3b89603a
70567c6250610c842dc694917335b0ae89703f2601c0e012ed6ecf07262de0844914fdc8f40298ab5e75442a784e8748098033823663435242c9822b
36bec6156f3b0912e64ac6d646488b3deb965e1c643690f06c2fceb55a9649c0478149cb46ace925c676db0fcea9d0f4e9a0a792d867fd94c60654e7
f92359b9548c98d2db17c55f9e8d9e4e0ee8c8c997695eaf67f1b82ab24046ccbeeb9d014e475b86b6c17cf29588cd7e87d2899483fc5bf03dec3287
1254b0d393f9f4a393b61ebd6068f3c70bc681a375e0a6fa80edc1331a5ee0d68fb8e9e0995590c3379741f3526a3d9752f3250712c4906390e50e15
e545d1868484e560355e142de0a9285a9e3f37667756142db468c7ce5cb06535d15733b441d98d59b64b361c94b39054fbe988222a1f9c8be1ba1569
7766365648cdb46447d94bcd7688e17eb325a7723625874299c071663327311c2fca668af2a3f1a5a63bcc72bff1e8b8dee4e85c2e561f6b3d7b1aeb
b1126fae2a3b6a5f6cbd832c0f582f58e58b9eee479c3fd27aee34d66b27f55cd356fbb1d87607181eb05c41de5582f61e33fc58cbf993588e9fff33
4d7901cb52db1d66b9df7a745381f4d795946371475a2f9cc67aec6281abcaca70165bef20cb03d64b48c335160b9d833dd67a515a8fbb32b6889bca
c34dd1e58102c9cfa6c23ea47c197636519b232a08bfdf54107a285ab445bce94b27b72a0877571dfe5ceb273d72c1a49d39fc9cdd659de3fc8e43ea
1203362be97075e3eefacadd35991f37759c1ec36b5c4cf59d1f77eab34bb6279fdb36206f77555adeddbca88f70b4abf4d9ca22eb862ead17e6fc5a
b9b0e0b7a75a382b780a362b7fbcb4b33d22d0dad1ed3836d8526b355ecd568cd76e4bf98259af6dd0e1baad0ff760a96411083336aac2560d5d6a2d
ceafd94bf0db6db1585434480d0eadf227b55881560e9b09794edad0a516e3fc9ac504bfdd162b5ae9ec1007c3757b1eeec162b485b07861bba6a406
2f6eaee11c59738de0b8a7b9c66815a80629871bb53cd0d19887f5a27de8783875c7c3d17ebc55813839f28dfb1df8a2da0a10bd81003a462b0a1039
daea0fdba2d28a0f230ce72c36dfa2f690a3b5f490c7d85679487538da53f9102b3c1460ad3be421a71dad70b4551d0ab4161d0ab4d61c72b4951c0a
b4551c5274c0beba9cae82be0e6458dfc0d963afeaeca14bbf5be0f4fff148fd6720b84a9b9b05a5b6b07e33fd2647c402a4c797d7be9c6445bbe210
86d5f25eb1b9beb5a87ae3baee162979546d91b2c55411296b4c15688daa3c54b6a8caa3628daa02ac515584ca4d541560abe666118d9d803394d572
73b495725f1513efbf22b20ec51c17ddb133e31608f59ff991f7c86911c11d06ab1bdce54e29038f55ac79e12e4395f19b000b297d8ed44200c165b0
5d9e00f2539516ac38da8295480b6bb012b4355871da1aac04588315675083550ba12d5409c21aaa1a650b540cab618a6175c41ad6123f86d5b4af61
6c65e2606d31e160ed2f61606b2ee160bdeb62606b2be160ed29e1e0a6a184db97f59330d2da4cc2b0d649c2c1da46c2c1da43c28782b590f0e58cdd
4109ead64022a85bff88846bff88805bff88846bff88805bff88846bff88846bff88805bff888437d7451c65fd2312def48f48b4f68f08b8f58f48b8
f68fb49161dd2382b6758f48b8768f08b8758f48b8dee948b8768f08b8758f48b8768f08b8758f48b8768f48b8768f08b8768f48b4768f08b8758f48
b85ebd48b8768f08b8758f48b8768f08b8758f8833de75a274545c178952fbc4b0ef13476f20fda1d5f806045deab09bf0f1d1944713eadb7b2695bd
252c18f120cba3298f26d40f047f18825378e92d487db4af394a8d5398bed4693fcba3298f267cf0b18789f040f04070331fbc462b8a28c1b9baa657
74a21c2e183edc8752f74cd80e05331643c5d331a7b4f3be6fbaf0fbfc6c3a1b2c3e25ba88f680deac2ea78e9271f5ebeae7a9a744ab821c5907dae3
d2a518de7eb7ba983a56a8fde4257eff820ffcb8fa84779aec168ebe64a1b8803c73bf707b4ab3fd2a4c1d3266e5567fc7ef9e7468a46ea567d2b843
3c3646d6cf63c4ef37e9e2afdd28e6f57de27fe682a5f9fbdfba2bbee9768f5a7c44711136215487d3e0cd25295d134e97aadd1d628fef60b3b9a69c
05edca96a68a2526e3faea12bb2abab56c4c36554161ba793d48feeb328dae16f1c38c229510cffc64e6429d1de6b0883f6cae76b1e9d7d737e32ef2
8fb305a8bc8a4b7fb54ebb847c591f9a0f0eddb2279ecfa893beb88bc92e3517faeedf76e879b5fe545b36cf9caf2782a23c7fb7ceab7f4cef6665a3
1152a78d14596aa9f7e2b7d5f346e75cde456aeea36338e7102b13dd2b79fa52871ea7b32a2c1357e077ddf626e4c9ca6426a7f7269ac2714293f386
be3a06a8c3f812011dd5f992a9dd0368c0285038048f80891a3c15274f5d361e328c84da5c35d53652e75887cefaaf79303c3a2c046b1e9573817797
e92593c3d02945a0efa618b8d0801d8c31b56d36050d9d62540537e66828f17074121b8391d44e0557b0504bce0e6b82f3d176626066611c74943253
49c8b67e0265d61078b51de3ccecdcc490a3d284e663d81494e3bded1dfbbbe1e8d415beb7fe4c24714b5b1d17b754b95bef873385e6bed1464e4cfa
6adcd4c374349a69b47bf83e3ae2e808978efe8c9c999612be1ebee07073d00bce84c387bae2ac813bf8f5436e3c86c8bd768ee1f52e952cf2c36483
cb5d74b501abc416be3e8a3557e0f77006ca456a23f0763a4ab66916898f237d1df815f8c1114bd4c93d7feec64396bb82c86178fe3f33a8a7c7
>}
\expandafter\gdef\csname EFWidth7\endcsname{306}
\expandafter\gdef\csname EFHeight7\endcsname{244.8}
\pdfobj reserveobjnum
\expandafter\xdef\csname EF8O1\endcsname{\the\pdflastobj}
\pdfobj reserveobjnum
\expandafter\xdef\csname EF8O2\endcsname{\the\pdflastobj}
\pdfobj reserveobjnum
\expandafter\xdef\csname EF8O3\endcsname{\the\pdflastobj}
\pdfobj reserveobjnum
\expandafter\xdef\csname EF8O4\endcsname{\the\pdflastobj}
\pdfobj reserveobjnum
\expandafter\xdef\csname EF8O5\endcsname{\the\pdflastobj}
\pdfobj reserveobjnum
\expandafter\xdef\csname EF8O6\endcsname{\the\pdflastobj}
\pdfobj reserveobjnum
\expandafter\xdef\csname EF8O7\endcsname{\the\pdflastobj}
\pdfobj reserveobjnum
\expandafter\xdef\csname EF8O8\endcsname{\the\pdflastobj}
\pdfobj reserveobjnum
\expandafter\xdef\csname EF8O9\endcsname{\the\pdflastobj}
\pdfobj reserveobjnum
\expandafter\xdef\csname EF8O10\endcsname{\the\pdflastobj}
\pdfobj reserveobjnum
\expandafter\xdef\csname EF8O11\endcsname{\the\pdflastobj}
\pdfobj reserveobjnum
\expandafter\xdef\csname EF8O12\endcsname{\the\pdflastobj}
\pdfobj reserveobjnum
\expandafter\xdef\csname EF8O13\endcsname{\the\pdflastobj}
\pdfobj reserveobjnum
\expandafter\xdef\csname EF8O14\endcsname{\the\pdflastobj}
\pdfobj reserveobjnum
\expandafter\xdef\csname EF8O15\endcsname{\the\pdflastobj}
\pdfobj reserveobjnum
\expandafter\xdef\csname EF8O16\endcsname{\the\pdflastobj}
\pdfobj reserveobjnum
\expandafter\xdef\csname EF8O17\endcsname{\the\pdflastobj}
\pdfobj reserveobjnum
\expandafter\xdef\csname EF8O18\endcsname{\the\pdflastobj}
\pdfobj reserveobjnum
\expandafter\xdef\csname EF8O19\endcsname{\the\pdflastobj}
\pdfobj reserveobjnum
\expandafter\xdef\csname EF8O20\endcsname{\the\pdflastobj}
\immediate\pdfobj useobjnum \csname EF8O1\endcsname {<< /F2 \csname EF8O2\endcsname\space 0 R /F1 \csname EF8O9\endcsname\space 0 R >>}
\immediate\pdfobj useobjnum \csname EF8O2\endcsname {<< /Type /Font /Subtype /Type0 /BaseFont /GCWXDV+DejaVuSans-Oblique /Encoding /Identity-H /DescendantFonts [ \csname EF8O3\endcsname\space 0 R ] /ToUnicode \csname EF8O8\endcsname\space 0 R >>}
\immediate\pdfobj useobjnum \csname EF8O3\endcsname {<< /Type /Font /Subtype /CIDFontType2 /BaseFont /GCWXDV+DejaVuSans-Oblique /CIDSystemInfo << /Registry <41646f6265> /Ordering <4964656e74697479> /Supplement 0 >> /FontDescriptor \csname EF8O4\endcsname\space 0 R /W \csname EF8O6\endcsname\space 0 R /CIDToGIDMap \csname EF8O7\endcsname\space 0 R >>}
\immediate\pdfobj useobjnum \csname EF8O4\endcsname {<< /Type /FontDescriptor /FontName /GCWXDV+DejaVuSans-Oblique /Flags 96 /FontBBox [ -1016 -351 1660 1068 ] /Ascent 929 /Descent -236 /CapHeight 0 /XHeight 0 /ItalicAngle 0 /StemV 0 /FontFile2 \csname EF8O5\endcsname\space 0 R /MaxWidth 989 >>}
\immediate\pdfobj useobjnum \csname EF8O5\endcsname stream attr{/Length1 4216 /Filter [/ASCIIHexDecode /FlateDecode]}{
789cb556797093c7157ffbbd6f2559972559f8926d8465198c1110190c361e10e00b0cd86063cc21b06cc93260cbc636e0037304132034e508553803
49292104884b49c6944edb3450c224cc9406a693b69924133a9dcc904c3a43c8c409ebbe4f36c4659a99e68feeeaedbedfeebbf6edeadb050600666a
643015e51714420c680058128dc616959596a7bd96f906e149844345e58b678d3e35e36dc2a708874bcb27b8d79c6bb1024879842b6b1b7dcdea3f6a
c2840f123e58bba1cd7efc446f3a008e219d86bae660e33fd7fceb2a3953e6f7047dadcda0a60adc4e581f6ce8a8dbb533b09ef0389229ae0ff8fcea
ca3f5800a2da683ebb9e06f4aff2eb84cf114eab6f6c6b771c953613fe13617b4353ad0f5f904e13fe92b0a5d1d7de8c8daa2800adaccc877c8d81f1
d9b35b08a7518ca6e6a6d6b69a627925b956d697dddc12683eb05f4a25dc40fef5a0e4460f8345228414ab32a69016c6431e48f985f32ac0d8e06b0b
413c285e606000e03117915e1b680929591dd295c996d26ba8b74624132185b014f1c38649c811f4c8df013808d688bf4e5f8baf067a7c2d8d21e8a9
69f1ad869e5a5fa895dafa400bb51d2d0dd0130c34111f6c09ac859e7a5f8864ea033534b2d617f2414f83afc9aeb414774fa3afad1e7a426b9591a6
a0af117a5ad68748b2ad2e14a4b65eb13f6c6ddf17261bd93ee0b45f59fc3064b314a51f28c0bf409d44bb26e9548828eb244577db70c5b2ba023fad
6a095e555985951d5137b24f87d63958708892222b07a820c4225886a99131257b122c814658875723713de607ce0f1c1af8d977b387d97bb48f9e21
2447e40749e11b8788d602eb864845215c1d24e58c32079b007df02ed15b7016ae4bd7e1244ca1ba087ecb764b2e9a390d3be1032ec12538c6729895
e5d0ecfb2aabaa83efe067687e05e91693950fe079b2a958ea832d5297540675709ddf8423549b22e35fc2afd976b80387e05da9181ec076ac80bd54
8f40ab0cfc0ed3829032e194e2892a400d5132baf89d48fd12b64017e5ec94aa4f6565ef47a23ecdde66f7e0338af97d5c81eb48e3189c613b64877c
462e86bd83f16235ec95b6b3237275a476d11e6e846372353babb2c23525561a29a348ebe037441be1269bc676e06e8aac4b8980df819beab9f284c1
a8d4dd3899d6034417e022b8304cfa91b5a8eae0985447be1e502437311f32c85a2bd0df01c230e24d1597516230ce6eea959c73fcbd9e8555f67796
8e728d7b02da4d6a7b2f94f51a3aec7d03036555b28d2fede549bde8d4f4ca4ec7273f34f9896b5c495995bdf75705f943560baaf369acbc8a5805d1
308d17e4bb06fff974d6e8842927f01a80268356a9673acf1c8e2a1ea58d92511b85a8d5692564924ea75571b546cdb9cc351ab5c451d293349d2b7d
ae964ba84228d469f43a6d946670953a35184cb7df8bcb790a26e4dd75bbe37226521c6a13ff5c6dd20c11ffbe1f6497a65e54eb18f37a9e8ed28e91
3278ba769a348d67694b78a1f659699b76abf68474989fe0d12a498d6aad5a172fc56912f4a35986942667f04cd558f5188d2b2a5d3b569f2d4fe193
35d9da6cdd547d0916f1d99a626d91ae5253a90b4aabb15e15d405f51dd8aeea9176e22ef919cd8ea81eed0bd2613ccacfe219fe8ae6292f7863b262
98f2630e648e6b578ba5ac4fe64aae9b0feb45c1c35b82dff9562b3f50a83f53fef05b87f23f2fa6d33183ce57261be169b5c658cc263e229aeb0d7a
233718f41a35e586472157be8a9c3148e089b6c4d83869443cb7258e4c91ecc9d427719b2d31373e5a8f72b203908d889a976eefb46d4e30b084c4a4
784bb491db0c28395490c1f48e649b61b4838d56f10c5b62da38d3fd8b271299f7eaa5665bb54df2def61806324e52dd97d194519aa1f2debb77fbd2
eb8e6a87e4bd67cea16a517e447104948dcabb7b3fef9edb1c97336cbbf8e78487dac8960dd230482ceddc4417eddc9b1e57b5abd9855e4fcc80eb23
d7ef5c27a9ee7335b9ca5ca5ae282f4347cc74367952ba2355a57e82cf524f6759eed8b8483bc2aa524fd8deb74da3f12f7be6cf23bbfab61357b54b
e1fed6e728b9d1da79dc54f161d7bee31ae9d4c315d28b99d36397f96fbcf470aff4a27346c2f280c2cad5176a825b5b366c3a7d6854a972de578a65
f275da9d184885f39e79492cd9c62de6f804292e919b2d668949083410c32dcac659cc06a3a48f566672a321396e1e26cfd377e242cb16d366478cd5
64066b74a2c569b61a55a90ed3fd925e4b45d565faba7aa62ebd77fba2c9ccbc7773722259a6d49aeedebd7bff9ee90b73cee3c40e6675588a4dc332
4df9ac4ea37c5e3a99b6354df23a8dcc919a3e799265ca608ea6d09964d6d82cb765287f468695736f749e7d6b6353f195e0a69369f6cbe2ce6591bb
69fd85075b362e39382577f6199fff1f375e66e3db1754d4d49ef9e63b5c74f0289bffc5bec30b172cfa3b9dc8f37465fc55ae06338033761a533b52
47a74f89193565947a14aaa4976c0d5529a14b9e19638362799a1c9526aad624bd37f15875cd6dc9d2d7f7f0d3fd83f79cb4f4f0ad37ba5e5e159df7
158cd4442ea55b9b8d1f3deabffef86196a150b327f2f5797cdfd2bfa651240318767cfdf137970d85e0a73d1a5e34b2f26d56cc9f25da43f29fc135
f5bb50ccadb052bea5c4feb89c62ad9224a50dddbb1a5c0099504ff7a40426ba27e9c643a36420ef72e4fe5ba67cff647a4bb18991bb59e119c4121a
e42530b2c2211e219d550cf1f2309e433ceb1ce25590c60ec06cba1b9aa1035a603504c97b1bd8610cd4d27d6007374ca49a455c0d49d86116c9b4d1
2dd146d201f0d17d3d8e46e74088e4c71337131aa8dae93e7e64ab358202d407486703b57e92d4fe0f5eb31f7bad204f1bc8d71ad20991b412878f74
7e9cc77ce2d6905e25ac27895a92f545ac05221abec88aec6425446d33c9d490ddd5246727fd26f2ee8bcc3d69a73c62a5154a87e4d7d168e00764ec
4f485546226c25dc14f1eaa638b360f27f683fd2753da14bafd381af8836d3a9f86f65f0ad2bd14ee7421954d0e7a2cd47affa3e69ab67e097027b9d
f8ba1b2f84f1bc11cfb51af93937be26f0ac135f35e21927be12c6d3fdf88b7e3c25f0e7b9f8b2c097dc78f244393f19c613f367f213e5f8a21b8f5b
f158188f6af188c0c3163cd48d2f5cc1b0c0832471b01b9f1778607f113fd08dfb8b70df5e1bdf2770af0d7f2af039813f11b847e0b3bb53f8b30277
a7e02e37ee14d8138bdb053e2d709bc0ad02b708dc2cb0bbc4c9bbfdb8496097193b3baef04e811ded5ede71053bb6caed1b9dbcdd8bed1e79a31337
085c1fc6363fb61ab1659d93b7f8715db385af7362b3059b28aca67e0c790604360a6c10b83616d7accee56bfcb89a7caccec5fa053a5e1f8fc13a23
0fbab1ce88013ffa49cd1fc65a81353e3daf11e8d363f5aa045eedc7552b4d7c5502ae34a1578b2b961bf80a81cb0db88c349685716995912f1d8355
465cd28f958baff04a818b2bbc7cf1155cbc55ae2877f20a2f5678e472272e12b8b06c3c5f28b06c3c965210a5565ca0c3f914d5fc99388fba79024b
e69a798913e79a718ec0e222332f165864c642810502f305ce9ed5cd670b9cd58d33057afa71463f4eefc7bcec593c4fe0b4773097b8dc72cc119e66
9cda8d530866cb2e9e3d0b270b9c24302b17ddfd38518f1304ba048e139849d3994fe1581366a0896738704c0a8e4e37f2d17e4c37a29369b9d38d69
fa789ed68d0e9ecb1d025309a55ec151243fca86f6913a6e8f467afdfdde73441ea9c394284cf1c8c9264c22f1a430dac29898e0e4897e4c88b7f004
27c65b302ed6c9e36662ac134708b40a8ce9478b39815b049ac9aa39014d02a3051ac982318c067268e846bd4ecff5f1a8d3a356a086a634617ae899
b84a20a755f05c9409c92e4413bdf8e819188f4c8bcc234312b23ee6dff11ccbfcff16f83fdbffb125f9dfb9b0811d
>}
\immediate\pdfobj useobjnum \csname EF8O6\endcsname {[ 87 [ 989 ] 109 [ 974 ] 113 [ 635 ] 964 [ 602 ] ]}
\immediate\pdfobj useobjnum \csname EF8O7\endcsname stream attr{ /Filter [/ASCIIHexDecode /FlateDecode]}{
789c63601852808504b5ac509a8d160e1905a360148c8251300aa802d8016c520017
>}
\immediate\pdfobj useobjnum \csname EF8O8\endcsname stream attr{ /Filter [/ASCIIHexDecode /FlateDecode]}{
789c5d503d6fc32010ddf91537a643449ca6cd62598ad2c5433f54b7939501c36121d580301efcefcb41eb4a4582a77b778fbb77fcda3eb5d644e06f
c1c90e23686355c0d92d41220c381acbaa232823e34f945f3909cf781277eb1c716aad76acae81bfa7e41cc30abb8b7203de3100e0af4161307684dd
e7b52b54b778ff8513da0807d634a050a7ef9e857f111302cfe27dab52dec4759f647f151fab4738e6b82a2349a770f64262107644561fd269a0d6e9
340cadfa973f15d5a0b7f287732a2fd017bc11fda8324dd017ccf4b9ca34415f90e87b79223a435ff046cd7fdbd01cb4b4cda45c4248fef266b331b2
642c6ecbf7ce938aee37f82c82ef
>}
\immediate\pdfobj useobjnum \csname EF8O9\endcsname {<< /Type /Font /Subtype /Type0 /BaseFont /BMQQDV+DejaVuSans /Encoding /Identity-H /DescendantFonts [ \csname EF8O10\endcsname\space 0 R ] /ToUnicode \csname EF8O15\endcsname\space 0 R >>}
\immediate\pdfobj useobjnum \csname EF8O10\endcsname {<< /Type /Font /Subtype /CIDFontType2 /BaseFont /BMQQDV+DejaVuSans /CIDSystemInfo << /Registry <41646f6265> /Ordering <4964656e74697479> /Supplement 0 >> /FontDescriptor \csname EF8O11\endcsname\space 0 R /W \csname EF8O13\endcsname\space 0 R /CIDToGIDMap \csname EF8O14\endcsname\space 0 R >>}
\immediate\pdfobj useobjnum \csname EF8O11\endcsname {<< /Type /FontDescriptor /FontName /BMQQDV+DejaVuSans /Flags 32 /FontBBox [ -1021 -463 1794 1233 ] /Ascent 929 /Descent -236 /CapHeight 0 /XHeight 0 /ItalicAngle 0 /StemV 0 /FontFile2 \csname EF8O12\endcsname\space 0 R /MaxWidth 974 >>}
\immediate\pdfobj useobjnum \csname EF8O12\endcsname stream attr{/Length1 12508 /Filter [/ASCIIHexDecode /FlateDecode]}{
789cd57a09741455dae877ebab5b5dbda6bbd31db274f68db02526865d692384558c1010509c0442580402844540262c43028a9320121423440dc826
066434c1882851c4c08c0ee00c4f671c05b7df88cc3c1cc790dcfcdfad4e20e8fc6f66ce79e7bcf3ba72abee7ebf7da914300070d34d8598e1438765
c31d702f00eb45bdbee139778f8f3b94ac507b0480b267f8f809595bafedf91400afd0f8d5bbeec81de138daff37b4f812cdf1dd3d3e357df6b5a2cf
00341a8689d3e7e52f503d4127a83d89c6ef9fbe74710ccc8e1c00603a426d51b860e6bc85b72e9d0360a636ec9f995fbc004c7481258fdab6997397
173e33242584da730122d55933f20bf4a96f9502242fa0f1beb3a8c3bed3f434b5f7503b61d6bcc50f5d68e83699daa7a97d746ed1f4fcc7a73c7517
40f70dd4ce9b97ffd002b54a5b48edafa81d333f7fde8ce46f6f6b0048d1099e730b8a8a177fb7fa6232408f8f69fcc9058b662c1864fa2b557b2e06
e0b340d2ca06819f4217420af5f5803e54976316aa0d066568f6985c70cccd5f3c1f4289aef46b6f07b85e9333d9833316cd075dd68ca2d20ef2a983
c2fe2c67b27f28ddc10a5ef80f7eed1fffdb33dfec783e68dc6bba8cd4fce419ddb9b7a01e517363b5a81167c58bf43c45e5c5ff08ce57ffc5f8c737
eefff77f37f66d3fdbfe71fbd91be7c9f64d332bfeed3dffdd998c78ea8260f01067c3201ca22106e2200992692498fa982155819f0afcfa3a4d6a85
212ff26726496310443b48e9518cb968cc96f37463dc4a926a013b38a837089c74aadb383720a73be000780c395d91bf287f1a54e42f9a371f2aa62d
ca9f0d15d3f3e717d37dd68c45745fbe682e54cc9c5144f5998b663c0815b3f2e7d39c5933a651cf83f9f3f3a1626e7e518cbc93bcff6a5efee25950
31ff41d95334337f1e542c5a329f662e2e9c3f93eeb3e4feff834e18f4993b7b66fe4d7aa142402f18f4359e9cf0f2808fa8d69db40d898691d0db78
faa017d122127ad2dd475aa9501fa32bbc03e7bfc0579003836d60fa933c4a594bc7e7915e9fef2077decdacea6cb3591df509ff067b89dacad27f3d
0fa604e6cae7f57a973d6eea9fd2a5bde8465de90f70ad9cea83af2fed4eb4711af40ae9a01fa3b6da31aa1863ddb9941727cd052ead16531dac42ca
0ecfe04fd192a8c013ff00858a9b165935445d5514b90b42975f4ee1b002f093fcfe41f3080fdb6e9ac72e025ce76460b62c3e90d407a8a11633da2a
487b1c455020c9670cf16d20dc0643e16e1807b95008b361012c8315f0074346e4789a313ed6189f06b3602e2c0a8cb75f6cff63fb79d2dc33ed8ded
6fb5bfd97ea4fde5f643edb5ed07dbc77581e59fff02b6bcaea3e530ce0a140977af8e22314fa332908ad4b1db3a8aa4e6d08ee2a23296cadd54886a
0467a078a8e4761469cba75129a4423e8db000c21420820af937903e2d928ae4f1322a7154567414e216fca1a3a4b04c82b989ae37611f54b1ddd492
7b2ea49e6ae530ac8725d4738235b18d4a6feadb0d57e02ccd2c8326dca7021b0519d40b7081f87f95e5c211da6300f3b001268dd46dac7a441da7d6
a95faa67a09f5aac9e51f3d4629681cff1897c379501f836c9c529b25c75ec132886a3f8356660833a5475c0277806f7c1e7748aa459139413cf5712
2c1e560425ca4a651cf59ce467603b5d45347e86ed606709baa36c1d9c8727515546c00e769ef06a82bfc33acc554a8815194a21c17f92f63a43ebb7
433119c6f3cc0242e9497d04bd4157798fc4defcbc715d81123a39176ab43acd638aa75324c576b313ac59db02d57016efc785f8115bafc6ab7bd411
501ea000e64139edbd5daed10ad972c25d5e2be5eeca32358fed83afd53cd334dafb6d89119d794419471815420395659a93701ac4d6e3468274a5c1
cf33a6516a2aada71d4cab086b8022cc8439545b0907e130f4c64a28a79d0c7cb57efcefb4b24afd94702e678f297f87333894a28c42f532d15a8a53
25c0ab268daba830e815e3ac55124716d4faef9914f3eee4d8debd7ed28c719a626a21a7d6be3ca6aebd3d67921ac127d7725f2d26eab56a62fca7ff
d3e0a7bd7b8dce991453db366c68c7aec3f28652dff84954952deaa6fe61438d3179682d4fa4bf9179b531d367c53ce27c247ee023ce19037b07ec0e
4535a4c952fb77b4bbd90910a447617e1bee80751aaa2c0c4235676bf3b9d3b7003dd2fa6578313ef8cad99a35e3c401719cf96987b2f68b6a3971d5
0add20de1fac55bba1dab6d9bd29d4ec0b8a429f372294565e6da60d2e5d6d765e4e63718acbe9ce4877bb9c4a723ab89c101f27efcaa355cf3c437f
cf3c738d99c50fd7ae891f9899e78833e23495332c83ae5b5946b52816a5a24c14b3c7d872b6823d26ed1ac5bdea14f21616f0fbbd5958ad2ad57c8d
09aacd7ab4e643886656e7b9d1b541b993ea69b2bfffe4e6c6560228b539fd6af3b9e634a2dde4387624088354656abf5817cf4ccc70c57a63051b25
9e6233de63a35a6bf6a9c523ea46b49cdf471b109fd55184b10f76f893c3c22330d4e7e22ab83857b39ccfba9eb0577b36aba4efe0b428cce2ebe644
2dd2d93abad69b3bba3624f7bed1b59edcfb08126c3fde7f72e3b9e6e3c75dee011dd05c35a03139f9b726fe2dabf5395ddd06106cfef409ea443ed1
b4425dc19746948599c81a84a9e12416bec5b0545b125e1cb1d8b7164ac3d686af8d58ebdb037b225c53616a222191d917faddce326f4d8a8fd34c99
b7b38c74d5ebd14c1a90097ab3750c913123ffae174a7f71f6a115e7267dc53cc3ee0b1357f7eddbb78c6d1e386fdbc8659559779ebe25fdabb7eedf
b520527c43d85711bf8b09fbeeb0c0df07bcc196527374694c70b5d75e6ddea2f9aa63b6c46fd636799f4f09f105037ac27c49314e1f7aa2cd5a8a24
42486e27fe66037f22c0d566c29228e06cbe74f552b3f38bcb4ee322aaa431bfb9202a3f3a3fa6205685a92c8a793d6a6c5c5272661421d297b0eac9
3203959bd0c3219b9f17ef8baf1e383927f7dd79c74ed6ef3af8cad61dcf3f39fed8a2e25393bf60b65f63627463c59ffe969878e296f4caf25f6ddd
bd6c41f1ca84a42331311f1c7e78bfd48c02e2720dc9944291da1a7f24b3a31d10ed598056533567b8c6cc6c16f069ba6a73383f1e5d6b25c4ec0662
3689d8b9c18dcde92ec9d74be70637a7132e0663d553c4dc5392a53dac14088d80c9e46896c123600a613d2189f5c4be6c2cbbdb76b77d222b644bd8
0a5ccfecc44a338bc50c578637de15ef8acd444d284c648af3e74fb53dc0135b2fe299d68c3da29ae59d3034f9a25a409047c203fe7835dce42a7546
86579b3cd5ce8d76a51ad6d837996aa2baf998057d60716a51ce56d6952f4e097e87b638a5b6108b9c8d97a5024b0d26f688c600778249be5c92e6e0
f5c04d6c91dcf81386b555f79ad4ab85258873e2bb074ecc9a72fcc117df7befc57b9ecde5e7f789c78382c4e5fffaabf83e26a6e996b457aaaa5e49
48226a9713f495863d498049fe84600deca536a80ed1aa7d21bb9cd5b68d719b7d9b126d71665f5854b00f63a32312c9c090105d324ccca5d64b37c4
c7ef215fccce2867f08cdac49b34c2fb709432954d65719ad71312809579fbb0f838053b11898f91e628363d44a9d9b073e7062acc3ce6e931ef9e0d
1a74f8c14f1917573e136de232cb6111639ec641479f7bf6b5d79e7deea8b2bc2e2149fc4d7c77ef54f1dd375f88ff320cd434b62b4a5a28ca82d559
c4130da6fb43b94b41055d2ad90b4efc408e4c65a0999cada71b5d521352bbd8012a242e9241935ea714c24f9b51224e3c72f5eb3fd9ef9ea4300dc3
f9003e82cfc45aa8d54c242dc41816cf62f7e0f1b6cfce32d196c1cf4f6c59c37bcae8e951a2efa3067de32115eef42786127593b5eaa8ded5eecd51
9b929f4f0bb525f4f079137c4166b2de64c2836223d29cad8dcd571b9b0dc276eaaad11a404ada8598897dc8d62464a487482363a86b7c5c42e6ad7d
833b279064288f56ecda5551b17b97d8b57633b4fff913b179cde3cf8b1f7ef841fc503362f3bab55bb6ac5db759797b7b59d9f6a74bcbb64f8c39bc
fae5f7df7f79f5e198b877ca2f7cf5d585f27758fee2b56b17532189594318951146a186c4c49ba2c3582984555b76a9d5b03124badab9396453a2c9
e78b0d8e82b8389fdd101802bfd3277d21beef949790c6b0b7c28f471cf71d8f7c2baa31dab4cfdde0feda8d2431fd0cd976073b485620f356c80848
495c12eb448b68f0e998aad12427030fcffd8bb8c69c9f31642e71487c3ea68addde214bd12425cccedc13ef6741df7cc1420c77b653dc17a56ceb94
24697dae90d09c50e3895f26f0f91dda3a7537ac53c8eca810aa3b5bc9a2488f71d570c9f1d22e5c394b3f21d47841ab2fb03cfe113ed7b1da0e558a
56a5ea9ca9102f1737d262e739e9faa406cb8b02ce16210b3eb7e7ca3edae11405d4b524af3ac5cf7dfc5ed86a5ec3b63a75c569011e664f079f5975
1b7e9da4c1905809cbe1bc6046bae5eaa04d62acf14c616ccb5596c9a2c5a7a24964b19dec30ab14b3448ec8e7a9d796b150d687f562dd768b6d62b5
f8a5a824013f65e43ce7295b5eee8f32b998c21417f9ac2c9382f092ce3566527c6a5f13a52b1602a2b539a0344494019d66f65b5926c71d8eb61244
feb4be4a7fd30865b869b652685aad983466d6bc2c5ccb6623b57bd9246d069bad2dd7d6b347b4ad1461eeb43a0d1d22445c748b3fc59c4a65a3b8d2
36a7919fbf16ad7edad253fdf45a3451a9b8fd22bfd025f6d9ea86ad81d8272c2803c3bcce5003bc2eb18f24488611f524078864dc31f963d1c6f0e3
8f1913ed1fb381ec21b141bc23de16656c391f23eac4e7e20b51c746b07016c146d488fbc40e2933ac86ec0b59980e7e915585608a4a0ef933cdba09
2d9a0b55e42e55c52c4a1fbca87ab79a3d5bed6bac2ad7d065065f88835bc2c254d7108fc567532946694e27352767ec0af074b00c8fdc03e4d5d51c
1931c9617f94a46cd68a60c68133ae686852bde0651e2504bba98990c81295244cd6924c497a923926aa2febab64b36c65165fa22ee1cb8237681b4c
4f6a4f9aa2a71a8ebd5b703cf6613d99f46f31d27a10f5039a858fddb1f2f63317de18f5e8431fbfc7de65d0baae6da3787cebd6c79586908a5f8a59
aca4725adb467efec33f3e7654b9bbed72d9ba75eb497dda5bc8e67e4d3431c128d220652bac51999f7c9e9febce73e4270c3b9b9e46b6d5226dab6e
d8561df44edb1a0ce6687012f3a34d4eb3dfbcc0bcd36c9e8ad2e7118c9afa5ddbe5a6b6cbe4c95ace4bcbaac051b243359a87a4c107fdfd61586f0b
32d7877a3705d5456c0b03b77b78a84dd3c3b30d3aa75f3502d04bd2a6be7339ed95bca8d551d55148dad3694b087b66b8248a93e3c9b74a35c7cf5f
78fcf1176469fbf5c0432b4f437bfbe9958706d6d72ba94d5f7ed944451957902f1ac43fe86ac82fd8a3c9b74a0bdb2fe297942985c1107f0494b20d
aaa3d4bec152ef52ebbbd5b936859bdc7618e11916ee6cbd94de29abe2ea65e7f797d3fcd6a00867c4ea888a88ea084ec019aeb203ba7e5ec316067c
257e39f6999c97df79e7e59c67c6deb56b6a9bf890f566da84e7d4cc033d7b5e3c73e662cf9efb121208210773b381f1442d824a9d42f03903d40aaf
0787a79eeb9b1c756c1b0910e8ca7097db3a2cd2b032e9e9d7a9d57813b54845e333251c8aa15821ac8b9fc1e7eaea061e7ab8a91dda9b1e3ed47692
e8b6670fd10e5f511ef8b1794f413e1bca74ba86e60b6f07f93ae02a216a7928fb5fe04f20893697ea1bb8772fe3f536f65a68bdbbceb6c917e15574
af0ea31577d0309f0162a311c54be2050290ab018f92322472416475e4fb915722f91018c2862843bc4322782f53aa9e6aee6529822256a414798b22
cc53174a02c71a219441db980e636132886e524b5a0fdbcebc3ae7e4b4e9ef3f28ae8a932ca5f53366aa53766dd85eef501e9872ece4adb71eecd18b
f5671616ccee147f6adc76e4e00e197da49278fe83681d0c93fd3eee64367dafc6ca609b436bb028c1144e98b96e0fb28ef1c808dd22033fab0cfc46
d73a8cba0c08073792df68741b76e1527a2b59338a69c9f7bfe2f7e678abbde422bd0464240ba8477c660639fd0ce51fb5d3ef62a9e283fadada83af
6b9ea772664d2f6f4dc50fcac7beb69fe02257a7161870dd4afa62067430adcce1aab36db3304587b176b36ecdf6487d31d465b00c415cee6e03c8d3
780d4f13ef0aa80b55328c4834442da87bf8e1ad07eaebb35e5ef2e63b4a4ddbfdca8e9d3b8ed5b495699eb61d330abe939afa261165399d2bfd634f
b20dc7d443d040de555721fbba77bd4409afdf2a353fc79c47da4ff21f9ce1329ced9b75f453f3ae556b9eaf257d6fec17f72a6c53980ed9aa339062
a6f9ed4eeee7393c8f2fe057c875199bd0069ae7c76608d0c01449f2160753fc499adb1c1a045aa4c96b2b8b8cc1ba888630a7095c41baaee5b8f4a0
1c5f28199078c350b792db33f2c6c1832f5d35920d49187f705a424ec282848a846abade48f824a13dc14c943268e3ed4aaf1b84f306089732ecf8da
978ed52f5a52bebb7ed1b2c776d7d70fa95dbe623f6e7c78e9f79f49323e5b25c9a8ec78eee9379e6f2b53f30ece9cf630747291300886be3773b1e1
9f73f15227178fe4797fe7557eca47efbfe0231d2cd918d0d42574aef4bea3c8fbd6bba1de5627bdaf3be81e747b87fde4cd833f7e48d84a58a99598
4af4127389a5c4bad256622f71940495384b5c2bddd56157c25c37e70637bda0287ee2c0fead5b0e1cd87285b9c5e52b7f15df31177ef2e5a9535f7e
f5eec9afabc4bba2597c4b6a3980b4cfc3fa4b8f2026aa3504a1b471b7fb233a6d5c9d63137b1d1b22c9be0d372c5d179f404956a799f39b0376ee2f
512a9b9a789d341d0ee12647515c5f7fc31f28fd3bbdc49eb6839a655f178fc0bee9347401bee12882ce059428695692332b9639eacc0d268b466e30
db2d837e4313c8ba9d3b2dcdd9919ce09dc19263013f70835ddd7054f4c85e552f101c47d707f7f1e111b7abe958db616256e174cee9b422f24227e9
b464f8d23fd86e531cd6f1d151ba593159c64747476559ac51d1143f94b28daaa7d4bb31547aa744f24edda32cd6e808138c8bd01d26dd1337acbb84
ea5cf32502c7885202eeea7be9aea44c19718a43c628264747a402c9325299e7b3f8ac3e5b1f32babdacbd6c83cc832c83ac836cd6188861094a774b
776b8fe0544faab74748f7a8eed1293129b109c9a596526ba9add42edf1b3345d12c9a156d68470706a113c3301c23d0a7469a93535386a4fc22a524
65754a454a75ca9594500a6a16def096d1c69b0b2dbe6b8a9cca64aad49768878f8edd3365e3c6694f0c69dcf5c31fa79c985bf84efeda4d33f6fbf7
3ff997df161e51871cecde3d37d73f32d6d1e3a98d55afc4c71fcbcc9c7ccfe89cc4a084ad6b771c30f2cb7e64d6fec677900e922f75703d08f7828b
35e865162bd19864cce976481d34cc787a47c41e784541d6f4a5803595b6db1332485af2a44c69c35d6c195b29d68f2e7efdf5f3cf9595f11de2adf2
b6ea8d63b7effcbd9257ce6e97b6f42069e12443fb3d30c8efbba1ff9b2cacc1536723edf758c7921dc8f64a751c1090a84be9d78d4091f7b83402c1
64c1036a77dda727b183d208bc585777e7a1256fbecb7ec78e2abbdbf277ee3c56a3acbc567da070fa15dc23b1bf8d2c50899a47d9f5357fb291582b
943470f94045038a2501b42c4a1edee01a478571154cf28da0e1ea20e0ea3cb9f20d9d7c0102c68ba96e8177725d225fbda31809b9ffd7239439ca4a
a5442955562b9b951a45970799d16c44c1e118ae2641124bc1143546cf844c361007aa697a3650ce8123d56c3e42f3eb1361229b8c93d51cbd100ad9
6c9cadcee4b3b43c7d092c662b7125c5c92bb4f5b09e6dc48dea465eaa554225dba66cc727d527f9366d0f7f41abd58feb9fe8edfaed3263c930b30c
167fdb09f6007be084b8bf45cd6bcdc503d7aa89421389009944211bfbc63f924fd0b8d962562758cc38c162b528139862b552be60d24d942f705d37
49c2d9683611ce9665e10a6a443cab6eb35acc7ae01537d90b7b8086a36b5df2ada65bde82e5cd1a20a3a42b05d7b6c08bb1d1b5ce1bd43dddad2361
4beff64f097cfd19a84a7ab773254409e171964ccb486524cfb6f82df729f7f109961ccb7c653e2fb42c275e2ce725bc4c794a79923f6169501af86f
9593f83b1ec915336aaa955b74ab991e36af1286216a388fd023cc1eabd726b39578251963d5441ea7c59912f564738225d61a6f1b807dd5befa005b
9a235b1981d9aa5fcde27ecd6ff2eb43cd432d43ad7e87df21b93851c951efe1e3b471a61c7dbc39d732c13a1d0ad80c650ece50e7f039da1cd37c73
be75a6adc8b10496b0e5ca2a7c485d45dc2dd156984a4c0fe9cbcd25e69596a5d655b63265032f776c836dec09650b56a94f7399253da5fb532b6d3b
1dbb6137ab516a70bfba9fefd5f69af6eb35b6971cbf510ee1ebea6bbccefc86a3513981a7d5f7f87243222298fc63f156163fb1ee8bcf2f7cf1799d
f8e8c25fff768164a312e7c872ad1a2b5be7908c4c227ff011c988199ef187eb817754241159fa5e68c0bd5c47062ad32c9d2f3b6d063bb5c0dbc260
a9499af1b630bde31d56f3cf5e62f9b3240f9394e1ca4813b7ea41d6508cd07bea31d6be38404fb3fa995f19867ed5cfefd4efc5c9fa2fac792c4f29
c43c358f4fd34bacabad2f59233ade6ec937db2c7621ce691ba31c695da51c699ba1e6ed69fd68cb1e4c248bc028d7e155eadb141bacf7df65a9b25a
0811731509b65a85aac9a4f12acda4a9555c650ac32a85c4b98a2860a9325b18832c8b4e830a25831cad6629f3261279a62ba06b36e7b9c1a7d3a5f8
76540238768a2dffb9081bf24b7ececc92cdac9f99f12ad18f9d5a263245e652f6aee8bf9435b126750cb5fb2d674d2273393bc59a968b4cd61488dc
f93e8a2c1360843f38c908d46db1a1f628dd658b757ac6241a6f73a44d770e96f179e32de07799edaebd6e25bc0c42b769d1ee066b50eae02fd2d3c5
e0cbe914b453bedb3550bf11ac1b66df2407a457e2fb3a2377613282f783b5d39393d88f3745f19d91fc53ddbbcf9a2e237a45c2ab8ed202df508cf3
bb22b2a19b1e12e451751d432cda98f01bf08ac114edf8dd3a392a679923f458c821c73633347026a1bd2c8c77b8e91480b753ea594129a8d348407f
0e3241cc2824574705207df1e57a09f98ff5f5323eef84f1d59724d0ecf0d71dd950078cfdfc41ddb229eab7d874dda9ba1d6342247c01f024749413
ed35ab9428b9cc0d764502260ca89841ba9fa63bca2805eac5ab37321e19e2f39537e53c0685b4563a3d057a53c41f9a9addada7dec319e1d5c37b98
215ad313a2cc714963fadc205463babcb71ae4ea16111dbf37c145b95bef633d0e39615b8829a1212c32562601e9e9e4549b9dcde9f417e0720737fb
f5ed779d549d3cef92a471a29e4cd4247beff5258f5d47acbe4b392cc9d9c17b246a12db035c9ee00b4e96c4ec246e276a4a276d0dec9260ac3fa47b
b65d7786847a74a759fea32436c21c1d3f26b90b660662861884fa62f6c6ba94325bd236af29b621283c2a80d2d5c13fc7a76fc64fc8ff938c3320a7
5d79d181c7751cf677e54b977c94e299c94fedea9e1bf28ba0c1df4374e0a39e0f7ee9b8d4f9fce1c3d6318ec966f92d52e7173fc67715a6792212c0
217ef8b0e51ec7e49f7dddd0473d637c0700ca3e2a8f1a6507ef0665543ea55249a58a4a0115d95f4e650f9547a9ac510ec315ed085cd052e0146f85
53da1828e6e5704abdd8dec23d70542d8485f45ca836c342e50348e56e38aa0c8037653135c151d9563f37e61cc55154ef0945180ffda8ffa0da00b7
698fc244350326694758185f47eb2d81626a206ea6de84870f1e80ad94db6c609f2b43959794ab38154fe217aa49cd56b7ab9ff0d9fc59feadd65ffb
9576d934d0b4d4b4c3d4a047ea93f57dfa8fe634f327169fe501cb7736669b696bb2c7db97d88f38e21d6b0d6aa5622ef484596033be85794a5257f5
2a21f494df2b98608afc1fb56a2662a719df80c83a83106a05ea649a5976471dbbf4ab5dea1c42d9d88eba061e56087742112c80e5b00866c34c3a7d
31c44077984eba1903e990465706d5a6d18c18c8a2398ba198ca229801f9300f7a51ef48984ff3fb50ed0e984b570c8cbbbe57b1d19a41cf19b46629
dd0b68a6e5df38b5eff55373e9a4a57496fc2a613ecd9670e4d39affecc4a1549b43eb26c2129a319de6e61bbbcd3056e41b18c5d02ef3e9be80e64c
a37d67d3bc185a5f44a7e71b633fdd67bcb14b31415444d783d42b4f2da6b945c64ee974760664deb4aa738d1210a8f65f1adf4cfdfc976ae895fce2
2ea8e39bbc10caf2438d2ff322480e2369ff64b885cec8a4dc671864c3701841dc180d63e06ec8817b880ee361029d792fc53593e994a9708529701c
de64c854d392f9b3b3b3d2d23a9e191dcf5beb94d5fef66b025b3cf86322fe231d7fa8c4bf3bf07b815705feef44fc9b03ff5a895712f1bb47eee0df
09bc5c89df5662730b7ed382ff25f0eb81f855167e29f08b74fcfcd278fe79255ea28997c6e3c5cf52f9c516fc2c153f15f817819fa4e39f3df8a74a
fc58e0476efc5fabf0c26bf847811fd2f40f57e1f973c3f9f955786e389efd7d043f2bf0f711f881c0f705fe4ee06f059ea9c4d34d51fcb4c0a6287c
2f1d4f097c67bd8bbfe3c3b743b051e009816f097c53e071816f083c26f075810d025f1378d485f5a589bc5e60ddab14310a7cf595a9fcd5d7f0d5d5
ea2bbf49e4af4cf5b7e32b7ef537897844e0cb957858e02181b5025f1278b0005f74e081fd89fc4001eedfe7e6fb13719f1bf712d07b5b708fc01704
ee16b8cb8d35029f7fcec19f4fc7e71cf86c0156d394ea4adc2970c73336ca23f1191b563d1dc6ab0af0e9ed4efe74186e77e253167c52e0b64a3bdf
26b0d28e5b69d1d64a7c628b833fd11db738f0f116dc5cf11adf2cb0a27c2aaf780d2b56abe5bf4ee4e553b1dcaffe3a111f13b8e9d13e7c93c047fb
e02384e62377e0c60d56bed1831bac58461d6505584a942a4dc4f52efc95c0756b5d7c9dc0b52e5c2370b5c01281fef65fae5ac57f2970d52a7cb800
57e67af9ca445c2170b9c0871cb8cc864b2db844e0e2162c6ec1452db8b00517082c12385fe0dc587c50e01c57169f331e670b9cb50a6752a350e00c
810502a70b9c26307f20e6b5e003369c2af03e8153044e9e64e1935b709205ef0d09e3f7a6e3448113e8e4095998ebc5f1ccc9c787e2380fde332a98
df2330c78a770b1c7b97938f15789713c7081c4d23a3058e1ae9e4a3827164a49d8f74e2083b0e17985d89c32a71a8c03b95defcce16cc7a0def188d
7e814304de7e9b9bdfeec1db0607f1dbdc3878909d0ff6b707e1203b0e14384060ff7e1edebf05fbf575f27e1eec9b69e57d9d9869c55ba330c38ee9
b75879bac05bac98966ae569764cb5629fde66dec789bdcdd82b1d7bf648e43d0bb0478a9bf748c41437764f4ee4ddefc0e4444c4ab4f2a4204cb462
82c078817141184b78c6ba31a600a35b308a50882ac0483bfa88823e81112d189e8561d40813185a80dd8852dd0486d0a29030f40af4080c16e8a609
6e812ec2d59585ce551854800e81765b08b70bb4d16c5b085a055a9c6816a8d3345da0c9835a01aa34a8920478917a51509aebe44a6f644e0481ac8e
15ac7f8cf5fcffe107ffaf01f83ffe22ff1b4b306f95
>}
\immediate\pdfobj useobjnum \csname EF8O13\endcsname {[ 32 [ 318 ] 44 [ 318 ] 48 [ 636 636 636 636 636 636 636 636 636 636 337 ] 61 [ 838 ] 68 [ 770 ] 78 [ 748 787 ] 82 [ 695 ] 84 [ 611 ] 98 [ 635 550 635 615 352 ] 104 [ 634 278 ] 108 [ 278 974 634 612 635 ] 114 [ 411 521 392 634 592 818 ] 122 [ 525 ] 215 [ 838 ] ]}
\immediate\pdfobj useobjnum \csname EF8O14\endcsname stream attr{ /Filter [/ASCIIHexDecode /FlateDecode]}{
789ccd8fd90a835010430fb8d65dab6dddadaddbffffa0978b2f828ae08b81212119c20c5c84b2e3ab0b6be818983cb0b07170f1f06512acf6c38d8e
88a7e4584c7270c38b371f5232a1730ae99554d4347c856ef9f1a7a36790d978e6b15b629a01f162035b
>}
\immediate\pdfobj useobjnum \csname EF8O15\endcsname stream attr{ /Filter [/ASCIIHexDecode /FlateDecode]}{
789c5d53cb6e833010bcf3153ea687884082dd4808a94a2f1cfa50d39ea21e88bd4448c558861cf8fbda1e43a52225a399dd99accd263dd5cfb5ee26
96bedb419e69626da795a571b85b49ec4ab74e2759ce5427a7c8c2b7ec1b93a4ce7c9ec789fa5ab74352962cfd70c571b233db3ca9e14a0f09632c7d
b38a6ca76f6cf3753a433adf8df9a19ef4c47649553145ad8b7b69cc6bd3134b83795b2b57efa679eb6c7f1d9fb32196079e612439281a4d23c936fa
4649b9734fc5cad63d55425afdab6707d8aeedda9ffb7ec005f81d640959465942dea37bdf405e6806c8017bc00150003840001e01c7250de10a54c5
7005f980200f17206482dc4679a5a15a60160f172064641531ab88591cdd9c435e280ec2e1e1c5d2030b8ec08fd1b25254717d22de6aa41c07e41895
b74b4fb008fca810b02c143308cc203083e04b2b9c0d687c29225ea7c25dab18a87cb75b89e5ddfbedf0abbcae9ebc5bebb62eec7b5837bf689da6f5
2f6106e35dfef30b20ddd8c6
>}
\immediate\pdfobj useobjnum \csname EF8O16\endcsname {<<  >>}
\immediate\pdfobj useobjnum \csname EF8O17\endcsname {<< /A1 << /Type /ExtGState /CA 0 /ca 1 >> /A2 << /Type /ExtGState /CA 1 /ca 1 >> >>}
\immediate\pdfobj useobjnum \csname EF8O18\endcsname {<<  >>}
\immediate\pdfobj useobjnum \csname EF8O19\endcsname {<<  >>}
\immediate\pdfobj useobjnum \csname EF8O20\endcsname stream attr{/Type /XObject /Subtype /Form /FormType 1 /BBox [0 0 306 194.4] /Resources << /Font \csname EF8O1\endcsname\space 0 R /XObject \csname EF8O16\endcsname\space 0 R /ExtGState \csname EF8O17\endcsname\space 0 R /Pattern \csname EF8O18\endcsname\space 0 R /Shading \csname EF8O19\endcsname\space 0 R /ProcSet [ /PDF /Text /ImageB /ImageC /ImageI ] >> /Filter [/ASCIIHexDecode /FlateDecode]}{
78daad5ac98e1cc7112dc0b7fa8a3c928094cc7d31e083695903e86083d4003a08ba9833dc304371b304f8fbfc0ffe1dbdc8aace8ca8eee98534c9e1
74475545454646bc1791991f9455067fadfad6d0bf17f7f393ef6e7f7bf3e2f6f9d553f5b71ff9b7179f66abdee2e7151e788b9fdff1d8157e5ecda4
e27ef626e1f75dfb6d6bd0019f4dfff47a9e5fce4ffe8adb3fe1aeab79f65ea7984255aee89c6249113a6cf4daf9607d64e23b2eb6b96aef6cf224ef
3a84747d955b5ef50a26c25c5d60305e4c92d9576daa2dbe88973369d46ef7eef929c6fcfbfc6126577d4bbef249d7e083b24987e83dee82df9e5ecf
4fbeb72ae9acae5f368f5cdfcc3fab479379ac7e51d73fcc7fbf9e9fcdcd8c397b6d92cb518e9d498fbe3e2dcf94e0733cd30677c80a6bb23678b458
39054c7cd40e6b9cceb80f469f6b47386847b0da98908c977630f1713b301fd9e4ec8a3ddb90240d61da72d6010a4855c4fb8d0bab223cad5381cba1
829466ed48e9a3e9bbc7eafaed1c752c2e059b8c4dcb7b1e4db7ed0ae23f9910aab3a5ecaebc5faf24e34c341996efae7c6e579c2e0143c6d072bff2
7a7d26d650315ee3edee8a5a9ff126c41483ab79b9f2e47b37ccbc6f3731bff7e4f11176546f6a4b626d37c2433e777859acc946e59daed19b580dfe
5e9a08dd84e4e0f5e443e4260ce1c1692f2d5d828a55e76a4c0d39d8e3b31e8fda50aa76c523fb99095d7638f076d15a928629d61b5f9d3b6a829dcc
51232ce2d73a8338e26630e971432c3e068a42449c2b272c39ee0e1b686673884e5832a4272c0906f108a36daa3e9c42a7e396249ae1104a16960ce9
094b52d6d617e7b30fa70c895b5c68dc484a8b062650b4398d9c6df83054ad49b6aa6abafe31fd3abd9bfe33dd4e1f1f2b678150b081fe54e4e2af93
c20db7d3efd3e7769b9a5ee0f7edf4127fdfe0f31b7c7e876b9f60ccdc8d510728d3daaa63ce094962a3d131e28bfa78ab7e52ef94533f00b32af11e
88cdd89a803401def5cb276f03aec4602d9ecf60ede7574a123ea34480ac058e518262669170be8421bc9b11170183f34216b42d39e42484519702dc
73cac36d16e061499a819f40af22a52dfd2cece4528cb31684b882753ec11f545f605e40450ea38025704d6d491b83f619a0ed14d804701a22454b4c
9af8a2160555161363dabd59e74c464869d500165fad8a565b0c860031d147577d56b0bbc6b06014f0aa641b8a17b702b03368a42a0cc6b85c1c5990
1218ad44b0498a084ef01cbd0b815a53b60e010570b3b6b8a6b6ea84694f99602656dc411ab2c5d883b35e554b4f99425e18350406004c2ac1b69b03
e205af4190184fa36ccec94495b5c039b5e86072cd9ea40509563c9c2314575d6242ba35dab7a0a6482a8aa590a36c34147ca1b602aca05a8bc178d4
8106b5520533928e12e02b872f10074a274b2042d819ad055a09930b122ee0f541a18044049840d28a480e05b163ad4140243891a448af9072a2f721
8ebc8bad54a87ea57231920a87075fe022eece9a8871033c54b24ed92fc3835f308e1a2d27021aaaf344c20a4fc16bb995a8301e31670226246b0fa8
a9b68991b3a8464a55157189ac71cbdd11f7442a68614f7620fbdac4bdf682f106e62ce5af41be7be7103aade832b6d466099c103c42d993164b71be
883143884e275f094444f0157809eec0633ef82646e542858bdbe840c2b980a9531519506c88a1898baece4713da8cc3102425c91d8a23875a864209
51022e6c52940910d6443e416e38ef9a38e0c90a2a116e45959781442892b85b1dea31cc332804730f7ca17a90c495822217f93e9043b1dec34f3ca2
2d6a144458c1179a25001a62a5c9036208acbac916cc8aaea654f85e8c1d1804dc84a68d76a09001b665915cac7c1526069a3f206f0b3fe42a46d4c4
a812892a23857ba292705102dca41ad46e2c07ece45a806ce42b97709148f0c7f999fa2a9680592b4d802542a1f22a1a849aa08c8b590250090c45a4
58ce131c03065170170ea610d2ce1442da99824b0753b06a763005ddea7c49be0aaa101a3a5588cc1b5cc167717085b879b08578e1a00bae63f005fa
3c4ab4ea055f70341c7cc14d1e7c215e370883930b230c6b35753055f0058c28253bef255f3023065f80f532e0c40bbae02f636cc1dc33d8a234285a
106590859076b2c02829246d145c414f951a62145cc19c33a882410f230a4f79422b189c28047d74a210f70ea260d632a68089786f0a673185b87910
05c7bf411482993a4f0870193421aceb3481d922f00e45d24401a6d7185d9534c1e392d144a1fe13116a254d2458ed13905fb044690d2e7a62c91222
60194d08181e3c21519b3185e0a01d53c0ab0e6056eb862938190ea610303c8842f895f10467ebc113086d9feb923d8c277889c179828519e709e412
da2867373c0150060a262779021980e60cb8b2a10976f7d7f34436b1a0a14268503701a752e34beb1e14506b3f71364f1c6a1c468fc084a31918c201
e65cd8419b0b3b3833e1c0602eec60cb851d559970a027177698e4c20e884c38908f0b3bc43161c7322eeba8c5851d9e9890e1909076c011d28e2c5c
3a2044483b560869c7042eedb92f843dc585b46732978e9415d29e9a42da53904b47aa09694f2921eda9c3a45f9f3940f28c39afc651e6f4728bae7c
4585f5401f9e30394bd1f5458d385770a2113f9890f060447638515c89e7479e029b519e35c41eb515b780a52fbb7794565c2fcb6a66c3a8ac0ee73a
57db0b2ba16040005b78187515378121030a15e35d6b8c475dc5dfc60083ebed75157703c311766f2facb85a862ecc845157717319e870b5bdaee26e
1858c45e36eaaac3088532d2d5dc6ac3515709051db802414dad59d655dc0286677c598715567c140ce7a01a7a735b46659515f704c33fa9bb9756dc
9b0c17f9b2d328adf810195ce2661391fb459656c2ec01a3dc23acb4e27131e055dc3c4a2ba17ac0ae30645456c2ec01c7628cbdb0120e19282ddc37
0a2be1ec81de626a466125cc1ea82e75f7c24a04c9407be19251588941321660b1ca0aab07c881d9f765ecb0ec4cfedc7e034a9cfa45197533d36653
abb0282e4c9b97d0ca2efa7264c976ac9af7fe5a6cabf2b67b35b8f153df237d409dd8283db8ed0a6d97ece9e276d35ebc6cd0dab63dfb4a6cc5d13e
ebb21567d1b8ac2f1abb716de371dde62acb26d74f6d930b54d176eacc6eaf6ebdfa9776b5d08249b6fd8647935b77f01072cbaeea2a8f0fc8d3d84a
13fba149c7d46cb4d42381abbcdc8f8a9bfd86e78f55806f12f2d964fc0fd1bfa73bda361c5b108717f1d5058bf898a30c68407705b6040b034cefe7
a58f1142a02dd0624f28e6696ca1b381a38b0da63d0692b619982a46ed37a3f6728fe5f0fa93ba60fd690c10c547db811503ec423e4026bc6880b4ea
4cab6dc7065828400e0f70344eea82c689a8a0667a2d9f424271b7912e001cf7a4a7c748989bdad0ce9bc5faf02c8e1a575d50e3b241b26964836453
c60679d144f2419e33937b7ba4eb01167e08c6ac87606694e73a38288d122c8dd3b5910f47bf211518ca74ec1f6921ac3ef89263f4b23b72735ecffe
518eebc1809d374f89d19339d9525f723f3b0b5e758887ccc477420c764654b885e299162eff3f8cffac1dd0bdf13f00b9f3e62931fe480ce3a2a12d
7c474ba5e80e509c0cf11d17639a6d8c3556397c2eeea35f0fcc1c0e342e3e7e6006ed82cb3ee5b3cf0f1dd89d77d1a2b82f80676104171f35c2a10e
2b31a0d14b671a41c9e825b4aea60c670a5398f8b829b42a1a4a3029b555f3338f54c5edb105ae12f16dd0e250d110e9d8500e0f1e17206dff9cfe35
7da2b302d36ff8ff6652d3e7e5e40040d103b1963fb8f1be9d09d89ea01a334f87c34c5b7144402086ea567c302080f8a8d0e8380ffa6183642f2e6e
f0ff8cf1ef1943300d6c8e294b6b98fca039c9ebd82a746a3dd0e978f4f8211c3f5d94b7d67c115aec8a718033c252a31f46353edf1048324eb380c0
b6989a556b44976f7e9fd34e27c9b6823e558f9fcefdadc68b68e974510ed046550edb296f23fab46c304b05ef639579aee2a4dc527dfbb59a2e282e
ca72586fada7ffd7ae50bb5b0df805759bacc069491b59e01d3b27671fbce2f841b7b30a2e42fdf398eef9953a5ba784dcac69beed0a31815a7e1f0c
2deac933496113d7d78f5b378156773974448787e8f8d0a7e9357e7f044ea8e965fba4808e6a7adf50e40d24375ba838ab7d68be388bf59e5fa9b375
1ef105b5e9689bbcadbe141643ade15a236875cbeeac65a0d58182de006f90672dfd023115c543da9ea80cb546141cd1d7d3cfbc58af4413e96c0ec2
f56bb4ddac569b9092275a08bb2b7f5eb5d13a0a6dda94b03dd3695c0aaeb485a77ea6b3bbe3c39a52ae007ea2ebc6089f2d5d2fbacfe20175d695ba
cd2e8342839ad0e2b7d9b57fc53d78e59bd5188f120f0889867fdfe23ffdb7b7d4a61d7b8d17196c586ab7127cafc27774006851bcb766d05702362d
38c2c9b5e580b5fd461605b0db37f87d836fef9163aff1d94e41f0cb3c3ffb03e3576712
>}
\expandafter\gdef\csname EFWidth8\endcsname{306}
\expandafter\gdef\csname EFHeight8\endcsname{194.4}
\pdfobj reserveobjnum
\expandafter\xdef\csname EF9O1\endcsname{\the\pdflastobj}
\pdfobj reserveobjnum
\expandafter\xdef\csname EF9O2\endcsname{\the\pdflastobj}
\pdfobj reserveobjnum
\expandafter\xdef\csname EF9O3\endcsname{\the\pdflastobj}
\pdfobj reserveobjnum
\expandafter\xdef\csname EF9O4\endcsname{\the\pdflastobj}
\pdfobj reserveobjnum
\expandafter\xdef\csname EF9O5\endcsname{\the\pdflastobj}
\pdfobj reserveobjnum
\expandafter\xdef\csname EF9O6\endcsname{\the\pdflastobj}
\pdfobj reserveobjnum
\expandafter\xdef\csname EF9O7\endcsname{\the\pdflastobj}
\pdfobj reserveobjnum
\expandafter\xdef\csname EF9O8\endcsname{\the\pdflastobj}
\pdfobj reserveobjnum
\expandafter\xdef\csname EF9O9\endcsname{\the\pdflastobj}
\pdfobj reserveobjnum
\expandafter\xdef\csname EF9O10\endcsname{\the\pdflastobj}
\pdfobj reserveobjnum
\expandafter\xdef\csname EF9O11\endcsname{\the\pdflastobj}
\pdfobj reserveobjnum
\expandafter\xdef\csname EF9O12\endcsname{\the\pdflastobj}
\pdfobj reserveobjnum
\expandafter\xdef\csname EF9O13\endcsname{\the\pdflastobj}
\pdfobj reserveobjnum
\expandafter\xdef\csname EF9O14\endcsname{\the\pdflastobj}
\pdfobj reserveobjnum
\expandafter\xdef\csname EF9O15\endcsname{\the\pdflastobj}
\pdfobj reserveobjnum
\expandafter\xdef\csname EF9O16\endcsname{\the\pdflastobj}
\pdfobj reserveobjnum
\expandafter\xdef\csname EF9O17\endcsname{\the\pdflastobj}
\pdfobj reserveobjnum
\expandafter\xdef\csname EF9O18\endcsname{\the\pdflastobj}
\pdfobj reserveobjnum
\expandafter\xdef\csname EF9O19\endcsname{\the\pdflastobj}
\pdfobj reserveobjnum
\expandafter\xdef\csname EF9O20\endcsname{\the\pdflastobj}
\pdfobj reserveobjnum
\expandafter\xdef\csname EF9O21\endcsname{\the\pdflastobj}
\pdfobj reserveobjnum
\expandafter\xdef\csname EF9O22\endcsname{\the\pdflastobj}
\immediate\pdfobj useobjnum \csname EF9O1\endcsname {<< /F2 \csname EF9O2\endcsname\space 0 R /F1 \csname EF9O9\endcsname\space 0 R >>}
\immediate\pdfobj useobjnum \csname EF9O2\endcsname {<< /Type /Font /Subtype /Type0 /BaseFont /GCWXDV+DejaVuSans-Oblique /Encoding /Identity-H /DescendantFonts [ \csname EF9O3\endcsname\space 0 R ] /ToUnicode \csname EF9O8\endcsname\space 0 R >>}
\immediate\pdfobj useobjnum \csname EF9O3\endcsname {<< /Type /Font /Subtype /CIDFontType2 /BaseFont /GCWXDV+DejaVuSans-Oblique /CIDSystemInfo << /Registry <41646f6265> /Ordering <4964656e74697479> /Supplement 0 >> /FontDescriptor \csname EF9O4\endcsname\space 0 R /W \csname EF9O6\endcsname\space 0 R /CIDToGIDMap \csname EF9O7\endcsname\space 0 R >>}
\immediate\pdfobj useobjnum \csname EF9O4\endcsname {<< /Type /FontDescriptor /FontName /GCWXDV+DejaVuSans-Oblique /Flags 96 /FontBBox [ -1016 -351 1660 1068 ] /Ascent 929 /Descent -236 /CapHeight 0 /XHeight 0 /ItalicAngle 0 /StemV 0 /FontFile2 \csname EF9O5\endcsname\space 0 R /MaxWidth 695 >>}
\immediate\pdfobj useobjnum \csname EF9O5\endcsname stream attr{/Length1 3940 /Filter [/ASCIIHexDecode /FlateDecode]}{
789cb5567d5054d7153ff79d777761d95d76976559047197dd050c2e9845211027aeca8711141030a0125958be59580151a08829388931891fb1db68
49a42949d5da8452a7a521e9a48d8d99b14e9a8936937632491a6792ced04cda3149cbd44bcf5b8893d8a6d3fe9177dfbdf7fcce3df79c73cffd0406
00462a643014e51714420ca8015822712d4565a515f97d257b08af226c2baaa85a9f3ab1f615c2958443a515999eb6d4ce29c2b384b735047c41382e
8d0148f9843b1afa7a6d4f9e9e4c213c49321d4dc1e6c0876d7fbd48c614fd879b7d3d41b246f6e43384b5cd1dfd4d4bfe96b885f005007ca1a5d1e7
576ffbb509202248edd92dc4d09ee5af1256e49d2d81de7df18fb0bb09ff9670624757838f154bf711fe88b021e0db17c40e5524e1bf2bfe77fa028d
19d91bba0122ade49321d8d5d33b7f117e0f109546ed2b83dd8dc1e3c7a464c2d5e4831694d86861e1930821a8c23c256b20119280e5179654420445
8fbef9f945d985f6e37002cce1f6015fb7af1e467ddd814e18adeff6b5c26883afb387ca96c66e2afbbb3b60b4b9b18be8e6eec676186df175924c4b
633d71da7d9d3e18edf075d994b29734047cbd2d30dad9ae70ba9a7d0118edded34992bd4d9dcd54b628fa6ff368d12f59cf8e0207e059fc2464b324
a59e2fc0b7a049a2284b512a4494a32479710cb7beb2a6023f71b642bdca2cccec943ac03ef88a0c2ee6c4f0c801f209b130964189ad8a6a4631240d
619fb6826f7e7e7e6a7ee2968e8538a7dd8ab622b390156f7c94eb2993efccc132611a2e53fe159c834bd22518871c4a5be197ec90e4a69667e14178
9b4b7001c6582e33b35c6a7d536556f5f383fc0cb5efa4be1b49cbdbf038e954344dc3b034289541135ce257e014a5ae30ff1378818dc03578022e4b
1be13318c14a3842e914f4c8c0af310d08291d26144b94167c5c8a6e7e2d9c3e816118844a98504dabcceccdb0d7cfb25768b7fc997c7e1377e26eea
310667d841d9219f9137c291057fb10e8e4823ec945c174e831490bd3026d7b1732a33fc46f1953865e46913bc48792f5c6177b38378883c1b543ce0
d7e08a7a939cb9e0957a0857d37880f27330056e0c51fff058544d30263591adcfc8932b980fcb495b0f80997637c4fe4cc5659418acb0192625d7bd
fe496f79b5edb51abb7bc56dd06650db26a16c52d76f9b9e9f2fab961378cd244f9c4457c4a4ec72bcff758defbb57149755db267f5a90bfa8b5a02e
9f7815d5442a88d8c42fc8772fac89089068f5205139b42caed0282368853de65dc3b82516b8c5126b262286c75a62f3a2b5ba48b6dd04db2d85669d
d912ab5b12638ad647d070222cda257191169eb8c422c5262c35dcb87e75d6989b6bcc35c6298529f74ec85c7363f6ea6cae2937772539a336f0bfa8
e95fa86a92a7aaa319abf5c65647d7186a8c35a6f6e87643bb71307ad03068d4d432b47bb257af4a71c424b1b8180766b074860e66f790772a496e62
f2c99786f7eef156dd357c6d8fbb6fe0ddc1cbb4a42fbdfece8b3cf3e6c7671f0af4de7c2aba6f52b4b0e113fe9b87f8b599c9b1ebcae82be73f08af
051ba4b344af3fc664344473b39e3388d24a1a1dd58e64c969a77af91d525a3aa7adc5654985c490386390978e6960d7c7d9bbb469252a6b71dca0b9
24614055ae65e9c9f638bd4e52bb1824a1d68a2ea323c9a58f4c5e61b831753a92d5ce5e9d5a9b445538404a98284217afdff8e78d59c3c7868f4d14
27e3572245120aa554ff5652f856c631a8f566acb1ae892fb5eeb0ee886fb3b6c50f5807e20f5b0fc547792c1b2cdeb84a4b599cdf5217d76b09c68d
580ec445d552c4639ed18c478d6bc775e3f6f1e471c733cef1f448d2a4fbc43dee3eea0ebacbdc6bdd2a4633e0c8d1334772caea55cee52ce71ea6cc
46b24a4d54567816d48462cd962c4f768e5c32349d5cfefaf0919fd3ce8b61a091a46323db5ff2db5a2ed5befabae68de93fbddd551f7acf290ddf7c
e09cbfe1f94317afdf191163ae69c8c838979a7a7ec25eb3bf7fb8daffd605658636d10c4dd10e35820bce7a374769246d24773a64ae52cb18415432
e5653629c9ce1d4e8739468a3511c3c59d4e479edda4c592586d495279ec80736772b2c305917687c9097aabd3e570a62853e160b5d7af5e30585fb6
4a3419caf254624f93317b5d998aff3209b7cd405d2a2d6076613cf540aa54cb16a2a3678b21b987b1c54049ea701853d50e96859f1ddc53f59d9cbc
0d233df79db82bb7f0e6c5fc2bc1f397f7766e7cb13930eeb14fb3cc69e9fcb1898ad28a3f061e7eb2bc74eb1f586067eb47af3e2d7ed7b7a5d2dfa0
1c5d8b7786547372fbe0a78fec8a5ef3292c8b081ff86fecd7bffb45fdf97b37b374851187c33bfed6dd45e758402c05d01dfcfcbd7ffc4257087e7a
ab7cf953cbca79a8a83f47f9305d16472047fe102ab919367d496e8015b377c27ea8710ba4430bdd3d1218c0abbc44502fe916ef73356c576e3199de
0f6c65f87e53680616420bb4047a56b84823a4d0ab688196bf4473b0b281455a054e761c36d0391c847ee886566826ebbdb4a3d3a081ce5e1b786025
a52ca2ea49c206eb49a6974ee45e926ea45b30002b887b2f74927c0651eba083928deebe2f74f5845123d58dd4a78f4a3f496afe07abd9b7ac5692a5
3eb2d5467d3a495af1c3477dfe3f8bf944b551bf6db087241a48d617d6d618eee10b8fc8465a3aa90c924c3de96d25391bf5ef22ebbe70dbed7a2ac2
5a7aa074517e37711bbf46c6769bd4b6b0873d84bbc2563de46716acfe4aef2ffaba6feb4b2f92f94f29efa755f19f3e75788d4a34d38abe8269e980
77fe2702275df8bc079f0be18ff578be47cfcf7bf04702cfb9f0ac1ecfb8f087217c760e9f99c309813fc8c3a7057edf83e3a72bf878084f6f5ec74f
57e0531e7cd28c6321fc9e064f093c69c22786f0bb3318127882244e0ce1e3028f1f2be2c787f058111e3d92c08f0a3c92808f097c54e023020f0b7c
f850127f58e0a1247cc8830f0a1cb5e088c06f0b7c40e00181c302f70b1c2a76f1213f7e4be0a01107fa67f880c0fe7db5bc7f06fb0fc8fbf6baf8be
5adce795f7bab04fe09e10f6fab1478fddbb5dbcdb8fbb8326bedb85411376915b5d73d8e99d171810d821b0dd826dad79bccd8fad64a3350f5bb644
f1162b3637e979b3079bf4d8e8473f75f387b04160bd4fcbeb05fab458b72b9ed7f971d7fd06be2b1eef3760ad0677eed0f19d0277e8703bf5d81ec2
9a6a3daf49c36a3dde3787dbaa66f836815595b5bc6a06ab0ec895152e5e598b955eb9c2855b05969765f072816519584a4e949a714b146e26af36af
c312aa4a04166f32f262176e32e2bd02371619f9468145462c145820305fe086f5437c83c0f543b84ea0770ed7cee13d73b8267b3d5f23f0eed7308f
a8bc0acc15de20de35843904b36537cf5e8fab05ae129895879e395ca9c54c816e812b04a65373fa9d78870197a3812f77605a12a6a6e879aa1f53f4
e8621aeef2a0536be5ce2174f03cee10984c287906ed246f4f40dbb2286e8b467a61bdec3d252f8bc2a4484cf2ca4b0d9848e289214c08e19278175f
e2c778ab89c7bbd06ac2388b8bc7ad438b0b63059a05c6cca1c918cf4d028da4d5188f0681d102f5a4411f421d19d40da1364acbe98910a5458da03b
2d8f47845045e22a819c46c1f3502624bb110df476d470c98a4c83cc2b4322b269e63ff8284bff663ff886f5ffbfdfd27f013b975f3e
>}
\immediate\pdfobj useobjnum \csname EF9O6\endcsname {[ 82 [ 695 ] 97 [ 613 635 ] ]}
\immediate\pdfobj useobjnum \csname EF9O7\endcsname stream attr{ /Filter [/ASCIIHexDecode /FlateDecode]}{
789c636018028005af2c2b031b00015f0010
>}
\immediate\pdfobj useobjnum \csname EF9O8\endcsname stream attr{ /Filter [/ASCIIHexDecode /FlateDecode]}{
789c5d504d6bc3300cbdfb57e8d81d8a9bc07a0a81d15d72683796f6347a706c391816d938ce21ff7efee85298c0167a7a4f7a889fbaf78e4c00fee9
adec318036a43cce76f11261c0d110ab6a50468647957f3909c77814f7eb1c70ea485bd634c0bf62730e7e85dd9bb203be3000e01f5ea13734c2ee76
ea0bd42fcefde08414e0c0da1614ea38ee2cdc454c083c8bf79d8a7d13d67d943d19d7d521d4b9ae8a256915ce4e48f4824664cd21460b8d8ed13224
f5af5f17d5a037fa6b1de9257d977c4ff0b1caf0f1013fcb7b9afaa74f0bd23536f772f13e1acf27cb8e935743b85dd5599754e9fd02f7d77982
>}
\immediate\pdfobj useobjnum \csname EF9O9\endcsname {<< /Type /Font /Subtype /Type0 /BaseFont /BMQQDV+DejaVuSans /Encoding /Identity-H /DescendantFonts [ \csname EF9O10\endcsname\space 0 R ] /ToUnicode \csname EF9O15\endcsname\space 0 R >>}
\immediate\pdfobj useobjnum \csname EF9O10\endcsname {<< /Type /Font /Subtype /CIDFontType2 /BaseFont /BMQQDV+DejaVuSans /CIDSystemInfo << /Registry <41646f6265> /Ordering <4964656e74697479> /Supplement 0 >> /FontDescriptor \csname EF9O11\endcsname\space 0 R /W \csname EF9O13\endcsname\space 0 R /CIDToGIDMap \csname EF9O14\endcsname\space 0 R >>}
\immediate\pdfobj useobjnum \csname EF9O11\endcsname {<< /Type /FontDescriptor /FontName /BMQQDV+DejaVuSans /Flags 32 /FontBBox [ -1021 -463 1794 1233 ] /Ascent 929 /Descent -236 /CapHeight 0 /XHeight 0 /ItalicAngle 0 /StemV 0 /FontFile2 \csname EF9O12\endcsname\space 0 R /MaxWidth 974 >>}
\immediate\pdfobj useobjnum \csname EF9O12\endcsname stream attr{/Length1 11180 /Filter [/ASCIIHexDecode /FlateDecode]}{
789cd57a797c5445b6f0a97beed27d7b4977a73b6b27e9a4d309618d0901c2222db2a318641170701248022290405884c08435018109080404841603
22204664300144910822a0ce08bee1a90f15109788e8a08c31a97ce7de4e203833efcdfbebfbbd7b736e55ddaa3a75f63ad537c000c0410f113c03fa
f6eb0fa1900ac0dad3dbb001590f0df7bdd3fe28b5fb121c1d307c649fe4cade270090baa1e2c1fb460cec7a61e6656a6fa2feea8786774a9b3cf7e9
3000298bfa474d989a5338f0b17b06507b37f5ff6ec2ec991e783c26134011a9cdf30b274e9dde79f6640023b561efc49ca24250e8066315b5cd13a7
cccd87fdabbfa036d110e19f9497936b18f75629407c03f57799442f2cdb95f500099da99d3869eacc275f5a1af501b547537bff948209399b1f593f
08c06ba0f6f8a9394f168acfcad3a93d85da9e693953f392bfeb45b8bdab889ef3850545337f4d7afd0d005f36f57f513823afb087f2035593887f69
1268b23243f012a885d0497fa7810a1da127087dfb3f3002ac5372664e8308d0b882a62680db357df4137933a681a1791ea33e412f0d54726d24b3b1
3492821dfed995f44fdffe8babe9937f6bd87f8bb3e938c102820f9b3ed12088b7a5a6b74eff6fe8d1f0ddc1dcea62c4b30c61100d6e888504f0d29b
30b0ea520a5e826e7862734ba2d18c20b6d588bb2f4dae48e3b5910ac9570623e909c0445ab410e610b035ebee69580f4e5d77f37266e48c87a53933
a64e83a5e367e43c0e4b27e44c2ba2e7a4bc19f49c3b630a2c9d985740f58933f29e80a59372a6d1984979e3e9cd1339d37260e9949c028ff6241b58
3a3567e624583aed09ed4dc1c49ca9b074c6ac69347266feb489f49ca4e16f6527ada5a1d1f5197c0559d0d30ccaa73a438b89fd6c62fd427090566f
7db5b4d9a4e6fac8ff5927cc4a7867ffcfe3606c70ac56deaeb7c271d7fbb1adda33eed4856e00bf9653bde7eda96dc0091e9200232b1c0ab3490f07
749fd0b496083eaa33d2528bce91c6d31ce32c7a2653bd8d71b53e93de1a2b005a8d6b777b5cdbdbe352fe619c001d6e8f6baff7aed6576fe9eda4f7
e6d3b3a3defb87bb7a6d5aafa459994deb95b498c0442b5b43b483942e3d43dcc7064bfc0fc8171c34c924231a4441d0b0683a5ed82288acfc7eb9e0
078fd7243bb9936d56a6b2cbcd635a780a821b40b7e24a6a31bd2dc2722a63890a24abf6109f1da00b74835ef020e4c144781c0a61163ce935e936e6
215a3b1067c1fe1cbd7f0acc80395eb5a9a9e932f9f6c5a6bf36fd0779786dd32b4d554d2f37bdd4b4af69efa751ffd2cf5aae606cac6e6e59f5b582
80baa681280bcabe4333c8049a8cbb1050dc279a821042d0ab1934293fd80c5a4ccc21c823206912ed41082778bc192208b4f85e481049a0d9df2c5d
3e0073089e2448a0984f32f49aa89ec83288e633741f873db095eda296a6f1e9f426201c806534bb1a4eb0336c85d081deed821bf0218d2c8333b887
0c7730a4d35b808ba4ff9b6c041c241c99ccc9321599cc78a878507c58ac16af89e7a0ab58249e13b3c522968e3ba451d22e824c7c9bece234c44135
bb04457018bfc6743c2af615ad7009cfe11eb84aab68323b03e5a4f362a2c5c90aa04428161ea637a7a473b099ee02ea3fc7b6b10f89bac36c095c80
4d280a03611bbb407c9d819f61098e104a4815e9423ed17f8a709da3f99ba1885ced0253810bede81d514f6b8dd79f31d841baa0df37a084561e0195
72b5ec54bcb48a26b15dec04ab93d741003ec4dfe174fc982d13bde26e712094072580d9504eb8376b73e47c369778d7ee620dbb3047cc667be06b31
5b194fb8dfd638a2350f0a0f1347f97094608e6c239e7ab065b88228d57a63e09c3258ec44f30983b280b80628c00c984cb562d84fb1a30356403961
d2f995bb4a3fd3ccade2e7c473395b2dfc0ce7b02f5961be789d644da107281abca6c892880283f61e5b95e01b945be51f36daf3ce98f80eed7fd3f4
d8144f15645559e67aaa9b9ab2468bd1d2982ac95d853e4395e8f37efeaf3a3fefd07e48d6684f5563bfbecd58fb65f7a577c34753556bd16b7adfaf
afdea72d5a25f9e86f50769567c224cf53b6a7bcdd9fb2e575ef108c3b942990276bde5fc57f148a6507ed665dfd21f226d868b528800e194255abed
932155a12346d780daf466b73143aa42f43af8bb8db9925667cfccbc076c0d75a94c165c4e47b83749c8e8ece82a14972e5eb22c50b161fd46d9f125
bff7da35dee3eab7ece46797586d1dad5749eb15e8ebc5f943146d3d8581c921861a80d6eb79f30eded0f43087cb2928de2e8e8cce4225a1dc501158
b66489eca8e33d2f7dc6bb7f7b95bd7ded1a7b8bb0e6b24b4289b08438b21f82ad824831d6f6c959c2739ef0c4bbe27385e8c6abc2924a3d1a36f511
0ed0fe87d0c1ef84282630210a01fb08db61914801183bd5ea24dcac4b25f12a36e93b0dc624b074e6150c7b1affbe47baf0cb544d8a654d97c572b2
6c13450faf3f540e3820605eeb5815617487c4a2db151d417cdcac235c576ed6d9aea7b204c16e73a4a739ec3621390dec36f026684f61e5d6679fa5
bf679ffd9519f9ad5f7fe5b79851cae2e7f8598273b4703aebccd203bc8897f2325ec456b3b96c1e5bad71f33985c3b1c40da9c5efea83015108488b
1408180d71b21b218e996ce79b35c734cdd5d53610419dead26ed69dd71924c60e866088288ceb1a6f97327ce97692176783f9332cef5d36b8a1728f
5834b07a60fd853d84806c5d1c4c1cbb619b3f39322a1a23dc764904bb24897d6ccfd9d75b02ceb522c53cb0a90253dde13694636c0d43aa5c238654
858d7874489573c4a344096af6547bbeeecd37ed8ecc666a5a895bfa8e55b96df6f04ca2cd9f36521c258d52e689f3a4d9d1659194f88b916214b986
7b26cc9667451545cf742f86d2c8c5518ba317bb77c3ee68fb3818e7232632ba40d77b5946e7246f82ac64dccbd2d34497535628e75b211c6f7880c4
989ef3e00ba5bffff0c979e7477fc59cfd1e8de437f7ecd93387aded3e75e3a039157dee3f7b4fda576ffd6e67610cff96b8df4afa2e22eedb40a1bf
23b842d552635ca92734e0b2048ceb6477c0b3cebb565ee57a3e25cc1d0ae88c7427796c6e74c619e5144d0861235af837eafc9300c8dcc37563abbb
72f34a9dedcbeb36fd26a9a432bf313736272ec7931b2fc23816cb5c4e313e212939239618e9425cb56319c1ca5dec61efb5cff30ff8578f9d9a3ce2
9da9c74ed5ecdc7f68c3b6e7370d3f36a3e8f4982f99f98fe88bab5df3e98f3edf897bd22aca976ed835a7b0a83831e9a0c7f3e703f3f76a769d4b5a
ae249b12c84b17f96398052d8068e90368520212c345466656c12d1b44b31e234cc4984567ccac3176be676d5d9a5dd3eb95f33debd288175db1e269
52ee694da56d4d94520d8431b4d9ce81a7400963ed2089b5c32e6c287bc8fc906514cb67b3d83c5cc62ca44a238bc7747bbacb6bf7dae33350e602e3
19fcc285d38d8f49be86cb78ae217d370fb0ec13a4a16da4a15ca23c061ef37bc528c55e6a8b890a28ce806d854508c022cb2aa53236dccd5474836a
93636d0dacb55e6cade29c4df3165291adf6bae6c09a07937a786d503b5a50b16b32079713ee528ba68d4f31b231d07e74fb7a96c8cff3ef1f3b3169
ec9b4fbcf4eebb2f0d7b6e8474610f7f3a24845fffe607fe93c773e69ed4435bb71e4a4c22699713f5157a3c4984d1fec450192ca5660884c90177d8
4e5bc0bc2261ad7b95cf9c607447c686ba313e2eda4701868ce88a1e62ae345cb9633e7e27e523ec9c700ecf8967a43332f17d205618c7c6b104d9e5
0c0bd2ca5c1d993741c01646bc1e2d1cc5a7850995cbb76f5f4ec08c0f6c79e09d0f437a1c78e27326f11b5ff0467e9d65b1e807b6608fc33b9e3b72
e4b91d8785b9d58949fc47fefd23e3f8f7df7ec9bfd103d478b63356cbdb5612572b75aebc94a9ddeff745104fc97220b643c0b1367655f2f3a911e6
c4b66e57a23bc44831930267487c74aaada1b6ee666d9dce4e8b87e8ad4c728d562cf83a928727a6a78569aead3b89372131a37397d09601a40f61e5
9a9d3bd7acd9b593ef5cbc169afeeb125fbbe8e9e7f9ad5bb7f8adca816b972c5eb76ef192b5c2db9bcbca366f292ddb3cca7360e1ab1f7cf0eac203
9e8493e517bffaea62f949963373f1e29904a4a78b2c5bfa1877106f0ab8fd16da70e4ada2416222780db6869eb569fac6d3bcf568372576f55c03dc
b1fb861647e9bc2bc7919d9ae10d7f67b42b0645b033c1a01528185523b3abaab18faa08684078d960928c064a2f2455768bf7aa94ae5b48ef0d5aec
2027d3bd2d3cf3ce666568066dd33a5068656c9c7f142a218610a3a0ba04a712aa2609498a4749523d6a6725437d5c982f142b73d585c26265b1ba46
081399094359347a597b4c36b43176663d7194618c31cf30d938db30d7b888adc60d6c0b3ac93b43e3c93fc9339997d9bda75907b68095b00e6ff392
33bca456bad060c0bfd7b793e21a2869afff9c64574cd6d081f237954e6647690f8933851badf062b85c63b57b4ae30ebb6bbcd5f655e16608c7088b
d1608a4383b35f12b17bf67c5d5a9a6e0a9d6aafdc6c206338a9db863d53b3f669a931a9b1a971a99ed4f8d484dec9fe187fac3fceeff1c7fb13b262
b262b3e2b23c59f1590959c985c9cb62ca62cbe2ca3c65f1cb12d62407926f24c7b64c6d99d432213b363b2edb931d5f185b1857e8298c5f18bb306e
a167617c446b3feac5badabd195672a424b2bbf4f8d611394c387669dfa282676aaaab7b1f5dbeef4ce3af4c786163f6a11179c7c6feed86909e5f3c
bee8e2c194071a17edc9cf39bee3f5371d252b3b76dc939cdca0c5e1c324ab4ad9499ee3866efe48ac3187186b225cab42aaa3374682c33120c22c1b
a2fad30e4ba2b9a96feb57349f39793df55076ecc2d8402c129d3a39415299eee8947d10adc91450d3f1ea0b4f3ffd82068d7fecfe4af159686a3a5b
fc4af79a1aa1d3996bd7ce10080fe7e6f0a3fcef741fcdc9dd4dd43098de7419af910e23a1b73f1a4ad972d15a6a59aed6d8c59a70525e94e2b0c040
67bf285b03e589cdd90fbf79ddf6d3f554bf2924da16bd307a4d74205a22e2f400d44c5d579726c4e60884d7863e9bf5eac993af663d3bf4c19de31a
f947645bf2c81d62c6be76ed2e9f3b77b95dbb3d8989c490953958772f498ba812c7127db6a0b4a26ac0eaac910cabacd56c23868b601006d81da67e
317a4e9696765b5ab57749cb9edeac4c4ad58014c85ac511dc515dddfd95f9679aa0e9ccfc571a4f91dc76ef26d9e121e1b15fea76e7e6b0becc4077
df1cee6a165f335d25242d274453de90082e662c352c975c2f32a9c6cc8e44d438aacdabdcd12ec1e032c010c111d2cfad9358abe7469af08261fd66
30aea7f48e298c09c47c10732346ea0dbd596fa1b7ab77b4d45ee964e8646caf1640012b100a5c05d1c671d33501c7eb1b932e5b3dba930128bad015
b1a4e180f9dc6b934f8d9ff0c113fc263fc5521abe604ab5b073f9e61aabf0d8d863a73a77dedfb63debc65416caeee79fd66e3cb87f9b6699d3f928
712cf164a25c68b0df1b698e313a4a43c36a42b026c95b9d7cd45813f27a544c522418cc036487c3d32f458fe741b1d75e090a9e5fd038ca24e9b75d
d836d0f637b61a6e13eeecabbd58b34a287b0e0bcfa0c3efce0deb77ee5cbf616735e7f539fb860ddbf6f09f0e661e98ff5e43c37bf30f64560bbdde
f9e493774e7df2c9b7fc0bfe754cecabeddbbefec6a313c6b3ee0c99c8ba8f9fb047e3e338a5d073c966b448dece6f958f89afc05141620611fa6ba1
bc4ea3f70a1d46fc269bd16fcc32661b0b8d64b7a164255a4a72bc9a2e31fbd780ecfc5acbc6efe04b780d360acc00fd455b30e14ef55b6c925fca92
b2a542e9862407911002d9f94b9d36f7309d476348a60930d69f243b8c112120c7282e73598c07aba38f46da14b087180c7296dd1092e58e20c7f76a
8edfd0401b819e45f7ec79e5a6be19386837f087a62666251626ae490cd0fd46e2a5c4a6442349588f4a2ea23e28ead6957497de29a6f47b73f1cbc7
6a66cc2adf553363ceea5d3535bdabe6cedb8b2be6cffee98bc6df09db9edb7aacb2b14cd8b663cb1bcf379689d9fb278e9fdfcc81984b1c8442178a
5646402b93cbacf66af35195363718aac5f3fe4e8d683d58f5d4ac4027f660b6eb7d97a079df3f2127b77afefc0dfb6a6afabc3aebf849a1522360fb
368d005a382ff7fb660f9ba55b63385963a85ce3801a73b5760e73840c4387abdf6fce617e6fefc86228964b94124389b1442d31159b4b2c25d69290
125b89bdd81188bc1169bf3b53baebb856b47edfde0debf6ed5b778339f8f51b3ff0ef991d2f5d3b7dfada57ef9cfa7a2b7f87d7f1efc89d32c96b9c
ac9b16c9c95f2a89422d36ddeb8f6e894dd5d655ec753c1a437169801ea15ac5724a395bc293df188c4f9fc58a6c9cefb6689a03f95d01bea8a6e64e
1c17bab544f7dd8dfb65754fab48cebe6d095041bde160a2ce0e9436ca26b233139659ab8d4715553680a1bf43735edd13282a9d3fab85a18359a1db
43358d05e3f71d7585e3e0b841edb7be40741c5e16dad18d071df633c71a0f90b2f2274812ad5640bbc7295a2d19aef97b5acc82d5343c2ed6601414
75785c5c6c1fd5141b27ba685759213a4b5d2b22b45dc547bb4a9b58d51417adc0c3d106ab627026f46ba35175beee8a164332335bb6999fb46dc6d1
920e59bfa3cc48d19f940b41b2960b4d75ab6e93dbdc9182657b537b730f630fb587a987d9e4010f4b14daa86d4c6d433b393bb9da86b5896d1397e2
49894f4c2e554b4da5e6528bf68b21130459954d68460b5a31046d188951188d6e31c698dc29a577caef534a5216a6ac4909a4dc4889a01469fa9d5d
2e4e3fc7c9ded607864e4c4b61bb90ec70e5d0dd6357ac18bfbe77edce5b7f1d7b624afec99cc5abf2f6faf76efaecbdfc8362effd6dda8c18e11f14
6f6dfbcc8aad87bcde63191963860dc9f285246e58bc6d5faca6cbae14d67e94b6910fd21e68950c21f822d8d95143996a2219938dd91c56cd077bd6
6a89aa9e4ed505a30645e2032fbb98e685dae1c619d683b9b49c86f6c3743b9bc38af9b22145afbf7e61475999b48dbf55de18583174f3f6bf08d9e5
ec5e2d96ee272f1cad7bbf137af8dd77fc7f95ca8e3aabcde4fd4ed3508a03fd5d9a3b66062dea4adaed2050e07a530b02a1b40507ddeef65e9cc4f6
6b41e0a5eaeafb5f9975fc1df63e3b2cec6accd9befd58a550fc6b605ffe841ba87d20845194576788d96066dffa0749236589726971a46ac491aa49
154632c16452652ddd96eca2d49c6e9b69b41dc0dc4795049411de3019cc26d56808fea447de60d17e4151e99068d77ec170688f50ed61d2cf8ca09d
13e914690e1e828754d982afb5a3e4d960824e1efdcff3f3db65b0aa9d8d9b24214c089312d40c75903048eaaffad5478547a5916a963a4d9826e5ab
73851261ae54229509cf089ba4f5ea51e1a8f49e700adf976224c188b268925483c94885d9254462981825451ba28d4e93cbec031ff30ac9182ffaa4
043941f119928d896abcc96bcec42e621743a639d5da5f1888fd45bfd847f2cb7ec56fe86becabf635f9ad7eeb2818c5460959e230e961f96125cb30
dc38421d699a00b92c4f988c79e26469b23c599966cc314d34175867c12c365758804f8a0ba47972893c4f29519ea4b34389b1589d6d5a602e13964b
e5d68db091ad17d6e156718bb449dea43c63f077aa306fb7ee825dac52a8c4bde25ee945f94565afa1d2fcb2f54fc22bf8ba7844aa36be61ad154ee0
59f15d69ae7606498f66da1ff39a987754f597572f7e79b59a7f7cf1871f2f8ad90d153859835f0358d130996ca407ed5273c9464cec7e7f7fc92e2b
b2684751d10a49640243bb406ab7d348d56e54995698543219a39d0c86ce6422130df0862434d704836c6e31909066fdeba64246206bb6a1fdb8a0d5
83bf79d5da5b6ce25f99c46f8f70ccbf491545354a74a9496a2ff11e75a4f888325acd5767b379e26c65a6ba5a5cac3e236e17372a4fab6bd45dec45
f16571a7f2bc1a50dd2a8a12f980290a5d92cb18654ac124c9676c6bf258bab34cec2a7556ba18334da99641d85fea671c6cf25bc69096c70863f011
69943c46196518651c63cab214589e6425962d6cbdb297552a5596f72d972c4d964edaaf3382d7c8e82fddc8c45cfe04db73911fe6872fb257f98c8b
2c85a588d98d971a8fb36a3e50182c84f1e9ac1c9a7f751e4d799204a0b97bbaabaafa06ff5149f8656a39f59ed27f23d67b593c059f74a1808fbe51
cd7f2c97d7e9b31b1b304b799af67b60de0c5ba823341d900ed045270287770578c3d892c6866f7023fb424865d8f8375eda78bde1bbe03c56a4acd0
be2569b957d58913ca8a9f8b82f43809e30a1da34b43e8150861062b2a597726f0c19a80b2e29b869dfc11eee479ac1fbb2e2461da37bfa524c346e1
22dd11eab0bb84d52563794360d7e1804648a8e02057b0f2c6c60f786cc3e3dfd00eb25a9fb742ff564f7b672a66e9a4e8df0e5bd392ee22a4e8d590
0aab4f04d67c1038b3ae4423e65ce3c7dcc18fb02dac8ebd888fc2bfc59d264f3db6b322f66963b6b2e2d6d013da87cb167ae6697319cda5b54e28f3
7e5e7c7bee3c7d6eba3d3e239efa1ab3d9a7d4bdf984fe3d4f18f38c5af6d15f7f1fd2f3278833e81feefefc07eb9596f2d6470d0f58c718b56ffd86
dbdff9689e3295c70058f9ad8fea8759c7fcc397c164f19cfe0d0d042d515f49194a1c5411540a972157380020854319c1e70415045b097209b61194
13ac940fc245f91a9c16af41b1e484c3623e4ca772ba58172c854c38ae8172060e4b0eeabfaabf3f8c83a9de0e0ad00b5de9fd7e79258c128f428fe6
f54f91caaa82b7b0ba55d9faea0613602e6c83cf5838ab105461a27012efc1cd7844ec26be27454979d22b7234c5c417e4b7952465bcf2aa4130641a
be32fa8c13d5b9a61ea6c1a6b1a6d9a625a6eda623a6d3a6aba61fcd68b6993d41b9412a8e80763009ccfaf7e36734a98a2e218c4a51ff063a56fbae
231a6970aafedd54ab3308a356b02e8081f56fae63abf762abba04116c68735d0627cb87fba1000a89ab19f0384ca4d567ea5f8127400a9569904a77
3ad5c6d3080ff4a13133a1886006e4410e4c85f6f476104ca3f11da9761f4ca1db030fdfc655a4b7f2a8cca339b3e9994b23d57f63d52eb7571d412b
cda6b5b42f79d368b446470ecdf9dfadd8976a9369de2898452326d0d81c1d5b9e3e2347e7c84358a6d153fb263e9ef03e4ee33c34bf8056cfd1fb7e
8b67b88ea58828a2f3393c416fb5558b686c818e298dd64e878cbb66b5cc118246d5f407fd7f20fef14ad5ed42fbcf16edbf526ce0d033301784d169
280222219af0a712e67ed01f06c040180c0fc0439005c388ffe13092d67a044613ee47b5bd8fcec6129399c20ccc48a77e1333338b326bdae369e9a9
f735977d8265664b797f73d9b7b9ecd75cf60f96f7a5369769cd657a7399d15c7601a81616fa9b7ee558efc45f7cf8f734bc55813f5bf1278e3739fe
cd873f5af1870abce1c3ef9fba4ffa9ee3f50afcae02ebeaf1db7afc86e3d7ddf1ab3e788de3976978f5ca70e96a055ea1815786e3e52f3a4997ebf1
8b4ef839c7cf385e4ac3ff72e2a715f809c78f1df89f0bf0e211fc2bc78f68f8470bf0c2f901d28505787e007ef89768e9438e7f89c63f73fc80e3fb
1cdfe378ae02cf9e8995ce723c138befa6e1698e2797d9a5936e7c3b0c6b399ee0f816c7e31cdfe4f806c7631c5fe77894e3118e87ed5853ea936a38
56bf46190cc7d70e8d935e3b82af2d140ffdc9271d1ae76fc2437ef14f3e3cc8f1d50a3cc0f1158e551c5fe6b83f175fb2e2bebd3e695f2eeedde390
f6fa708f035f24a25facc7dd1c5fe0b88be34e0756727c7e87557a3e0d7758f1b95c0cd09040056ee7b8ed593365edf8ac19b76e8994b6e6e296cd36
694b246eb6e1332a6ee2b8b1c2226de45861c10d34694305ae5f6795d6b7c175567cba1ed7ae3922ade5b8a67c9cb4e608ae592896ffd127958fc372
bff8471faee6b86a65476915c7951df12962f3a9fb70c57293b4c289cbe928492fca72b1942455eac365765cca71c962bbb484e3623b2ee2b8906309
477fd31f162c90fec071c1029c9f8bc5235c52b10fe7719ccbf1492bce31e36c1567719c598f45f538a31ea7d76321c7028ed3384e89c727384eb6f7
91260fc7c7394e5a8013a991cf318f632ec7091cc773cce98ed9f5f89819c7717c94e3588e6346abd2987a1cade2236191d22369388ae3485a79641f
1ce1c2e1cc260d8fc0879d386c70a8348c6396091fe238f4419b3494e383367c80e310ea19c271f0209b34381407c558a441361c68c1011cfb5760bf
0aeccbf17ea183747f3df63982f70d413fc7de1cefede590ee7562af9e21522f07f6ec61917afa9b42b08705bb73cce4d8adab53ea568f5dbbd8a4ae
4eec926192bad830c3849d6331dd8269f798a4348ef79830b593494ab560271376ec60943adab08311dba761bbb63ea95d2eb64d71486d7d98e2c036
c93ea9cd7d98ecc3249f494a0a419f0913397a392684603cf119ef404f2ec6d5632cb1109b8b3116749304dd1ca3eb31aa0f4652239263442e8693a4
c23986d1a4b048747174720ce5e8a0010e4ed97307c9de076d0b302417ad1c2de630c9c2d14ca3cd6168e2a8dad0c8d140c30c1c1527cab92852a748
16e0427a8b9cf2289b2474406643e0c8aa59eeb2d5acddff850bfe7f13f0df5e31ff0f06cb228d
>}
\immediate\pdfobj useobjnum \csname EF9O13\endcsname {[ 32 [ 318 ] 40 [ 390 390 ] 45 [ 361 ] 47 [ 337 636 636 636 636 636 636 636 ] 56 [ 636 ] 61 [ 838 ] 77 [ 863 ] 97 [ 613 635 550 635 615 ] 103 [ 635 ] 105 [ 278 ] 108 [ 278 974 634 612 635 ] 114 [ 411 521 392 634 ] 119 [ 818 592 ] 8970 [ 390 390 ] ]}
\immediate\pdfobj useobjnum \csname EF9O14\endcsname stream attr{ /Filter [/ASCIIHexDecode /FlateDecode]}{
789cedce470a02500c05c007f6de7b6ff7bfa21f11176e145cb89981c04b0821c98f2a6f7d35b567aa976aa49956dae9a49b5efa6532786d0e3f5c1e
7df9c138934c33cbbce445a9e563baca3a9b6cb32b799f438e399574cee5cbab0000000000000000000000000000000000ff70cded0e217e02bd
>}
\immediate\pdfobj useobjnum \csname EF9O15\endcsname stream attr{ /Filter [/ASCIIHexDecode /FlateDecode]}{
789c5d52cb6e833010bcf3153ea68788f088dd4808a94a2f1cfa50694f550e602f115231962107febeb6772155916034b33b9e95d9f85c3d57ba9f59
fc6e4759c3ccba5e2b0bd378b312580bd75e4749ca542f6762e12b87c644b133d7cb34c350e96e8c8a82c51fae38cd7661bb2735b6f01031c6e237ab
c0f6faca765fe71aa5fa66cc0f0ca0677688ca9229e8dc712f8d796d06607130ef2be5eafdbcec9deddef1b9186069e0098e244705936924d8465f21
2a0eee2959d1b9a78c40ab7ff524475bdb6dfda9ef47f846bc04f911e513c91bc5aa42aaa8aa48ee829c71928962429620a40819428e705c8de19c0c
d33c7c23a28ca919a566949aa39c939c93cc318f1f515e29c6738ce7f9da831681549045907c4299ee82d32570196441374794e3341c10bab5275804
a60b9a69a5388cc8d722f6e230822ee14e5d35cd0e8da30e5a5ffd4b2ffeb7afffd76f805fd76dbde4cd5ab75961a7c34af965ea356c6b6f46e35dfe
fd052ce1d2c4
>}
\immediate\pdfobj useobjnum \csname EF9O16\endcsname {<< /I1 \csname EF9O17\endcsname\space 0 R /I2 \csname EF9O18\endcsname\space 0 R >>}
\immediate\pdfobj useobjnum \csname EF9O17\endcsname stream attr{/Type /XObject /Subtype /Image /Width 148 /Height 148 /ColorSpace [ /Indexed /DeviceRGB 3 <e3e7eab44f2117658a88bac7> ] /BitsPerComponent 2 /Filter [/ASCIIHexDecode /FlateDecode] /DecodeParms [null << /Predictor 10 /Colors 1 /Columns 148 /BitsPerComponent 2 >>]}{
789cedd5b111c2300c05501d1c0d0d154bb0441881021feb300a25270a46c04bb0041535c745c45608c124471448f77f27e7b99172321165722512f1
3493bb566311d22a1cc70acaac6ecc1293339f54311ff5e8ec3650bd547bef9d5b769810d48faa4aa5f2e7e0a07aabb4f713b7eeb073a05a55dc265e
bb9d6e9357efa16c2aedfd88b9c3fb083584aa6f930b1fa07aa805ef6a6aee56d414289b229aea13572c15e6ad1e7171358e200b57a1cc2af65ef1be
5625811a4295c9de261492cc11caa262b7c3e76a473705caa6ca6de22976dbeb117ffed85016f56533400da890bfe701b5610d49
>}
\immediate\pdfobj useobjnum \csname EF9O18\endcsname stream attr{/Type /XObject /Subtype /Image /Width 148 /Height 148 /ColorSpace [ /Indexed /DeviceRGB 4 <e3e7eab44f2117658abf7e9a88bac7> ] /BitsPerComponent 4 /Filter [/ASCIIHexDecode /FlateDecode] /DecodeParms [null << /Predictor 10 /Colors 1 /Columns 148 /BitsPerComponent 4 >>]}{
789ced98518ac240104447c703ecde40740f30267b8005bdff9936d530159b9901853418a8fa2aa75f9ef9e84034a525a7c71294342f29ae6434573a
d02a904aaa0895e582cccf997072ae73c309fde213caed7bc95772914aaa8d551967d6508a2b9c75a0ce0d4925d51e5496470d4f4a0b21937bcc8a54
52c5abacf19c85b315e2927b8b5452c5aafceb0736feeed0e3a5666e52a4922a5e658de72c9c996a0025a9a4da956a90d242705e1d64bf0ea5922a4c
b5861763f68772c0cbc60062b1af914aaa3055ae3becff309b70f25317d9439d48255598eadd67a2a17b9054527da42acfcff10fced5a93cc4dca592
2a54d5d95f96dc529c311d482aa93654597be9cdc220ee7707924aaa08d511c060a34fb8d06f34672c592aa9f6a222300af1414a7b1f5249b58dea1f
b0fae7ba
>}
\immediate\pdfobj useobjnum \csname EF9O19\endcsname {<< /A1 << /Type /ExtGState /CA 0 /ca 1 >> /A2 << /Type /ExtGState /CA 1 /ca 1 >> >>}
\immediate\pdfobj useobjnum \csname EF9O20\endcsname {<<  >>}
\immediate\pdfobj useobjnum \csname EF9O21\endcsname {<<  >>}
\immediate\pdfobj useobjnum \csname EF9O22\endcsname stream attr{/Type /XObject /Subtype /Form /FormType 1 /BBox [0 0 306 216] /Resources << /Font \csname EF9O1\endcsname\space 0 R /XObject \csname EF9O16\endcsname\space 0 R /ExtGState \csname EF9O19\endcsname\space 0 R /Pattern \csname EF9O20\endcsname\space 0 R /Shading \csname EF9O21\endcsname\space 0 R /ProcSet [ /PDF /Text /ImageB /ImageC /ImageI ] >> /Filter [/ASCIIHexDecode /FlateDecode]}{
78daed5a4d6f1c37129d33f74f34908b7430c5e2370f7b88378991007b5022600feb1c1249f1c690ecc84ee2fdf9fbc8e96e16d9ad1e8d2640f610db
c268debc6ad614591f7cf0c34083c25f1a5ea8fceffa5e5c7c71fbfbcfd7b7dfbe7a39fce33bfeeefaa3a0e12d7edec0e02d7e3ec1ec157ede88fc88
7b6194c7eb5d79d5e4f19b1a5fff23c44fe2e273503f82f14a0893648843d43205edbc8631592753b2317806df7198629214833531af511ed0406591
87a17d34a568dc40ca4b4df929ca18eadf7eb81dfe35bc1b2e3ed7d93f7c47fc7c82efaf86f6db3f1443e7f711dbffbab6daf5fd70f1350d5fbc1f2e
c5e5f0303d582170f9e14a46842e3f1e88b046aa9434b9261a150d49da2918e2a5c89e3d88bc692ff2ae59259353210483000525bd576af642bcbc12
175fd1e06518ae7e2a9b747523fe3d9cedd4f9f0fd70f58df8f24a5c8ae28ff0497aa793b18d1f15ddf6c307e994b5c99b68cd118ed8a523097ba3bc
77a671a4a2db8e242b958b089c49fa9888c4a523a4b5cc2724e8f6a05678db15a228ad37084a522a1ee10bedf48a37564b1f9cf7d4a5cd0c1ff0c644
69420cd1930b4739e3b833f93c9f9a617af86620911341bbb188b489b6ac0e2be92ebe13d64b4bf9c035a933813ddb21a79c8fbe61cfe0828d939f9c
8ea9a5cf68cff75e1a6302b5fc8af6fc5058645b7e45177cac4cca858e3fa33d3f62036c39789c5fd19e9fb48cfbb74ddacde8829fa4d32efa8e3fa3
3d3f9f08f256a7aeec57786141585ce9a0436b51e1a50552c0523ee0adc50c2f2cb4972a24173b8b0a2f2c0cd29fa2a2cea2c24b8b248df3d6face62
8617166bbdf0fed15e088b953643acc9f04aeda429868abc1bf279eeeb00925ffa185cae00175fa1ca499d6bc2d9eec7f3e1eaad4064e088f6c6a302
aa3d892ae9ef851424dca4a83d55127bd2b723c9246c0a3693d69ef4faacb0b0db3ada449a54dab3ce76f7a3bdf2da8490740ad327efcb27565a84d6
c764d130c64f6e46ef134a9633984bfcf4098d361edb1e544cd14e9ff8473f797d5e3e62057a2a514ee2f86b6d72d542b635d05a61d6e812d1104a21
4583a2492918b2eae8063eae8f1d85399a715dbf4207d6771edd3e6a67719af5b17d7b5cdf94ee18a2abeb57e8c0fa464952d83247da1edbadc7e573
df510193425dbe42abcb5b2492cfab2b239d53917ca9144737e8b16961b14418c8e6f52bb2b93cf6de1b87bc8a2efdbf7664567536abd1c4d6d2ed9b
4dc79e71ce0e20ec5b59cbae38676330d5fb46d9b22bdeb03124eddb70c79e71ce7699509a7ccbae3867db52c6a3efc833dc704b93b0a6e74e30e7e2
f42c22b7c7384b97d6e642479ce1869b1b272a41cf9d60cea5d296035147ae3867abdcf6bd8e3dbbe20dbb0c15d12dd833ced8a94c2c5677e40a336e
2cd310ee4ddd7d728639f7f0ddf3297db55c9d733663d80f665fc6f02b76e94033fda1f40f7406e4bcb65e257b44333dfbecf5bb479ae34a9f7512b7
151b9c43695959e4627c92b5d82a4cb7864e688c9ffdadf6c55c55d47ce17dde887fcc66a9b21edfa566f20956a2b5602f5c9e7c28b971a13208d582
1bbb72fbcfddfbddcdeeb7dd1d5e87b6f08f8a065744d4a888084ad878553a19f75ee3e046dd783e42cd3767d64b8d231fca996092f7e999a55f54b9
63789adcb1b96cd13df413750f4a46266d8d692f340c3e709f4d2443704991394a7158999cb45658292ad71e34066fbba291f006bba8559e5c4e533f
b4c5546bcab8d0f852e103be588b228a3b6732798e384900d101135144ab6eaf210c3ee04b6ec611d38d8a3e9c2c80689c0ca7ca18d926d40c1ff006
79841dc22c91b3ea2401e40f4bbe32770d1b73d72355e4b1f290ef8e09851db71fd75bcc706f81a0e0c864e7dbc0567861410a7d5de19ad05a547869
1124b9d86b350c5e5868230388a1b3a8f0c202d7077cc3a05abd8ec14b0b6455c8b79dce6286171616cb53e1b5e939c30b0ba7300896e9a3b1a8f0d2
0297d458669bd6628617161e97075d26a7c6a2c20b0b9c7fedcb64d6a6f60c2f2db0bc2a935f6b31c30b0b5c699ccd83656330a30b3e0e35c55ebca9
e892dfb5d2fbd5567a943aa22967eb9ccb7fc9237f983cd2942ca691609f70e9eff1d58e8ffa6e54d43696aa7b9a5ac2dde1920977675b37e9dd3949
3ce1ee700585bbb32da3f4ee9ca2a5706fb8a0c2bdd95655f2ad5d1199b0ef90a76a2bbc0d328185f9b3adb274ee9ca4b5fc29cdbf515e1ea9720d9d
492f33bd955d389d6b2f13bdd35d389d8b2f13bd135e1a3a535f667aabbc34e9c8e49789de492f9ccef49789dd6a2f0db90a3033b9115f9a441c1598
89c9d4174e6312ccc46ce597865c359899dce82f9ccc459889dd09304daa321566a2770a4c436732cc4c6f251846e73accc8ee34189ea44c8899da74
2bc23cfd7a7c9c0a8363e85cad827f493187a598e7dc318ed8bc03aa8cd64e1ac43c4b3ae6d9b28c599165f05d93a2945b2642822915170d6d9c4ef9
0d05245146860fad78c3adc4969510314a0a45f2d3e8fdc69a505ab622e97af88ec386a4d1235c9fc1d1f1ffcb3c6562cd2d26996eb9ad96768d80bd
db7ddcfdbafb01afbfae06ce99e90fde049d2c956f9edfa054e5ff9782e45a04ae5a892d2b1e38543347b18fdb8c3661d3848923765163e0f38236af
b515b35f762f763f226ebf216237bbdb3c2caf444df1e366f6bfe5ba8ec88ca1507e1935c58fdbe3563c6a5e5a557e69a236a34dd488b0c208b3b833
f479719b573b14b737bb0fe78346f2c418954b41113e78bffbb4fb19c17cb31ac860a798e08d9dce11e5fa3d9ea3b812c86a25b6ac84c8a33a26c19c
284de2e2f24b3d7cc7619ea2ec21cfcd5c4c583675eb6d45f31e21fbeff980c9de50c01080913fc7f2b1e31878e9322aed7310d73a49353797519cad
c4965513459ec5355a3c8b59686bc2f27d785e1ad7083e298df767eea69cbe17bbdbf36c859d43ab8d1ec3da598e6e0f5d23beb7cba4bf14e2f27f68
3b3a3a
>}
\expandafter\gdef\csname EFWidth9\endcsname{306}
\expandafter\gdef\csname EFHeight9\endcsname{216}
\endgroup
\edef\EFSetResources{\noexpand\pdfpageresources{ /XObject << /EF1 \csname EF1O26\endcsname\space 0 R /EF2 \csname EF2O30\endcsname\space 0 R /EF3 \csname EF3O17\endcsname\space 0 R /EF4 \csname EF4O14\endcsname\space 0 R /EF5 \csname EF5O13\endcsname\space 0 R /EF6 \csname EF6O15\endcsname\space 0 R /EF7 \csname EF7O25\endcsname\space 0 R /EF8 \csname EF8O20\endcsname\space 0 R /EF9 \csname EF9O22\endcsname\space 0 R >> }}\EFSetResources

\begin{document}

\title{\texorpdfstring{Diagonal Bases and Diagonal Periods\\ of Elementary Cellular Automata}
{Diagonal Bases and Diagonal Periods of Elementary Cellular Automata}}

\author{\authname{Tigran Nersissian}\\[2pt]
\authadd{Independent researcher}\\
\authadd{Nice, France}\\
\authadd{ORCID: \texttt{0009-0001-8755-7412}}\\
\authadd{Email: \texttt{tigran.nersissian\symbol{64}hotmail.com}}
}

\markboth{Complex Systems}
{Diagonal Bases and Diagonal Periods}

\maketitle

\begin{abstract}
Which cellular-automaton diagonal families form bases in every finite
window? For canonical polynomial lifts of elementary rules, two
truth-table bits determine triangularity, and units on the matrix
diagonal determine invertibility. Exactly 24 rules give universal
binary bases; all remain universal over every modulus. Among triangular
binary coordinate maps, the Pascal transform is uniquely characterized
by converting OR convolution into pointwise multiplication, while
increment becomes strict prefix summation. Explicit inverses and coordinate
comparisons distinguish sparsity from evaluation cost. A Rule 30
polynomial construction gives Fibonacci bounds on interpolation order
and prime-modulus periods. Exact additive periods anchor a finite
census modulo two and three. These results separate all-window basis
classification from optimization of a representation and from period
patterns observed in finite windows.
\end{abstract}

\begin{keywords}
cellular automata; change of basis; zeta transform; Pascal matrix; diagonal periods;
modular arithmetic; rule classification
\end{keywords}

\section{Introduction}

Parallel diagonals of a cellular automaton can be used as columns of
a coordinate matrix. The central question is which rules make that
matrix a basis at every finite length. For the canonical single-seed
elementary lifts considered here, the binary answer is an explicit
list of 24 rules. This is an all-window statement: one local truth
table must generate invertible matrices at every size. Once that
classification is established, a second question becomes possible:
which of the available coordinates simplify the operations needed
to represent and evaluate another diagonal?

The modulus matters at the first step. Over a field, a triangular matrix
is invertible when its diagonal entries are nonzero. Over a composite
modulus, the required condition is that those entries be units. A
greedy elimination procedure must therefore include solvability and
uniqueness conditions at each pivot. We formulate that procedure with
its exact domain of completeness before applying it to cellular-automaton
diagonals. This prevents a successful elimination in one finite window
from being mistaken for a basis theorem valid at every length.

The classification follows two steps. Theorem~\ref{thm:tri} shows
that two rule bits decide triangularity for all window lengths.
The unit condition in Corollary~\ref{thm:criterion} then reduces
universality to the values of the center column. Proposition~\ref{prop:congruence}
gives exactly 24 universal binary rules and proves that each remains
universal over every modulus. For a general composite modulus the
question reduces to its prime divisors, with eight candidate rules
left in the unresolved odd-prime part. Universality throughout this
classification means that every finite target vector has unique
coordinates in the diagonal family. Computational universality
concerns the full dynamics and enters only in the later comparison
of those two meanings.

Invertibility leaves open the quality of the representation. A target
can have fewer nonzero coordinates in one basis and still require
more work to transform or evaluate. We examine this issue through
four explicit bases and through the operators needed by the Newton
support calculus. Pascal coordinates have a precise algebraic
characterization: among triangular binary maps, they uniquely turn
OR convolution into pointwise multiplication. Under the same change
of coordinates, increment becomes strict prefix summation. The latter has
a dense matrix but an inexpensive direct implementation, illustrating
why matrix density and transform cost have to be assessed separately.

The diagonal arrays also connect this basis question to recurrence
theory. For Rule 30, the integer polynomial lift gives a system of
successive discrete integrations. Tracking its degrees produces a
Fibonacci bound on interpolation order. Reduction modulo a prime
then gives period bounds through the periods of binomial sequences.
The additive rules supply exact comparison cases in which the
coefficient structure is explicit.

The finite comparisons test two possible extensions of these
structural conclusions. First, does the distinguished role of
Pascal coordinates make them weight-optimal? Explicit examples
give a negative answer within the tested family, even when the
complete relevant Pascal support lies inside the window. Second,
do period patterns detected in one window persist when that window
is enlarged? The census records how depth, width and test order
affect the answer. These comparisons explain which stronger claims
are unavailable from invertibility or transform identities alone,
and lead to the optimization and period questions at the end.

\subsection{Relation to earlier work}

The algebraic theory of linear and additive cellular automata is
established in \cite{mow,dennunzio}; Willson \cite{willson} develops the
fractal viewpoint. Rowland and Yassawi connect automatic sequences with
rational diagonals and with columns of linear cellular automata
\cite{rowlanddiag,rowlandcolumns}. The arithmetic tools used here are
Lucas' digit criterion \cite{lucas,fine}, binomial periodicity
\cite{zabek2}, and the Frobenius map \cite{lidlnied}. Mattarei
\cite{mattarei} studies a related converse question for signed binomial
coefficients. Fiorot, Gilblas and Tonolo \cite{fgt} study iterated
antidifferences of modular periodic sequences. The subset zeta transform and covering product belong to
the standard transform calculus \cite{bhkk}.

The specific object classified here is the diagonal matrix family
generated by a rule. Pascal inversion, triangular elimination,
Frobenius and the covering-product transform supply established
tools for studying it. The rule-bit criterion and the resulting
24-rule classification are the principal conclusions about that
family; the uniqueness theorem explains Pascal's role in the
associated support operations. The Fibonacci interpolation argument
is included with its attribution in Section~\ref{sec:recurrent}.
The period census then compares finite profiles for which the
preceding theory supplies exact additive reference cases.

\section{Preliminaries}\label{sec:bg}

We use $\FF=\mathbb F_2$ and $\ZZ_N=\mathbb Z/N\mathbb Z$. A binary
support $S$ represents the sequence
$t\mapsto\bigoplus_{r\in S}\binom tr\bmod2$. For a finite window,
only indices $r<t+1$ contribute. On coefficient vectors the product is
OR convolution,
\[
 (a\conv b)_r=\bigoplus_{i\vee j=r}a_i b_j,
\]
where $\vee$ is bitwise OR. On a window $0\le r<n$, terms with
$i\vee j\ge n$ are discarded. Increment means
$(\Inc a)_r=a_{r-1}$, with $a_{-1}=0$.

The following facts fix the connection with \cite{U1,U2}. The matrix
criterion is proved below, and the carry identity is also proved here
so that the linear-algebra arguments are self-contained.

\Needspace{5\baselineskip}
\begin{background}[Support as a binomial transform]\label{bg:I-1-18}
$\mathbf 1_{S_m}$ is the binomial (zeta) transform of the diagonal $b(m,\cdot)$, the
transform matrix being $\big(\binom nr\bmod2\big)_{n,r}$, unitriangular hence invertible over
$\FF$.
\end{background}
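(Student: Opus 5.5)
The plan is to identify the support $S_m$ with a coefficient vector: its indicator gives, via the preliminaries, the coefficients $c_r=\mathbf 1_{S_m}(r)$ in the expansion
\[
 b(m,t)=\bigoplus_{r}c_r\binom tr \bmod 2 .
\]
Write this as a matrix identity on a window $0\le t,r<n$. With $P_n=\big(\binom tr\bmod2\big)_{t,r}$, the column vector of diagonal values is $b(m,\cdot)=P_n\,\mathbf 1_{S_m}$. This is exactly the statement that the diagonal is the binomial (zeta) transform of the support indicator. Here I use the convention, fixed earlier, that only indices $r<t+1$ contribute, so truncation to the window is harmless.

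Next, show that $P_n$ is unitriangular. For $r>t$ we have $\binom tr=0$, so $P_n$ is lower triangular. Its diagonal entries are $\binom tt=1$, so it is unipotent over $\FF$ and $\det P_n=1$. Invertibility follows at once, and no appeal to the general unit criterion of Corollary~\ref{thm:criterion} is needed.

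To make the inverse explicit, I will check that $P_n^{-1}=P_n$ over $\FF$. Over $\ZZ$ the inverse of the Pascal matrix is $\big((-1)^{t-r}\binom tr\big)$, by the identity
\[
 \sum_k(-1)^{k-r}\binom tk\binom kr=\delta_{tr},
\]
which follows from $\binom tk\binom kr=\binom tr\binom{t-r}{k-r}$ and the binomial theorem. Reducing mod $2$ removes the signs. Alternatively, Lucas' criterion writes $\binom tr\bmod 2=[r\subseteq t]$ bitwise, so $P_n$ is the subset zeta matrix on $\{0,\dots,n-1\}$, and over characteristic two its M\"obius inverse coincides with itself.

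The only point needing care is consistency across windows. The support must be well defined independently of $n$. This follows from triangularity: the restriction of $P_{n+1}$ to the first $n$ rows and columns is $P_n$. Hence the coefficients solved in window $n$ agree with those solved in any larger window, and $S_m$ is a single subset of $\mathbb N$. I expect no real obstacle here, since everything is routine once the triangular structure is written down.
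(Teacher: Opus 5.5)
Your argument is correct and is essentially the paper's. The paper imports this statement as background and treats $S_m$ as the Newton (binomial) support of $b(m,\cdot)$, exactly as you do. The matrix facts you verify are the ones it establishes in Corollary~\ref{cor:zeta} and Theorem~\ref{thm:spectrum}: lower unitriangularity of $\big(\binom tr\bmod 2\big)$, self-inverseness over $\FF$ (via Lucas or the signed Pascal inverse), and agreement of coefficients across prefixes. One small point: the involution $P_n^{-1}=P_n$ is what turns your identity $b(m,\cdot)=P_n\mathbf 1_{S_m}$ into the stated direction $\mathbf 1_{S_m}=P_n\,b(m,\cdot)$, so it is a needed step, not just an explicit description of the inverse.
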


\Needspace{5\baselineskip}
\begin{background}[Increment is not multiplicative]\label{bg:V-4-3}
In general $\Inc(A\conv B)\ne\Inc(A)\conv\Inc(B)$.
For $A=\{1\}$ and $B=\{2\}$, the two sides are $\{4\}$ and
$\{3\}$ respectively, on any window containing $0,\ldots,4$.
\end{background}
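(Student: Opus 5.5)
The statement is a counterexample claim, so the plan is a direct computation on the singletons $A=\{1\}$, $B=\{2\}$ in a window $0\le r<n$ with $n\ge5$. Four quantities are needed: $A\conv B$, $\Inc(A\conv B)$, $\Inc A$, $\Inc B$ and $\Inc A\conv\Inc B$. I would then compare the two resulting supports.

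First, OR convolution of indicator vectors of singletons $\{i\}$ and $\{j\}$ is the singleton $\{i\vee j\}$. This holds because the sum $\bigoplus_{i'\vee j'=r}a_{i'}b_{j'}$ has exactly one nonzero term, namely $i'=i$, $j'=j$. It survives truncation whenever $i\vee j<n$. Here $1\vee 2=3$ (binary $01\vee10=11$), so $A\conv B=\{3\}$, and the increment shift $(\Inc a)_r=a_{r-1}$ gives $\Inc(A\conv B)=\{4\}$.

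On the other side, $\Inc A=\{2\}$ and $\Inc B=\{3\}$. Their OR convolution is $\{2\vee3\}=\{3\}$, since $10\vee11=11$. The two sides are therefore $\{4\}$ and $\{3\}$, which differ. Every index involved ($1,2,3,4$) lies in $\{0,\dots,4\}$, so no term is discarded by the window, and the same computation is valid for any window containing $0,\dots,4$.

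No step is a real obstacle. The only point needing care is the window convention: we must check that each intermediate index, in particular $4$, is below $n$, so that truncation does not alter either side. This is exactly why the statement requires the window to contain $0,\ldots,4$.
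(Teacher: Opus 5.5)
Your computation is correct and is essentially the paper's own argument: the paper treats this as a direct singleton witness (restated in Remark~\ref{ver:comm}), using $\{i\}\conv\{j\}=\{i\vee j\}$, so that $\Inc(\{1\}\conv\{2\})=\Inc\{3\}=\{4\}$ while $\Inc\{1\}\conv\Inc\{2\}=\{2\}\conv\{3\}=\{3\}$, with every index inside the window. The only slip is trivial: you announce four quantities and then list five.
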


\Needspace{5\baselineskip}
\begin{background}[Triangular carry matrix]\label{bg:V-4-4}
In the zeta basis $\conv$ is a pointwise product while the carry is a full triangle:
$\Zet\Inc\Zet^{-1}=L$ with $L[i,j]=[i>j]$, nilpotent, and $\Zet(I+\Inc)\Zet^{-1}=I+L$,
unipotent.
\end{background}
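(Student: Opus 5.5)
We need to prove the Triangular carry matrix statement: with $\Zet$ the matrix $(\binom nr \bmod 2)_{n,r}$, the conjugate $\Zet\Inc\Zet^{-1}$ equals $L$ with $L[i,j]=[i>j]$.

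The plan is to verify the equivalent identity $\Zet\Inc=L\Zet$ entrywise over $\FF$, which avoids computing $\Zet^{-1}$. The left side has entries
\[
(\Zet\Inc)[n,r]=\sum_j\binom nj[j=r+1]=\binom n{r+1},
\]
because $\Inc$ has matrix $\Inc[j,r]=[j=r+1]$; this follows from $(\Inc a)_j=a_{j-1}$. The right side has entries
\[
(L\Zet)[n,r]=\sum_{m<n}\binom mr.
\]
So the whole statement reduces to the hockey-stick identity $\sum_{m=0}^{n-1}\binom mr=\binom n{r+1}$. This holds over $\ZZ$ and hence modulo $2$.

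The proof of the identity is by induction on $n$ using Pascal's rule $\binom{n+1}{r+1}=\binom n{r+1}+\binom nr$. The base case is $n=0$, where both sides are $0$.

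Window truncation needs a check. On the window $0\le r<n_0$, $\Inc$ discards $a_{n_0-1}$. Both sides of $\Zet\Inc=L\Zet$ are lower triangular, so the identity restricts to leading principal blocks and remains valid on every finite window.

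The remaining claims follow directly:
\begin{itemize}
\item $L$ is strictly lower triangular, hence nilpotent.
\item Conjugating the identity matrix gives $\Zet(I+\Inc)\Zet^{-1}=I+L$, which is unipotent.
\item The statement that $\conv$ becomes pointwise multiplication in the zeta basis follows from Proposition~\ref{bg:I-1-18} together with Lucas' criterion $\binom tr\equiv1 \iff r\subseteq t$ bitwise. Under this criterion, $(\Zet a)_t=\bigoplus_{r\subseteq t}a_r$, and the subset-sum transform of an OR convolution factors as a product because $i\vee j\subseteq t\iff i\subseteq t$ and $j\subseteq t$.
\end{itemize}

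The main obstacle is conventions. One has to confirm whether $\Zet$ acts on columns or rows, so that the product is $L$ and not its transpose $L^{\mathsf T}$. This is fixed by checking the small case $n_0=3$ directly before writing the general argument.
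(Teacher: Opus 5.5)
Your proposal is correct and is essentially the paper's own argument: Theorem~\ref{thm:carry} proves $\Zet\Inc=L\Zet$ through the hockey-stick identity, with the truncated last column vanishing on both sides, and the pointwise-product claim rests on the containment equivalence $i\vee j\subseteq t\iff i\subseteq t$ and $j\subseteq t$ used in Theorem~\ref{thm:uniqueconv}. One small precision on the window step: what you actually need is that the left factors $\Zet$ and $L$ are lower triangular, so that each product restricts to the product of its leading blocks; more directly, both window entries in column $n_0-1$ vanish because $\binom n{n_0}=0$ for $n<n_0$.
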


\Needspace{5\baselineskip}
\begin{background}[A repeated-root column]\label{bg:III-3-20}
For $n\ge1$, the sequence with generating function
$z^{n-1}/(1-\alpha z)^n$ is zero at $k<n-1$ and equals
$\binom{k}{n-1}\alpha^{k-n+1}$ at $k\ge n-1$.
The convention $0^0=1$ includes $\alpha=0$. Equivalently this is the
solution of the recurrence with characteristic polynomial
$(x-\alpha)^n$ and initial data $0,\ldots,0,1$.
\end{background}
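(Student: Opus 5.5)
The plan is to verify the stated formula directly from the generating function. I would do this by a binomial series expansion, and then deduce the recurrence characterization from the denominator.

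First, for $\alpha\ne0$ (and in fact formally for every $\alpha$), I expand
\[
 \frac{1}{(1-\alpha z)^n}=\sum_{j\ge0}\binom{n-1+j}{n-1}\alpha^j z^j ,
\]
which is the negative binomial series. It can be proved by induction on $n$ by differentiating $1/(1-\alpha z)^{n-1}$, or by counting multisets of size $j$ from $n$ types. Multiplying by $z^{n-1}$ shifts the index, so we set $k=j+n-1$. The coefficient of $z^k$ is then $0$ for $k<n-1$ and $\binom{k}{n-1}\alpha^{k-n+1}$ for $k\ge n-1$, which is the claimed formula.

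Second, I check the degenerate case $\alpha=0$ separately. The generating function is then just $z^{n-1}$. The formula gives $\binom{k}{n-1}0^{k-n+1}$, which with $0^0=1$ equals $1$ at $k=n-1$ and $0$ for $k>n-1$, so the two agree.

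Third, for the recurrence statement, let $F(z)=z^{n-1}/(1-\alpha z)^n$. Multiplying by $(1-\alpha z)^n=\sum_{i=0}^n\binom ni(-\alpha)^i z^i$ gives a polynomial of degree $n-1<n$. Hence for every $k\ge n$ the coefficients satisfy
\[
 \sum_{i=0}^n\binom ni(-\alpha)^i f_{k-i}=0 .
\]
This is exactly the linear recurrence whose characteristic polynomial is $x^n\,(1-\alpha/x)^n=(x-\alpha)^n$. The initial values $f_0,\dots,f_{n-1}$ are $0,\dots,0,1$ by the first step, since $\binom{n-1}{n-1}\alpha^0=1$. A recurrence of order $n$ is determined by $n$ initial values, so this sequence is the unique such solution.

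The only delicate point is the $\alpha=0$ convention, since the recurrence then degenerates to $f_k=0$ for $k\ge n$. That case is consistent with the direct check in the second step, so I expect no real obstacle.
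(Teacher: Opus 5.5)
Your proof is correct. The negative binomial expansion of $(1-\alpha z)^{-n}$, shifted by $z^{n-1}$, gives the closed form. Clearing the denominator, $(1-\alpha z)^nF(z)=z^{n-1}$, gives for $k\ge n$ the monic order-$n$ recurrence with characteristic polynomial $(x-\alpha)^n$. Monicity then gives uniqueness from the initial block $0,\ldots,0,1$.

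The paper states this proposition without proof, as a background fact imported from \cite{U1,U2}, so there is no in-paper argument to match. The closest in-paper technique is the shift-operator computation in Proposition~\ref{prop:exactnewtonperiod}, which treats the case $\alpha=1$. Extended to general $\alpha$: let $E$ be the shift and $g_j(k)=\binom kj\alpha^{k-j}$, with $g_j(k)=0$ for $k<j$. Pascal's rule gives $(E-\alpha)g_j=g_{j-1}$ and $(E-\alpha)g_0=0$. Hence $(E-\alpha)^n$ annihilates $g_{n-1}$ directly, uniformly in $\alpha$ and including $\alpha=0$. That route verifies a formula you already know, without power series. Yours derives the formula from the generating function.

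Your separate $\alpha=0$ check is harmless but redundant: the formal series identity already covers it, since its $j=0$ term is $1$.

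One point needs tightening, because the paper works over $\mathbb F_p$ and $\ZZ_N$. Do not prove the series by the differentiation induction as stated. Recovering the coefficients of $(1-\alpha z)^{-n}$ from the derivative of $(1-\alpha z)^{-(n-1)}$ requires dividing by $(n-1)\alpha$, which need not be invertible there; for instance, the derivative identity reads $0=0$ when $p\mid n-1$. Any of the following is characteristic-free:
\begin{itemize}
\item use your multiset count;
\item use the convolution $(1-\alpha z)^{-(n-1)}\cdot(1-\alpha z)^{-1}$ together with the hockey-stick identity;
\item prove the identity in $\mathbb Z[\alpha][[z]]$ and specialize $\alpha$ afterwards.
\end{itemize}
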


\Needspace{5\baselineskip}
\begin{background}[Universality criterion for a diagonal basis]\label{bg:XI-4-1}
The diagonal basis of a rule over $\ZZ_N$ from $\delta_0$ is lower triangular for every
window exactly when $p(0,0,v)\equiv0$ for every $v$ in the value set of the centre column
together with $0$; it is universal when in addition every centre--column entry is a unit.
\end{background}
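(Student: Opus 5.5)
The plan is to fix coordinates in which the condition on $p(0,0,v)$ becomes a statement about a dead half-plane. Then I would read off both triangularity and the diagonal entries directly. Write $x(t,j)\in\ZZ_N$ for the space–time array grown from $\delta_0$, with $x(t+1,j)=p\big(x(t,j-1),x(t,j),x(t,j+1)\big)$. I take the diagonal basis to consist of the columns $a_m(k)=x(k,k-m)$, for $m\ge0$ and window rows $0\le k<n$. With this orientation, row $k=m$ of column $m$ is the centre-column entry $x(m,0)$, and rows $k<m$ sit at negative positions $j=k-m<0$. So the statement reduces to two claims. First, the half-plane $j<0$ stays identically zero for all time exactly under the stated condition on $p(0,0,\cdot)$. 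Second, once this holds, the matrix diagonal is the centre column.

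For sufficiency of the triangularity condition, I would argue by induction on $t$ that $x(t,j)=0$ for all $j<0$. At $t=0$ this holds because the seed is $\delta_0$. Assume it at time $t$. A cell $j\le-2$ at time $t+1$ sees the triple $(0,0,0)$ and receives $p(0,0,0)$, which is $0$ because $0$ is included in the value set. The cell $j=-1$ sees $(0,0,x(t,0))$ and receives $p(0,0,x(t,0))$. Since $x(t,0)$ is a centre-column value, this is $0$ by hypothesis. Hence $a_m(k)=0$ for $k<m$, so every window matrix is lower triangular, and its diagonal entries are $x(m,0)$.

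For necessity, I would take the first time $t$ at which the hypothesis fails for the value actually present. Either $p(0,0,0)\ne0$, in which case $x(1,-2)\ne0$; or $p(0,0,x(t,0))\ne0$ with the half-plane still zero at time $t$, in which case $x(t+1,-1)\ne0$. In both cases this nonzero cell is a strictly upper entry of some column $a_m$, at row $k<m$. Any window containing that row and column is then not lower triangular. The main care point in the whole argument is choosing $t$ minimal in this way, so that the triple seen by the cell is genuinely $(0,0,v)$ with $v$ a centre-column value; the rest is bookkeeping.

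Universality then follows from triangularity. The window matrix is lower triangular with determinant $\prod_{m<n}x(m,0)$. If every centre entry is a unit, this determinant is a unit in $\ZZ_N$. Equivalently, forward substitution divides only by units and solves uniquely, so every target vector has unique coordinates at every length. Conversely, if some $x(m,0)$ is a non-unit, every window with $n>m$ has non-unit determinant and is not invertible. This shows the unit condition is exactly what is needed once triangularity holds.
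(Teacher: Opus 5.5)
Your proof is correct, and its core mechanism matches the paper's (Theorem~\ref{thm:tri} together with Corollary~\ref{thm:criterion}):
\begin{itemize}
\item entries above the diagonal are exactly the cells at negative positions;
\item while that half-plane is dead, the only neighbourhoods that occur there are $(0,0,0)$ and $(0,0,A(t,0))$;
\item the matrix diagonal is the centre column.
\end{itemize}
The two routes differ in the sufficiency direction. The paper does not induct from the orbit-value condition directly. It first uses the canonical lift, $p_R(0,0,c)=\varepsilon_\varnothing+\varepsilon_c c$ with coefficients in $\{0,1\}$, together with $1\in V$ from the seed, to upgrade the hypothesis to $R\equiv0\pmod4$. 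That gives the uniform vanishing $p_R(0,0,c)=0$ for all $c\in\ZZ_N$, and only then does the paper run the induction. You run the induction directly on the stated hypothesis, using that $x(t,0)\in V$ holds automatically at every time. Your argument is therefore more general: it works verbatim for any local map $p:\ZZ_N^3\to\ZZ_N$, which is the setting Remark~\ref{rem:uniform} points to. The paper's detour instead buys a stronger conclusion for canonical elementary lifts: triangularity is independent of the orbit and of $N$, and can be read off from two truth-table bits. For necessity, your minimal-failure argument is valid but more than is needed. If the half-plane stays zero at all times, the updates at $j=-2$ and $j=-1$ already force $p(0,0,0)=0$ and $p(0,0,x(t,0))=0$ for every $t$, which is how the paper argues. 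Your converse for the unit condition (a non-unit centre value gives a non-unit determinant once $n>m$) is also fine, and it matches the ``if and only if'' form of Corollary~\ref{thm:criterion}.
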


\begin{figure}[!htbp]\centering
\EmbeddedFigure{\textwidth}{1}
\caption{The left-boundary test for six canonical
polynomial lifts modulo five, from a single seed, on
$0\le t\le8$ and $-8\le x\le8$. Each title gives
$p_R(0,0,c)$; the dashed line separates $x<0$ from $x\ge0$.
For Rule 60 this polynomial is zero, so the left half-plane
remains zero by Theorem~\ref{thm:tri}. For Rules 30, 90, 110 and
150, it equals $c$, and the outlined cell $A(1,-1)=1$ already
violates that condition. For Rule 105 it is $1+c$; the outlined
cell $A(1,-2)=1$ exhibits the evolving background, which is
retained in every update. The discrete key identifies all five
residues, including zero as white.}
\label{fig:rules5}
\end{figure}
\section{The mod-$N$ lift and the diagonal family}

The first task is to specify which modular dynamics defines the
columns. A Boolean rule admits many integer polynomials with the
same reduction modulo two, and their reductions at other moduli
can differ. We therefore fix the canonical lift before forming
the diagonal matrix. The definitions below tie its entries to a
particular orbit and initial configuration, so that all later
invertibility statements refer to the same family.

Throughout, $N\ge2$ is an arbitrary modulus, not necessarily prime, and $\Zm$ is the ring
$\mathbb Z/N\mathbb Z$.

\Needspace{5\baselineskip}
\begin{definition}[Canonical polynomial lift]\label{def:lift}
Let $R\in\{0,\ldots,255\}$ have truth-table value equal to bit
$4a+2b+c$ of $R$, for $a,b,c\in\{0,1\}$. Write its unique
multilinear algebraic normal form over $\FF$ and lift its coefficients
to $\{0,1\}\subset\ZZ$:
\[
 p_R(a,b,c)=\sum_{S\subseteq\{a,b,c\}}\varepsilon_S
                  \prod_{v\in S}v\pmod N.
\]
The variables $a,b,c$ are the left, center and right neighbors.
\end{definition}

The canonical multilinear form is expanded over $\FF$ before lifting.
For example, $a\vee b$ has lift $a+b+ab$, whereas $a\xor a$ has
lift zero. Translating an unreduced Boolean expression term by term
could instead give $2a$ and would define a different map at odd
moduli. The convention above removes this ambiguity.

For an initial condition $c_0=(c_0(0),c_0(1),\dots)$ placed at $x=0,1,\dots$ on a zero
background, write $A^{R,c_0}(t,x)$, $t\ge0$, $x\in\ZZ$, for the orbit under $p_R$.

\Needspace{5\baselineskip}
\begin{definition}[Diagonals]\label{def:diag}
The \emph{$m$-th diagonal} of $(R,c_0)$ is the sequence
$D_m^{R,c_0}(t)=A^{R,c_0}(t,\,t-m)$, $t=0,1,2,\dots$ Truncated to a window of length $n$ it
is a vector of $\Zm^n$.
\end{definition}

This is the cut used in the diagonal--period census of Section~\ref{sec:scan} below
(there written $\mathrm{Diagonal}[\,\cdot\,,\;\text{size}-m]$), and it is the cut along
which the support--set calculus of \cite{U1} is defined: $D_m$ runs parallel to the right boundary of the light cone
at depth $m$.

\Needspace{5\baselineskip}
\begin{definition}[Diagonal basis matrix]\label{def:basis}
For a \emph{basis rule} $R'$ with initial condition $c'$ and window length $n\ge1$, let
\[
 B_{R',c'}[t,k]\;=\;D_k^{R',c'}(t)\;=\;A^{R',c'}(t,\,t-k),\qquad 0\le t,k\le n-1 .
\]
Column $k$ is the $k$-th diagonal of the basis automaton; the matrix is $n\times n$ over
$\Zm$.
\end{definition}

The decomposition problem is then: given a target vector $y\in\Zm^n$ --- in practice a
diagonal $D_m^{R,c_0}$ of some other automaton --- find $x\in\Zm^n$ with $B_{R',c'}x=y$, and
find it with as few nonzero coordinates as possible. For a fixed
invertible matrix there is only one such $x$; optimization then
means choosing the basis family. For a rank-deficient or degenerate
array, several solutions can exist and minimizing within a fiber is
a separate problem. The staircase solver below addresses exact
coordinates under its stated pivot hypotheses.

\section{The greedy staircase solver over $\Zm$}

A triangular arrangement suggests solving for one coefficient at a
time. Over a ring, each step is a linear congruence whose solvability
and number of solutions depend on the pivot. The solver records that
dependence explicitly. Its role is both constructive and diagnostic:
it produces coordinates when the hypotheses apply and identifies
the obstruction when a target cannot be reached or its coordinates
are not uniquely determined.

Since $N$ need not be prime, $\Zm$ is not a field and column operations must be done with
gcd arithmetic. Write $\lead(k)$ for the smallest $t$ with $B[t,k]\neq0$ (undefined for a
zero column), and $\Lambda=\{\lead(k)\}_k$ for the set of \emph{leading positions}.

\Needspace{5\baselineskip}
\begin{algo}[Greedy leading--position elimination]\label{alg:greedy}
Set $r\leftarrow y$, $P\leftarrow\varnothing$, and pool the columns by leading position. For
$t=0,1,\dots,n-1$: if $r_t=0$, continue. Otherwise, among the unused columns $k$ with
$\lead(k)=t$ keep those with $\gcd(B[t,k],N)\mid r_t$, and choose one, $k^\ast$, minimising
$\gcd(B[t,k],N)$. Put $u=B[t,k^\ast]$, $g=\gcd(u,N)$ and
\begin{equation}\label{eq:coef}
 c\;=\;\Big(\tfrac{r_t}{g}\Big)\cdot\Big(\tfrac ug\Big)^{-1} \bmod \tfrac Ng ,
\end{equation}
represented by the integer $0\le c<N/g$. Break ties by the smallest
column index. Set $r\leftarrow r-c\,B[\cdot,k^\ast] \bmod N$, append $(k^\ast,c,t)$ to
$P$, and mark $k^\ast$ used. If no admissible column exists, leave $t$ unmatched and
continue. Return $P$, the residual $r$, and the flag $r=0$.
\end{algo}

\begin{lemma}[Local solvability]\label{lem:local}
With $u,g$ as above, the congruence $cu\equiv r_t \pmod N$ has a solution iff $g\mid r_t$,
and then \eqref{eq:coef} is one; the solution set is a coset of $\frac Ng\Zm$ of size $g$.
\end{lemma}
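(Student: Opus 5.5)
The plan is to reduce the congruence $cu\equiv r_t\pmod N$ to a congruence modulo $N/g$ with a unit coefficient, using nothing beyond the definition of $g=\gcd(u,N)$ and B\'ezout's identity. The argument has three steps: necessity, sufficiency together with correctness of \eqref{eq:coef}, and a count of the solution set.

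\emph{Necessity.} Suppose $cu\equiv r_t\pmod N$. Then $r_t=cu-qN$ for some integer $q$. Since $g$ divides both $u$ and $N$, it divides $r_t$. Here $r_t$ is read as an integer representative, and divisibility by $g$ does not depend on the choice because $g\mid N$.

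\emph{Sufficiency.} Assume $g\mid r_t$ and write $u=gu'$, $N=gN'$, $r_t=gr'$ with $\gcd(u',N')=1$. For any integer $c$, the congruence $cu\equiv r_t\pmod N$ holds exactly when $N\mid g(cu'-r')$, that is, when $cu'\equiv r'\pmod{N'}$. Because $u'$ is a unit modulo $N'$, this has the unique solution $c\equiv r'(u')^{-1}\pmod{N'}$, which is the formula \eqref{eq:coef}. The only point needing care is that $(u/g)^{-1}$ is taken modulo $N/g$ and not modulo $N$; the reduction above shows this is the correct modulus. When $N'=1$ the inverse is trivial and $c=0$.

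\emph{Counting.} The solutions in $\ZZ$ form the residue class $c_0+N'\ZZ$, where $c_0$ is the solution just found. Reducing modulo $N$, this class becomes $c_0+\frac Ng\Zm$, a coset of the subgroup generated by $N/g$. That subgroup has order $N/(N/g)=g$, so the coset has exactly $g$ elements. The main thing to watch is this passage from integers to $\Zm$: the class modulo $N'$ splits into $g$ distinct classes modulo $N$, namely $c_0+jN'$ for $0\le j<g$. There is no deeper difficulty.
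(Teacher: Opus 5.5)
Your proof is correct and follows the paper's argument. You reduce by $g$ to a congruence modulo $N/g$ whose coefficient $u/g$ is a unit, invert it, and lift the unique residue modulo $N/g$ to the $g$ elements of a coset of $\frac Ng\Zm$; the only difference is that you verify the necessity half of the gcd solvability criterion explicitly, where the paper cites it as standard.
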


\begin{proof}
$cu\equiv r_t$ is solvable iff $\gcd(u,N)\mid r_t$, the standard criterion. Dividing by $g$
gives $c\,(u/g)\equiv r_t/g \pmod{N/g}$ with $\gcd(u/g,N/g)=1$, so $u/g$ is invertible mod
$N/g$ and $c\equiv (r_t/g)(u/g)^{-1}$; lifting the residue mod $N/g$ to $\Zm$ produces
exactly $g$ solutions.
\end{proof}

\Needspace{5\baselineskip}
\begin{theorem}[Monotone matching]\label{thm:monotone}
After the step that treats position $t$, the residual satisfies $r_s=0$ for every $s\le t$
that was matched, and no later step changes $r_s$ for $s<t$. In particular the algorithm
terminates after at most $n$ steps and uses $O(n^2)$ ring operations,
in addition to the gcd and inverse calculations.
\end{theorem}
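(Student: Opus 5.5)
The plan is to prove both claims by induction over the loop index $t$. The key fact is that the only columns ever subtracted are those whose leading position is the current index. Before that, I will record one invariant of Algorithm~\ref{alg:greedy}. At the step treating position $t$, the chosen column $k^\ast$ has $\lead(k^\ast)=t$. By the definition of $\lead$, this means $B[s,k^\ast]=0$ for every $s<t$. The update $r\leftarrow r-c\,B[\cdot,k^\ast]\bmod N$ therefore leaves $r_s$ unchanged for all $s<t$. Steps in which $r_t=0$, or in which no admissible column exists, do not modify $r$ at all. Together these two observations give the second clause: no step at position $t$ changes $r_s$ for $s<t$. Applied to later steps $t'>t$, the same argument shows that no step after $t$ changes $r_s$ for any $s\le t$.

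For the first clause, suppose position $t$ is matched, so that $r_t\neq0$ and an admissible $k^\ast$ is chosen. Admissibility means $g=\gcd(u,N)\mid r_t$. By Lemma~\ref{lem:local}, the coefficient $c$ of \eqref{eq:coef} satisfies $cu\equiv r_t\pmod N$, so after the update the new value is $r_t-cu\equiv0$. For the matched positions $s<t$, the conclusion $r_s=0$ was established at step $s$. The invariant from the first paragraph says that every later step, including step $t$, leaves $r_s$ unchanged, so $r_s=0$ still holds after step $t$. This completes the induction. I am reading ``position $t$ was matched'' as including the case $r_t=0$, where the residual is trivially zero already. Positions that are left unmatched with $r_t\neq0$ are excluded by the hypothesis.

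For the complexity bound, note that the outer loop runs once for each $t=0,\dots,n-1$, so there are at most $n$ steps. Each update touches only coordinates $s\ge t$, which is at most $n$ multiply-subtract operations in $\Zm$. Selecting $k^\ast$ scans the unused columns in the pool of leading position $t$, and the pools partition the at most $n$ columns. Grouping the columns by $\lead$ requires one pass over $B$, costing $O(n^2)$ comparisons with zero. The total is therefore $O(n^2)$ ring operations, plus the gcd and modular-inverse computations that the statement counts separately.

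The only step that needs care is the claim that the update at step $t$ cannot disturb earlier coordinates, and this rests entirely on choosing $k^\ast$ from the pool with $\lead(k^\ast)=t$. Everything else is bookkeeping.
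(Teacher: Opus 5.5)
Your proof is correct and follows essentially the same route as the paper: the selected column has $\lead(k^\ast)=t$ and so vanishes above row $t$, Lemma~\ref{lem:local} sets $r_t$ to zero, and there are at most $n$ subtractions of cost $O(n)$ each. The one addition is that you account for the $O(n^2)$ cost of pooling the columns by leading position, which the paper's proof leaves implicit; this is harmless.
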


\begin{proof}
Every column selected at position $t$ has $\lead(k^\ast)=t$, hence vanishes identically above
row $t$; subtracting a multiple of it leaves rows $0,\dots,t-1$ untouched. By
Lemma~\ref{lem:local} the subtraction sets $r_t=0$ when an admissible column exists. Positions
are treated in increasing order and each column is used at most once, giving at most $n$
subtractions of cost $O(n)$ each.
\end{proof}

There are at most $n$ gcd tests and $n$ inversions when the $n$ columns
are pooled once by leading position; the pools together contain at most
$n$ entries. With $b=\lceil\log_2 N\rceil$, schoolbook arithmetic gives
the conservative bit bound $O(n^2b^2+nb^3)$. The stored matrix takes
$O(n^2b)$ bits. The residual, coefficients and column indices use
$O(n(b+\log(n+1)))$ additional bits. These costs are for a supplied
matrix; generating it takes a separate $O(n^2)$ local updates.

\Needspace{5\baselineskip}
\begin{remark}[What an unmatched position means]\label{rem:residual}
If $t\notin\Lambda$, the procedure makes no new pivot at row $t$.
This is a restriction of the procedure, not a claim about the column
span. Over $\FF$, the matrix
\[
 B=\begin{pmatrix}1&1\\0&1\end{pmatrix},\qquad
 y=\begin{pmatrix}0\\1\end{pmatrix}
\]
has both leads at zero. The algorithm returns $y$ as residual, although
$B(1,1)^{\mathsf T}=y$. Cancellation between two earlier-leading
columns creates a later-leading vector.
\end{remark}

\section{The trichotomy}

The elimination procedure can now be related to the geometry of the
single-seed orbit. The relevant cases arise from the position of the
first possible nonzero entry and the arithmetic of the resulting
diagonal entries. Stating them together distinguishes a failure of
triangular shape from a failure of invertibility within that shape.
This distinction is what allows the subsequent rule classification
to be read directly from local data and a center-column condition.

\Needspace{14\baselineskip}
\begin{definition}\label{def:tri}
The basis $B=B_{R',c'}$ (window $n$, modulus $N$) is
\begin{enumerate}
\renewcommand{\labelenumi}{(\alph{enumi})}
\item \emph{universal} if no column is zero, $\lead(k)=k$ for all $k$, and every $B[k,k]$ is
a unit of $\Zm$;
\item a \emph{proper staircase} if it is not universal, $\lead$ is injective on the nonzero columns and every
leading entry is a unit;
\item \emph{degenerate} otherwise.
\end{enumerate}
\end{definition}

\Needspace{5\baselineskip}
\begin{theorem}[Universal case]\label{thm:universal}
If $B$ is universal then it is invertible over $\Zm$, every $y\in\Zm^n$ has a
\emph{unique} preimage $x$, and Algorithm~\ref{alg:greedy} returns it. Consequently the
number of terms
\[
 w_{R',c'}(y)\;=\;\#\{k: x_k\neq0\}
\]
is a well--defined function of $y$ and of the basis, not an artefact of the greedy order.
\end{theorem}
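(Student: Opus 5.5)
Universality in Definition~\ref{def:tri}(a) forces $B$ to be lower triangular: $\lead(k)=k$ means $B[t,k]=0$ for every $t<k$. The plan is therefore to prove invertibility from the triangular shape and the unit diagonal. Then we identify the greedy output with the unique preimage. Well-definedness of $w_{R',c'}$ will follow at once.

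First, invertibility. Over the commutative ring $\Zm$, the determinant of a lower triangular matrix is the product $\prod_k B[k,k]$ of its diagonal entries. This product is a unit because each factor is. Hence $B$ is invertible, with $B^{-1}=\det(B)^{-1}\operatorname{adj}(B)$, and $Bx=y$ has exactly one solution $x=B^{-1}y$ for every $y\in\Zm^n$. Alternatively, forward substitution shows the same thing. Row $t$ reads $\sum_{k\le t}B[t,k]x_k=y_t$, so $x_t$ is forced by $x_0,\dots,x_{t-1}$ through multiplication by $B[t,t]^{-1}$.

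Second, the greedy output. In the universal case the pool at position $t$ is exactly the single column $k=t$. Its pivot $u=B[t,t]$ has $g=\gcd(u,N)=1$, so the admissibility test $g\mid r_t$ always succeeds. Lemma~\ref{lem:local} then gives the unique coefficient $c\equiv r_tu^{-1}$. By Theorem~\ref{thm:monotone} every position is matched and zeroed, and no later step disturbs it, so the final residual is $0$. One point needs care: the algorithm skips positions with $r_t=0$. A skipped position corresponds to the coefficient $x_t=0$, and column $t$ then stays unused. Since no other column leads at $t$, nothing else can use it either.

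With this convention, collect the coefficients into $\hat x$, setting $\hat x_t=c$ when $(t,c,t)\in P$ and $\hat x_t=0$ otherwise. The terminal residual is $y-B\hat x=0$, so $B\hat x=y$. Uniqueness from the first step gives $\hat x=x$. The number of nonzero coordinates of $x$ is thus determined by $y$ and $B$ alone, and $w_{R',c'}(y)$ is well defined.

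The only delicate step is the bookkeeping for skipped rows, where we must check that $r_t=0$ indeed corresponds to $x_t=0$. This holds because $r_t$ at step $t$ equals $y_t-\sum_{k<t}B[t,k]\hat x_k$, which is exactly $B[t,t]x_t$ by forward substitution. Since $B[t,t]$ is a unit, $r_t=0$ if and only if $x_t=0$.
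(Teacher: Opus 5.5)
Your proof is correct and follows the same route as the paper. The unit determinant comes from the triangular shape. Each pool is the single column $k=t$ with $g=1$, so Lemma~\ref{lem:local} forces the coefficient, and Theorem~\ref{thm:monotone} zeroes the residual position by position. The extra bookkeeping for skipped rows is sound but the final paragraph is not needed, since $B\hat x=y$ together with uniqueness already gives $\hat x=x$.
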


\begin{proof}
$B$ is lower triangular with unit diagonal entries, hence $\det B$ is a unit and $B$ is
invertible over $\Zm$; uniqueness is immediate. At position $t$ the pool contains exactly the
single column $k=t$, with $g=\gcd(B[t,t],N)=1$, so Lemma~\ref{lem:local} gives the unique
$c=r_tB[t,t]^{-1}$ and no choice is made anywhere. By Theorem~\ref{thm:monotone} the residual
is annihilated position by position, so the output is the solution, and it is the only one.
\end{proof}

\Needspace{5\baselineskip}
\begin{theorem}[Staircase coordinates and residual]\label{thm:staircase}
Suppose the $r$ nonzero columns have distinct leading positions and
unit leading entries. Let $C$ be these columns, ordered by their leads,
and let $E$ be the coefficient map returned by the algorithm. Then
\[
 EC=I_r,\qquad P=CE,\qquad P^2=P.
\]
Every $y$ has the unique decomposition
$y=CEy+\rho$, where $\rho$ vanishes at all pivot rows. Thus
\[
 \ZZ_N^n=\operatorname{im}C\oplus
       \{\rho:\rho_t=0\text{ for }t\in\Lambda\}.
\]
In particular the algorithm is exact if and only if $y\in\operatorname{im}B$.
\end{theorem}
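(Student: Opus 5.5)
The plan is to reduce everything to the $r\times r$ submatrix of $C$ on its pivot rows. Let $\lambda_1<\dots<\lambda_r$ be the leading positions, and let $C_\Lambda$ be the rows $\lambda_1,\dots,\lambda_r$ of $C$. Column $j$ of $C$ vanishes above row $\lambda_j$, so $C_\Lambda$ is lower triangular. Its diagonal entries are the leading entries, which are units by hypothesis, so $C_\Lambda$ is invertible over $\ZZ_N$.

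Next I will show that the algorithm computes $Ey=C_\Lambda^{-1}y_\Lambda$, where $y_\Lambda$ denotes the pivot-row entries of $y$. Under the hypothesis, each pool at a pivot row contains exactly one column, with $g=1$. By Lemma~\ref{lem:local}, the coefficient chosen at row $\lambda_j$ is the unique $c$ that zeroes $r_{\lambda_j}$. By Theorem~\ref{thm:monotone}, later subtractions never disturb rows already treated. At a non-pivot row the pool is empty, so nothing is done there.

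The returned coefficients $c_1,\dots,c_r$ therefore satisfy $(y-Cc)_{\lambda_j}=0$ for every $j$, that is, $C_\Lambda c=y_\Lambda$. This gives $E y=C_\Lambda^{-1}y_\Lambda$, and in particular $E$ is linear. Then $EC=C_\Lambda^{-1}C_\Lambda=I_r$. The definition $P=CE$ yields $P^2=C(EC)E=P$.

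For the decomposition, set $\rho=y-CEy$. It vanishes on $\Lambda$ by the computation above, which gives existence. For uniqueness, suppose $Cz+\rho'=0$ with $\rho'$ vanishing on $\Lambda$. Restricting to the pivot rows gives $C_\Lambda z=0$, so $z=0$ and hence $\rho'=0$. This establishes the direct sum $\ZZ_N^n=\operatorname{im}C\oplus\{\rho:\rho_\Lambda=0\}$.

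For exactness, note that the zero columns contribute nothing, so $\operatorname{im}B=\operatorname{im}C$. The algorithm is exact exactly when the residual $\rho$ is $0$. If $y\in\operatorname{im}C$, uniqueness of the decomposition forces $\rho=0$; the converse is immediate.

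The only delicate step is the identification of the greedy output with $C_\Lambda^{-1}y_\Lambda$. It needs the residual at row $\lambda_j$ to be affected only by columns with leads at most $\lambda_j$, and the order of processing to match the lower-triangular order of $C_\Lambda$. This is exactly what Theorem~\ref{thm:monotone} provides. I will also note that the tie-breaking rules never trigger, because each pool has at most one column.
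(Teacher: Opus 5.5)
Your proof is correct and is essentially the paper's argument. The paper writes the forward-substitution formula $(Ey)_j=C[t_j,j]^{-1}\bigl(y_{t_j}-\sum_{i<j}C[t_j,i](Ey)_i\bigr)$ and proves $\operatorname{im}C\cap\{\rho:\rho_t=0,\ t\in\Lambda\}=0$ by forcing the coefficients to vanish one at a time; this is exactly solving with, and inverting, your unit lower-triangular pivot block $C_\Lambda$.

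The one case to state explicitly is $r_{\lambda_j}=0$: the algorithm then skips the row and records no column. That is the same as coefficient $0$, which is the unique solution because the pivot is a unit.
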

\begin{proof}
Write the pivot rows as $t_1<\cdots<t_r$. At $t_j$ the coefficient is
\[
 (Ey)_j=C[t_j,j]^{-1}
       \left(y_{t_j}-\sum_{i<j}C[t_j,i](Ey)_i\right).
\]
This formula is linear in $y$, and later columns vanish at $t_j$.
Consequently $y-CEy$ vanishes at every pivot row. Applied to $y=Cz$,
the same induction returns $z_j$ at each step, so $EC=I_r$ and
$P^2=CECE=CE$. A vector in $\operatorname{im}C$ that vanishes at
the pivots has first coefficient zero, then second coefficient zero,
and so on; hence the intersection is zero. This proves uniqueness of
the decomposition and the membership criterion. Zero columns contribute
arbitrary unused coordinates and do not affect it.
\end{proof}

\Needspace{5\baselineskip}
\begin{remark}[Degenerate arrays]\label{rem:degenerate}
An exact output always certifies membership, but a nonzero residual
does not certify nonmembership outside the staircase hypotheses.
Repeated leads are one obstruction. Nonunit pivots are another:
over $\ZZ_6$, the one-row matrix $(2\;3)$ spans the target $1$,
since $2\cdot2+3\cdot1=1$, although neither column can match it alone.
Over $\ZZ_4$, the column $(2,1)^{\mathsf T}$ spans $(0,2)^{\mathsf T}$
with coefficient $2$; skipping the zero first residual misses that
solution. A different lift of a nonunit-pivot coefficient may therefore
matter at later rows. The reported residual remains an explicit check
of the proposed expansion.
\end{remark}

\Needspace{5\baselineskip}
\begin{remark}[An example with missing columns]\label{ver:staircase}
Rule 90 supplies a staircase with half its columns zero. Its image and
the resulting residual criterion are described explicitly in
Theorem~\ref{thm:90}.
\end{remark}

\section{Which automata are universal bases}

We now require the basis property at every finite length. This
strengthens the finite-matrix question and makes a uniform criterion
necessary. The rule-bit test supplies the triangular part of that
criterion, while the unit condition controls every diagonal pivot.
The classification below first settles the binary case and then
uses reduction at prime divisors to organize the questions for
other moduli.

\Needspace{5\baselineskip}
\begin{theorem}[Triangularity from two rule bits]\label{thm:tri}
For the canonical lift from $\delta_0$, the following conditions are
equivalent, for every fixed $N\ge2$:
\begin{enumerate}
\item $B_R$ is lower triangular for every window length;
\item with $V=\{A^{R,\delta_0}(t,0):t\ge0\}$,
\begin{equation}\label{eq:leftcone}
 p_R(0,0,v)=0\quad(v\in V\cup\{0\});
\end{equation}
\item $p_R(0,0,c)=0$ for every $c\in\ZZ_N$;
\item $R\equiv0\pmod4$.
\end{enumerate}
Thus exactly 64 elementary rules have triangular diagonal arrays, at
every modulus.
\end{theorem}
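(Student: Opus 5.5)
The plan is to prove the cycle (iv)$\Rightarrow$(iii)$\Rightarrow$(ii)$\Rightarrow$(i)$\Rightarrow$(ii)$\Rightarrow$(iv). Two of these steps are formal. For (iii)$\Rightarrow$(ii), simply specialize $c$. For (iv)$\Leftrightarrow$(iii), use Definition~\ref{def:lift}: $p_R(0,0,c)=\varepsilon_\varnothing+\varepsilon_{\{c\}}c$, and both coefficients vanish exactly when truth-table bits $0$ and $1$ of $R$ are zero, i.e.\ $R\equiv0\pmod4$. These bits are Boolean values at $(0,0,0)$ and $(0,0,1)$. Since the ANF coefficients are determined by those two values ($\varepsilon_\varnothing=f(0,0,0)$, $\varepsilon_c=f(0,0,0)\oplus f(0,0,1)$), the lifted polynomial vanishes identically in $c$ iff both bits are zero, independently of $N$. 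The count $256/4=64$ then follows.

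Next comes the geometric translation of lower triangularity. $B_R[t,k]=A(t,t-k)$, so $B_R[t,k]=0$ for all $k>t$ means $A(t,x)=0$ for all $x<0$. Thus (i) says precisely that the left half-plane $x<0$ stays zero for all $t$. For (ii)$\Rightarrow$(i), induct on $t$ with the hypothesis $A(t,x)=0$ for $x<0$. The base case is the seed $\delta_0$. For the step, a cell $x<-1$ sees $(0,0,0)$, and $p_R(0,0,0)=0$ because $0\in V\cup\{0\}$. The cell $x=-1$ sees $(0,0,A(t,0))$ with $A(t,0)\in V$, so it also updates to $0$.

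For (i)$\Rightarrow$(ii), run the same computation backwards. If the left half-plane is identically zero, then at each $t$ the cell $-1$ at time $t+1$ equals $p_R(0,0,A(t,0))$, forcing $p_R(0,0,v)=0$ for every $v\in V$. Cells far to the left give $p_R(0,0,0)=0$. One subtlety: at $t=0$ a cell $x\le-2$ indeed sees $(0,0,0)$, so $0$ is covered even if $0\notin V$. This gives (ii).

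The main obstacle is (ii)$\Rightarrow$(iii)/(iv): condition (ii) only tests the values actually appearing in the center column, not all of $\ZZ_N$. The key observation is that $V$ always contains $A(0,0)=1$. Hence (ii) gives $p_R(0,0,0)=0$ and $p_R(0,0,1)=0$, i.e.\ $\varepsilon_\varnothing\equiv0$ and $\varepsilon_\varnothing+\varepsilon_{\{c\}}\equiv0\pmod N$. Because these coefficients lie in $\{0,1\}$ and $N\ge2$, both are zero as integers, which yields (iv) and hence (iii). I expect this step to be the only place needing care: one must not conflate vanishing mod $N$ with vanishing of the $\{0,1\}$-coefficients, and the argument must hold uniformly for composite $N$.
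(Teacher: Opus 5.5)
Your proposal is correct and follows essentially the paper's argument. It translates triangularity into vanishing of the half-plane $x<0$, reads off $p_R(0,0,0)$ and $p_R(0,0,A(t,0))$ from the cells $x\le-2$ and $x=-1$, and uses $1\in V$ together with the $\{0,1\}$ coefficient representatives to force truth-table bits zero and one to vanish. The only difference is cosmetic: you run the closing induction from (ii) instead of from (iii), which works equally well because only values in $V\cup\{0\}$ ever reach the left boundary.
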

\begin{proof}
The entries above the diagonal are $A(t,t-k)$ with $t<k$.
Allowing all windows includes every cell at $x<0$, so (1) says that
the left half-plane stays zero. It implies (2): at $x=-2$ the update
tests $p_R(0,0,0)$, and at $x=-1$ it tests $p_R(0,0,A(t,0))$.

Now $p_R(0,0,c)=\varepsilon_\varnothing+\varepsilon_c c$.
The test at zero forces $\varepsilon_\varnothing=0$, since its
representative is zero or one. The seed gives $1\in V$, so the test at
one forces $\varepsilon_c=0$. These are precisely truth-table bits
zero and one. Hence (2) implies (4), which implies (3). Finally,
under (3), induction preserves zero on $x<0$: the neighborhoods there
are $(0,0,0)$ or $(0,0,A(t,0))$. This proves (1).
\end{proof}

\Needspace{5\baselineskip}
\begin{corollary}[Universality criterion]\label{thm:criterion}
$B_{R',\delta_0}$ is universal for every window length $n$ if and only if
\eqref{eq:leftcone} holds and every entry of the centre column is a unit of $\Zm$:
$\gcd\big(A^{R',\delta_0}(t,0),N\big)=1$ for all $t\ge0$.
\end{corollary}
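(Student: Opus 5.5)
The plan is to combine Theorem~\ref{thm:tri}, which handles the triangular shape, with a direct reading of the diagonal entries of $B$. By Definition~\ref{def:basis}, the diagonal entry is $B[k,k]=A^{R',\delta_0}(k,0)$, so the pivots of the matrix are exactly the centre-column values. The corollary should then follow by matching the three clauses of Definition~\ref{def:tri}(a) against these two ingredients.

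For sufficiency, I assume \eqref{eq:leftcone} and that every centre-column entry is a unit.
\begin{enumerate}
\renewcommand{\labelenumi}{(\roman{enumi})}
\item By Theorem~\ref{thm:tri}, $B$ is lower triangular for every window length $n$, so $B[t,k]=0$ whenever $t<k$.
\item The entry $B[k,k]=A(k,0)$ is a unit, and in particular nonzero. Hence column $k$ is nonzero and $\lead(k)=k$.
\item All three clauses of Definition~\ref{def:tri}(a) now hold, so $B$ is universal for every $n$.
\end{enumerate}

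For necessity, I assume $B$ is universal for every window length.
\begin{enumerate}
\renewcommand{\labelenumi}{(\roman{enumi})}
\item The condition $\lead(k)=k$ means every entry above row $k$ in column $k$ vanishes. So $B$ is lower triangular at every length, and Theorem~\ref{thm:tri} (1)$\Rightarrow$(2) gives \eqref{eq:leftcone}.
\item For a given $t$, take any window with $n>t$. Universality makes $B[t,t]=A(t,0)$ a unit. Letting $t$ range over all values gives $\gcd(A(t,0),N)=1$ for every $t\ge0$.
\end{enumerate}

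The only delicate point is making sure the diagonal entries really are the centre column. This holds because the diagonal of $D_k$ at time $t=k$ sits at $x=t-k=0$.

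There is one other point worth noting. One might worry that the leading entry could occur below the diagonal when $B[k,k]=0$. That case is already ruled out, since it would contradict $\lead(k)=k$ in the necessity direction and the unit hypothesis in the sufficiency direction. Beyond this the argument is routine, and I expect no real obstacle: the work has already been done in Theorem~\ref{thm:tri}.
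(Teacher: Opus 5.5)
Your proposal is correct and follows the paper's own argument. Theorem~\ref{thm:tri} gives the equivalence between \eqref{eq:leftcone} and lower triangularity at every window length, and the identity $B[k,k]=A^{R',\delta_0}(k,0)$ turns Definition~\ref{def:tri}(a) into exactly the unit condition on the centre column; your two-direction write-up, including the remark that a unit (hence nonzero) pivot forces $\lead(k)=k$, spells out what the paper states in one line.
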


\begin{proof}
Triangularity is Theorem~\ref{thm:tri}; the diagonal entries are $B[k,k]=A(k,0)$, so
Definition~\ref{def:tri}(a) is precisely the unit condition on the centre column.
\end{proof}

\Needspace{5\baselineskip}
\begin{remark}[No orbit is needed for triangularity]\label{rem:uniform}
For this canonical elementary lift, the uniform test and the orbit-value
test are equivalent. There is no exceptional modulus and no rule
separating them. This simplification uses both the binary coefficient
representatives and the seed value one; the orbit-value formulation
remains useful for more general local polynomials.
\end{remark}

\Needspace{5\baselineskip}
\begin{remark}[Finite center-column tests]\label{ver:criterion}
At window length 64, among all 256 rules and moduli $2,\ldots,13$,
the accepted rules are the 24 in Proposition~\ref{prop:congruence},
with Rule 84 added at moduli 3 and 9. Restricting this test to moduli
$2,\ldots,8$ therefore has 169 accepted rule--modulus pairs on this
window. Acceptance of Rule 84 here is a finite observation. A failed
unit test supplies a counterexample to all-window universality;
survival of a finite test does not prove it.
\end{remark}

\Needspace{5\baselineskip}
\begin{proposition}[The 24 binary rules]\label{prop:congruence}
The rules universal over $\FF$ for every window are exactly
\[
 \mathcal U=\{R:R\equiv12\pmod{16}\}
 \cup\{R:R\equiv4\pmod{16},\ R\mathbin{\&}16=0\}.
\]
There are 24, and every member is universal over every $\ZZ_N$.
The only possible further universal rules at any modulus are
\[
 \mathcal E=\{20,52,84,116,148,180,212,244\}.
\]
None of these is universal at an even modulus.
\end{proposition}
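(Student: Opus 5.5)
The plan is to combine Theorem~\ref{thm:tri} with Corollary~\ref{thm:criterion} and follow the centre column explicitly, using the fact that the left half-plane stays zero. By Theorem~\ref{thm:tri}, a rule with $R\not\equiv0\pmod4$ is not triangular at any modulus, so it is excluded everywhere. For $R\equiv0\pmod4$ we have $\varepsilon_\varnothing=\varepsilon_c=0$ and $A(t,-1)=0$, so the centre update becomes
\[
 A(t+1,0)=p_R(0,b,c)=\varepsilon_b\,b+\varepsilon_{bc}\,bc,
\]
where $b=A(t,0)$ and $c=A(t,1)$. The proof then reduces to reading off the ANF coefficients $\varepsilon_b,\varepsilon_{bc}$ from truth-table bits two and three, and $\varepsilon_a$ from bit four.

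\textbf{Step 1.} At $t=0$ we have $b=1$ and $c=0$, so $A(1,0)=\varepsilon_b$, which is bit two. If bit two is zero, this entry is $0$, a nonunit at every $N$, so the rule is excluded at every modulus. This leaves the rules with $R\equiv4\pmod8$.

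\textbf{Step 2.} If bit three is one, i.e.\ $R\equiv12\pmod{16}$, then $\varepsilon_b+\varepsilon_{bc}\equiv1$ forces $\varepsilon_{bc}=0$. Hence $p_R(0,b,c)=b$ over $\ZZ$, and the centre column is identically $1$. By Corollary~\ref{thm:criterion} these $16$ rules are universal over every $\ZZ_N$.

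\textbf{Step 3.} If bit three is zero, i.e.\ $R\equiv4\pmod{16}$, then $\varepsilon_{bc}=1$ and $A(t+1,0)=b(1+c)$. Here we examine the column $x=1$.
\begin{itemize}
\item If bit four is zero, then $\varepsilon_a=0$. By induction the half-line $x\ge1$ stays zero: the neighbourhoods there are $(1,0,0)$ at $x=1$ and $(0,0,0)$ beyond, and $p_R(1,0,0)=\varepsilon_a=0$. So $c\equiv0$ and the centre column is identically $1$, giving universality at every $N$. These are the $8$ remaining rules of $\mathcal U$, for $24$ rules in total.
\item If bit four is one, which is exactly the set $\mathcal E$, then $A(1,1)=p_R(1,0,0)=1$, so $A(2,0)=1\cdot(1+1)=2$. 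This entry is a nonunit at every even $N$, and in particular it vanishes over $\FF$. So these rules are not universal at any even modulus and not in the binary list. At odd $N$ they are not excluded by this computation, which is why they remain only as candidates.
\end{itemize}

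Combining the three steps shows that $\mathcal U$ is exactly the binary list, that every member of $\mathcal U$ is universal over every $\ZZ_N$, and that outside $\mathcal U\cup\mathcal E$ some centre-column entry among $A(1,0)$ or the triangularity test already fails at every modulus. The only delicate point is the induction in Step 3 that keeps the column $x=1$ zero. It needs the observation that for $x\ge1$ the neighbourhoods involve only the seed cell and zeros, so only the coefficient $\varepsilon_a$ matters. The bit-to-ANF translations and the congruence count are routine.
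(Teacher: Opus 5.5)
Your proposal is correct and follows essentially the same route as the paper. Triangularity comes from Theorem~\ref{thm:tri}; bit two is forced by $A(1,0)$; bit three is handled through $p_R(0,b,c)=b+\varepsilon_{bc}bc$; and the case split on bit four separates the stationary seed from the centre value $A(2,0)=2$, which the paper's proof phrases as the seed being stationary when bit four is zero.
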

\begin{proof}
Triangularity forces bits zero and one to vanish. The center value at
time one is bit two, so that bit must be one. On the left boundary the
canonical polynomial is now
$p_R(0,b,c)=b+\varepsilon_{bc}bc$.
If bit three is one, then $\varepsilon_{bc}=0$ and the center stays
one at every modulus. This gives $R\equiv12\pmod{16}$.

If bit three is zero and bit four is zero, the single seed is stationary:
the only neighborhoods are $000,001,010,100$, with outputs $0,0,1,0$.
This remains true for the lifted polynomial over every modulus and
gives the other eight members of $\mathcal U$.

The remaining possibility is that bit three is zero and bit four is
one. These are exactly the eight rules in $\mathcal E$. At time one,
the center and its right neighbor both equal one. At time two the
center equals $1+1=2$, which is not a unit at an even modulus.
Over $\FF$ this exhausts all possibilities. Counting the free bits
gives 16 rules in the first class and eight in the second.
\end{proof}

\Needspace{5\baselineskip}
\begin{corollary}[Reduction to prime divisors]\label{cor:crt}
A canonical single-seed rule is universal over $\ZZ_N$ for all windows
if and only if it is universal over $\mathbb F_p$ for every prime
$p\mid N$. In particular, universality modulo $p^a$ is equivalent
to universality modulo $p$.
\end{corollary}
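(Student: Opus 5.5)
The argument combines Corollary~\ref{thm:criterion} with the compatibility of the canonical lift with reduction between moduli. Universality over $\ZZ_N$ for all windows has two parts: the triangularity condition \eqref{eq:leftcone}, and the requirement that every centre-column entry $A^{R,\delta_0}_N(t,0)$ be a unit of $\ZZ_N$. By Theorem~\ref{thm:tri}, triangularity is equivalent to $R\equiv0\pmod4$, a condition independent of the modulus. That part therefore holds simultaneously over $\ZZ_N$ and over every $\mathbb F_p$, and only the unit condition needs comparison.

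The key step is the observation that the orbit commutes with reduction. The lift $p_R$ is a fixed integer polynomial, namely the same lift with coefficients in $\{0,1\}$, reduced modulo whichever modulus is in use. The seed $\delta_0$ is also the same integer configuration. Let $A_\ZZ(t,x)$ be the orbit computed over $\ZZ$ with the integer polynomial. By induction on $t$, the orbit over $\ZZ_M$ is $A_\ZZ(t,x)\bmod M$ for every $M$; in particular, for $p\mid N$ the orbit over $\mathbb F_p$ is the reduction of the orbit over $\ZZ_N$. I will state this induction explicitly, since each update is a polynomial expression in the previous row and reduction is a ring homomorphism.

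The unit condition then reduces to elementary arithmetic. An integer $a$ satisfies $\gcd(a,N)=1$ exactly when $p\nmid a$ for every prime $p\mid N$, that is, when $a\bmod p$ is a unit of $\mathbb F_p$ for each such $p$. Applying this to $a=A_\ZZ(t,0)$ for all $t$ shows that the centre column is a unit sequence over $\ZZ_N$ if and only if it is a unit sequence over each $\mathbb F_p$ with $p\mid N$. Combining this with the modulus-independent triangularity gives the equivalence. For $N=p^a$ the only prime divisor is $p$, which yields the second sentence. Equivalently, the statement is the Chinese remainder decomposition $\ZZ_N\cong\prod\ZZ_{p^{a}}$ together with the fact that units of $\ZZ_{p^a}$ are exactly the residues that are nonzero mod $p$.

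I expect the only real point of care to be the compatibility of lifts. One must confirm that the ``canonical lift'' over $\ZZ_N$ reduces to the canonical lift over $\mathbb F_p$, not to some other polynomial with the same Boolean reduction. This holds because Definition~\ref{def:lift} fixes the integer coefficients $\varepsilon_S\in\{0,1\}$ independently of $N$. Once that is noted, the rest is routine.
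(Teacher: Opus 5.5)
Your proof is correct and follows the paper's argument: triangularity is the modulus-independent condition of Theorem~\ref{thm:tri}, the fixed integer lift commutes with reduction to any divisor of $N$, and an integer is a unit modulo $N$ exactly when it is nonzero modulo every prime divisor, all combined through Corollary~\ref{thm:criterion}. You also check explicitly that the canonical lift has coefficients independent of $N$, which the paper uses only implicitly, and you handle that point correctly.
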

\begin{proof}
The integer polynomial update commutes with reduction modulo any
divisor of $N$. Its center value is a unit modulo $N$ exactly when
its reduction is nonzero at every prime divisor. Triangularity is the
same two-bit condition at all moduli. Apply
Corollary~\ref{thm:criterion} at each time. The same reasoning holds
when the times are restricted to a finite window.
\end{proof}

\Needspace{5\baselineskip}
\subsection{Finite seeds and composite periods}\label{sec:seedextensions}

The universal family contains a subfamily whose proof extends to
arbitrary data to the right of a fixed left endpoint. This gives
an all-window seed result directly from the local polynomial.

\begin{proposition}[An all-window family for finite seeds]\label{prop:finiteseed}
Let $R\equiv12\pmod{16}$ and use its positive ANF lift modulo
$N\ge2$. Let a finite initial configuration $c$ satisfy $c(x)=0$
for $x<0$ and let $c(0)$ be a unit modulo $N$. Then every
associated diagonal matrix $B_n$ is lower triangular with constant
diagonal $c(0)$, and hence invertible over $\ZZ_N$.
\end{proposition}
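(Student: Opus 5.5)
We prove two invariants of the orbit $A=A^{R,c}$ by induction on $t$: the left half-plane stays zero, $A(t,x)=0$ for $x<0$, and the left endpoint is frozen, $A(t,0)=c(0)$ for all $t$. Together they give both conclusions at once. The entries of $B_n$ above the diagonal are $A(t,t-k)$ with $t<k$, i.e.\ cells at $x<0$, so $B_n$ is lower triangular. The diagonal entries are $B_n[k,k]=A(k,0)=c(0)$. A lower triangular matrix with unit diagonal has unit determinant over $\ZZ_N$, as in the proof of Theorem~\ref{thm:universal}.

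First we pin down the lift. The hypothesis $R\equiv12\pmod{16}$ fixes truth-table bits $0,1,2,3$ as $0,0,1,1$, which means $f(0,0,0)=f(0,0,1)=0$ and $f(0,1,0)=f(0,1,1)=1$. Möbius inversion over $\FF$ then gives $\varepsilon_\varnothing=\varepsilon_c=0$, $\varepsilon_b=1$ and $\varepsilon_{bc}=0$. Hence the canonical lift satisfies
\[
p_R(0,b,c)=b\qquad\text{identically over }\ZZ\text{, hence over every }\ZZ_N,
\]
exactly as computed in the proof of Proposition~\ref{prop:congruence}. In particular $p_R(0,0,v)=0$ for every $v\in\ZZ_N$, which is condition (3) of Theorem~\ref{thm:tri}.

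Next comes the induction step. Assume $A(t,x)=0$ for $x<0$. For $x\le-2$ the neighborhood of $(t,x)$ is $(0,0,0)$, and for $x=-1$ it is $(0,0,A(t,0))$. In both cases $p_R(0,0,\cdot)=0$, so the left half-plane stays zero at time $t+1$. For $x=0$ the neighborhood is $(0,A(t,0),A(t,1))$, so $A(t+1,0)=p_R(0,A(t,0),A(t,1))=A(t,0)$. The right neighbor may be arbitrary, because the $a=0$ restriction of $p_R$ does not depend on $c$. The base case is the hypothesis $c(x)=0$ for $x<0$. Finiteness of the seed only makes the orbit well defined on a zero background; the argument never uses it otherwise.

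The one step needing care is the claim that $p_R(0,b,c)=b$ holds as an integer polynomial and not merely modulo $2$. This is the point where the positive ANF lift (Definition~\ref{def:lift}) is essential: with coefficients in $\{0,1\}$, the vanishing of $\varepsilon_{bc}$ over $\FF$ forces it to vanish in $\ZZ$. For the eight rules in $\mathcal E$ this fails, since there the $bc$ coefficient leaves $b+bc$, which is why the statement is restricted to $R\equiv12\pmod{16}$. We will also note that $c(0)$ need only be a unit, with no further condition, because every diagonal entry equals $c(0)$.
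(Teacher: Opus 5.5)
Your proof is correct and is essentially the paper's own argument. It reads off the integer identity $p_R(0,b,c)=b$ from the four truth-table bits with left input zero, keeps $x<0$ at zero and freezes $A(t,0)=c(0)$ by induction, and concludes triangularity with determinant $c(0)^n$. One small point in your closing aside, which does not affect the proof: every rule with $R\equiv4\pmod{16}$ has $p_R(0,b,c)=b+bc$, not only the eight in $\mathcal E$, so that is the full reason for the restriction.
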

\begin{proof}
The four truth-table bits with left input zero are $0,0,1,1$.
Their ANF restriction is the polynomial identity $p_R(0,b,c)=b$,
which holds over the integers, not merely on binary inputs.
Inductively the configuration stays zero at all $x<0$ and its
value at $x=0$ remains $c(0)$: the update at $-1$ is
$p_R(0,0,c(0))=0$, and that at $0$ is $p_R(0,c(0),c(1))=c(0)$.
For the matrix entry $B_n(t,k)=A_R(t,t-k)$, $k>t$ therefore
gives zero and $k=t$ gives $c(0)$. The determinant is $c(0)^n$,
a unit. This proves the assertion for every window and for any
values of the remaining seed entries.
\end{proof}

The hypothesis on the endpoint is necessary for this particular
triangular construction: already the first diagonal entry must
be a unit. The proposition does not classify every finite seed
for the other members of the single-seed family.

\begin{proposition}[Least periods under the Chinese remainder map]\label{prop:crtperiod}
Let $N=\prod_i p_i^{e_i}$. A sequence modulo $N$ is eventually
periodic if and only if all its reductions modulo $p_i^{e_i}$
are eventually periodic. If those components have least eventual
periods $P_i$ and admit transients $T_i$, the original sequence
has least eventual period
\[
 P=\operatorname{lcm}_i P_i
\]
after the valid transient bound $T=\max_i T_i$. If every component
is purely periodic, the same formula gives the least pure period.
\end{proposition}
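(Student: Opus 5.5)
The plan is to work entirely through the Chinese remainder isomorphism $\ZZ_N\cong\prod_i\ZZ_{p_i^{e_i}}$. A sequence $s$ modulo $N$ is identified with the tuple of its reductions $s^{(i)}=s\bmod p_i^{e_i}$. Equality $s(t+P)=s(t)$ in $\ZZ_N$ holds if and only if $s^{(i)}(t+P)=s^{(i)}(t)$ for every $i$. This single observation drives every step.

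First, eventual periodicity. If $s$ is eventually periodic with some period $Q$ after some transient, then each reduction is periodic with the same data, since reduction is a ring map. Conversely, suppose each component has period $P_i$ after transient $T_i$. Put $P=\operatorname{lcm}_iP_i$ and $T=\max_iT_i$. For $t\ge T$ each component satisfies $s^{(i)}(t+P)=s^{(i)}(t)$, because $P$ is a multiple of $P_i$ and $t\ge T_i$. Hence $s(t+P)=s(t)$ by the observation above, so $s$ is eventually periodic with period $P$ after transient $T$.

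Second, minimality of $P$. Let $Q$ be the least eventual period of $s$, valid after some transient $T'$. Each component then has eventual period $Q$. I will invoke the standard fact that every eventual period of a sequence is a multiple of its least eventual period. Sketch of that fact: the set of eventual periods is closed under addition, and under positive differences once the transient is taken large enough. It is therefore the set of positive multiples of its minimum. It follows that $P_i\mid Q$ for all $i$, hence $P\mid Q$. Since $P$ is itself an eventual period, $Q\le P$, and so $Q=P$.

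Third, validity of the transient. The first step already shows that $P$ works from $T=\max_iT_i$ onward; the statement only asks for a valid bound, not the least one. The purely periodic case is the specialization $T_i=0$: then $T=0$, and the minimality argument is unchanged.

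The main obstacle is the divisibility lemma in the second step. The reason is that the transients of different eventual periods differ. I would handle it by fixing one transient beyond which both periods $P_i$ and $Q$ hold for the component. Beyond it, $\gcd(P_i,Q)$ is also a period, by Bézout: shift forward by a nonnegative combination of the two periods. Minimality of $P_i$ then forces $P_i\mid Q$.
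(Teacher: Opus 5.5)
Your proof is correct and follows the paper's own argument. Componentwise equality under the Chinese remainder map makes $\operatorname{lcm}_iP_i$ a period after $\max_iT_i$. Minimality then comes from each component's least eventual period dividing any eventual period of the original sequence, and the pure case is the same argument run from time zero. The one addition is your Bézout justification of that divisibility fact, which the paper uses without proof.
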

\begin{proof}
Equality modulo $N$ is equivalent to equality in every prime-power
component. Thus $P$ is a period after $T$. Conversely, an eventual
period of the original sequence is an eventual period of each
component and must be divisible by its least eventual period.
It is consequently divisible by their least common multiple.
The same argument from time zero proves the pure-period assertion.
\end{proof}

Prime reduction suffices for the unit criterion of
Corollary~\ref{cor:crt}; exact period reconstruction requires
prime-power components. A period observed modulo a prime alone
need not determine the period at its higher powers.

\begin{remark}[The outstanding Rule 84 case]\label{rem:84}
The polynomial of Rule 84 is $(a+b+ab)(1+c)$. Modulo three, its
center column begins $1,1,2,2,1,2,2,\ldots$ and follows the repeating
block $(1,2,2)$ after the first entry for the tested times
$0\le t<2048$. All center values are also units modulo 9 and 27
on that range, as Corollary~\ref{cor:crt} predicts.
An all-time proof of the observed pattern is not supplied here.
If Rule 84 is universal modulo three, it is universal at every power
of three; three could not be the sole exceptional modulus.
\end{remark}

\Needspace{5\baselineskip}
\begin{remark}[The 32-term census]\label{ver:census}
Mod~2, window $n=32$, seed $\delta_0$: of the $256$ elementary bases, $24$ are universal,
$72$ are proper staircases and $160$ are degenerate (Fig.~\ref{fig:coverage}). Staircase
coverage $|\Lambda|$ takes the values $1,2,3,4$ for the collapsed rules, $16$--$20$ for the
half--rank rules (Rules~90 and~170 among them) and $32$ for the universal ones.
\end{remark}

\begin{figure}[!htbp]\centering
\EmbeddedFigure{\textwidth}{2}
\caption{The binary basis trichotomy of
Definition~\ref{def:tri}. In (a), the $32$-term census has the
same class for the eight rules $R=32q+r$, $0\le q<8$, represented
by each numbered residue cell. Thus the full $256$-rule census is
shown with $24$ universal, $72$ proper staircase and $160$
degenerate cases. The universal residues $4,12,28$ modulo $32$
are exactly the all-window classification in
Proposition~\ref{prop:congruence}; the other displayed classes
refer to the stated finite census. In (b), the $8\times8$ matrices
for Rules 60, 90 and 30 explain the distinction. Dark cells are
ones and circles mark column leads: every diagonal pivot is present
for Rule 60, the nonzero columns have distinct leads for Rule 90,
and several columns share a lead for Rule 30.}
\label{fig:coverage}
\end{figure}

\section{Four named bases}

Explicit examples make it possible to compare more than invertibility.
The four bases considered here have recognizable coordinate maps,
so their inverses and the resulting weights can be written without
performing a general elimination. They provide reference cases for
the later comparisons: a coordinate change can simplify one
operation, alter support size, or expose a cumulative structure
that is hidden in the original diagonal matrix.

\Needspace{5\baselineskip}
\begin{theorem}[Pascal basis]\label{thm:pascal}
For every modulus $N\ge2$, $B_{60,\delta_0}[t,k]=\binom tk \bmod N$.
\end{theorem}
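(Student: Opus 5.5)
First compute the canonical lift of Rule 60. In binary, $60=00111100_2$, so the truth table is $1$ exactly at $4a+2b+c\in\{2,3,4,5\}$, i.e.\ at $(a,b)\in\{(0,1),(1,0)\}$, independently of $c$. This is $a\xor b$, whose algebraic normal form is $a+b$ with no product term. By Definition~\ref{def:lift} the lift is therefore $p_{60}(a,b,c)=a+b \pmod N$ at every modulus. No $ab$ term arises, so the lift is linear over $\ZZ$, and the orbit can be computed exactly over the integers and then reduced.

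The plan is to prove the integer identity $A(t,x)=\binom tx$ for all $t\ge0$ and $x\in\ZZ$, with the convention $\binom tx=0$ for $x<0$ or $x>t$. We argue by induction on $t$. At $t=0$ the seed $\delta_0$ gives $A(0,x)=[x=0]=\binom0x$. The update uses the left neighbour $a=A(t,x-1)$ and the centre $b=A(t,x)$, so
\[
 A(t+1,x)=A(t,x-1)+A(t,x)=\binom t{x-1}+\binom tx=\binom{t+1}x
\]
by Pascal's rule. This identity remains valid at the boundary cases $x\le0$ and $x\ge t+1$ under the zero convention. Since the lift has integer coefficients, the orbit modulo $N$ is the reduction of this integer orbit. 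The left half-plane stays zero, consistent with Theorem~\ref{thm:tri}, since $60\equiv0\pmod4$.

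It remains to read off the matrix entries. By Definition~\ref{def:basis},
\[
 B_{60,\delta_0}[t,k]=A(t,t-k)=\binom t{t-k}=\binom tk \bmod N .
\]
The last equality uses the symmetry of binomial coefficients. This symmetry needs care when $k>t$: then $t-k<0$, and $\binom tk=0$ agrees with $\binom t{t-k}=0$. The only real obstacle is thus bookkeeping. One must confirm that the canonical ANF lift has no spurious $ab$ term, since otherwise odd moduli would break linearity, and that the zero conventions make Pascal's rule and the symmetry hold at the edges. Once linearity of the lift is established, the rest is routine.
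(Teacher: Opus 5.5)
Your proof is correct and follows the paper's argument: you identify $p_{60}=a+b$ from the algebraic normal form, recognize Pascal's recurrence with seed $A(0,x)=[x=0]$, and read off $B[t,k]=A(t,t-k)=\binom t{t-k}=\binom tk$. The extra bookkeeping you add, namely the truth-table check, the zero conventions at the cone edges and the case $k>t$, makes explicit what the paper leaves implicit.
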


\begin{proof}
$\mathrm{ANF}_{60}=a\xor b$, so $p_{60}(a,b,c)=a+b$ and the orbit from $\delta_0$ obeys
$A(t+1,x)=A(t,x-1)+A(t,x)$, Pascal's recurrence with $A(0,x)=[x=0]$; hence
$A(t,x)=\binom tx\bmod N$. Therefore $B[t,k]=A(t,t-k)=\binom t{t-k}=\binom tk$.
\end{proof}

\Needspace{5\baselineskip}
\begin{corollary}[The Pascal basis is the zeta transform]\label{cor:zeta}
Over $\FF$, $B_{60,\delta_0}[t,k]=[\,k\subseteq t\,]$, where $k\subseteq t$ means containment
of binary supports. That is, $B_{60}$ is the \emph{zeta matrix} of the Boolean lattice on
$b$ bits when $n=2^b$, and
\[
 B_{60}\;=\;\begin{pmatrix}1&0\\1&1\end{pmatrix}^{\!\otimes\,\log_2 n},
 \qquad B_{60}^{-1}=B_{60} .
\]
\end{corollary}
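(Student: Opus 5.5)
The plan starts from Theorem~\ref{thm:pascal}, which gives $B_{60,\delta_0}[t,k]=\binom tk\bmod2$. Lucas' theorem \cite{lucas,fine} then converts the entry into a digit condition. Write $t=\sum_i t_i2^i$ and $k=\sum_i k_i2^i$. Then $\binom tk\equiv\prod_i\binom{t_i}{k_i}\pmod2$, and each factor $\binom{t_i}{k_i}$ vanishes exactly when $t_i=0$ and $k_i=1$. So the entry is $1$ iff $k_i\le t_i$ for every $i$, which is the containment $k\subseteq t$ of binary supports. This identifies $B_{60}$ with the zeta matrix of the Boolean lattice on $b$ bits once $n=2^b$, since the indices $0,\dots,n-1$ are then exactly the subsets of $\{0,\dots,b-1\}$.

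The Kronecker factorization follows from the same digit formula. Let $M=\begin{pmatrix}1&0\\1&1\end{pmatrix}$; its entries are $M[t_i,k_i]=\binom{t_i}{k_i}$ for $t_i,k_i\in\{0,1\}$. With the usual convention that the row and column indices of $M^{\otimes b}$ are read as binary digit strings, we have $M^{\otimes b}[t,k]=\prod_i M[t_i,k_i]$. By Lucas, this product equals $B_{60}[t,k]$. The only point to check is the bit order in the Kronecker convention. The most significant digit corresponds to the leftmost factor, and since all factors are identical, this orientation is irrelevant.

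For the involution, use $(M^{\otimes b})^2=(M^2)^{\otimes b}$ by the mixed-product property. Over $\FF$ we have $M^2=\begin{pmatrix}1&0\\2&1\end{pmatrix}=I$, so $B_{60}^2=I$ and $B_{60}^{-1}=B_{60}$. Alternatively, one can invoke Möbius inversion on the Boolean lattice, where the Möbius function $(-1)^{|t\setminus k|}$ reduces to $1$ modulo two.

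The main obstacle is only bookkeeping. The statement of Lucas' theorem must be set up so that it applies uniformly for all $t,k<2^b$, including the cases $k>t$; there the entry vanishes, consistent with triangularity. The index convention in the tensor product must also be matched to the binary expansion. For $n$ not a power of two, the containment description still holds, and $B_{60}$ is the leading principal $n\times n$ block of the larger zeta matrix. Since that matrix is lower triangular, the involution property passes to the block as well. I would note this in passing so that the restriction $n=2^b$ applies only to the tensor formula.
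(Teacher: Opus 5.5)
Your proposal is correct and follows the paper's proof: you combine Theorem~\ref{thm:pascal} with Lucas' theorem to obtain the containment predicate $[\,k\subseteq t\,]$, and its factorisation over bit positions gives the Kronecker form. The mixed-product step $(M^{\otimes b})^2=(M^2)^{\otimes b}=I$ over $\FF$ justifies explicitly the involution that the paper states in one clause, and your remark on leading principal blocks for general $n$ matches the paper's treatment in Corollary~\ref{cor:fft}.
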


\begin{proof}
Lucas' theorem: $\binom tk$ is odd iff the addition $k+(t-k)$ carries nowhere in base~2,
i.e.\ iff the binary support of $k$ is contained in that of $t$. The containment predicate
factorises over the bit positions, giving the Kronecker form; over $\FF$ the zeta transform is
its own inverse.
\end{proof}

\Needspace{5\baselineskip}
\begin{corollary}[Cost of the Pascal change of basis]\label{cor:fft}
For $n=2^b$ the Kronecker form is a butterfly: both $x\mapsto B_{60}x$ and $y\mapsto B_{60}^{-1}y$ run in
$O(n\log n)$ ring operations and $O(n)$ space, despite $B_{60}$ having
$3^{\log_2 n}=n^{\log_23}\approx n^{1.585}$ nonzero entries. Individual entries are available
in $O(\log n)$ without forming the matrix.
\end{corollary}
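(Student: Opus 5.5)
The plan is to exploit the Kronecker factorisation of Corollary~\ref{cor:zeta} directly, the way the fast zeta (subset-sum) transform is usually organised. With $n=2^b$, index the coordinates by bit strings $t=(t_{b-1},\dots,t_0)$. Write
\[
 B_{60}=M^{\otimes b},\qquad M=\begin{pmatrix}1&0\\1&1\end{pmatrix}.
\]
This factors as the ordered product $\prod_{j=0}^{b-1}(I\otimes\cdots\otimes M\otimes\cdots\otimes I)$, with $M$ in slot $j$. The factors commute because they act on different tensor slots. Applying the $j$-th factor to a vector $x$ is the butterfly update
\[
 x_t\leftarrow x_t+x_{t\oplus 2^j}\quad\text{for every } t \text{ with bit } j \text{ set},
\]
performed in place. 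Each pass touches every entry at most once, so it costs $n/2$ additions. The $b=\log_2 n$ passes then give $O(n\log n)$ ring operations and $O(n)$ space.

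For the inverse, Corollary~\ref{cor:zeta} gives $B_{60}^{-1}=B_{60}$ over $\FF$, so the same routine applies. Over a general $\ZZ_N$ I would use $M^{-1}=\begin{pmatrix}1&0\\-1&1\end{pmatrix}$, which turns each butterfly into a subtraction with the same cost. One point needs care here: over $\ZZ_N$ with $N>2$, Theorem~\ref{thm:pascal} gives entries $\binom tk\bmod N$, and that matrix is not the Kronecker power. So the butterfly claim should be stated for $\FF$, where the paper's corollary lives, or else the statement should be read as concerning the zeta matrix. I would make the $\FF$ setting explicit. The main step to check is that the in-place pass ordering is correct. Because the factors commute and each pass reads only pairs differing in bit $j$, in-place updating never overwrites a value that is still needed within the same pass.

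The nonzero count follows from the Kronecker form. The number of nonzeros is multiplicative under $\otimes$, and $M$ has three nonzeros, so $B_{60}$ has $3^b=3^{\log_2 n}=n^{\log_2 3}$ nonzeros. Equivalently, counting pairs $k\subseteq t$ gives $\sum_t 2^{|t|}=3^b$.

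For individual entries, Lucas' theorem gives $B_{60}[t,k]=[\,k\mathbin{\&}\neg t=0\,]$, which a bitwise scan over $O(\log n)$ bit positions evaluates. I expect no real obstacle beyond stating the $\FF$ restriction and the in-place correctness precisely.
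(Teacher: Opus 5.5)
Your proposal is correct and follows essentially the same route as the paper: apply the $2\times2$ factor $\begin{pmatrix}1&0\\1&1\end{pmatrix}$ in place along each of the $\log_2 n$ bit coordinates, invert with the same routine because $B_{60}^{-1}=B_{60}$ over $\FF$, count $3^{\log_2 n}$ nonzeros multiplicatively from the three nonzeros of the factor, and read entries with Lucas' containment test. Your caveat that the butterfly belongs to the Kronecker form over $\FF$ (for $N>2$ the matrix $\binom tk\bmod N$ is not a Kronecker power) matches the paper's proof, which likewise states the inversion only over $\FF$.
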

\begin{proof}
Apply the $2\times2$ zeta factor successively along each of the $\log_2n$ bit
coordinates. Every stage performs $O(n)$ additions and may be done in place, giving
$O(n\log n)$ time and $O(n)$ storage; the same algorithm inverts the transform over
$\FF$. The Kronecker product has $3^{\log_2n}$ nonzero entries because its factor has
three. Lucas' containment test reads an individual entry in $O(\log n)$.
For arbitrary $n$, zero-pad to $N=2^{\lceil\log_2\max(1,n)\rceil}$, apply the
transform, and truncate to the first $n$ entries. Lower triangularity makes
these entries independent of the padding; the leading principal matrix is
still its own inverse over $\FF$. This costs $O(n\log(n+1))$ Boolean operations.
\end{proof}

\Needspace{5\baselineskip}
\begin{remark}[Sparsity is not cost]\label{rem:sparsity}
Corollary~\ref{cor:fft} shows that the number of nonzero entries is the wrong measure of the
price of a change of basis; what matters is whether the matrix admits a factorisation into
$O(\log n)$ sparse factors, or a short linear recurrence. Section~\ref{sec:usable} classifies
the $24$ universal bases on that criterion, and the classification does not follow the density
ordering.
\end{remark}

\Needspace{5\baselineskip}
\begin{proposition}[Mirror]\label{prop:102}
$\mathrm{ANF}_{102}=b\xor c$ and the mirrored cut $\widetilde D_k(t)=A^{102,\delta_0}(t,k-t)$
satisfies $\widetilde D_k(t)=\binom tk \bmod N$. Rules 60 and 102 are reflections of one
another and carry the same basis on opposite cuts.
\end{proposition}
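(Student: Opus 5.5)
The plan mirrors the proof of Theorem~\ref{thm:pascal}, with the roles of the left and right neighbours exchanged. First compute the algebraic normal form. Rule 102 has truth-table bits $0110$ repeated, i.e.\ outputs $0,1,1,0$ on $(b,c)=00,01,10,11$, independent of $a$. Its multilinear form is therefore $b\xor c$, with $\varepsilon_b=\varepsilon_c=1$ and all other coefficients zero. By Definition~\ref{def:lift} the canonical lift is $p_{102}(a,b,c)=b+c\pmod N$ for every modulus, with no cross term to worry about.

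Next, write the orbit from $\delta_0$ as $A(t+1,x)=A(t,x)+A(t,x+1)$ with $A(0,x)=[x=0]$. By induction on $t$, using Pascal's rule $\binom{t}{-x}+\binom{t}{-x-1}=\binom{t+1}{-x}$, I would show
\[
 A(t,x)=\binom{t}{-x}\bmod N,
\]
with the convention that the binomial coefficient vanishes outside $0\le -x\le t$. The mirrored cut is then $\widetilde D_k(t)=A(t,k-t)=\binom{t}{t-k}=\binom tk$, which proves the main identity.

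For the reflection statement, let $\sigma(a,b,c)=(c,b,a)$. Then $p_{102}=p_{60}\circ\sigma$ as polynomials, because $p_{60}=a+b$. Consequently, if $A^{60}$ is the orbit of Rule 60 from $\delta_0$, the reflected configuration $A^{60}(t,-x)$ satisfies the Rule 102 recurrence with the same symmetric seed. Uniqueness of orbits gives $A^{102}(t,x)=A^{60}(t,-x)$. Evaluating at $x=k-t$ turns the cut $\widetilde D_k$ into $A^{60}(t,t-k)=D_k^{60}(t)$, and Theorem~\ref{thm:pascal} identifies this value as $\binom tk$. This reflection argument could in fact replace the direct induction above.

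No step is a real obstacle. The only point needing care is keeping the sign convention of the cut consistent, so that $x=k-t$ reflects exactly to $t-k$.
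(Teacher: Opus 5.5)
Your proposal is correct, and its main line is exactly the paper's proof: lift to $p_{102}=b+c$, obtain $A(t+1,x)=A(t,x)+A(t,x+1)$, derive $A(t,x)=\binom t{-x}$ by Pascal's rule, and substitute $x=k-t$. Your second argument is also valid: the identity $p_{102}=p_{60}\circ\sigma$ together with uniqueness of orbits gives $A^{102}(t,x)=A^{60}(t,-x)$, and this justifies the mirror claim that the paper leaves implicit.
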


\begin{proof}
$p_{102}(a,b,c)=b+c$ gives $A(t+1,x)=A(t,x)+A(t,x+1)$, so $A(t,x)=\binom t{-x}$, and
$A(t,k-t)=\binom t{t-k}=\binom tk$.
\end{proof}

\Needspace{5\baselineskip}
\begin{remark}[Small Pascal matrices]\label{ver:pascal}
At length three, Rule 60 gives
$\left(\begin{smallmatrix}1&0&0\\1&1&0\\1&2&1\end{smallmatrix}\right)$
modulo $N$. Its determinant is one even when $N$ is composite.
Reflecting Rule 102 and its diagonal cut gives the same entries.
\end{remark}

\Needspace{5\baselineskip}
\begin{proposition}[Delta, prefix and parity bases]\label{prop:named}
Over $\FF$, with seed $\delta_0$,
\[
 \begin{split}
 B_{204}&=I,\qquad B_{220}[t,k]=[k\le t],\\
 B_{28}[t,k]&=[k=0]+[1\le k\le t,\ t\equiv k\pmod2].
 \end{split}
\]
Thus $w_{204}$ is Hamming weight, and the Rule 220 coordinates are
$x_k=y_k+y_{k-1}$, with $y_{-1}=0$.
\end{proposition}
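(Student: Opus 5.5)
For each of the three rules I would first write the binary ANF and its positive lift, then solve the single-seed orbit in closed form, and finally read off $B[t,k]=A(t,t-k)$ exactly as in the proof of Theorem~\ref{thm:pascal}.

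For Rule 204 the truth table is the identity on the centre bit, so $p_{204}=b$. The seed is therefore stationary, $A(t,x)=[x=0]$, and $B[t,k]=[t-k=0]$, giving $B=I$. Since the coordinates then coincide with the entries of $y$, $w_{204}$ is the Hamming weight. By Theorem~\ref{thm:universal} this is the only coordinate vector.

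For Rule 220 the binary expansion $220=11011100_2$ gives outputs $1$ at $010,011,100,110,111$ and $0$ at $000,001,101$. This is $b\vee(a\wedge\neg c)$, with ANF $b\xor a\xor ac\xor ab\xor abc$. I would not rely on the ANF algebra, though. Over $\FF$ it is easier to argue directly on Boolean values. From $\delta_0$ the centre stays $1$, because the neighbourhood $(0,1,\ast)$ or $(1,1,\ast)$ always outputs $1$. A cell at $x>0$ becomes $1$ exactly when its left neighbour is $1$ and its right neighbour is $0$, or when it is already $1$. Cells at $x<0$ stay $0$ by Theorem~\ref{thm:tri}, since $220\equiv0\pmod4$. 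By induction the orbit is the growing block $A(t,x)=[0\le x\le t]$: the front cell $x=t$ sees $(1,0,0)\mapsto1$, and the interior cells see $b=1$. Hence $B[t,k]=[0\le t-k\le t]=[k\le t]$. The inverse of the all-ones lower triangle over $\FF$ is the bidiagonal matrix $I+\Inc$, since $(I+\Inc)$ times the prefix-sum matrix is the identity. This gives $x_k=y_k+y_{k-1}$.

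For Rule 28, $28=00011100_2$ gives outputs $1$ at $010,011,100$, so the local rule is $b\wedge\neg a$ together with $a\wedge\neg b\wedge\neg c$. The centre stays $1$, because its left neighbour stays $0$ by triangularity. I would then prove by induction that, from the seed, $A(t,x)=1$ exactly for $x=0$ and for $1\le x\le t$ with $x\equiv t\pmod 2$. The transition checks run as follows. A cell $x\ge1$ with $x\equiv t\pmod2$ that is $1$ has a left neighbour equal to $0$ (wrong parity) unless $x=1$, where the left neighbour is the centre $1$. In the generic case $(0,1,0)\mapsto1$ would keep it on, so the analysis has to show that this cell turns off.

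I expect this parity bookkeeping, and in particular the boundary cells $x=1$ and $x=t$, to be the main obstacle. The inductive claim has to be tuned against small cases $t\le4$ before it is committed to. Once the orbit is known, $B[t,k]=A(t,t-k)$ translates $x=t-k$. The centre column $x=0$ becomes $k=t$, and the parity cells become the second summand. I would reconcile the indexing of the $[k=0]$ term against the computed orbit at that stage.
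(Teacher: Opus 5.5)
\textbf{Genuine gap in the Rule 28 part.} Your treatment of Rules 204 and 220 is correct and is essentially the paper's argument: the seed is stationary, the growing block uses $111,011,110,100\mapsto1$ and $000,001\mapsto0$, and the inverse of the prefix matrix is first difference.

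The Rule 28 part, however, rests on a false inductive hypothesis. Your claimed orbit is $A(t,x)=1$ for $x=0$ and for $1\le x\le t$ with $x\equiv t\pmod 2$. This is just the target matrix formula with $k$ renamed $x$; it is not the orbit. It already fails at $t=3$. From row $2=\{0,2\}$, the cell $x=1$ sees $(1,0,1)\mapsto0$ and the cell $x=2$ sees $(0,1,0)\mapsto1$. So row $3$ is $\{0,2,3\}$, not $\{0,1,3\}$.

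The obstruction you hit in your transition check is therefore real, not a bookkeeping issue. Since $010\mapsto1$ and $101\mapsto0$, the alternating interior is stationary in space and never changes parity. There is no way to ``show that this cell turns off.''

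For the same reason, the final re-indexing cannot be reconciled. Your orbit would give $B[t,k]=[k=t]+[0\le k<t,\ k\text{ even}]$. At $t=3$ that is the row $(1,0,1,1)$, whereas the statement and the true orbit give $(1,1,0,1)$.

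\textbf{The correct orbit.} It is $A(t,x)=[x=t]+[0\le x<t,\ x\text{ even}]$, and the induction checks are as follows.
\begin{enumerate}
\item Interior cells use $010\mapsto1$ and $101\mapsto0$.
\item At $x=0$, both $010$ and $011$ give $1$.
\item The old right edge $x=t$ sees $(A(t,t-1),1,0)$. This is $010\mapsto1$ for even $t$ and $110\mapsto0$ for odd $t$, exactly as the next interior requires.
\item The cell $x=t-1$ sees $011\mapsto1$ or $101\mapsto0$ according to its parity.
\item The neighbourhood $100$ creates the new edge at $x=t+1$.
\item Every other cell sees $000$ or $001$.
\end{enumerate}
Substituting $x=t-k$ then gives the stated matrix. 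The $[k=0]$ term comes from the moving front $x=t$, not from the centre column. The centre column $x=0$ gives $k=t$ and is absorbed into $[1\le k\le t,\ k\equiv t\pmod2]$ together with the other even sites; at $t=0$ the front and the centre coincide.
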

\begin{proof}
Rule 204 fixes the seed. Rules 220 and 252 give
$A(t,x)=[0\le x\le t]$: interior $111$, left boundary $011$,
right boundary $110$ and advancing edge $100$ all update to one,
while $000$ and $001$ update to zero. The inverse is first difference.

For Rule 28 the physical orbit is
\[
 A(t,x)=[x=t]+[0\le x<t,\ x\text{ even}].
\]
Its seed row is correct. In the alternating interior, $010$ updates
to one and $101$ to zero. At $x=0$, both $010$ and $011$ give one.
The old right edge updates to one at even $t$ and zero at odd $t$,
as required for the next interior; $100$ creates the new edge.
All exterior cells remain zero. This proves the formula by induction,
and substituting $x=t-k$ gives the matrix. The same orbit contains
no $111$ neighborhood, so Rule 156, which differs only there, gives
the same matrix.
\end{proof}

\begin{figure}[t]\centering
\EmbeddedFigure{\textwidth}{3}
\caption{The four named universal bases as $32\times32$ matrices over $\FF$
(black $=1$). Left to right: delta (Rule~204), Pascal (Rule~60), prefix--sum (Rule~220),
parity staircase (Rule~28).}
\label{fig:bases}
\end{figure}

\section{The support calculus inside the construction}

So far the columns have been treated as vectors. Their origin as
cellular-automaton sequences supplies additional structure through
Newton supports. Products and integrations of sequences correspond
to specific operators on these supports. Expressing those operators
inside the basis construction allows us to ask a sharper question
than sparsity alone: which coordinates simplify the actual operations
used to propagate the rule?

\Needspace{5\baselineskip}
\begin{theorem}[Support coordinates on every prefix]\label{thm:S}
For the Rule 30 diagonal $D_m$ over $\FF$, the unique solution of
$B_{60}x=D_m$ on $0\le t<n$ is the indicator of
$S_{m+1}\cap[0,n)$. In particular,
\[
 D_m(t)=\bigoplus_{r\in S_{m+1}}\binom tr\pmod2.
\]
The entire support is present when $n>\max S_{m+1}$.
\end{theorem}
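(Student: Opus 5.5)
The plan is to combine the invertibility of the Pascal basis with the support description imported from \cite{U1}. By Theorem~\ref{thm:pascal} and Corollary~\ref{cor:zeta}, on every window $0\le t<n$ the matrix $B_{60}$ has entries $[k\subseteq t]$. It is lower unitriangular over $\FF$, hence universal. So for every $n$ the system $B_{60}x=D_m$ has exactly one solution, which is returned by Algorithm~\ref{alg:greedy} (Theorem~\ref{thm:universal}). It therefore suffices to exhibit one solution, namely the indicator of $S_{m+1}\cap[0,n)$.

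For this I would invoke Proposition~\ref{bg:I-1-18}. It states that $\mathbf 1_{S_{m+1}}$ is the binomial (zeta) transform of the diagonal sequence indexed by $m+1$ in the convention of \cite{U1}. Equivalently, for every $t\ge0$,
\[
 D_m(t)=\bigoplus_{r\in S_{m+1}}\binom tr\pmod2 .
\]
This is the displayed identity. The step I expect to be the main obstacle is matching conventions. One must check that the diagonal $b(m+1,\cdot)$ of \cite{U1} coincides with the cut $D_m(t)=A^{30,\delta_0}(t,t-m)$ of Definition~\ref{def:diag}, and that the shift by one in the index is correct. I would verify this on the first two diagonals directly from $p_{30}(a,b,c)=a+b+c+bc$. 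Concretely, $D_0\equiv1$ should give $S_1=\{0\}$, and the next diagonal should give the expected small support. The general agreement then follows because both objects are defined from the same single-seed orbit along lines parallel to the right light-cone boundary.

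With the identity in hand, truncation is immediate. On rows $t<n$, only indices $r\le t<n$ contribute, since $\binom tr=0$ for $r>t$. Hence $x=\mathbf 1_{S_{m+1}\cap[0,n)}$ satisfies $B_{60}x=D_m$ on the window, and by uniqueness it is the solution. Finally, if $n>\max S_{m+1}$, the intersection equals $S_{m+1}$, so the whole support appears. Finiteness of $S_{m+1}$ is part of the support calculus of \cite{U1}; if it were not available, this last sentence would simply be read conditionally.
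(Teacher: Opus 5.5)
Your proposal is correct and follows essentially the paper's argument: the support expansion of Proposition~\ref{bg:I-1-18} in the Pascal columns of Theorem~\ref{thm:pascal}, truncation because every column with $r\ge n$ vanishes on the rows $t<n$, and uniqueness from Theorem~\ref{thm:universal}. Your explicit check of the index shift $D_m=b(m+1,\cdot)$ is extra care that the paper does not place in the proof itself but records separately in Remark~\ref{ver:S}.
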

\begin{proof}
The support expansion in Proposition~\ref{bg:I-1-18} uses the
columns of Theorem~\ref{thm:pascal}. For $t<n$, every omitted column
with $r\ge n$ is zero at row $t$. Restricting the expansion therefore
gives the indicated solution on any prefix. It is unique by
Theorem~\ref{thm:universal}.
\end{proof}

\Needspace{5\baselineskip}
\begin{corollary}\label{cor:weight}
On a length-$n$ window,
$w_{60}(D_m)=|S_{m+1}\cap[0,n)|$. This equals $|S_{m+1}|$ when
the whole support is included.
\end{corollary}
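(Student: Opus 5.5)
This corollary should follow directly from Theorem~\ref{thm:S} together with the well-definedness statement of Theorem~\ref{thm:universal}.

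First I will note that Rule 60 is universal over $\FF$. By Theorem~\ref{thm:pascal} and Corollary~\ref{cor:zeta}, $B_{60}$ is lower unitriangular on every window. Theorem~\ref{thm:universal} then gives a unique coordinate vector $x$ with $B_{60}x=D_m$. So $w_{60}(D_m)=\#\{k:x_k\neq0\}$ depends only on $D_m$ and $n$, and not on any choice made by the elimination.

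Next I will identify $x$ using Theorem~\ref{thm:S}: on $0\le t<n$, the solution is the indicator vector of $S_{m+1}\cap[0,n)$. Over $\FF$, an indicator vector has nonzero entries exactly on its set. Counting them gives $w_{60}(D_m)=|S_{m+1}\cap[0,n)|$.

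For the final clause, if $n>\max S_{m+1}$, then $S_{m+1}\subseteq[0,n)$. The intersection is then all of $S_{m+1}$, so the weight equals $|S_{m+1}|$.

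There is no real obstacle here. The only point needing care is uniqueness, so that the weight is intrinsic rather than produced by one particular solver; Theorem~\ref{thm:universal} already supplies this. A minor caveat is that $S_{m+1}$ must be finite for the last clause to apply, which I take as given by the convention that $\max S_{m+1}$ exists.
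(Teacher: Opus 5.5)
Your proposal is correct and follows the paper's own argument: the paper likewise reads off from Theorem~\ref{thm:S} that the Pascal coordinate vector of $D_m$ on the window is the indicator of $S_{m+1}\cap[0,n)$ and counts its nonzero entries. Your separate appeal to Theorem~\ref{thm:universal} for uniqueness is already contained in Theorem~\ref{thm:S}. The finiteness caveat is harmless, since $S_{m+1}\subseteq[0,F_{m+2})$ by Theorem~\ref{thm:fib}.
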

\begin{proof}
The coordinate vector in Theorem~\ref{thm:S} is an indicator vector.
Counting its nonzero entries gives the result.
\end{proof}

\Needspace{5\baselineskip}
\begin{remark}[The index shift in the support notation]\label{ver:S}
Our diagonal index starts at zero: $D_m=b(m+1,\cdot)$ in
\cite{U1}. Thus the first fourteen weights, for $m=1,\ldots,14$ at
$n=64$, are
$1,1,1,3,3,3,7,13,11,27,27,29,23,29$.
They are $|S_{m+1}|$, not $|S_m|$. At a shorter window the intersection
in Corollary~\ref{cor:weight} must be retained.
\end{remark}

\begin{figure}[t]\centering
 \EmbeddedFigure{0.90\textwidth}{4}
 \caption{Algorithm~\ref{alg:greedy} on a Rule~30 diagonal in the Pascal basis: each row adds one
column, the row below it shows the residual. No step disturbs a cell already fixed, which is
Theorem~\ref{thm:monotone}, and the final residual is empty, which is
Theorem~\ref{thm:S}.}
 \label{fig:recon2}
\end{figure}

\section{The weight family, and the non--optimality of the binomial basis}

Corollary~\ref{cor:weight} makes $|S_m|$ one coordinate of a vector of invariants
$\big(w_R(D_m)\big)_{R\ \mathrm{universal}}$. Since a change of universal basis is itself a
triangular unimodular transformation, these weights are not comparable a priori, and no basis
can minimize the weight of every target. Indeed, a non-Pascal basis
column has weight one in its own basis; Pascal can improve other targets.

\begin{table}[t]\centering\small
\caption{Number of terms $w_R(D_m)$ for the Rule~30 diagonals, mod~2, at window $n=512>\max S_{m+1}$ for $1\le m\le24$. The Pascal column reproduces $|S_{m+1}|$ exactly, as
Theorem~\ref{thm:S} requires, and attains the minimum over the $24$ universal bases for every
$m$ except $m=4$.}
\label{tab:weights}
{\fontsize{9}{10.5}\selectfont\setlength{\tabcolsep}{3pt}
\begin{tabular}{|r|r|r|r|r|r|l|}
\hline
$m$ & $|S_{m+1}|$ & 60 & 204 & 220 & 28 & Minimizers\\
\hline
 1 & 1 & 1 & 256 & 511 & 1 & 28, 60, 156, 188\\
 2 & 1 & 1 & 256 & 511 & 1 & 28, 60, 156, 188\\
 3 & 1 & 1 & 256 & 255 & 510 & 60\\
 4 & 3 & 3 & 256 & 383 & 255 & 124\\
 5 & 3 & 3 & 192 & 383 & 127 & 60\\
 6 & 3 & 3 & 256 & 319 & 254 & 60\\
 7 & 7 & 7 & 256 & 351 & 191 & 60\\
 8 & 13 & 13 & 256 & 383 & 159 & 60\\
 9 & 11 & 11 & 256 & 335 & 254 & 60\\
10 & 27 & 27 & 256 & 383 & 191 & 60\\
11 & 27 & 27 & 248 & 383 & 143 & 60\\
12 & 29 & 29 & 248 & 399 & 142 & 60\\
13 & 23 & 23 & 232 & 415 & 111 & 60\\
14 & 29 & 29 & 232 & 383 & 159 & 60\\
15 & 25 & 25 & 256 & 327 & 254 & 60\\
16 & 55 & 55 & 256 & 371 & 191 & 60\\
17 & 79 & 79 & 256 & 383 & 163 & 60\\
18 & 95 & 95 & 256 & 383 & 182 & 60\\
19 & 105 & 105 & 252 & 367 & 223 & 60\\
20 & 123 & 123 & 254 & 367 & 211 & 60\\
21 & 127 & 127 & 236 & 387 & 166 & 60\\
22 & 117 & 117 & 244 & 351 & 219 & 60\\
23 & 105 & 105 & 250 & 367 & 191 & 60\\
24 & 125 & 125 & 256 & 373 & 170 & 60\\
\hline
\end{tabular}%
}
\end{table}

\Needspace{5\baselineskip}
\begin{proposition}[Finite sparsity comparisons]\label{prop:opt}
At $n=512$, Pascal minimizes $w_R(D_m)$ over $R\in\mathcal U$
for $1\le m\le24$, except at $m=4$: Rule 124 uses two terms and
Pascal uses three. At $n=4096$ the same conclusion holds through
$m=29$. It fails at $m=30$, where
\[
 w_{188}(D_{30})=1452<1925=w_{60}(D_{30}),
 \qquad \max S_{31}=3843<4096.
\]
Thus including the entire Pascal support does not make Pascal a
minimizer on every such window.
\end{proposition}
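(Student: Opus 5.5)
This is a finite computational statement, so the plan is an exact, certified computation rather than a structural argument. The approach rests on Theorem~\ref{thm:universal}: for each $R\in\mathcal U$ and each window $n$, the matrix $B_{R,\delta_0}$ is lower unitriangular over $\FF$. The coordinates of $D_m$ are therefore unique and obtained by forward substitution, and $w_R(D_m)$ is an invariant rather than an artefact of the solver.

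The steps, in order, are as follows.
\begin{enumerate}
\item Generate the Rule 30 orbit from $\delta_0$ over $\FF$ to depth $n$, for $n=512$ and $n=4096$, and read off $D_m(t)=A(t,t-m)$ for $1\le m\le 30$.
\item For each of the 24 rules in $\mathcal U$, form its diagonal matrix $B_R$ on the same window.
\item Solve $B_Rx=D_m$ by Algorithm~\ref{alg:greedy}. This is forward substitution, and each column can be stored as a bitset.
\item Record $w_R(D_m)$ and check that the residual is zero as a certificate.
\end{enumerate}

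Several of the coordinate maps have closed forms that serve as independent checks.
\begin{itemize}
\item For Rule 60, Theorem~\ref{thm:S} and Corollary~\ref{cor:weight} give $w_{60}(D_m)=|S_{m+1}\cap[0,n)|$. These values can be cross-checked by the butterfly of Corollary~\ref{cor:fft}, i.e.\ the subset Möbius transform of $D_m$.
\item For Rule 204 the weight is the Hamming weight of $D_m$.
\item For Rule 220 the coordinates are first differences (Proposition~\ref{prop:named}).
\end{itemize}
Agreement of these with the generic solver validates the implementation. Table~\ref{tab:weights} then supplies the $n=512$ data. The exceptional case $m=4$ is checked by hand: $w_{124}(D_4)=2$ against $w_{60}(D_4)=3$.

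To establish the failure at $m=30$, I would verify three things for $n=4096$.
\begin{enumerate}
\item $\max S_{31}=3843<4096$, read off as the largest index with nonzero Pascal coordinate. By Theorem~\ref{thm:S}, the full support is then inside the window.
\item $w_{60}(D_{30})=1925$.
\item $w_{188}(D_{30})=1452$.
\end{enumerate}
The last claim only needs exhibiting $x$ with $B_{188}x=D_{30}$ and weight $1452$. Uniqueness from Theorem~\ref{thm:universal} makes this the weight, and the equation can be checked directly by multiplying back. I would also confirm, for $m\le 29$ at $n=4096$, that $w_{60}$ is minimal over $\mathcal U$.

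The main obstacle is trustworthiness and cost at $n=4096$: 24 rules, 30 targets, and dense triangular solves of order $n^2$ bit operations each. Bit-packed forward substitution makes this cheap, on the order of $10^{10}$ bit operations. The risk is an off-by-one in the diagonal cut $t-m$ versus the support index $m+1$ (Remark~\ref{ver:S}). I would guard against this by requiring the Pascal weights to reproduce the tabulated $|S_{m+1}|$ for $m\le 24$ before trusting the larger run.
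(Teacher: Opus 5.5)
Your method for the weight comparisons is exactly the paper's, and it does certify $w_{188}(D_{30})=1452<1925=w_{60}(D_{30})$ and the other finite comparisons. That method is forward substitution in each of the $24$ unitriangular systems, counting the nonzero coordinates, with a zero-residual check.

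The step that fails is your certification of $\max S_{31}=3843<4096$. By Theorem~\ref{thm:S}, solving $B_{60}x=D_{30}$ on $0\le t<4096$ returns only the indicator of $S_{31}\cap[0,4096)$. The largest nonzero index you read off is therefore $\max\bigl(S_{31}\cap[0,4096)\bigr)$. No computation confined to this window can detect Newton indices at or beyond $4096$; for instance, $t\mapsto\binom t{5000}$ is the zero vector there. Theorem~\ref{thm:S} states $n>\max S_{m+1}$ as the \emph{hypothesis} under which the whole support appears. You use the windowed output to infer that hypothesis, which is circular. The final sentence of the proposition, that the window contains the entire Pascal support, rests on precisely this inequality, so it needs an independent argument.

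The paper supplies one. By Theorem~\ref{thm:fib}, the integer lift of $D_{30}$ is a polynomial of degree $F_{32}-1$. All its Newton coefficients, and hence their reductions modulo two, therefore vanish from index $F_{32}=2178309$ on, so $S_{31}\subseteq[0,2^{22})$. Generate $D_{30}(t)$ for $0\le t<2^{22}$ with the rotated diagonal recurrence of Section~\ref{sec:recurrent}. This takes about $31\cdot 2^{22}$ local updates, rather than a physical orbit of that depth. Then apply the butterfly of Corollary~\ref{cor:fft} on that window. This returns all of $S_{31}$, and you can check that its maximum is $3843$.

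Alternatively, Corollary~\ref{cor:fibperiod} makes $2^{22}$ a period of $D_{30}$. Verifying $D_{30}(t+4096)=D_{30}(t)$ over one such period shows that $4096$ is a period. Since $(E-I)^{4096}=E^{4096}-I$ over $\FF$, every Newton coefficient of index at least $4096$ then vanishes, as in Proposition~\ref{prop:exactnewtonperiod}. Only after one of these arguments does the windowed maximum equal $\max S_{31}$.

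A smaller point: the ``same conclusion'' at $n=4096$ carries the $m=4$ exception with it. Your check for $m\le 29$ should not expect Pascal to be minimal at every depth.
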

\begin{proof}
For each of the 24 rules, form the triangular matrix $A_R$ and the
target vector $v=(D_m(t))_{0\le t<n}$. Since every diagonal entry is
one, its unique coordinate vector is given over $\FF$ by
\[
 c_0=v_0,\qquad
 c_i=v_i+\sum_{j=0}^{i-1}(A_R)_{ij}c_j\quad(1\le i<n).
\]
Counting the nonzero $c_i$ gives the stated finite comparisons.
At depths 27 and 28,
the $n=4096$ weights for Pascal are $499,903$; for Rule 188,
$1511,1490$; for Rule 28, $1614,1535$; for delta, $2048,2044$; and
for prefix, $3027,3079$. The previously missing Rule 124 comparison
gives $2095,2010$, and Rule 92 gives $1614,1535$.
At depth 30 the same recurrence gives 1452 nonzero coordinates for
Rule 188 and 1925 for Pascal. Reconstruction follows from the
triangular equations at each row.
\end{proof}

\begin{figure}[t]\centering
\EmbeddedFigure{\textwidth}{5}
\caption{Left: the weight family at window $n=512$ (log scale); the Pascal
curve attains the minimum except at $m=4$, where Rule 124 uses two terms
against Pascal's three. Right: the same computation at $n=64$, where the window is too
small from $m=15$ on and the ordering inverts --- the minimum is a property of that finite target. Even a window containing
the full Pascal support can favor another basis, as the depth-30
comparison in Proposition~\ref{prop:opt} shows.}
\label{fig:weights}
\end{figure}

\Needspace{5\baselineskip}
\begin{remark}[Basis search as preprocessing]
Computing $w_R(D_m)$ for all $24$ universal bases requires $24$ solves.
The dense triangular implementation costs $O(n^2)$ ring operations per solve.
In the tabulated range, Pascal is a minimizer except at $m=4$, where Rule 124
uses two terms instead of three. This finite comparison does not establish
optimality at untested depths or among block decompositions.
\end{remark}

\section{What makes a basis usable}\label{sec:usable}

The preceding comparisons show that Pascal coordinates need not
minimize weight. Their role in the support calculus therefore
requires another explanation. Theorem~\ref{thm:uniqueconv} supplies
it by characterizing the triangular transform that converts OR
convolution into multiplication. Using such a representation also
requires computing the change of basis, applying its operators
and recovering values. The following definitions keep these
costs separate, and the inverse and increment formulas show
why sparsity of a coordinate vector or matrix is only part
of the calculation.

We distinguish four properties of a compatible family of triangular
matrices, each referring to a specific part of this calculation.

\Needspace{15\baselineskip}
\begin{definition}\label{def:usable}
A universal binary basis family $B$ has
\begin{enumerate}
\renewcommand{\labelenumi}{(\alph{enumi})}
\item a \emph{banded inverse} if $B^{-1}$ has bandwidth $h=O(1)$ as $n$ varies;
\item a \emph{butterfly factorisation}, on $n=2^b$ entries, if $B=A^{\otimes b}$ for a $2\times2$ matrix $A$;
\item \emph{dyadic self--similarity} if, in the window of length $2n$,
$B_{2n}=\begin{psmallmatrix}B_n&0\\ X&B_n\end{psmallmatrix}$ for some $X$;
\item the \emph{convolution property} if $B(a\conv b)=(Ba)\cdot(Bb)$ pointwise, for the
OR--convolution $\conv$ of the support calculus.
\end{enumerate}
\end{definition}

\Needspace{5\baselineskip}
\begin{proposition}[Banded inverses]\label{prop:banded}
If a supplied binary inverse $B^{-1}$ has bandwidth $h$ then both $x\mapsto Bx$ and $y\mapsto B^{-1}y$ cost $O((h+1)n)$ ring operations: the
second directly, the first by solving the $(h+1)$-term linear recurrence $B^{-1}y=x$ for $y$ in
increasing order.
\end{proposition}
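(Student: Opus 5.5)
We treat the two maps separately and use only the banded structure of $M=B^{-1}$: $M[i,j]=0$ unless $i-h\le j\le i$. Lower triangularity comes from universality (Definition~\ref{def:tri}(a)), so $M$ is lower triangular with unit diagonal, and bandwidth $h$ confines each row to $h+1$ consecutive nonzero positions. We do not need the diagonal of $B$ to be one; over $\FF$ a unit is one. Since we also want the ring version, we keep track of $M[i,i]^{-1}$, which exists because $\det M$ is a unit.

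For $y\mapsto B^{-1}y=My$, we compute each coordinate
\[
 x_i=\sum_{j=\max(0,i-h)}^{i}M[i,j]\,y_j
\]
directly. This takes at most $h+1$ multiplications and $h$ additions per row, so $O((h+1)n)$ in total. We assume $M$ is supplied in band storage, or that its entries in the band can each be read in $O(1)$; this is the meaning of ``supplied'' in the statement.

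For $x\mapsto Bx$, let $y=Bx$; this is equivalent to $My=x$. Reading row $i$ of that system gives
\[
 M[i,i]\,y_i=x_i-\sum_{j=\max(0,i-h)}^{i-1}M[i,j]\,y_j ,
\]
and therefore
\[
 y_i=M[i,i]^{-1}\Big(x_i-\sum_{j=i-h}^{i-1}M[i,j]y_j\Big).
\]
This is an $(h+1)$-term linear recurrence, and we evaluate it for $i=0,1,\dots,n-1$. Each step uses only the previously computed $y_{i-h},\dots,y_{i-1}$, so the recurrence is well defined and costs $O(h+1)$ per row. Forward substitution on a lower triangular system returns the unique solution, which by invertibility is $Bx$. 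The storage needed is $O(n)$, or $O(h)$ if values are streamed.

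There is no real obstacle here. The only point needing care is why the inverse direction is a recurrence and not a dense product: $B$ itself may be dense (the prefix basis of Proposition~\ref{prop:named}, for instance), but we never form it. We also note that the diagonal inverses are precomputed once, which over $\FF$ is trivial.
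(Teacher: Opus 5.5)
Your proposal is correct and follows the same route as the paper: direct banded multiplication for $y\mapsto B^{-1}y$, and forward substitution through the triangular recurrence $B^{-1}y=x$ for $x\mapsto Bx$, each at $O(h+1)$ operations per row. Your extra handling of the diagonal inverses for the ring case is harmless but unnecessary here, since over $\FF$ every diagonal entry is one.
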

\begin{proof}
Multiplying by a bandwidth-$h$ inverse uses at most $O(h+1)$ operations per row. Conversely,
$B^{-1}y=x$ is a triangular recurrence involving at most $h$ previously computed
coordinates of $y$; forward substitution therefore also costs $O((h+1)n)$ and stores only
the vectors and a constant-width working band.
\end{proof}

\Needspace{5\baselineskip}
\begin{proposition}[Explicit short inverses]\label{ver:cost}
Among the 24 rules, sixteen give $I$, Rules 220 and 252 have inverse
bandwidth one, Rules 28, 92 and 156 have bandwidth two, and Rule 188
has bandwidth four. These statements hold for every window.
\end{proposition}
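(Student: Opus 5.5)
The plan is a direct case analysis over the 24 members of $\mathcal U$ from Proposition~\ref{prop:congruence}, using the explicit orbit from $\delta_0$ in each case and then writing the inverse matrix in closed form. First I sort the rules by the behaviour of the seed. The only neighbourhoods reachable in one step are $000,001,010,100$, with outputs bit $0$, bit $1$, bit $2$, bit $4$. If bit $4$ is zero, the seed is stationary, as in the proof of Proposition~\ref{prop:congruence}, and then $B=I$. This happens for the eight rules $R\equiv4\pmod{16}$ with $R\mathbin{\&}16=0$ and for the eight rules $R\equiv12\pmod{16}$ with $R\mathbin{\&}16=0$, which gives the sixteen identity cases. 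The remaining eight rules are exactly $R\equiv28\pmod{32}$, namely $28,60,92,124,156,188,220,252$. Of these, $60$ is Pascal (dense self-inverse, Corollary~\ref{cor:zeta}) and $124$ is not claimed, so only $220,252,28,156,92,188$ need closed inverses.

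For $220$ and $252$, Proposition~\ref{prop:named} gives $B[t,k]=[k\le t]$. The inverse is the first difference $I+\Inc$, of bandwidth one, and this holds at every window since the matrix is a leading principal block of one infinite lower-triangular matrix. For $28$, and for $156$ which has the same matrix by Proposition~\ref{prop:named}, I would write the explicit inverse. Column $0$ is all ones, and column $k\ge1$ is supported on $t\ge k$ with $t\equiv k\pmod 2$. A second-difference candidate is $x_k=y_k+y_{k-2}$ for $k\ge2$, with small corrections at $k=0,1$ absorbing column $0$ (for instance $x_0=y_0$ and $x_1=y_1+y_0$). I then verify $B^{-1}B=I$ entrywise by checking that each column of $B$ is annihilated by the operator off its lead. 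Reading off the nonzero positions gives bandwidth two. Since the formulas do not depend on $n$, every window is covered.

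For $92$ and $188$, whose orbits are not yet described, I would first compute a few rows and guess a closed form for $A(t,x)$, in the style of Proposition~\ref{prop:named}. Rule $92$ differs from Rule $28$ only at neighbourhood $110$, and this neighbourhood does occur at the right edge when $t-1$ is even, so the orbit genuinely changes near the edge. Rule $188$ adds the outputs at $111$ and $101$ as well. I would then prove the guessed orbit by induction, checking every neighbourhood type (interior, left boundary, right edge, advancing edge, exterior) against the truth table. Substituting $x=t-k$ gives a periodic-plus-boundary description of $B$. From that I guess a finite-difference inverse whose taps reach back at most two positions (Rule $92$) or four positions (Rule $188$), and verify it column by column as before.

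The main obstacle is Rule $188$. Its orbit probably has period four in the interior together with nontrivial edge behaviour, so both the orbit induction and the guess for the width-four annihilating operator require care, especially the boundary corrections in the first few columns. If a clean closed form resists, I would fall back on showing that $B$ satisfies $B=\Pi^{-1}$, where $\Pi$ is a fixed lower-triangular Toeplitz operator of length five plus a finite-rank correction supported in the first rows. Verifying this reduces to checking finitely many columns plus one periodic pattern, which is a finite, window-independent check.
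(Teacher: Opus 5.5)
Your treatment of the sixteen stationary-seed rules, of Rules 220 and 252, and of Rules 28 and 156 is correct and follows the paper's argument, including the inverse $x_0=y_0$, $x_1=y_1+y_0$, $x_t=y_t+y_{t-2}$.

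The gap is Rules 92 and 188, the only cases not already settled by Proposition~\ref{prop:named}. For these you give a procedure (guess an orbit, guess an inverse, verify) but produce neither, so the bandwidth-two and bandwidth-four claims are asserted, not proved.

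For Rule 92 the missing ingredient is the orbit
\[
 A(t,x)=[x=t]+[t\ge1,\ x=t-1]+[0\le x<t-1,\ x\text{ even}].
\]
It is proved by induction on $t$, with these cases:
\begin{itemize}
\item the alternating interior, via $010\mapsto1$ and $101\mapsto0$;
\item the three sites next to the right edge, according to the parity of $t$;
\item the new edge, created by $100$;
\item the left boundary, which stays one since $p_{92}(0,1,c)=1$.
\end{itemize}
Its columns are constant for $k=0,1$ once $t\ge k$, and equal $[t\ge k,\ t\equiv k\pmod2]$ for $k\ge2$. So your Toeplitz-plus-boundary ansatz does work here: the inverse is Rule 28's with $x_2=y_2+y_1$ in place of $y_2+y_0$.

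For Rule 188 your plan would fail as stated, because its inverse does not have the shape you anticipate. (Relative to Rule 28, Rule 188 changes exactly the outputs at $101$ and $111$. It still outputs $0$ at $110$, so it is not a further modification of Rule 92.)

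The orbit is $A(t,x)=[0\le x\le t]\,[x=0\text{ or }x+t\not\equiv3\pmod4]$. Hence $B[t,t-d]=[t+d\not\equiv3\pmod4]$ for $1\le d\le t$. The entries along each subdiagonal vary with $t\bmod4$, so $B$ is invariant only under $(t,k)\mapsto(t+4,k+4)$.

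Its unique inverse inherits this period. After $x_0=y_0$, $x_1=y_1+y_0$, $x_2=y_2+y_0$, row $t\ge3$ is
\begin{itemize}
\item $y_t+y_{t-1}+y_{t-3}+y_{t-4}$ when $t\equiv0\pmod4$;
\item $y_t+y_{t-1}+y_{t-2}+y_{t-3}$ when $t\equiv1$ or $3\pmod4$;
\item $y_t+y_{t-2}$ when $t\equiv2\pmod4$.
\end{itemize}
Since the inverse is unique, no constant-coefficient difference operator can equal $B^{-1}$. In particular your fallback, a Toeplitz operator of length five plus a correction confined to finitely many rows, cannot.

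The right ansatz is a period-four banded operator, established in two steps.
\begin{itemize}
\item Prove the orbit by induction. Interior residues $x+t\equiv0,1,2,3$ see $011,111,110,101$, with outputs $1,1,0,1$. Then check $x=0,1$, the old right edge ($010$ at even $t$, $110$ at odd $t$), and the new edge $100$.
\item Verify the inverse column by column. For $k<t-4$, the entry in row $s$ is $1+[k\equiv 2s-3\pmod4]$, which depends only on the parity of $s$. Each tap set contains an even number of rows of each parity, so these entries cancel. The columns $t-4\le k\le t$ are checked directly.
\end{itemize}
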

\begin{proof}
The sixteen stationary-seed rules have bits zero, one and four zero
and bit two one. The prefix and Rule 28 formulas follow from
Proposition~\ref{prop:named}. For Rule 28 (and 156),
$x_0=y_0$, $x_1=y_1+y_0$, and $x_t=y_t+y_{t-2}$ for $t\ge2$.

Rule 92 has physical row
\[
 A(t,x)=[x=t]+[t\ge1,\ x=t-1]
       +[0\le x<t-1,\ x\text{ even}].
\]
For $t=0,1$ this gives the seed and $11$. The alternating interior
uses $010\mapsto1$ and $101\mapsto0$. At the last three occupied
sites, the neighborhoods $011,111,110$ have outputs $1,0,1$;
$100$ creates the new right edge. These give the displayed row at
time $t+1$, according to the parity of $t$. The left boundary stays
one because $p_{92}(0,1,c)=1$ in $\FF$.
Its columns are constant for $k=0,1$ once $t\ge k$, and are
$[t\ge k,\ t\equiv k\pmod2]$ for $k\ge2$. The inverse differs
from Rule 28 only in $x_2=y_2+y_1$.

For Rule 188, whose local polynomial is $a+b+abc$,
\begin{equation}\label{eq:188orbit}
 A(t,x)=[0\le x\le t]\,[x=0\text{ or }x+t\not\equiv3\pmod4].
\end{equation}
In the interior, the four residues $x+t=0,1,2,3\pmod4$ give
neighborhoods $011,111,110,101$, with outputs $1,1,0,1$.
These are exactly the next-row residues. At $x=1$, replacing the
periodic left value by the actual value $A(t,0)=1$ can only change
$011$ to $111$, leaving the output one. At $x=0$ the output is
always one. The old right edge is $010$ at even $t$ and $110$ at
odd $t$, giving respectively one and zero; $100$ creates the next
edge. The seed handles the coinciding boundaries at $t=0$.
This proves \eqref{eq:188orbit} by induction.

Substitute $x=t-k$ in that formula. Cancellation of the four residue
classes gives $x_0=y_0$, $x_1=y_1+y_0$, $x_2=y_2+y_0$, and, for
$t\ge3$,
\[
 x_t=\begin{cases}
 y_t+y_{t-1}+y_{t-3}+y_{t-4},&t\equiv0\pmod4,\\
 y_t+y_{t-1}+y_{t-2}+y_{t-3},&t\equiv1,3\pmod4,\\
 y_t+y_{t-2},&t\equiv2\pmod4.
 \end{cases}
\]
To verify the cancellation, a column $k<t-4$ has the periodic
entries $1+[k\equiv2t-3\pmod4]$ in all terms; each displayed sum
contains each surviving residue an even number of times. For
$t-4\le k<t$, direct substitution cancels the boundary entries,
and column $k=t$ contributes one. This proves that the displayed
operator is the inverse, including the first three rows.
\end{proof}

Table~\ref{tab:transformfamilies} collects the resulting transform
choices. Rules generating the same binary orbit give the same matrix
family even if their local maps differ on neighborhoods absent from
that orbit. The costs refer to both directions of the coordinate
change, so they can be compared with the weights in
Table~\ref{tab:weights} without forming a dense matrix first.

\begin{table}[htbp]\centering\fontsize{9}{10.5}\selectfont
\caption{The seven universal binary matrix families and available transform costs at length $n$.}
\label{tab:transformfamilies}
\begin{tabular}{|lrr|}\hline
Family or rules & Number of rules & Boolean operations\\\hline
Identity & 16 & $O(n)$\\
28, 156 & 2 & $O(n)$\\
92 & 1 & $O(n)$\\
188 & 1 & $O(n)$\\
220, 252 & 2 & $O(n)$\\
60 (Pascal) & 1 & $O(n\log(n+1))$\\
124 (dense substitution) & 1 & $O(n^2)$\\\hline
\end{tabular}
\end{table}

The inverse nonzero counts at $n=32,64,128,256$ are
$63,127,255,511$ for the bandwidth-one and bandwidth-two examples;
$107,219,443,891$ for Rule 188; $243,729,2187,6561$ for Pascal;
and $269,1069,4125,16355$ for Rule 124. Rule 188 has
$7n/2-5$ entries when $4\mid n$. Rule 124 has density close to
$1/4$ on these four windows. Dense substitution costs $O(n^2)$
ring operations for it, but these counts establish no lower bound
on the best possible algorithm.

\Needspace{5\baselineskip}
\begin{theorem}[Uniqueness of the triangular convolution transform]\label{thm:uniqueconv}
For any $n\ge1$, the Pascal matrix $Z=B_{60}$ is the unique
invertible lower triangular binary matrix such that
\[
 B(a\conv b)=(Ba)(Bb)
\]
for all length-$n$ vectors, with OR convolution truncated to that
window. On a dyadic window it also has the butterfly factorization
of Corollary~\ref{cor:zeta}.
\end{theorem}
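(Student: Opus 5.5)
Existence comes first. By Corollary~\ref{cor:zeta}, $Z[t,k]=[k\subseteq t]$ over $\FF$. Then $(Za)_t=\bigoplus_{k\subseteq t}a_k$, and
\[
 (Z(a\conv b))_t=\bigoplus_{r\subseteq t}\ \bigoplus_{i\vee j=r}a_ib_j
 =\bigoplus_{i\subseteq t,\ j\subseteq t}a_ib_j=(Za)_t(Zb)_t ,
\]
because $i\vee j\subseteq t$ exactly when $i\subseteq t$ and $j\subseteq t$. Truncation causes no trouble: if $t<n$ and $r\subseteq t$, then $r\le t<n$, so the discarded terms with $i\vee j\ge n$ never reach row $t$. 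The butterfly clause is Corollary~\ref{cor:zeta} itself.

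For uniqueness, let $B$ be any invertible lower triangular binary matrix with the stated property. Write $\beta_t(k)=B[t,k]$ for row $t$, viewed as a function on $\{0,\dots,n-1\}$. Testing the identity on unit vectors $a=e_i$, $b=e_j$ gives $e_i\conv e_j=e_{i\vee j}$ when $i\vee j<n$ and $0$ otherwise. Hence every row satisfies
\[
 \beta_t(i)\beta_t(j)=\beta_t(i\vee j)\,[i\vee j<n].
\]
So each row is a multiplicative character of the truncated OR semigroup with values in $\FF$. Taking $i=j$ gives $\beta_t(i)^2=\beta_t(i)$, which holds automatically in $\FF$. The content therefore lies in the off-diagonal pairs. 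The support $U_t=\{k:\beta_t(k)=1\}$ has two closure properties:
\begin{enumerate}
\renewcommand{\labelenumi}{(\roman{enumi})}
\item it is down-closed, since $\beta_t(i\vee j)=1$ forces $\beta_t(i)=\beta_t(j)=1$;
\item it is closed under OR within the window, and it contains no pair whose OR is $\ge n$.
\end{enumerate}

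Now bring in triangularity and invertibility. Over $\FF$ invertibility makes every diagonal entry $1$, so $t\in U_t$ and $U_t\subseteq[0,t]$. Down-closure with $t\in U_t$ gives $\{k:k\subseteq t\}\subseteq U_t$. For the reverse inclusion, suppose some $k\in U_t$ has $k\not\subseteq t$. Then $k\le t$ but $r=k\vee t>t$. If $r<n$, closure under OR puts $r\in U_t$, which contradicts $U_t\subseteq[0,t]$. If $r\ge n$, the truncated relation gives $\beta_t(k)\beta_t(t)=0$, which contradicts $\beta_t(k)=\beta_t(t)=1$. So $U_t=\{k\subseteq t\}$ and $B=Z$.

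The main obstacle is the careful handling of truncation, that is, the case $i\vee j\ge n$ in the character equation; the argument above disposes of it directly. The remaining care is in checking that only the unit-vector tests are needed, which holds because the identity is bilinear.
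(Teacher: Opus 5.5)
Your proof is correct, but for uniqueness it takes a different route from the paper.

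The paper argues by conjugation. It sets $C=BZ^{-1}$, which by the existence half is an invertible linear map preserving pointwise products. The images $Ce_0,\dots,Ce_{n-1}$ are nonzero and have pairwise disjoint supports, because $e_ie_j=0$ for $i\ne j$. Counting then makes each support a singleton, so $C$ is a permutation matrix. Lower triangularity of $C$ forces $C=I$.

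You instead work row by row. From the unit-vector tests you read off that each row $\beta_t$ is a multiplicative function on the truncated OR semigroup. You then pin its support down to the principal down-set $\{k\subseteq t\}$, using only $\beta_t(t)=1$ and $\beta_t(k)=0$ for $k>t$.

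What each approach buys:
\begin{itemize}
\item The paper's argument is shorter. Before triangularity enters, it already gives the stronger fact that every invertible matrix with the convolution property is a row permutation of $Z$.
\item Your argument needs neither $Z^{-1}$ nor the existence half for uniqueness, and it handles the truncation case $i\vee j\ge n$ explicitly.
\item With one more step your argument recovers the paper's classification. The OR of all elements of a nonzero row's support stays in the support, by closure and the absence of pairs with OR $\ge n$. Hence that support is a principal down-set, and the nonzero multiplicative rows are exactly the rows of $Z$.
\end{itemize}

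A small remark: bilinearity is not needed for uniqueness. You use the unit-vector identities only as necessary conditions, and bilinearity would matter only for the converse, which your direct existence computation already covers.
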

\begin{proof}
For every $t<n$, $i\vee j\subseteq t$ holds exactly when both
$i\subseteq t$ and $j\subseteq t$. Summing coefficients proves
$Z(a\conv b)=(Za)(Zb)$. Terms with $i\vee j\ge n$ cannot enter
such a row, so this also proves the assertion for arbitrary prefixes.

If $B$ has the same property, $C=BZ^{-1}$ is an invertible linear
map preserving pointwise products. Let $e_0,\ldots,e_{n-1}$ be
the coordinate vectors. Their images are nonzero idempotents and
have disjoint supports, because $e_i e_j=0$ for $i\ne j$.
There are $n$ nonempty disjoint supports among $n$ positions;
each is therefore a singleton. Thus $C$ is a permutation matrix.
It is lower triangular, since both $B$ and $Z^{-1}$ are. If the
one in column $j$ is in row $\pi(j)$, then $\pi(j)\ge j$.
Summing the indices forces equality throughout. Hence $C=I$ and
$B=Z$.
\end{proof}

\Needspace{5\baselineskip}
\begin{corollary}[Uniqueness in the rule family]\label{ver:uniqueconv}
Among the 24 universal rule families, only Rule 60 has the
convolution property at every window. It is also the only nonidentity
family of the form $A^{\otimes b}$ on every $2^b$ window.
\end{corollary}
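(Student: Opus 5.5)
The first assertion follows from Theorem~\ref{thm:uniqueconv} once we know which of the 24 rules in $\mathcal U$ generate Pascal's matrix. By that theorem, any invertible lower triangular binary family with the convolution property at window $n$ coincides with $Z=B_{60}$ at that window. So it suffices to show that no other universal family has $B_R=B_{60}$ at every window. For each of the seven matrix families in Table~\ref{tab:transformfamilies} we exhibit one window where the matrix differs from Pascal. The identity family differs at $n=2$, where Pascal has $B[1,0]=1$. Rules 220 and 252 have $B[2,1]=1$ by Proposition~\ref{prop:named}, while $\binom21\equiv0$. Rules 28 and 156 have $B[2,1]=0$ but $B[3,2]=0\neq\binom32$. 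For Rule 92 we read an entry off the orbit formula in Proposition~\ref{ver:cost}. For Rule 188 we use \eqref{eq:188orbit}: at $t=3,k=0$ the entry is zero, whereas $\binom30=1$. Rule 124 is the only family without a closed form. For it we compute the first few rows of its orbit from $p_{124}$ (e.g.\ $t\le 3$) and find one entry that disagrees with $\binom tk\bmod2$. This finite check is the step that needs the most care, since Rule 124 must genuinely be simulated.

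For the second assertion, suppose $B_R=A^{\otimes b}$ on every window $2^b$, with $A$ a fixed $2\times2$ binary matrix. Since $B_R$ is lower unitriangular, taking $b=1$ forces $A=\begin{psmallmatrix}1&0\\ \alpha&1\end{psmallmatrix}$ with $\alpha\in\{0,1\}$. If $\alpha=0$ the family is the identity. If $\alpha=1$ the family is $Z$ by Corollary~\ref{cor:zeta}. Therefore a nonidentity Kronecker family equals Pascal's at every dyadic window, and the case analysis above shows that only Rule 60 gives this matrix. Rule 60 itself has the property by Corollary~\ref{cor:zeta}.

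The main obstacle is thus only bookkeeping: we need one distinguishing entry for each family, with Rule 124 requiring a short explicit simulation. The uniqueness argument itself is inherited entirely from Theorem~\ref{thm:uniqueconv}.
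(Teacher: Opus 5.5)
Your overall route is the paper's: classify the invertible lower triangular binary $2\times2$ matrices, invoke Theorem~\ref{thm:uniqueconv}, and separate each non-Pascal family from Pascal by an explicit entry. One step fails as written, however: the Rule 188 witness is false.

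By \eqref{eq:188orbit}, $B_{188}[3,0]=A(3,3)=1$, because $3+3\equiv2\pmod4$. In fact column $0$ is the right edge, which is identically one for Rule 188, so it can never separate that rule from Pascal. More seriously, no entry with $t\le3$ works at all. The first four rows of $B_{188}$ are $1$; $1,1$; $1,0,1$; $1,1,1,1$, which are exactly Pascal's rows modulo two. The first discrepancy is in row four. There $B_{188}[4,2]=A(4,2)=1$, since $4+2\not\equiv3\pmod4$, whereas $\binom{4}{2}\equiv0$. Likewise $B_{188}[4,3]=A(4,1)=1$, whereas $\binom{4}{3}\equiv0$. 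This is precisely why the paper's proof uses windows of length five. With this replacement your case analysis closes.

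There are also some smaller points.

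\begin{enumerate}
\item For Rule 92 you should name the entry: $B_{92}[3,2]=A(3,1)=0$, while $\binom{3}{2}\equiv1$.
\item Rule 124 is not the delicate case. Its orbit already gives $B_{124}[2,1]=A(2,1)=1$ at row two, while $\binom{2}{1}\equiv0$.
\item In the Kronecker part, state explicitly that $B_n$ is the leading $n\times n$ block of every larger window, both for Rule $R$ and for Pascal. A discrepancy found at window $n$ then persists at the dyadic window $2^{\lceil\log_2 n\rceil}$; for Rule 188 this is window $8$. Without this remark, witnesses found at non-dyadic windows such as $3$ or $5$ do not directly address the $2^b$ windows in the second assertion.
\end{enumerate}
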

\begin{proof}
The lower triangular invertible binary $2\times2$ matrices are
$I$ and $\left(\begin{smallmatrix}1&0\\1&1\end{smallmatrix}\right)$.
Their tensor powers are the identity and Pascal families. At length
five, the other six matrix types already differ from Pascal:
the nonidentity types are represented by 28, 92, 124, 188 and 220,
in addition to 60. Substituting the local rule in the first five
rows distinguishes each of the five from Pascal. The sixteen
identity rules are accounted for in Proposition~\ref{ver:cost}.
Theorem~\ref{thm:uniqueconv} then gives the convolution claim.
\end{proof}

\Needspace{5\baselineskip}
\begin{remark}[Finite witnesses]\label{ver:unique}
At length 64 the failure of the convolution identity for each of the
23 other rules can be tested on the $64^2$ pairs $e_a,e_b$:
compare $B_R(e_a\conv e_b)$ with $(B_Re_a)(B_Re_b)$ row by row.
Each rule has a differing row, in agreement with the structural argument above. By bilinearity these pairs determine the identity
on all vectors of that window; they are not used to infer the
all-window theorem.
\end{remark}

\Needspace{5\baselineskip}
\begin{proposition}[Dyadic blocks and prefix restriction]\label{prop:dyadic}
Exactly twenty universal elementary rule families satisfy
Definition~\ref{def:usable}(c) at every dyadic window. The exceptions
are 28, 92, 124 and 156. Independently of this property, every
compatible lower triangular invertible family preserves coordinate
restriction to every prefix.
\end{proposition}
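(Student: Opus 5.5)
The plan is to treat the two assertions separately, starting with the prefix-restriction claim, which is structural. For a lower triangular matrix the leading $n\times n$ principal block of $B_{m}$, $m\ge n$, equals $B_n$. This holds because the entries $B[t,k]=A(t,t-k)$ do not depend on the window, which is what ``compatible'' means. If $B_m x=y$, then rows $t<n$ involve only $x_0,\dots,x_{n-1}$, since $B[t,k]=0$ for $k>t$. Hence $B_n(x|_{[0,n)})=y|_{[0,n)}$. Invertibility of $B_n$ (unit diagonal, Theorem~\ref{thm:universal}) makes the restricted coordinates the unique coordinates of the restricted target. This is the argument already used in Theorem~\ref{thm:S}, and it is needed before the block decomposition in (c) makes sense.

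For the dyadic count we work family by family, using the seven matrix families of Table~\ref{tab:transformfamilies}. Since $B_{2n}$ is automatically lower triangular with upper-left block $B_n$, property (c) reduces to one condition. The lower-right block must equal $B_n$, that is, $B[t+n,k+n]=B[t,k]$ for $0\le k\le t<n$ and every dyadic $n$.

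\begin{itemize}
\item The identity family (16 rules) satisfies (c) trivially.
\item For Pascal, Lucas' theorem gives $\binom{t+n}{k+n}\equiv\binom tk$ when $n=2^b>t,k$.
\item The prefix family $[k\le t]$ (Rules 220, 252) is translation invariant.
\item For Rule 188, formula \eqref{eq:188orbit} with $x=t-k$ gives $[k\le t][k=t\ \text{or}\ 2t-k\not\equiv3\pmod4]$. This is invariant under $(t,k)\mapsto(t+n,k+n)$ once $4\mid n$. The windows $n=1,2$ must be checked directly on the first four rows, which is a finite computation.
\end{itemize}

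Together these give the twenty rules.

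For the exceptions, a single small witness suffices. For Rules 28 and 156, Proposition~\ref{prop:named} gives $B[t,0]=1$ for all $t$ and $B[t,k]=[t\equiv k\pmod 2]$ for $1\le k\le t$. Taking $n=1$, the entry $B[2,1]$ must equal $B[1,0]=1$, but $2\not\equiv1$, so $B[2,1]=0$ and (c) fails. Rule 92 has the same column parity structure for $k\ge2$ but constant columns for $k=0,1$. Comparing $B[t+n,k+n]$ with $B[t,k]$ at $n=2$, $k=0$, $t=1$ gives $B[3,2]=0\ne1=B[1,0]$.

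The main obstacle is Rule 124, for which no closed form is available. The plan is to compute its first four rows from the local polynomial and exhibit a failing entry at $n=1$ or $n=2$. A plausible witness is $B[2,1]\neq B[1,0]$, but this is a concrete check I would have to carry out. If the smallest windows happen to satisfy the block identity, I would go up to $n=4$ or $8$ using the dense-substitution computation already underlying the inverse counts.
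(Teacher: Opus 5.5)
Your prefix-restriction argument and the twenty positive cases match the paper's proof. These are the identity family, Pascal via Lucas, the prefix family, and Rule 188 via the residue $2t-k \bmod 4$ with a direct check at $n=1,2$. Your Rule 92 witness also matches. The gap is in the exceptions. Your $n=1$ witness for Rules 28 and 156 misreads Definition~\ref{def:usable}(c). At $n=1$ the window has length $2$, and your own reduction $B[t+n,k+n]=B[t,k]$ for $0\le k\le t<n$ forces $t=k=0$. The only condition is $B[1,1]=B[0,0]$. The entry $B[2,1]$ does not even lie in $B_2$. Since every universal binary basis has all diagonal entries equal to one, (c) holds at $n=1$ for all 24 rules. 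Comparing $B[2,1]$ with $B[1,0]$ tests translation invariance by one, which uses $t=1\not<n$ and is not what (c) requires. So your stated reason for the failure of Rules 28 and 156 is invalid, even though the conclusion is true.

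For Rule 124 you leave the verification undone, and the witness you suggest would not detect a failure in any case. From the truth table ($000,001,111\mapsto0$, all other neighborhoods $\mapsto1$), the rows on $x\ge0$ for $t=0,\dots,3$ are $1;\;11;\;111;\;1011$. Hence $B[2,1]=A(2,1)=1=A(1,1)=B[1,0]$.

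The repair is to use, for all four rules, the $n=2$ comparison you already used for Rule 92: take $t=1$, $k=0$ and compare $B[3,2]=A(3,1)$ with $B[1,0]=A(1,1)=1$.
\begin{itemize}
\item For Rules 28 and 156, Proposition~\ref{prop:named} gives $B[3,2]=0$, because $3\not\equiv2\pmod 2$.
\item For Rule 124, the rows above give $A(3,1)=0$, because the neighborhood $111$ maps to zero.
\end{itemize}
Thus $B_4[3,2]=0\ne1=B_2[1,0]$ for all four exceptions. This is exactly the paper's uniform witness, and no search up to $n=4$ or $8$ is needed.
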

\begin{proof}
The identity, prefix and Pascal formulas give their repeated
diagonal blocks directly. For Rule 188, replacing $(t,k)$ by
$(t+n,k+n)$ preserves $k=t$ and the residue $2t-k\pmod4$
when $4\mid n$; the cases $n=1,2$ follow from the first four
rows. These are twenty rules in all. For each of 28, 92, 124 and
156, $B_4[3,2]=0$ whereas $B_2[1,0]=1$, so the required family
identity fails already at $n=2$.

For the final assertion, the first $n$ rows of $Bx$ depend only
on $x_0,\ldots,x_{n-1}$, and their coefficient matrix is $B_n$.
Uniqueness of the triangular solution identifies the restricted
coordinates. No dyadic hypothesis is needed. The six-pattern block
classification of \cite{U1} may therefore be applied to any such
binary coordinate set, although only Pascal transports OR
convolution to pointwise multiplication.
\end{proof}

These distinctions give different answers to different questions.
A banded inverse supplies a linear-time transform; dyadic repeated
blocks describe a matrix symmetry; and the convolution identity
identifies how multiplication changes coordinates.

\Needspace{5\baselineskip}
\begin{corollary}[Reading the weight comparison]\label{cor:reading}
The prefix and parity examples, including Rule 188, have linear-time
transforms. Pascal has an $O(n\log(n+1))$ binary transform and
uniquely turns OR convolution into pointwise multiplication among
triangular coordinate maps. Neither fact implies that Pascal
minimizes coordinate weight on every window.
\end{corollary}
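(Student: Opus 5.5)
The corollary assembles facts already proved, so the plan is to split it into its three assertions and cite the result that supplies each one.

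\textbf{Linear-time transforms.} For the prefix rules (220, 252), the parity rules (28, 156), Rule 92 and Rule 188, Proposition~\ref{ver:cost} gives explicit inverses of bandwidth one, two, two and four. These formulas hold uniformly in the window length. Proposition~\ref{prop:banded} then shows that both $x\mapsto Bx$ and $y\mapsto B^{-1}y$ cost $O((h+1)n)$, which is $O(n)$ because $h\le4$. This matches the entries of Table~\ref{tab:transformfamilies}.

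\textbf{Pascal.} Corollary~\ref{cor:fft} gives the $O(n\log(n+1))$ Boolean transform, including the zero-padding argument for nondyadic $n$. Theorem~\ref{thm:uniqueconv} gives uniqueness among invertible lower triangular binary coordinate maps on every window. Corollary~\ref{ver:uniqueconv} restricts this to the rule families.

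\textbf{The non-implication.} This is the only step that needs more than citation, and it only needs a counterexample. Proposition~\ref{prop:opt} supplies two. First, at $n=512$ and $m=4$, Rule 124 gives $w_{124}(D_4)=2<3=w_{60}(D_4)$. Second, at $n=4096$ and $m=30$, Rule 188 gives $w_{188}(D_{30})=1452<1925=w_{60}(D_{30})$, even though the full support $S_{31}$ lies inside the window.

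The second example carries the argument. Rule 188 has a linear-time transform and Pascal has the convolution property, yet Rule 188 still gives fewer coordinates than Pascal. So neither cheap transform cost nor the convolution identity forces weight minimality. To be complete, I would add a one-line general reason. Every basis column $B_R e_k$ has weight one in its own basis, whereas its Pascal weight is $|Z^{-1}B_Re_k|$, which is typically larger. Hence no single basis minimizes the weight of every target. This is the observation recorded before Table~\ref{tab:weights}.

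The main obstacle is interpretive rather than technical: the claim must be stated as a failure of implication, and it must be supported by the window-specific computations in Proposition~\ref{prop:opt}, which are finite but exact.
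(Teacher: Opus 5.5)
Your proposal is correct and follows essentially the same route as the paper's proof. Like the paper, you cite Propositions~\ref{prop:banded} and~\ref{ver:cost} for the linear-time transforms, Corollary~\ref{cor:fft} and Theorem~\ref{thm:uniqueconv} for the Pascal properties, and the two finite exceptions of Proposition~\ref{prop:opt} (at $m=4$ and $m=30$) for the non-implication; your added remark that each basis column has weight one in its own basis is a harmless supplement matching the observation before Table~\ref{tab:weights}.
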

\begin{proof}
Combine Propositions~\ref{prop:banded} and~\ref{ver:cost} with
Theorem~\ref{thm:uniqueconv} and Corollary~\ref{cor:fft}.
Proposition~\ref{prop:opt} supplies the two finite exceptions to the
suggested weight preference. A recurrence can be transported to
any invertible basis; the convolution theorem specifies its form
in Pascal coordinates, rather than prohibiting other algorithms.
\end{proof}

\subsection{The conjugated carry, and what the round trip really is}

The product operation does not exhaust the support recurrence.
Increment is equally essential because it represents discrete
integration, and a coordinate change has to transport it as well.
We therefore compute its conjugate under Pascal. The resulting
prefix operation explains both the form of the transformed
recurrence and the work involved in returning to the original
coordinates after applying it.

The next theorem also appears, with the same proof, in \cite{U2}, where it is used to
identify the operator acting at each generation. It is restated here
because the conjugating matrix is $B_{60}$, one of the $24$ bases classified above.

\Needspace{5\baselineskip}
\begin{theorem}[The conjugated increment]\label{thm:carry}
On any binary window of length $n$, let $J=\Inc$ be truncated
increment and $Z=B_{60}$. Then
\[
 ZJZ^{-1}=L,\qquad L[i,j]=[i>j].
\]
Thus increment is strict prefix XOR in the cell coordinates and
costs $O(n)$ Boolean operations. Also $L=B_{220}-I$.
\end{theorem}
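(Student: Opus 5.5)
We prove the equivalent identity $ZJ=LZ$ entrywise over $\FF$, using $Z[t,k]=[k\subseteq t]$ from Corollary~\ref{cor:zeta} (Lucas). This avoids computing $Z^{-1}$ explicitly, although $Z^{-1}=Z$ over $\FF$ would also do. Since $J$ shifts indices down by one, the entry $(ZJ)[t,k]$ equals $Z[t,k+1]=[k+1\subseteq t]$ for $k+1<n$, and equals zero when $k=n-1$. On the other side,
\[
(LZ)[t,k]=\bigoplus_{s<t}[k\subseteq s]=\#\{s<t: k\subseteq s\}\bmod 2 .
\]
So the whole statement reduces to one parity identity for $0\le k,t$:
\[
 \#\{0\le s<t:\ k\subseteq s\}\equiv\binom{t}{k+1}\pmod 2 .
\]
Truncation plays no role here. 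Both sides involve only indices below $t<n$, and $L$ and $Z$ are lower triangular, so the window restriction is consistent. When $k=n-1$ the left count is empty for $t<n$, which matches $J$ killing the last coordinate.

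The main step is this parity identity, and we prove it through the hockey-stick identity rather than by bit combinatorics. By Lucas, $[k\subseteq s]\equiv\binom sk\pmod2$, so the count is $\sum_{s<t}\binom sk$. By Pascal's rule this sum equals $\binom{t}{k+1}$ over $\ZZ$, and reducing modulo two gives exactly the required congruence. Equivalently, one can argue directly with Theorem~\ref{thm:pascal}: the Pascal columns satisfy $\binom{t}{k+1}-\binom{t-1}{k+1}=\binom{t-1}{k}$. This says that $Z$ applied to $e_{k+1}$ is the strict prefix sum of $Z$ applied to $e_k$, which is the column form of $ZJ=LZ$. I expect the only delicate points to be the boundary column $k=n-1$ and the row $t=0$, and both reduce to empty sums.

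It remains to check the two supplementary claims. First, $L=B_{220}-I$ follows directly from Proposition~\ref{prop:named}, since $B_{220}[t,k]=[k\le t]$. Second, for the cost, $(Lx)_t=\bigoplus_{s<t}x_s$ is computed by a running XOR, $y_0=0$ and $y_{t+1}=y_t\oplus x_t$. This uses $n-1$ Boolean operations, so the cost is $O(n)$.
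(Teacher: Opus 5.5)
Your proof is correct and follows the paper's argument: both verify $LZ=ZJ$ entrywise via the hockey-stick identity $\sum_{s<t}\binom sk=\binom t{k+1}$, check that the truncated last column vanishes on both sides, conjugate by the invertible $Z$, and read off $L=B_{220}-I$ from Proposition~\ref{prop:named}. Your running-XOR recursion $y_{t+1}=y_t\oplus x_t$ is a small explicit addition that justifies the $O(n)$ cost claim, which the paper states without comment.
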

\begin{proof}
The hockey-stick identity gives, for every basis column $r$,
\[
 (LZ)[t,r]=\sum_{s=0}^{t-1}\binom sr
          =\binom t{r+1}=(ZJ)[t,r]\pmod2.
\]
At the truncated last column both expressions vanish. Hence
$LZ=ZJ$, and multiplying by $Z^{-1}$ proves the identity.
The prefix matrix formula is Proposition~\ref{prop:named}.
\end{proof}

\Needspace{5\baselineskip}
\begin{remark}[Nilpotent and unipotent operators]\label{ver:carry}
The strict prefix matrix has $n(n-1)/2$ nonzero entries and
bandwidth $n-1$, although its application takes linear time.
It is nilpotent, like $J$. The full triangle $I+L=B_{220}$ is
instead the conjugate of $I+J$. Confusing these two operators
would change the recurrence by the identity map.
\end{remark}

Theorem~\ref{thm:carry} turns the support recurrence into a computation that never leaves the
cell domain: by Theorem~\ref{thm:uniqueconv} the convolution is pointwise there, and by
Theorem~\ref{thm:carry} the carry is a prefix XOR, so
\begin{equation}\label{eq:cellstep}
 y_m \;=\; L\big(y_{m-1}\oplus y_{m-2}\oplus y_{m-1}y_{m-2}\big),
\end{equation}
with one $\zeta$ at the start and one at the end.

\Needspace{5\baselineskip}
\begin{remark}[Relation to the rotated local update]\label{rem:notnew}
Write $y_m=Z\mathbf1_{S_m}$, so $y_m=D_{m-1}$, with
$y_1(t)=1$ and $y_2(t)=t\bmod2$. Equation~\eqref{eq:cellstep}
applies for $m\ge3$. The local Rule 30 update on diagonals is
\[
 D_m(t+1)=D_m(t)\xor D_{m-1}(t)\xor D_{m-2}(t)
                 \xor D_{m-1}(t)D_{m-2}(t).
\]
For $m\ge1$, its initial value is zero, so summing in $t$
gives the strict prefix expression. The carry theorem identifies
the support recurrence with this rotated computation.
\end{remark}

\Needspace{5\baselineskip}
\begin{remark}[Cost of a complete recurrence calculation]\label{ver:crossover}
Directly forming an OR convolution of supports $A,B$ enumerates
$|A||B|$ pairs. On a supplied length-$n$ window, the cell form
uses $O(n)$ Boolean operations per depth, so $M$ depths and a final
Pascal inversion cost $O(Mn+n\log(n+1))$. Two predecessor arrays
and one output array require $O(n)$ working bits when earlier
depths are discarded. If full supports are required, the window
must also contain all output indices. The algebra gives
this operation count, not a speedup over the same rotated local
update implemented directly.
\end{remark}

\Needspace{5\baselineskip}
\begin{theorem}[No frame diagonalises both]\label{thm:nodiag}
On a coefficient window of length $n\ge2$, there is no invertible $\FF$-linear change of coordinates $B$ --- of any
radius, arising from a cellular automaton or not --- for which $\conv$ becomes pointwise
multiplication, $B(a\conv b)=(Ba)(Bb)$, and $\Inc$ becomes a diagonal linear map,
$B(\Inc a)=d\cdot(Ba)$ for a fixed vector $d$.
\end{theorem}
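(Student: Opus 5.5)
Suppose such a $B$ exists. The plan is to pin $B$ down using the first condition alone and then show that the second condition fails for the resulting matrix. Theorem~\ref{thm:uniqueconv} characterizes $Z$ only among \emph{lower triangular} maps, whereas here $B$ is an arbitrary invertible matrix. I would therefore rerun the idempotent argument from that proof without the triangularity hypothesis. Put $C=BZ^{-1}$. This is an invertible $\FF$-linear map of $\FF^n$ that preserves pointwise products, because $Z$ converts $\conv$ into pointwise multiplication. The images $Ce_i$ are nonzero idempotents with pairwise disjoint supports. With $n$ of them inside $n$ positions, each support is a singleton, so $C$ is a permutation matrix $P$. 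Hence $B=PZ$, and this is the only freedom left.

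Next I transport the increment condition. From $B\Inc=DB$ with $D=\operatorname{diag}(d)$ we get
\[
 Z\Inc Z^{-1}=P^{-1}DP,
\]
and $P^{-1}DP$ is again diagonal, since conjugating a diagonal matrix by a permutation only permutes its entries. By Theorem~\ref{thm:carry} the left side equals $L$, the strict lower triangle $L[i,j]=[i>j]$. For $n\ge2$ the entry $L[1,0]=1$ lies off the diagonal, so $L$ is not diagonal. This contradiction proves the statement.

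Alternatively, without using the carry theorem: $\Inc$ is nilpotent and nonzero for $n\ge2$, since $\Inc e_0=e_1$. A conjugate of $\Inc$ is therefore nilpotent and nonzero. But a diagonal nilpotent matrix is zero, which is again a contradiction. I would include this as a one-line backup.

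The main obstacle is the first step, namely removing the triangularity assumption used in Theorem~\ref{thm:uniqueconv}. One point needs checking: the window truncation of $\conv$ must still give $e_ie_j=0$ for $i\neq j$ after conjugation. It does, because the pointwise identity for $Z$ holds on every prefix by Theorem~\ref{thm:uniqueconv}, so $C=BZ^{-1}$ is multiplicative on all of $\FF^n$. Once $C$ is known to be a permutation matrix, everything else is immediate.
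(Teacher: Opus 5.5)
Your proof is correct, but your main route differs from the paper's, and your one-line backup is exactly the paper's proof.

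The paper never uses the convolution hypothesis. Truncated increment is the shift $J$, with $J^n=0$ and $J\neq0$ for $n\ge2$. Similarity preserves both properties, and a nilpotent diagonal matrix must be zero. So the theorem is really a statement about $\Inc$ alone: no invertible $B$ whatsoever makes it diagonal, and the multiplicativity condition is superfluous.

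Your primary route uses both hypotheses. First you remove triangularity from the idempotent argument of Theorem~\ref{thm:uniqueconv} to obtain $B=PZ$ with $P$ a permutation. This step is valid, since $C=BZ^{-1}$ is multiplicative on all of $\FF^n$ and the disjoint-support count never needed triangularity. (Over $\FF$, pointwise idempotence is automatic, so the real content is nonvanishing plus disjointness.) Then Theorem~\ref{thm:carry} identifies the conjugated increment as $L$, which is visibly not diagonal.

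Your route buys extra information. Every frame in which $\conv$ becomes pointwise is Pascal up to relabeling, and in that frame increment is exactly the strict prefix triangle, so you see concretely which operator refuses to diagonalize. The cost is two nontrivial theorems where one line suffices. The paper's route gives the sharper, basis-free conclusion that the obstruction is nilpotence of the nonzero shift, independent of any product structure.
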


\begin{proof}
In the standard coefficient basis, truncated $\Inc$ is the shift $J$ with
$Je_i=e_{i+1}$ for $i<n-1$ and $Je_{n-1}=0$.
Thus $J^n=0$ and $J\ne0$ for $n\ge2$. Similarity preserves both properties.
A diagonal nilpotent matrix over a field has every diagonal entry zero, so is
zero, a contradiction. At $n=1$ the shift is zero and the conclusion fails.
\end{proof}

\Needspace{5\baselineskip}
\begin{remark}[A small multiplicativity witness]\label{ver:comm}
On a window of length at least five,
$\Inc(\{1\}\conv\{2\})=\{4\}$ whereas
$\Inc(\{1\})\conv\Inc(\{2\})=\{3\}$.
This illustrates the interaction of increment with binary OR.
The nilpotence argument above gives a stronger, basis-independent
obstruction to diagonalizing increment alone.
\end{remark}

Similarity preserves the obstruction regardless of whether the matrix
comes from an elementary automaton, a larger radius, or another
construction. It only excludes a diagonal form for nonzero truncated
increment. Theorem~\ref{thm:carry} shows that this operator nevertheless
has a linear-time implementation.

\subsection{Relation to the Rule 30 prize problems}\label{sec:prize}

Wolfram's third prize problem \cite{prize} asks whether computing the $n$-th cell of the Rule~30 centre
column requires at least $\Omega(n)$ effort; running the rule directly performs $\Theta(n^2)$ cell
updates, so any improvement below $n^2$ is already of interest. The present construction describes query cost after a representation
has been supplied; producing that representation is a separate problem.

The centre column is $c(n)=A(n,0)=D_n(n)$. The indexing of \cite{U1}
is shifted: $D_m(t)=b(m+1,t)$ and its support is $S_{m+1}$.
If $K(m+1)$ is the number of blocks in a supplied representation,
its evaluation costs $O(K(m+1)(b+\ell))$ bit operations
at query time $n$, where $b\ge1$ bounds its anchor and finite-mask widths and any explicit
modulus thresholds and
$\ell=\max(1,\lceil\log_2(n+1)\rceil)$. Construction and minimization
remain separate costs; sequential evaluation uses $O(b+\ell)$ working bits.
For fixed $m$ this is logarithmic in $n$ after the representation is supplied.

The measured full supports in the indexing of \cite{U1} have sizes
$|S_{20}|=105$, $|S_{28}|=499$, $|S_{36}|=3979$, $|S_{44}|=64\,939$.
These sizes are not lower bounds on the cost of one centre query: in
$c(n)=\Xor_{r\in S_{n+1}}\binom nr\bmod2$, every $r>n$ contributes zero.
At most $n+1$ support indices can contribute, before block compression.
Obtaining those coefficients still requires work, but the full support size
does not prove an exponential evaluation cost or a lower bound for this query.

Equation~\eqref{eq:cellstep} supplies the usual triangular computation if
all diagonals through time $n$ are constructed. Proposition~\ref{prop:opt}
compares finite-window coordinate weights in the specified basis family;
it includes exceptions at $m=4$ and $m=30$ on the stated windows.
It does not exclude other algorithms or certify a minimum at all depths.

The Zeta--Floor factorization in \cite{U2} separates a pointwise
product from prefix summation. Theorem~\ref{thm:nodiag} excludes
replacing the latter by a diagonal map on the stated finite window.
This is an obstruction to one proposed form of simplification.
It supplies no lower bound for computing a specified center cell.

\subsection{Linear universality and computational universality}

The word ``universal'' carries two unrelated meanings here --- spanning $\Zm^n$, and simulating
a Turing machine --- and they meet once.

\Needspace{5\baselineskip}
\begin{proposition}\label{prop:110}
$\mathrm{mirror}(110)=124$, and Rule~124 is one of the $24$ universal bases for every modulus
$N\ge2$. Rule~110 itself is not: $110\equiv14\pmod{16}$, its bit~1 is set, and its left cone is
non--empty, so \eqref{eq:leftcone} fails. The diagonals of Rule~110 read along the mirrored cut
are therefore a universal basis of $\Zm^n$, exactly as Rules~60 and~102 carry the Pascal system
on opposite cuts.
\end{proposition}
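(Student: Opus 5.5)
The proof has four parts: compute the mirror, place $124$ in $\mathcal U$, exclude $110$ through Theorem~\ref{thm:tri}, and identify the mirrored cut with the diagonal array of Rule~124.

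For the mirror, reflection $x\mapsto -x$ swaps the left and right inputs, so bit $4a+2b+c$ of the mirrored rule equals bit $4c+2b+a$ of the original. Applying this to the truth table of $110=01101110_2$ fixes bits $0,2,5,7$ and swaps bits $1\leftrightarrow4$ and $3\leftrightarrow6$. Bit~1 of $110$ is one and bit~4 is zero, so they exchange values; bits~3 and~6 are both one. The image is $01111100_2=124$.

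Next, $124=7\cdot16+12$, so $124\equiv12\pmod{16}$. Proposition~\ref{prop:congruence} then gives universality of $B_{124,\delta_0}$ over every $\ZZ_N$ and every window. It is worth saying why this holds at every modulus, since a mirror need not commute with the polynomial lift. The proposition uses the identity $p_{124}(0,b,c)=b$, which holds over $\ZZ$. By Corollary~\ref{thm:criterion} this keeps the left half-plane zero and the centre column identically one, hence a unit.

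For Rule~110 itself, $110=6\cdot16+14$, so $110\equiv14\pmod{16}$ and $110\not\equiv0\pmod4$. Its bit~1 equals one, so $\varepsilon_c=1$ and $p_{110}(0,0,c)=c$. Because the seed value $1$ lies in $V$, condition~\eqref{eq:leftcone} fails, and by Theorem~\ref{thm:tri} the array $B_{110}$ is not lower triangular at any modulus. Concretely, $A(1,-1)=1$, as in Fig.~\ref{fig:rules5}, so Rule~110 is not universal.

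The step needing care is the last claim: the mirrored cut $\widetilde D_k(t)=A^{110,\delta_0}(t,k-t)$ must reproduce $B_{124}$ at every modulus $N$, not only over $\FF$. The plan is to show the canonical lifts satisfy $p_{110}(a,b,c)=p_{124}(c,b,a)$ as integer polynomials. This holds because the algebraic normal form is computed from the truth table, and reversing the inputs merely relabels variables in that form; no reduction step breaks the symmetry. Given this identity, induction on $t$ shows $A^{110,\delta_0}(t,x)=A^{124,\delta_0}(t,-x)$ modulo $N$, since the seed $\delta_0$ is symmetric. Therefore $\widetilde D_k(t)=A^{124}(t,t-k)=B_{124}[t,k]$. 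This is the same mechanism as Proposition~\ref{prop:102} for Rules~60 and~102, and the conclusion follows from the universality of $B_{124}$ established above.
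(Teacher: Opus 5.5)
Your proposal is correct and follows the paper's proof: compute the mirror through the bit swaps $1\leftrightarrow4$ and $3\leftrightarrow6$, place $124\equiv12\pmod{16}$ in $\mathcal U$ via Proposition~\ref{prop:congruence}, and exclude $110$ because bit~1 is set, using Theorem~\ref{thm:tri}. You also prove the integer identity $p_{110}(a,b,c)=p_{124}(c,b,a)=b+c+bc+abc$ and deduce $A^{110,\delta_0}(t,x)=A^{124,\delta_0}(t,-x)$ modulo every $N$; this makes explicit the mirrored-cut identification that the paper leaves to the analogy with Proposition~\ref{prop:102}, and it also shows your earlier caveat is moot, since the canonical lift does commute with reflection.
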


\begin{proof}
Reflection acts on the rule number by swapping bits $1\leftrightarrow4$ and
$3\leftrightarrow6$. On $110=01101110_2$ this gives $01111100_2=124$. Membership of $124$ is
Proposition~\ref{prop:congruence}; failure of $110$ is bit~1.
\end{proof}

Cook's computational-universality result for Rule 110 uses structured
backgrounds \cite{cook}. Reflecting the construction gives the same
computational capability for Rule 124. The one-sided single-seed
orbit used for its basis is a different initial-value problem; its
matrix universality does not prove computational universality from
that seed. Among our 24 rules, Rule 124 is the one for which no fast
transform is supplied. Its measured inverse densities do not derive
any algorithmic lower bound from Cook's theorem.

\section{Rank--deficient bases: the Rule 90 example}

A diagonal family that fails to span the full window can still
represent a well-defined class of targets. Rule 90 gives an explicit
example in which the rank and the representable subspace can be
identified. This case completes the invertibility discussion by
describing what remains available after full universality fails,
and it supplies an additive comparison for the nonlinear
constructions that follow.

Not every useful basis is universal. Rule~90 has $p_{90}=a+c$;
only its even-depth diagonals are nonzero. The time indices within
a nonzero diagonal are governed by the binomial formula below.

\Needspace{5\baselineskip}
\begin{theorem}[Rule 90 span]\label{thm:90}
Over $\FF$ with seed $\delta_0$, $B_{90}[t,2j]=\binom tj$ and $B_{90}[t,2j+1]=0$. Hence
$B_{90}$ is a proper staircase for $n\ge2$, with $\Lambda=\{0,1,\dots,\lceil n/2\rceil-1\}$, its image is
the span of the first $\lceil n/2\rceil$ binomial columns, and for a target $y$ with Pascal
coordinate set $T$,
\[
 y\in\operatorname{im}B_{90}\iff T\subseteq[0,\lceil n/2\rceil) .
\]
\end{theorem}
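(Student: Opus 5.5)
First I will compute the Rule 90 orbit explicitly. Since $p_{90}=a+c$, the orbit from $\delta_0$ satisfies $A(t+1,x)=A(t,x-1)+A(t,x+1)$. Induction on $t$ with Pascal's rule gives the standard closed form
\[
 A(t,x)=\binom{t}{(t+x)/2}\quad\text{when }t+x\text{ is even and }|x|\le t,
\]
and $A(t,x)=0$ otherwise. Substituting $x=t-k$ gives $t+x=2t-k$. This is even exactly when $k$ is even, so every odd column vanishes identically. For $k=2j$ the entry becomes $\binom{t}{t-j}=\binom tj$; it is zero when $j>t$, since then $|x|>t$. This proves $B_{90}[t,2j]=\binom tj$ and $B_{90}[t,2j+1]=0$, valid over every modulus and in particular over $\FF$.

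Next I will read off the staircase structure. On a window of length $n$, the nonzero columns are those with $2j\le n-1$, that is $0\le j<\lceil n/2\rceil$. Column $2j$ has its leading entry at row $j$, and that entry is $\binom jj=1$, a unit. The leads are therefore distinct and equal $\Lambda=\{0,\dots,\lceil n/2\rceil-1\}$. For $n\ge2$ some columns are zero, so $B_{90}$ is not universal, and Definition~\ref{def:tri}(b) applies. The image is the span of the binomial columns $\binom{\cdot}{j}$ with $j<\lceil n/2\rceil$.

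For the membership criterion, I will expand $y$ in the Pascal basis. By Theorem~\ref{thm:universal} and Theorem~\ref{thm:pascal}, there is a unique $T$ with $y=\sum_{r\in T}\binom{\cdot}{r}$ and $T\subseteq[0,n)$. The columns $\binom{\cdot}r$ for $r<n$ are linearly independent on the window because the Pascal matrix is invertible. Hence $y$ lies in the span of the first $\lceil n/2\rceil$ of them if and only if $T\subseteq[0,\lceil n/2\rceil)$. This agrees with the residual criterion of Theorem~\ref{thm:staircase}.

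I expect the main obstacle to be the closed-form orbit. It needs a careful induction that handles the parity and boundary conditions, including the vanishing outside the light cone and the conventions $\binom tj=0$ for $j>t$. The remaining steps follow directly from results already in the paper.
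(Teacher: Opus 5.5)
Your proposal is correct and follows essentially the same route as the paper. Both arguments use the closed form $A(t,x)=\binom t{(t+x)/2}$ for the Rule~90 orbit, substitute $x=t-k$ to get $\binom tj$ in the even columns and zeros in the odd ones, and read the lead $j$ off column $2j$. The membership criterion then follows from uniqueness of the Pascal expansion together with Theorem~\ref{thm:staircase}. Your image and membership argument does not use $n\ge2$, so it also covers $n=1$, where, as the paper notes, the matrix is the universal $(1)$.
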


\begin{proof}
$A(t+1,x)=A(t,x-1)+A(t,x+1)$ from $\delta_0$ gives $A(t,x)=\binom t{(t+x)/2}$ when $t+x$ is
even and $0$ otherwise; substituting $x=t-k$ gives $A(t,t-k)=\binom t{k/2}$ for even $k$ and
$0$ for odd $k$. So the nonzero columns are the binomial columns $j=0,\dots,\lceil n/2\rceil-1$
in order, each with leading position $j$. For $n\ge2$ there is a zero column,
so this is a proper staircase. At $n=1$ it is the universal matrix $(1)$.
In both cases its image is the stated span.
Theorem~\ref{thm:staircase} then makes greedy exactness equivalent to membership, and by the
uniqueness of the Pascal expansion (Theorem~\ref{thm:universal}) membership is the stated
condition on $T$, including $T=\varnothing$.
\end{proof}

\Needspace{5\baselineskip}
\begin{corollary}\label{cor:90}
For the Rule~30 diagonal truncated to $0\le t<n$,
$D_m\in\operatorname{im}B_{90}$ iff
$S_{m+1}\cap[\lceil n/2\rceil,n)=\varnothing$.
Only the Newton coefficients below $n$ are coordinates of this finite vector.
\end{corollary}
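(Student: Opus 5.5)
The corollary should follow by composing Theorem~\ref{thm:S} with Theorem~\ref{thm:90}. The plan is to compute the Pascal coordinates of the truncated target and then apply the membership criterion for $\operatorname{im}B_{90}$.

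First fix the window $0\le t<n$ and let $y=(D_m(t))_{0\le t<n}$. By Theorem~\ref{thm:S}, the unique solution of $B_{60}x=y$ on this prefix is the indicator of $S_{m+1}\cap[0,n)$. Since $B_{60}[t,k]=\binom tk$ by Theorem~\ref{thm:pascal}, the Pascal coordinate set used in Theorem~\ref{thm:90} is therefore
\[
 T=S_{m+1}\cap[0,n).
\]
Uniqueness comes from Theorem~\ref{thm:universal}, so $T$ is well defined and does not depend on the solver order.

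Next apply the criterion of Theorem~\ref{thm:90}, which says $y\in\operatorname{im}B_{90}$ iff $T\subseteq[0,\lceil n/2\rceil)$. Because $T\subseteq[0,n)$, this containment holds exactly when $T\cap[\lceil n/2\rceil,n)=\varnothing$. Substituting the formula for $T$ gives the displayed condition $S_{m+1}\cap[\lceil n/2\rceil,n)=\varnothing$.

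The degenerate case $n=1$ needs a brief check. Then $\lceil n/2\rceil=1=n$, the interval is empty, and $B_{90}=(1)$ spans everything, so the two sides agree. The closing sentence of the corollary follows from the same restriction: indices $r\ge n$ give columns that vanish on the window, so they are not coordinates of the finite vector.

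The only real obstacle is the indexing. One must use $S_{m+1}$ rather than $S_m$, as recorded in Remark~\ref{ver:S}, and one must match the coordinate set $T$ of Theorem~\ref{thm:90} to the truncated support and not to the full support.
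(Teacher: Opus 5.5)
Your proposal is correct and follows the paper's proof. Like the paper, you identify the Pascal coordinate set of the truncated target as $S_{m+1}\cap[0,n)$ via Theorem~\ref{thm:S} and substitute it into the membership criterion of Theorem~\ref{thm:90}; your explicit check of $n=1$ and of the index shift only spells out what the paper's short argument leaves implicit.
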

\begin{proof}
Theorem~\ref{thm:S} identifies the Pascal coordinate set of $D_m$ with $S_{m+1}$.
On the finite window, retain only $T=S_{m+1}\cap[0,n)$ and substitute
it into Theorem~\ref{thm:90}. For example $n=2,m=4$ gives the zero vector,
although $S_5=\{2,3,4\}$ has maximum $4$.
\end{proof}

\Needspace{5\baselineskip}
\begin{remark}[Window-dependent membership]\label{ver:90}
At $n=48$, the half-window threshold separates $m=7$, with
$\max S_8=16$, from $m=8$, with $\max S_9=25$.
For arbitrary windows it is the intersection in
Corollary~\ref{cor:90}, not the full support maximum, that decides
membership. In particular $D_4$ on $n=2$ is the zero vector even
though $S_5=\{2,3,4\}$.
\end{remark}

A small example makes the role of the window explicit. The Rule 30
diagonal $D_3(t)=\binom t2\bmod2$ is not in the Rule 90 image at
$n=4$: the only nonzero basis columns are $\binom t0$ and
$\binom t1$. At $n=5$, column four supplies $\binom t2$, so the
longer target is represented by a single Rule 90 coordinate.
Changing the window changes both the target vector and the available
column family. The support criterion tracks these changes exactly.

This is the cleanest available consistency check on the whole construction: an independently
computed truncated support $S_{m+1}\cap[0,n)$ predicts when a decomposition against a
\emph{different} automaton's diagonals succeeds.

\section{Diagonals as recurrent sequences}\label{sec:recurrent}

The basis viewpoint emphasizes the relation between different
diagonal vectors. We now follow a single diagonal through time.
When it has an integer-valued polynomial lift, its Newton expansion
terminates and its interpolation order can be controlled by the
rule recurrence. This provides the link between the matrix
construction and period bounds after reduction modulo a prime.

The common coordinates are the Pascal basis of Theorem~\ref{thm:pascal}:
the binomial spectrum of Section~\ref{sec:spec} is the coordinate vector of
Theorem~\ref{thm:universal}, whose support measures coordinate sparsity.

Every rule $f$ induces on the diagonals the triangular system
\[
 D_m(t)=f\big(D_m(t-1),\,D_{m-1}(t-1),\,D_{m-2}(t-1)\big),
\]
the first argument being the left neighbour, since each cell depends on three upper
neighbours sitting on diagonals $m$, $m-1$, $m-2$ at time $t-1$. Diagonals of negative depth lie beyond the right signal cone and
equal the uniform background. That background follows
$v_{t+1}=f(v_t,v_t,v_t)$, starting at zero. On a finite state set it
has a transient followed by a cycle of length $c$; for quiescent
rules it is identically zero. For example, the canonical Rule 165
lift modulo three has background $0,1,0,1,\ldots$.

The triangular diagonal update can be computed without an artificial
spatial boundary. Initialize $D_0(0)=1$ and $D_m(0)=0$ for $m>0$,
retain two background entries for negative depths, and update the
depth array from its previous value. Computing $W$ time samples
through depth $M$ uses $O(W(M+1))$ local updates and $O(M+1)$
working residues, excluding stored output. The first terms at every
depth are retained. Restricting the physical lattice to $x\ge0$
would delete those terms and alter the period experiment.

\Needspace{5\baselineskip}
\begin{definition}[Left--linear rule]
A rule is left--linear mod~$p$ when its \ANF{} polynomial has the form
$f(l,c,r)=l+g(c,r)\bmod p$: the left variable appears as a lone linear term and in no
product.
\end{definition}

\Needspace{5\baselineskip}
\begin{theorem}[Telescoping]\label{thm:tel}
For a left--linear rule, subtracting consecutive terms eliminates $f$:
$D_m(t)-D_m(t-1)=g\big(D_{m-1}(t-1),D_{m-2}(t-1)\big)$, so each diagonal is the running
sum of a function of the two diagonals above it.
\end{theorem}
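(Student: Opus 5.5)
The plan is a direct substitution into the triangular diagonal system stated just before the definition, followed by a summation in $t$.

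First, I record the update for the diagonals. For $t\ge1$ and every depth $m$,
\[
 D_m(t)=f\big(D_m(t-1),D_{m-1}(t-1),D_{m-2}(t-1)\big),
\]
where negative depths are read as the background values. For a left-linear rule, $f(l,c,r)=l+g(c,r)\bmod p$, so substituting $l=D_m(t-1)$ gives
\[
 D_m(t)=D_m(t-1)+g\big(D_{m-1}(t-1),D_{m-2}(t-1)\big)\pmod p.
\]
Subtracting $D_m(t-1)$ from both sides yields the displayed difference identity at once. The point is that $l$ occurs only as a lone linear term and in no product, so it cancels exactly.

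For the running-sum form, I telescope from $s=1$ to $t$:
\[
 D_m(t)=D_m(0)+\sum_{s=0}^{t-1}g\big(D_{m-1}(s),D_{m-2}(s)\big)\pmod p.
\]
The initial value is $D_m(0)=A(0,-m)$. With a single seed this equals $[m=0]$, so for $m\ge1$ the diagonal is exactly the strict prefix sum of $g$ applied to the two diagonals above it. This matches the Rule 30 computation in Remark~\ref{rem:notnew}.

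The only real obstacle is bookkeeping. I need to check that the identity holds at the boundary depths $m=0,1$, where the arguments $D_{-1}$ and $D_{-2}$ are background terms $v_{t-1}$, and that the cancellation is an identity in $\ZZ_p$ and not merely on Boolean inputs. Both follow because the left-linear form is a polynomial identity in the canonical lift, and the diagonal update was already stated for all depths using the background convention.
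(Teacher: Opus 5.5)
Your proposal is correct and takes the same approach as the paper. Both substitute $f(l,c,r)=l+g(c,r)$ into the rotated diagonal update, cancel $D_m(t-1)$, and sum in $t$ with $D_m(0)$ as the integration constant. Your explicit value $D_m(0)=A(0,-m)=[m=0]$ for the single seed is the same fact the paper relies on in Remark~\ref{rem:notnew}.
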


\begin{proof}
Substitute $f(l,c,r)=l+g(c,r)$ into the rotated update:
$D_m(t)=D_m(t-1)+g(D_{m-1}(t-1),D_{m-2}(t-1))$.
Subtract $D_m(t-1)$. Summing this identity includes the initial value
$D_m(0)$ as the integration constant.
\end{proof}

The identity is the diagonal form of the one-sided recursion in
\cite{U1}. Summation of a periodic input over $\mathbb F_p$ can
add a factor of $p$ to its period. The next result also gives a
different bound for Rule 30, using polynomial degree.

\subsection{The Rule 30 column system and its Fibonacci dependency}

The right-column notation of \cite{U1} satisfies
$\binom nk_R=D_k(n)$. Of the three frames $C,L,R$ used there,
it is $R$ that agrees with the slope-one diagonals here. In this
frame the same-column term is linear and the nonlinear term couples
only the preceding two columns.

\Needspace{5\baselineskip}
\begin{theorem}[Rule 30 telescoped]
Because the same--column symbol is linear, Theorem~\ref{thm:tel} collapses the whole
$n$-recurrence of the right columns into a single sum over the two previous columns:
\[
 \bin nk_R=\sum_{j=0}^{n-1}\Big(\bin j{k-1}_R+\bin j{k-2}_R+\bin j{k-1}_R\bin j{k-2}_R\Big)\bmod N,
\]
for $k\ge1$, with $\binom0k_R=0$. At $k=0$ the diagonal is one;
its initial value must be added to the telescoped sum.
\end{theorem}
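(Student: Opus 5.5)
The plan is to derive the displayed sum from Theorem~\ref{thm:tel}, once we check that the canonical Rule~30 lift is left-linear. First write the lift explicitly. Rule~30 has $\mathrm{ANF}_{30}=a\xor b\xor c\xor bc$, so by Definition~\ref{def:lift} its lift is
\[
 p_{30}(a,b,c)=a+b+c+bc\pmod N .
\]
The left variable $a$ occurs only as a lone linear term, so the rule is left-linear at every modulus. Its function is $g(b,c)=b+c+bc$, which is symmetric in its two arguments.

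Next, match the arguments of the rotated update to the right-column symbols. In the diagonal system, $D_m(t)=f(D_m(t-1),D_{m-1}(t-1),D_{m-2}(t-1))$: the left neighbour of cell $(t,t-m)$ is $(t-1,t-m-1)$, which lies on $D_m$; the centre lies on $D_{m-1}$ and the right neighbour on $D_{m-2}$. Theorem~\ref{thm:tel} then gives
\[
 D_k(j+1)-D_k(j)=g\bigl(D_{k-1}(j),D_{k-2}(j)\bigr).
\]
Because $g$ is symmetric, it does not matter which of $D_{k-1},D_{k-2}$ feeds the centre slot and which the right slot, so the orientation check reduces to confirming that the left input is the same-diagonal one.

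Then sum this difference identity over $j=0,\dots,n-1$ and use the identification $\binom nk_R=D_k(n)$. The constant of integration is $D_k(0)$, which is zero for $k\ge1$ because the seed sits at $x=0$, i.e.\ on depth $0$; this is the stated $\binom0k_R=0$. Depths $-1$ and $-2$ lie beyond the right signal cone. Since Rule~30 is quiescent, $p(0,0,0)=0$, their background is identically zero, so the formula holds at $k=1$ and $k=2$ without extra terms. For $k=0$, $D_0(n)=A(n,n)$ is the right edge, which is always one: $p(0,1,0)=1$ and $p(0,0,1)=1$ hold as integer identities. The telescoped sum with vanishing upper inputs would give zero, so the initial value $1$ must be added, as the statement says.

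The main obstacle is bookkeeping rather than mathematics: confirming that the frame $R$ of \cite{U1} indexes slope-one diagonals by depth in the same direction as Definition~\ref{def:diag}, so that the same-column (linear) term is indeed the left input. To check this, compare the first few rows: $D_1(t)=t$ and, modulo $2$, $D_2(t)=t\bmod2$, against the tabulated right columns.
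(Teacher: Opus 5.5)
Your proof is correct and follows the paper's own route: left-linearity of $p_{30}=a+b+c+bc$, then Theorem~\ref{thm:tel} with $g(b,c)=b+c+bc$, then summation from $0$ to $n-1$ with base value $D_k(0)=0$ for $k\ge1$. One small slip: the right edge cell $(t,t)$ sees the neighbourhood $(1,0,0)$, so $D_0\equiv1$ follows from $p_{30}(1,0,0)=1$ (equivalently from the telescoped identity, since $g(0,0)=0$), not from $p_{30}(0,1,0)=1$ or $p_{30}(0,0,1)=1$; the latter of these is the left-edge identity.
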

\begin{proof}
Rule~30 is left--linear, so Theorem~\ref{thm:tel} applies with $g(c,r)=c+r+cr$; summing the
resulting first--difference identity from $0$ to $n-1$ telescopes the left side to
$\binom nk_R$, the base value being zero.
\end{proof}

The degree calculation appears in the public discussion
\cite{degreepost}, and the corresponding binary ceiling is used
in \cite{U2}. The same argument settles the interpolation question. Positivity gives
the exact integer degree, while the Newton form avoids division
at composite moduli.

\Needspace{5\baselineskip}
\begin{theorem}[Fibonacci interpolation order]\label{thm:fib}
Let $F_0=0,F_1=1$ and $F_{k+2}=F_{k+1}+F_k$. For the integer
polynomial lift of Rule 30, $D_k(t)$ is an integer-valued polynomial
in $t$ of degree exactly $F_{k+2}-1$. Consequently, modulo any
$N\ge2$, its first $F=F_{k+2}$ values determine all subsequent
values by
\[
 D_k(t)=\sum_{j=0}^{F-1}c_j\binom tj\pmod N,
 \qquad
 c_j=\sum_{i=0}^{j}(-1)^{j-i}\binom ji D_k(i).
\]
This is an upper bound on the required interpolation order modulo
$N$; cancellation can make the actual order smaller.
\end{theorem}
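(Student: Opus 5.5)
We argue by strong induction on $k$, working over $\ZZ$ with the canonical lift $p_{30}(a,b,c)=a+b+c+bc$, so that the telescoped identity above holds over the integers before reduction. Write $d_k=\deg D_k$. The base cases are read directly from the orbit. $D_0\equiv1$ has degree $0=F_2-1$. $D_1(t)=\sum_{j<t}D_0(j)=t$ has degree $1=F_3-1$; here $D_{-1}$ is the zero background, so $g(D_0,D_{-1})=D_0$.

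For the inductive step with $k\ge2$, the summand $g(D_{k-1},D_{k-2})=D_{k-1}+D_{k-2}+D_{k-1}D_{k-2}$ is a polynomial in $j$. We claim its degree is exactly $d_{k-1}+d_{k-2}$, and we use \emph{positivity} to justify this. Every diagonal of the integer lift is a nonnegative integer sequence, since the rule has nonnegative coefficients and a nonnegative seed. By induction, the leading coefficients of $D_{k-1}$ and $D_{k-2}$ are positive. Their product therefore has positive leading coefficient of degree $d_{k-1}+d_{k-2}$, and this degree strictly exceeds both $d_{k-1}$ and $d_{k-2}$ because $d_{k-2},d_{k-1}\ge0$ with at least one of them positive. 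The only delicate case is $k=2$, where $d_0=0$; there the product $D_1D_0=t$ has degree $1=d_1$, and its positive coefficient adds to that of $D_1$, so again no cancellation occurs. Summation $\sum_{j<t}$ of a polynomial of degree $e$ with positive leading coefficient gives degree $e+1$ with positive leading coefficient, for instance via the Newton basis, since $\sum_{j<t}\binom jr=\binom t{r+1}$. Hence $d_k=d_{k-1}+d_{k-2}+1$. Setting $e_k=d_k+1$ gives $e_k=e_{k-1}+e_{k-2}$ with $e_0=1=F_2$ and $e_1=2=F_3$, so $e_k=F_{k+2}$.

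Integer-valuedness and the modular statement follow from the Newton form. Since $D_k$ is an integer-valued polynomial of degree $F-1$, its Newton expansion $\sum_{j<F}c_j\binom tj$ has integer coefficients $c_j=\Delta^jD_k(0)$, given by the stated alternating sum. This requires no division, so it reduces modulo any $N$; in particular it does not need the Lagrange form, which would fail at composite $N$. The coefficients $c_j$ depend only on $D_k(0),\dots,D_k(F-1)$, which gives the determination claim. The upper-bound remark is then immediate, because some reduced $c_j$ may vanish mod $N$.

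The main obstacle is the exactness of the degree, that is, ruling out cancellation of leading terms in the product and in the sums. This is exactly where the positive integer lift is essential, and the small case $k=2$ must be checked by hand.
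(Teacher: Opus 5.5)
Your proposal is correct and follows the paper's proof essentially step for step: the telescoped integer recurrence, positivity of the leading coefficients to exclude cancellation in the product and in the summation, the resulting recurrence $d_k=d_{k-1}+d_{k-2}+1$ with $d_0=0$, $d_1=1$, and the division-free Newton form with $c_j=\Delta^jD_k(0)$ for reduction modulo $N$. One small slip is harmless: $d_{k-1}+d_{k-2}$ strictly exceeds both $d_{k-1}$ and $d_{k-2}$ only when both are positive, which holds for $k\ge3$; but since all leading coefficients are positive, no cancellation can occur in any case, as your separate check at $k=2$ confirms.
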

\begin{proof}
Over the integers, $D_0(t)=1$ and $D_1(t)=t$. A product of
integer-valued polynomials of degrees $r,s$ has degree at most
$r+s$, and a sum from $0$ to $t-1$ increases degree by at most one.
These statements hold within the integral Newton basis: for example,
\[
 \binom tr\binom ts
 =\sum_{j=\max(r,s)}^{r+s}
 \frac{j!}{(j-r)!(j-s)!(r+s-j)!}\binom tj,
\]
where the coefficient counts two subsets of sizes $r,s$ whose union
has size $j$. Also $\sum_{u<t}\binom uj=\binom t{j+1}$.
The telescoped Rule 30 formula therefore constructs an integer-valued
polynomial at each depth. Its leading coefficient is positive:
the initial two have that property, and sums and products in the
recurrence introduce no negative leading terms. For $k\ge2$,
the product has degree $d_{k-1}+d_{k-2}$, dominates the linear
terms or ties them with positive coefficient, and summation gives
\[
 d_k=d_{k-1}+d_{k-2}+1,
 \qquad d_0=0,\quad d_1=1.
\]
Thus $d_k+1=F_{k+2}$. A polynomial of degree less than $F$ has
the Newton expansion with coefficients $\Delta^jD_k(0)$ displayed
above; all coefficients of order $j\ge F$ vanish. The formula
uses only integer additions and multiplications, so reduction
modulo $N$ is valid without dividing by a nonunit. It includes
every initial and previously described singular index.
\end{proof}

\Needspace{5\baselineskip}
\begin{corollary}[A Fibonacci period bound for Rule 30]\label{cor:fibperiod}
Modulo a prime $p$, every depth-$k$ Rule 30 diagonal is purely
periodic, with period dividing
$p^{\lceil\log_p F_{k+2}\rceil}$. In particular its least period
is less than $pF_{k+2}$ unless $k=0$, when it is one.
\end{corollary}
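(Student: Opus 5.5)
The plan combines the Newton expansion of Theorem~\ref{thm:fib} with the periodicity of binomial coefficients modulo a prime. By that theorem, modulo $p$,
\[
 D_k(t)\equiv\sum_{j=0}^{F-1}c_j\binom tj\pmod p,\qquad F=F_{k+2},
\]
with integer coefficients $c_j$. So it suffices to show that every $\binom tj\bmod p$ with $j<F$ is purely periodic in $t$ with period dividing $q=p^e$, where $e=\lceil\log_p F\rceil$. A sum of purely periodic sequences whose periods all divide $q$ is purely periodic with period dividing $q$. The least period then divides $q$ as well.

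For the binomial periodicity I will use Lucas' theorem \cite{lucas,fine}. Write $t$ in base $p$. Since $j<F\le p^e$, the base-$p$ digits of $j$ occupy only positions $0,\dots,e-1$. Lucas' theorem gives $\binom tj\equiv\prod_i\binom{t_i}{j_i}\pmod p$. Each digit $j_i$ with $i\ge e$ is zero, so each factor at those positions is $\binom{t_i}0=1$. Hence $\binom tj\bmod p$ depends only on $t\bmod p^e$, which is pure periodicity with period dividing $p^e$, valid from $t=0$. One edge case needs attention: $F=1$ would give $e=0$. That happens only if $F_{k+2}=1$, i.e.\ $k=0$, where the diagonal is constant $1$ with period one, consistent with $p^0=1$. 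For $k\ge1$ we have $F\ge2$, so $e\ge1$.

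For the final inequality I note $p^{e-1}<F$ by minimality of $e=\lceil\log_p F\rceil$, since $F\le p^{e-1}$ would allow a smaller exponent. Hence $q=p\cdot p^{e-1}<pF$ whenever $e\ge1$, and the least period, being at most $q$, is less than $pF_{k+2}$.

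The main point needing care is the ceiling boundary. When $F$ is an exact power $p^a$, indices satisfy $j\le F-1<p^a$, so $e=a$ digits still suffice; I will phrase the digit bound as $j<p^e$, which holds because $p^e\ge F>j$. No other step seems delicate, because reduction modulo $p$ is already justified by Theorem~\ref{thm:fib}.
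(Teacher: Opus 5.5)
Your proof is correct and follows the paper's argument. You expand $D_k$ in the Newton basis of Theorem~\ref{thm:fib}, whose indices satisfy $j<F_{k+2}\le p^{e}$; you show each $\binom tj\bmod p$ is purely $p^{e}$-periodic; and you use minimality of $e$ to get $p^{e}<pF_{k+2}$. The only difference is that you obtain the binomial periodicity from Lucas' theorem, whereas the paper uses the equivalent Frobenius identity $(1+z)^{q}=1+z^{q}$ over $\mathbb F_p$.
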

\begin{proof}
Every Newton index in Theorem~\ref{thm:fib} is less than $F_{k+2}$.
For a power $q=p^L\ge F_{k+2}$, the identity
$(1+z)^q=1+z^q$ over $\mathbb F_p$ implies
$\binom{t+q}{j}=\binom tj$ for $j<q$. Applying it to every term
gives the period. The smallest such power is less than $pF_{k+2}$
when $F_{k+2}>1$.
\end{proof}

\Needspace{5\baselineskip}
\begin{proposition}[Period multiplication]\label{prop:mult}
For a left-linear rule modulo a prime $p$, suppose both driving
diagonals $D_{m-1},D_{m-2}$ have period $\pi$ for $t\ge T$.
Then $D_m$ has period dividing $p\pi$ from $T$ onward. More
precisely, if $\sigma$ is the sum of its increments over one such
period, $\pi$ is a period when $\sigma=0$; otherwise $p\pi$
is a period. An additional background factor is unnecessary once
the common input period $\pi$ is specified.
\end{proposition}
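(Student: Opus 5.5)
The plan is to prove this directly from the telescoping identity of Theorem~\ref{thm:tel}. Write the increments as
\[
 u(t)=g\big(D_{m-1}(t),D_{m-2}(t)\big)\pmod p ,
\]
so that $D_m(t+1)=D_m(t)+u(t)$. The left-linear hypothesis gives exactly this form, with $g$ a fixed polynomial map on $\mathbb F_p^2$. Since both drivers have period $\pi$ for $t\ge T$, the input sequence $u$ has period $\pi$ for $t\ge T$: we have $u(t+\pi)=u(t)$ pointwise there, because $u$ depends only on the current values of the two drivers.

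Next we sum over one period. For any $t\ge T$, telescoping gives
\[
 D_m(t+\pi)-D_m(t)=\sum_{s=t}^{t+\pi-1}u(s).
\]
By periodicity of $u$ from $T$ onward, this window sum does not depend on $t\ge T$: shifting the window by one drops $u(t)$ and adds $u(t+\pi)=u(t)$. Hence the sum equals $\sigma$, the sum of the increments over one period starting at $T$. This gives $D_m(t+\pi)=D_m(t)+\sigma$ for all $t\ge T$.

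Now we split into the two cases. If $\sigma=0$, then $\pi$ is a period of $D_m$ from $T$ onward. Otherwise, iterating the relation $j$ times gives $D_m(t+j\pi)=D_m(t)+j\sigma$. Taking $j=p$ makes $p\sigma=0$ in $\mathbb F_p$, so $p\pi$ is a period from $T$. In both cases the least eventual period divides $p\pi$, since any period from $T$ is a multiple of the least one by the usual gcd argument on periods valid from a common start.

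The only point needing care is the background remark. Negative-depth diagonals, where $m-1$ or $m-2<0$, are the background $v_t$. The statement's hypothesis already asserts period $\pi$ for those inputs from $T$, so no separate factor for the background cycle length $c$ enters: it is absorbed into $\pi$ (for instance, $\pi$ must then be a multiple of $c$). I expect no real obstacle here. The only subtlety is confirming that the window-sum argument needs $u$ periodic only from $T$, not from $0$. It does, because the argument starts every window at $t\ge T$.
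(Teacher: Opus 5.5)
Your proof is correct and follows essentially the same route as the paper. The telescoping identity makes the increment $\pi$-periodic from $T$, so $D_m(t+\pi)-D_m(t)=\sigma$ for all $t\ge T$, and $p$ iterations give zero change because $p\sigma=0$ in $\mathbb F_p$; your explicit check that the window sum does not depend on $t\ge T$, and your treatment of the background through the hypothesis on $\pi$, only spell out steps the paper leaves implicit.
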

\begin{proof}
The increment given by Theorem~\ref{thm:tel} is $\pi$-periodic
for $t\ge T$. Hence
$D_m(t+\pi)-D_m(t)=\sigma$, independently of $t\ge T$.
Repeating this equality $p$ times gives zero change, because
$p\sigma=0$ in $\mathbb F_p$. The least eventual period divides
each period of the tail, giving the assertion.
\end{proof}

This proposition bounds a multiplication step; it does not determine
how often the least period increases. The Fibonacci argument gives
a smaller ceiling for Rule 30. Given $F$ initial values, a forward-
difference table computes their Newton coefficients using
$F(F-1)/2$ modular subtractions and $O(F)$ working residues.
The initial values can be generated in $O(kF)$ local updates.
In the binary case, evaluating the resulting expansion at time $t$
uses $O(F(\log(F+1)+\log(t+1)))$ bit operations by Lucas tests.
This is an algorithmic upper bound, not a minimality claim. Additive rules give another scale:
their binomial periods are comparable with depth, and their period
logarithms grow logarithmically. A finite plot of the logarithms
must therefore be distinguished from a theorem about growth rates.

\section{Measuring periods: the strict criterion}

An exact period theorem and a period observed on a finite array
answer different questions. For the finite comparisons, the
criterion must specify which repetitions are tested and how a
candidate is rejected. We give that convention before presenting
the census. It ensures that counts for different rules and moduli
refer to the same observation procedure, while leaving the
infinite-sequence claims to the preceding proofs.

For a sample of length $W$, a candidate period $q$ is accepted only
if the matching suffix starts at $\tau\le W/2$ and contains at
least three full repeats. The precise convention is given in
Definition~\ref{def:crit}. We call the result a \emph{detected
period}: it is a property of the sample, unless a separate proof
establishes the infinite sequence's period.

Rule 45 illustrates the need for a transient restriction. At depth
14 and $W=700$, a threefold repetition of a block of length three
starts at index 692. The suffix has eight entries, so it does not
even contain three complete blocks; a detector admitting a partial
last block can report it. At $W=2500$, the strict criterion finds
period 512 with transient zero. The $W=700$ strict result is
unresolved. This change concerns the observation window; it is
not evidence that an eventual period changed during the evolution.

The detected periods begin
\[
\begin{array}{ll}
\text{Rule 30:}&1,2,2,4,8,8,16,32,32,64,\ldots,\\
\text{Rule 45:}&2,4,8,8,16,32,32,64,64,128,\ldots.
\end{array}
\]
Both later have long plateaus, so these prefixes do not establish
a fixed number of depths between doublings. Rule 150 has the exact
period $2^{\lceil\log_2(m+1)\rceil}$ by
Theorem~\ref{thm:colper}; Rule 90 gives the same binomial scale at
even depths and zero at odd depths. Rule 86 has slower detected
growth on this window. Its eventually periodic tails and unbounded
periods across depths are treated in \cite{U2}; no constant growth
rate is inferred here.

\begin{figure}[t]\centering
 \EmbeddedFigure{\textwidth}{6}
 \caption{Rule~90 mod~$3$:
 $1,1,3,1,3,1,9,1,9,\dots$, tripling exactly when $m/2$ crosses a power of $3$
 (Corollary~\ref{cor:90p}). Blue points show the even-depth columns;
gray points show the zero odd-depth columns, whose period is one.}
 \label{fig:stairs}
\end{figure}

\begin{figure}[htbp]\centering
\EmbeddedFigure{\textwidth}{7}
\caption{Diagonal periods modulo two. Above: the exact
staircases for the additive Rules 90 and 150. Only even depths are
drawn for Rule 90; its odd-depth diagonals are identically zero and
have least period one. Below: finite detections for Rules 30, 45 and
86 on $W=700$ terms, using Definition~\ref{def:crit}.
The first unresolved depths are $13$ for Rule 45 and $16$ for
Rule 30, marked by dotted lines; these cutoffs are limits of this
observation window. Period one is drawn at height zero on the
logarithmic axis.}
\label{fig:stairs2}
\end{figure}

\section{The additive core}

\Needspace{5\baselineskip}
\begin{theorem}[Additive diagonals as polynomial coefficients]\label{thm:multcol}
Let $f(a,b,c)=\alpha a+\beta b+\gamma c$ over $\mathbb F_p$.
From $\delta_0$, write
$P(y)=\alpha y+\beta+\gamma y^{-1}$ and
$Q(z)=\alpha+\beta z+\gamma z^2$. Then
\[
 A(t,x)=[y^x]P(y)^t,\qquad D_m(t)=[z^m]Q(z)^t.
\]
In characteristic two the additive elementary rules are
$0,60,90,102,150,170,204,240$.
\end{theorem}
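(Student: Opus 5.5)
The plan is to encode the orbit as a Laurent polynomial and read both formulas off as coefficient extractions, as in the Pascal case (Theorem~\ref{thm:pascal}). Set $G_t(y)=\sum_x A(t,x)y^x\in\mathbb F_p[y,y^{-1}]$; the sum is finite because the single seed spreads by at most one cell per step. Since $A(t+1,x)=\alpha A(t,x-1)+\beta A(t,x)+\gamma A(t,x+1)$, matching coefficients gives $G_{t+1}(y)=(\alpha y+\beta+\gamma y^{-1})G_t(y)=P(y)G_t(y)$. With $G_0=1$ from $\delta_0$, induction yields $G_t=P^t$, which is the first formula $A(t,x)=[y^x]P(y)^t$.

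For the diagonal, I will substitute $x=t-m$: $D_m(t)=A(t,t-m)=[y^{t-m}]P(y)^t$. Next I write $P(y)^t=y^t(\alpha+\beta y^{-1}+\gamma y^{-2})^t$, so $[y^{t-m}]P^t=[y^{-m}](\alpha+\beta y^{-1}+\gamma y^{-2})^t$. Putting $z=y^{-1}$ turns this into $[z^m](\alpha+\beta z+\gamma z^2)^t=[z^m]Q(z)^t$. The only care needed is the bookkeeping of exponents, to check that the $y^t$ factor absorbs the offset exactly, including $m>2t$ or $m<0$, where both sides vanish (or equal the zero background).

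For the list of additive elementary rules in characteristic two, the canonical lift of Definition~\ref{def:lift} has the required form $\alpha a+\beta b+\gamma c$ precisely when the ANF has no constant term and no products. This leaves $2^3=8$ choices of $(\alpha,\beta,\gamma)\in\mathbb F_2^3$. I will compute each truth table via $R=\sum_{a,b,c}f(a,b,c)2^{4a+2b+c}$. The generators give $a\mapsto240$, $b\mapsto204$, $c\mapsto170$, and each rule number is the XOR of the numbers of its terms. This yields $0,60,90,102,150,170,204,240$, in agreement with the earlier identifications $60=a\oplus b$, $102=b\oplus c$ and $90=a\oplus c$.

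The main obstacle is not conceptual. It is keeping the orientation conventions consistent: which neighbour multiplies $y$ versus $y^{-1}$, and the sign of the shift in the diagonal cut $x=t-m$. A mistake there would swap $\alpha$ and $\gamma$ in $Q$. I will check the result against Theorem~\ref{thm:pascal}, where $\alpha=\beta=1,\gamma=0$ gives $Q=(1+z)^t$ and hence $\binom tm$, and against Theorem~\ref{thm:90}, where $Q=1+z^2$ gives $\binom t{m/2}$.
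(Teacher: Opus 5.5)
Your proof is correct and follows the paper's argument: each time step multiplies the row generating function by $P(y)$, and the identity $P(y)=yQ(y^{-1})$ turns the cut $x=t-m$ into the coefficient of $z^m$ in $Q(z)^t$. You also derive the eight rule numbers explicitly by XOR-ing $240$, $204$ and $170$, which supplies the list that the paper's proof only illustrates with Rules 90 and 150.
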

\begin{proof}
Encoding a row by $\sum_x A(t,x)y^x$, the left, center and right
terms multiply it by $\alpha y$, $\beta$ and $\gamma y^{-1}$.
The seed polynomial is one, so induction gives $P(y)^t$.
Since $P(y)=yQ(y^{-1})$, the coefficient of $y^{t-m}$ is the
coefficient of $z^m$ in $Q(z)^t$. For Rule 90, $Q=1+z^2$;
for Rule 150, $Q=1+z+z^2$.
\end{proof}

\Needspace{5\baselineskip}
\begin{corollary}[Background factors for left-linear rules]\label{cor:ppow}
Suppose the background has period $c$ after time $T$. For every
left-linear rule modulo $p$ and every $m\ge0$, $D_m$ is periodic
after $T$, with period dividing $c p^{m+1}$. In characteristic
two the least eventual period is therefore a power of two.
\end{corollary}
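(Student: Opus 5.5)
The plan is to induct on the depth $m$, using Proposition~\ref{prop:mult} as the inductive engine and the background as the base. I will read the negative-depth diagonals as the background: $D_{-1}(t)=D_{-2}(t)=v_t$, which is $c$-periodic for $t\ge T$. Set $\pi_{-1}=\pi_{-2}=c$. The statement then follows from the stronger claim that $D_m$ is periodic after $T$ with period $cp^{m+1}$. Because a least eventual period divides every eventual period, this gives divisibility by $cp^{m+1}$ for the least period as well.

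The inductive step goes as follows. Assume that $D_{m-1}$ and $D_{m-2}$ are periodic after $T$ with periods $cp^{m}$ and $cp^{m-1}$. The larger number $cp^{m}$ is a common period of both, since a multiple of a period is again a period. By Theorem~\ref{thm:tel}, the increment $g(D_{m-1}(t),D_{m-2}(t))$ is then $cp^m$-periodic for $t\ge T$. Proposition~\ref{prop:mult}, applied with $\pi=cp^m$, shows that $D_m$ has period dividing $p\cdot cp^m=cp^{m+1}$ from $T$ onward. This closes the induction.

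The base cases $m=0$ and $m=1$ fit the same argument once we note two facts. The update for $D_0$ uses only background arguments. Theorem~\ref{thm:tel} is stated for the full triangular system, so it holds at every depth $m\ge0$.

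For the characteristic-two claim, take $p=2$. The least period divides $c\,2^{m+1}$, so I need $c$ itself to be a power of two. A quiescent rule has $c=1$. In general, a left-linear rule gives the background map $v\mapsto v+g(v,v)$ on $\mathbb F_2$, and starting from $0$ this has transient at most one and cycle length $1$ or $2$. Since every divisor of a power of two is a power of two, the claim follows.

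I expect the main obstacle to be the indexing at the boundary. One has to check that the negative-depth background really enters Theorem~\ref{thm:tel} with the same transient $T$, and that no depth-dependent shift of $T$ arises. I would handle this by fixing $T$ to be at least the background transient and observing that Proposition~\ref{prop:mult} preserves the starting time $T$. Because of this, the inductive hypothesis never has to enlarge $T$.
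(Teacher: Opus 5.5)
Your argument is correct and is essentially the paper's proof. It uses induction on depth with the two negative-depth background diagonals as the base case, applies Proposition~\ref{prop:mult} to the common period $cp^{m}$ with the starting time $T$ preserved, and observes that a binary background cycle has length one or two.
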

\begin{proof}
Both negative-depth inputs have period $c$ after $T$.
Inductively the preceding two diagonals have a common period
dividing $cp^m$. Proposition~\ref{prop:mult} gives period
$cp^{m+1}$ for the next diagonal, with the same permissible
starting time. A binary background cycle has length one or two,
so the bound is a power of two. The statement concerns actual
tail periods; a shorter accidental match in a finite sample need
not have that form.
\end{proof}

\Needspace{5\baselineskip}
\begin{theorem}[Periods of polynomial-power columns]\label{thm:colper}
Let $Q\in\mathbb F_p[z]$ have constant term one, and let
$T(t,k)=[z^k]Q(z)^t$. If $p^L>k$, then $T(t+p^L,k)=T(t,k)$
for all $t\ge0$. If in addition $[z]Q\ne0$, the least period
is exactly $p^{\lceil\log_p(k+1)\rceil}$.
\end{theorem}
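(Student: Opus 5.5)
The plan is to use the Frobenius identity for periodicity, then Lucas-type digit reasoning for minimality. First, the period. Over $\mathbb F_p$ we have $Q(z)^{p^L}=Q(z^{p^L})$. Because the constant term is one, $Q(z^{p^L})\equiv 1 \pmod{z^{p^L}}$. Hence
\[
 Q(z)^{t+p^L}=Q(z)^t\,Q(z^{p^L})\equiv Q(z)^t \pmod{z^{p^L}},
\]
and comparing coefficients of $z^k$ for $k<p^L$ gives $T(t+p^L,k)=T(t,k)$ for all $t\ge0$. This is the same argument as in Corollary~\ref{cor:fibperiod}, applied to a general polynomial column rather than a binomial one.

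For exactness, put $L=\lceil\log_p(k+1)\rceil$, the least $L$ with $p^L>k$. The sequence $t\mapsto T(t,k)$ is purely periodic with period $p^L$, so its least period is $p^j$ for some $j\le L$. It therefore suffices to show that $p^{L-1}$ is not a period when $L\ge1$; the case $k=0$ is trivial, since $T(t,0)=1$ has period $1=p^0$.

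To rule out $p^{L-1}$, compare $t=0$ with $t=p^{L-1}$. We have $T(0,k)=[k=0]=0$ because $k\ge1$. On the other side, $Q(z)^{p^{L-1}}=Q(z^{p^{L-1}})$, whose coefficient of $z^k$ is nonzero only when $p^{L-1}\mid k$. Since $p^{L-1}\le k<p^L$, this leaves $k=ap^{L-1}$ with $1\le a\le p-1$, and for such $k$ the coefficient is $[z^a]Q$. That coefficient can vanish, so a single comparison time is not enough in general.

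The main obstacle is therefore choosing a better test time, and this is where the hypothesis $\beta=[z]Q\ne0$ enters. Write $k=\sum_i k_ip^i$ in base $p$ with top digit $k_{L-1}\ge1$, and set $k'=k-k_{L-1}p^{L-1}<p^{L-1}$. I would test the time $t=k$ itself. Expanding $Q^k=\prod_i Q(z^{p^i})^{k_i}$, the unique contribution to $z^k$ that uses only linear terms is
\[
 \prod_i \beta^{k_i}\neq0 .
\]
If $k$ is the least degree needed, there are no carries and this gives $T(k,k)=\beta^{s_p(k)}\ne0$, where $s_p(k)$ is the base-$p$ digit sum. Other contributions (the $z^2$ terms) can also reach $z^k$, so I would instead argue by comparing $T(k,k)$ with $T(k-p^{L-1},k)$.

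The second value vanishes by degree: $\deg Q^{k-p^{L-1}}\le 2(k-p^{L-1})$, which is too weak to conclude directly, so I would refine the argument. Write $m=k-p^{L-1}$ with top digit $k_{L-1}-1$. In $Q^m=\prod_i Q(z^{p^i})^{m_i}$, every factor with $i<L-1$ has degree at most $2m_ip^i$, and their product has degree at most $2(p^{L-1}-1)$, which is still too large.

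My fallback is the cleanest route: the lowest-order behaviour. Note $Q=1+\beta z+\dots$ is invertible, and pure periodicity with period $p^{L-1}$ would give $Q^{p^{L-1}}\equiv1\pmod{z^{k+1}}$. This is equivalent to $Q(z^{p^{L-1}})-1$ having order greater than $k$. Its order is exactly $p^{L-1}$, because its coefficient of $z^{p^{L-1}}$ is $\beta\ne0$. But a period holding for all $t$ at the single index $k$ does not directly yield this congruence.

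To bridge that gap, I would apply the periodic identity to all lower indices $j\le k$ simultaneously by induction on $k$. Since $j<k$ implies $\lceil\log_p(j+1)\rceil\le L$, the trouble is only at $j=p^{L-1}$, the smallest index reaching level $L$. Showing the index $j=p^{L-1}$ column has exact period $p^L$ is the base case: for it, $T(p^{L-1},p^{L-1})=\beta\ne0=T(0,p^{L-1})$, so $p^{L-1}$ is not a period. For general $k$ with the same $L$, I expect to have to show that $T(\cdot,k)$ still distinguishes $t=0$ from some $t=p^{L-1}+s$. I would try to obtain this from the convolution $T(t+p^{L-1},k)=\sum_{a}[z^a]Q\;T(t,k-ap^{L-1})$, choosing $t$ by the inductive hypothesis on the lower column $k-p^{L-1}$, which is the step I expect to need the most care.
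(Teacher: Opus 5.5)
Your periodicity argument via $Q(z)^{p^L}=Q(z^{p^L})\equiv1\pmod{z^{p^L}}$ is correct. So is your reduction of exactness to showing that $h=p^{L-1}$ is not a period when $k\ge1$, since the least period divides the prime power $p^L$. But you never carry out that exclusion, and it is the entire content of the second claim. The single test times fail for the reasons you give. $T(k,k)$ can also vanish: for $p=3$ and $Q=1+z+z^2$ one has $Q^2\equiv1+2z+2z^3+z^4$, so $T(2,2)=0$. Your final convolution identity
\[
 T(t+h,k)-T(t,k)=\sum_{a=1}^{d}[z^a]Q\;T(t,k-ah),\qquad d=\lfloor k/h\rfloor ,
\]
is correct. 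For $d=1$ an induction on periods does close: the right side is $\beta\,T(t,k-h)$ with $\beta=[z]Q$, and that column is either constant one or has least period greater than one, so it is not identically zero. For $d\ge3$, however, both $k-h$ and $k-2h$ lie in $[h,p^L)$, so those two columns have the same exact period $p^L$. An inductive hypothesis that records only periods cannot rule out cancellation in the sum, and that is precisely the step you leave open.

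The missing idea is a finer invariant than the period, namely the Newton degree of a column. With $H=Q-1$, which has order at least one, $Q^t=(1+H)^t=\sum_j\binom tjH^j$ gives
\[
 T(t,k)=\sum_{j=0}^{k}a_j\binom tj,\qquad a_j=[z^k]H^j,\quad a_k=\beta^k\ne0 .
\]
The paper then uses $(1+x)^h=1+x^h$ over $\mathbb F_p$, that is, $\binom{t+h}j-\binom tj=\binom t{j-h}$. So the $h$-difference equals $\sum_{j=h}^{k}a_j\binom t{j-h}$, whose top Newton coefficient $a_k$ is nonzero. By uniqueness of Pascal coordinates it is not the zero sequence, and $h$ is not a period. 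With this degree statement, and no induction at all, your convolution route also closes. The $a=1$ term has Newton degree exactly $k-h$ with top coefficient $\beta^{k-h+1}\ne0$, and every term with $a\ge2$ has strictly smaller degree. Without such a statement, the proposal proves only the first claim.
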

\begin{proof}
Frobenius gives $Q(z)^{p^L}=Q(z^{p^L})$, which has no positive
degree below $p^L$. Multiplying by $Q(z)^t$ proves the first claim.

For the second, put $H=Q-1$. Expanding $(1+H)^t$ gives
\[
 T(t,k)=\sum_{j=0}^{k}a_j\binom tj,
 \qquad a_j=[z^k]H^j,\quad a_k=([z]Q)^k\ne0.
\]
For $k\ge1$ put $h=p^{L-1}\le k<p^L$.
Vandermonde's identity in characteristic $p$ gives
\[
 T(t+h,k)-T(t,k)=\sum_{j=h}^k a_j\binom t{j-h}.
\]
Its highest Newton coefficient is $a_k\ne0$, so it is a
nonzero sequence by triangular Pascal inversion. Thus $h$ is
not a period. Since the least period divides $p^L$, it must
equal $p^L$. At $k=0$ the column is constant one.
\end{proof}

\Needspace{5\baselineskip}
\begin{theorem}[The binomial jump is exact]\label{thm:jump}
For $k\ge1$ the period of $t\mapsto\binom tk\bmod p$ is \emph{exactly} $p^L$,
$L=\lfloor\log_pk\rfloor+1$.
\end{theorem}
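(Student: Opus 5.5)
The plan is to deduce the statement from Theorem~\ref{thm:colper} with $Q(z)=1+z$. For this choice $T(t,k)=[z^k](1+z)^t=\binom tk\bmod p$. The constant term of $Q$ is one and $[z]Q=1\ne0$. The hypotheses of the exactness part of Theorem~\ref{thm:colper} therefore hold, and it gives least period $p^{\lceil\log_p(k+1)\rceil}$ for every $k\ge1$.

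The remaining step is the exponent identity $\lceil\log_p(k+1)\rceil=\lfloor\log_pk\rfloor+1$ for $k\ge1$. Let $L=\lfloor\log_pk\rfloor+1$. By definition $p^{L-1}\le k<p^L$, so $p^{L-1}<k+1\le p^L$. Hence $L-1<\log_p(k+1)\le L$, and the ceiling equals $L$. This also matches the choice $h=p^{L-1}\le k<p^L$ made in the proof of Theorem~\ref{thm:colper}, so no reindexing is needed.

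For a self-contained check of the only delicate point, which is exactness rather than the upper bound, I would repeat that argument directly. Frobenius gives $(1+z)^{p^L}=1+z^{p^L}$, so $\binom{t+p^L}{k}=\binom tk$ because $k<p^L$. For minimality, Vandermonde gives
\[
\binom{t+h}{k}-\binom tk=\sum_{j=1}^{k}\binom hj\binom t{k-j}.
\]
Lucas' theorem kills $\binom hj$ unless $j\in\{0,h\}$, since $h=p^{L-1}$. The difference is therefore $\binom t{k-h}$, and this is nonzero at $t=k-h$. So $h$ is not a period. Every divisor of $p^L$ other than $p^L$ itself divides $h$, so the least period is $p^L$.

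I expect no real obstacle here. The only care needed is the boundary case where $k$ is an exact power of $p$, which the floor/ceiling identity above already covers.
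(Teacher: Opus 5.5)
Your proof is correct, but it takes a different route from the paper's own proof of this theorem.

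\textbf{The paper's route.} The paper uses Theorem~\ref{thm:colper} only for the upper bound, namely that the period divides $p^L$. It then excludes $p^{L-1}$ by a digit computation at the single point $t=k$. Writing $k=d\,p^{L-1}+r$, Lucas' theorem gives $\binom kk=1$. On the other hand, $\binom{k+p^{L-1}}{k}\equiv d+1$ when $d\le p-2$, and $\binom{k+p^{L-1}}{k}\equiv 0$ when $d=p-1$, because then the addition carries out of the top digit.

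\textbf{Your route.} Your main argument observes that the statement is exactly the case $Q=1+z$ of the exactness clause of Theorem~\ref{thm:colper}. After that, only the exponent identity $\lceil\log_p(k+1)\rceil=\lfloor\log_pk\rfloor+1$ remains. Your self-contained check is that theorem's Vandermonde argument specialized to $Q=1+z$: there $a_j=[j=k]$, so the difference collapses to a single term.

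\textbf{Comparison.} Your route gives a closed form for the whole difference sequence, $\binom{t+h}{k}-\binom tk\equiv\binom t{k-h}$, with the explicit witness $t=k-h$. It needs no case split on whether the leading base-$p$ digit of $k$ equals $p-1$, and it is the form that extends to every $Q$ with $[z]Q\ne0$. The paper's route uses nothing beyond Lucas' theorem at one point, at the cost of the carry case analysis. The two agree: at $t=k$ your formula gives $\binom{k}{k-h}\equiv d\not\equiv 0$, which matches the paper's values $(d+1)-1$ and $0-1$ in its two cases. Both arguments finish the same way: a least period that divides $p^L$ but does not divide $p^{L-1}$ must equal $p^L$.
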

\begin{proof}
``Divides'' is Theorem~\ref{thm:colper}. To rule out $p^{L-1}$: write
$k=d\,p^{L-1}+r$, $1\le d\le p-1$, $0\le r<p^{L-1}$; Lucas gives $\binom kk=1$. If
$d\le p-2$, then $k+p^{L-1}$ has digit $d+1$ at position $L-1$ and Lucas gives the factor
$\binom{d+1}d=d+1\not\equiv1$. If $d=p-1$ the addition carries, the digit becomes $0$, and
Lucas gives $\binom0{p-1}=0$. Either way the value moves.
\end{proof}

\Needspace{5\baselineskip}
\begin{corollary}[Rule 90 mod $p$]\label{cor:90p}
For $k\ge1$, the depth--$2k$ diagonal has exact period $p^{\lfloor\log_pk\rfloor+1}$: the period
multiplies by $p$ exactly when $m/2$ crosses a power of $p$. Mod~$2$, the doubling
staircase; mod~$3$, the measured profile $1,1,3,1,3,1,9,1,9,\dots$ with $9$ for
$k=3..8$ and $27$ from $k=9$ --- exactly $p^{\lfloor\log_3k\rfloor+1}$
(Fig.~\ref{fig:stairs}).

The Frobenius step is standard in $\mathbb F_p[y]$; see \cite{lidlnied}.
\end{corollary}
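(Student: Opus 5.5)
The plan is to read the claim off Theorem~\ref{thm:multcol} and then apply Theorem~\ref{thm:jump}.

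For Rule~90 we have $p_{90}=a+c$, so $\alpha=\gamma=1$ and $\beta=0$. This gives $Q(z)=1+z^2$ and $D_m(t)=[z^m](1+z^2)^t$. Every power of $Q$ involves only even powers of $z$. So the odd-depth diagonals vanish identically and have period one. At depth $m=2k$ we get $D_{2k}(t)=[w^k](1+w)^t=\binom tk\bmod p$. This is the same computation as in Theorem~\ref{thm:90}, now carried out over $\mathbb F_p$ instead of $\FF$.

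Theorem~\ref{thm:jump}, applied to $k\ge1$, then gives the exact period $p^{\lfloor\log_pk\rfloor+1}$ directly. An alternative route is Theorem~\ref{thm:colper} with $Q(w)=1+w$, whose linear coefficient is nonzero. It gives $p^{\lceil\log_p(k+1)\rceil}$, and this agrees with the previous exponent because $\lceil\log_p(k+1)\rceil=\lfloor\log_pk\rfloor+1$ for every integer $k\ge1$. The exponent increases exactly when $k=m/2$ reaches a power of $p$, which is the stated multiplication rule.

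For the mod-$3$ profile I would list $m=0,1,2,\dots$. Depth $0$ is the constant $1$, so its period is $1$. Odd depths are zero, period $1$. Depths $m=2,4$ correspond to $k=1,2$ and have period $3$. Depths $6$ through $16$ correspond to $k=3,\dots,8$ and have period $9$, and $k\ge9$ gives $27$ until $k=27$. This matches $1,1,3,1,3,1,9,1,9,\dots$.

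The only delicate point is bookkeeping. One has to check that the coefficient extraction in Theorem~\ref{thm:multcol} matches the cut $D_m(t)=A(t,t-m)$ of Definition~\ref{def:diag}, that is, $P(y)=yQ(y^{-1})$ with the correct orientation. One must also treat the boundary cases $k=0$ and odd $m$ separately, since Theorem~\ref{thm:jump} only covers $k\ge1$.
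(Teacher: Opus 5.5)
Your proposal is correct and matches the paper's proof. Both arguments use $Q(z)=1+z^2$ from Theorem~\ref{thm:multcol} to identify the depth-$2k$ diagonal with $\binom tk \bmod p$, apply Theorem~\ref{thm:jump} for $k\ge1$, and treat $k=0$ and the odd depths separately, each with period one. Your side route through Theorem~\ref{thm:colper} with $Q(w)=1+w$ is also valid, since $\lceil\log_p(k+1)\rceil=\lfloor\log_pk\rfloor+1$ for every integer $k\ge1$.
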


\begin{proof}
After shifting to the moving cone, Rule 90 has $Q(y)=1+y^2$.
The depth-$2k$ diagonal is $[y^{2k}](1+y^2)^t=\binom tk$.
Theorem~\ref{thm:jump} therefore gives the stated least period for $k\ge1$.
At $k=0$ the diagonal is constant one, with least period one; odd-depth
diagonals are zero, also with least period one.
\end{proof}

\subsection{The binomial spectrum of an arbitrary diagonal}\label{sec:spec}

Theorem~\ref{thm:multcol} covers the additive core. To probe every other rule with the
same lens, invert the construction: instead of ``which column is this diagonal?'', ask
``which \emph{combination} of columns is this diagonal?''. The Pascal matrix
$B[m,k]=\binom mk$ is lower triangular with unit diagonal, so $\det B=1$ and $B$ is
invertible over $\ZZ_M$ for every modulus.

\Needspace{5\baselineskip}
\begin{theorem}[Newton spectrum]\label{thm:spectrum}
For any modulus $N\ge2$ and sequence $s(0),s(1),\ldots$, there
is a unique locally finite expansion
\[
 s(t)=\sum_{j=0}^{t}c_j\binom tj\pmod N,
 \qquad c_j=\sum_{i=0}^j(-1)^{j-i}\binom ji s(i).
\]
Each finite prefix has the same coefficients below its length.
\end{theorem}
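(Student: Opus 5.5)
The plan is to reduce the statement to invertibility of the Pascal matrix over $\ZZ_N$ on every prefix, together with the explicit inverse given by binomial inversion. The key input is Theorem~\ref{thm:pascal}: on a window of length $n$ the matrix $Z_n=(\binom tj\bmod N)_{0\le t,j<n}$ is lower triangular with unit diagonal. Its determinant is therefore one, and Theorem~\ref{thm:universal} gives a unique coefficient vector $c^{(n)}\in\ZZ_N^n$ with $Z_nc^{(n)}=(s(0),\dots,s(n-1))$. Because $\binom tj=0$ for $j>t$, each row $t$ of the equation reads $s(t)=\sum_{j\le t}c_j\binom tj$, which is exactly the locally finite form asserted.

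Compatibility across windows is handled by prefix restriction, as in Proposition~\ref{prop:dyadic}. The leading $n\times n$ block of $Z_{n+1}$ is $Z_n$, and its first $n$ rows involve only the first $n$ coordinates. By uniqueness, $c^{(n+1)}$ restricted to $[0,n)$ equals $c^{(n)}$. The vectors $c^{(n)}$ therefore define a single infinite sequence $(c_j)$. This sequence is unique, since any expansion of the stated form restricts to a solution on every prefix. This settles the final sentence of the statement as well.

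The explicit coefficient formula is where some care is needed. I would prove $Z^{-1}[j,i]=(-1)^{j-i}\binom ji$ over $\ZZ$ and then reduce modulo $N$, which is legitimate because all entries are integers and no division occurs. The identity to check is
\[
\sum_{i}(-1)^{j-i}\binom ji\binom it=\binom jt\sum_{u}(-1)^{j-t-u}\binom{j-t}u=\binom jt(1-1)^{j-t}=[j=t],
\]
using $\binom ji\binom it=\binom jt\binom{j-t}{i-t}$. Equivalently, one can apply the forward difference $\Delta^j$ at zero to both sides and use $\Delta\binom tj=\binom t{j-1}$. A left inverse of a square unitriangular matrix is also a right inverse, so this matrix is the inverse. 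The only real subtlety is to avoid any field-based argument when $N$ is composite, and working over $\ZZ$ before reducing handles that.
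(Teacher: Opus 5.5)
Your proposal is correct and follows essentially the same route as the paper: the Pascal matrix is unitriangular over $\ZZ_N$, its inverse has entries $(-1)^{j-i}\binom ji$, verified via $\binom ji\binom it=\binom jt\binom{j-t}{i-t}$ and $(1-1)^{j-t}$, and the finite inversions agree on overlapping prefixes. Your extra remarks make explicit two points the paper's short proof leaves implicit: you verify the identity over $\mathbb Z$ before reducing modulo $N$, and you note that a one-sided inverse of an invertible square matrix is two-sided.
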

\begin{proof}
The Pascal matrix is unitriangular over $\ZZ_N$. Its inverse
has entries $(-1)^{j-i}\binom ji$: their product with Pascal
reduces by $\binom ji\binom ir=\binom jr\binom{j-r}{i-r}$
to $(1-1)^{j-r}$, giving the identity. Each coefficient depends
only on $s(0),\ldots,s(j)$, so the finite inversions agree on
overlaps and define the asserted sequence.
\end{proof}

The finite Newton spectrum also determines the exact period, once
its largest surviving coefficient is known. This refines a degree
ceiling such as Corollary~\ref{cor:fibperiod} without requiring an
explicit value for every earlier coefficient.

\Needspace{5\baselineskip}
\begin{proposition}[Exact order and period of a finite Newton spectrum]
\label{prop:exactnewtonperiod}
Let $p$ be prime and let
$s(t)=\sum_{j=0}^{d}c_j\binom tj$ over $\mathbb F_p$, with
$c_d\ne0$. Its minimal monic constant-coefficient recurrence
polynomial is $(X-1)^{d+1}$, and its least period is
\[
 p^{\lceil\log_p(d+1)\rceil}.
\]
The zero sequence is assigned recurrence order zero and least
period one.
\end{proposition}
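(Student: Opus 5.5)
The plan is to use the forward-difference operator $\Delta s(t)=s(t+1)-s(t)$, which acts on the Newton basis by $\Delta\binom tj=\binom t{j-1}$. Shifting is $E=1+\Delta$, so a recurrence polynomial $\Phi(X)$ annihilates $s$ exactly when $\Phi(E)s=0$. For $\Phi=(X-1)^{e}$ this reads $\Delta^{e}s=0$.

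By Theorem~\ref{thm:spectrum} and the Pascal action, $\Delta^{e}s=\sum_{j\ge e}c_j\binom t{j-e}$. This vanishes for $e=d+1$. For $e=d$ it has constant Newton coefficient $c_d\ne0$, so it is nonzero by uniqueness of the spectrum. Hence $(X-1)^{d+1}$ annihilates $s$ and $(X-1)^d$ does not.

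For minimality, note that the annihilating polynomials of a sequence over the field $\mathbb F_p$ form an ideal of $\mathbb F_p[X]$, generated by the minimal polynomial. Since it divides $(X-1)^{d+1}$, it has the form $(X-1)^e$ with $e\le d+1$. The computation above excludes $e\le d$. I regard this as the only point needing care, since one must say why the minimal polynomial is a divisor. I would cite the standard ideal argument: if $\Phi$ and $\Psi$ both annihilate $s$, then so does any polynomial combination of them.

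For the period, set $q=p^L$ with $L=\lceil\log_p(d+1)\rceil$, so $q\ge d+1$. A period $P$ means $X^P-1$ annihilates $s$, which holds if and only if $(X-1)^{d+1}\mid X^P-1$.

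Write $P=p^a u$ with $p\nmid u$. In characteristic $p$ we have $X^P-1=(X^u-1)^{p^a}$, and $X^u-1$ has $1$ as a simple root because its derivative there is $u\ne0$. So the multiplicity of $1$ in $X^P-1$ is exactly $p^a$. The divisibility therefore holds exactly when $p^a\ge d+1$, and the least such $P$ is $q$. This also agrees with Theorem~\ref{thm:colper} applied to $\binom tj$, $j\le d$.

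In the degenerate case the zero sequence has $d$ undefined. It is annihilated by $1$, which gives recurrence order zero, and it has least period one by convention.
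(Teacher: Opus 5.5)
Your proof is correct. The first half is the same as the paper's: the shift identity $(E-I)\binom tj=\binom t{j-1}$, the nonvanishing $d$th difference $c_d$, and the observation that the minimal annihilator divides $(X-1)^{d+1}$.

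For the period you take a genuinely different route. The paper works inside the binomial calculus:
\begin{itemize}
\item it shows $q=p^{\lceil\log_p(d+1)\rceil}$ is a period via Frobenius and Vandermonde;
\item for $d\ge1$ it exhibits $s(t+q/p)-s(t)=\sum_{j\ge q/p}c_j\binom t{j-q/p}$, whose top Newton coefficient is $c_d\ne0$;
\item it concludes because the least period divides the prime power $q$, and it treats $d=0$ separately.
\end{itemize}
You instead feed the minimal polynomial back into the period question. $P$ is a period iff $(X-1)^{d+1}\mid X^P-1$. Writing $P=p^a u$ with $p\nmid u$, the identity $X^P-1=(X^u-1)^{p^a}$ and the simplicity of the root $1$ of $X^u-1$ show that $1$ has multiplicity exactly $p^a$. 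This is the classical fact that a linear recurring sequence has period equal to the order of its minimal polynomial, specialized to $(X-1)^{d+1}$.

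Your version gives more. It identifies the full set of periods, namely exactly the multiples of $q$. It covers $d=0$ without a case split. It also avoids the auxiliary facts the paper uses tacitly: that multiples of a period are periods and that the least period divides every period.

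The paper's version stays within the Newton-coefficient framework of Theorems~\ref{thm:colper} and~\ref{thm:jump}. It also produces an explicit nonzero witness showing that $q/p$ fails.
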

\begin{proof}
With $E$ denoting the time shift, Pascal's identity gives
$(E-I)\binom tj=\binom t{j-1}$. Hence $(E-I)^{d+1}s=0$,
whereas $(E-I)^d s=c_d\ne0$. The minimal monic annihilator
divides every annihilating polynomial, by polynomial division;
as a divisor of $(X-1)^{d+1}$ it is a power of $X-1$.
The nonvanishing $d$th difference therefore makes its order
exactly $d+1$.

Put $q=p^{\lceil\log_p(d+1)\rceil}$. Frobenius and
Vandermonde's identity show that $q$ is a period. For $d\ge1$
let $h=q/p\le d$. The same identities give
\[
 s(t+h)-s(t)=\sum_{j=h}^{d}c_j\binom t{j-h}.
\]
Its highest Newton coefficient is $c_d$, so triangular inversion
makes it a nonzero sequence. Thus $h$ is not a period. The least
period divides $q$, a prime power, and must consequently equal
$q$. For $d=0$ the sequence is a nonzero constant.
\end{proof}

For example, the integer Rule 30 diagonal at depth two is
\[
 D_2(t)=t^2=\binom t1+2\binom t2.
\]
Modulo two, the second coefficient vanishes: the exact recurrence
order is two and the period is two. Modulo three it survives:
the order and period are both three. The integer degree remains
two in either calculation. This illustrates precisely where a
modular period can improve on a bound obtained from the integer
lift alone. For binary Rule 30 diagonals, the proposition recovers
the support-period identity of \cite{U2} with $d=\max S_{m+1}$.

\Needspace{5\baselineskip}
\begin{remark}[Other column families]
For $Q(y)=1+y+\cdots+y^{q-1}$ and $q\ge3$, specify the finite matrix
$B[t,k]=[y^k]Q(y)^t$, with $0\le t<n$ and $0\le k\le(q-1)(n-1)$.
Its rows have distinct degrees $(q-1)t$ and leading coefficient one, so it
has row rank $n$: every length-$n$ sequence has a decomposition. For $n>1$
there are more columns than rows, so the decomposition is not unique.
These conclusions concern this rectangular matrix, not every square
truncation of a $q$-nomial column family. Nothing below uses this case.
\end{remark}

Spectrum weight measures sparsity in this particular basis. It does
not measure sparsity of the sampled values or distinguish additive
from nonlinear dynamics. For example, Rule 102 is
additive, but its depth-zero diagonal is $s(0)=1$ and $s(t)=0$
for $t>0$. Its binary Newton coefficients are all one, by
Theorem~\ref{thm:spectrum}. Thus an additive rule can have full
spectrum density, and a sequence with an eventually constant tail
can have dense Newton coordinates.

At $W=256$, each nonzero Rule 90 diagonal has one spectrum
coefficient. Rule 150 generally needs more: already its depth-two
column is $\binom t1+\binom t2$ over $\FF$.
The plotted Rule 30 and Rule 86 spectra have many nonzero entries
at the larger measured depths. For Rule 30 they are exactly
$S_{m+1}\cap[0,256)$. Periods and spectrum weights describe
different aspects of these specified finite sequences.

\begin{figure}[htbp]\centering
\EmbeddedFigure{\textwidth}{8}
\caption{Left: the number of nonzero Newton
coefficients in binary windows of length $256$. Rule 90 has one at
even depth and zero at odd depth; only its even depths are drawn.
The other curves show Rules 30, 86 and 150, with $128$ marked as a
reference. Right: the finite detection criterion needs at least
$3\cdot512=1536$ terms for a period-$512$ tail with zero transient.
For Rule 45 at depth $14$, $W=700$ yields no detected period,
whereas $W=2500$ gives period $512$ and transient zero. The bars
show the available observation lengths, not an infinite-orbit proof.}
\label{fig:spec}
\end{figure}

\section{Scanning the full rule space}\label{sec:scan}

The exact additive formulas provide a reference for the full
elementary-rule comparison. Applying the same finite criterion to
all 256 rules makes the differences in orientation and modulus
visible within a common window. The resulting categories summarize
what that window contains. Their interpretation therefore includes
the observation bounds, particularly for diagonals whose detected
periods are still growing or exceed the available range.

The scan computes the strict period profile of every rule $0..255$, seed $\{1\}$, window
$700$, diagonals $m=0..64$, and sorts into five classes: \emph{constant} (every detected
period $1$); \emph{$p$-power bounded} (all detected periods powers of $p$, none reaching
$p^3$); \emph{$p$-power growing} (reaching $p^3$ or beyond); \emph{mixed} (some detected
period not a power of $p$); \emph{window exceeded} (at least half the diagonals are unresolved).

\Needspace{5\baselineskip}
\begin{definition}[Finite-window period criterion]\label{def:crit}
For a length-$W$ sample $s$, test $q=1,\ldots,\lfloor W/3\rfloor$.
Put $E_q=\{t:0\le t<W-q,\ s(t)\ne s(t+q)\}$ and
\[
 \tau_q=\begin{cases}
 0,&E_q=\varnothing,\\
 1+\max E_q,&E_q\ne\varnothing.
 \end{cases}
\]
The detected period is the smallest $q$ with
$2\tau_q\le W$ and $3q\le W-\tau_q$; otherwise the sample is
unresolved. Apply this to depths $0,\ldots,M$.
A rule is \emph{window-exceeded} if at least half the samples
are unresolved. Otherwise it is \emph{constant} if all detected
periods equal one; \emph{mixed} if one is not a power of $p$;
\emph{$p$-power growing} if all are powers of $p$ and their
maximum is at least $p^3$; and \emph{$p$-power bounded} otherwise.
The words ``bounded'' and ``growing'' name these thresholds on
this window, not asymptotic properties.
\end{definition}

\Needspace{5\baselineskip}
\begin{remark}[Classification mod 2]
$146$ constant, $81$ bounded, $25$ growing, $4$ window--exceeded, and \textbf{the mixed
class is empty}: every detected period in this scan is a power of two.
For the 16 left-linear rules, Corollary~\ref{cor:ppow} separately
proves a power-of-two restriction on their actual eventual periods.
The census describes the detected periods for all 256 rules; those
two statements have different quantifiers. The growing
class is
\begin{multline*}
 \{18,22,26,60,82,86,90,105,124,126,129,146,149,150,154,\\
 161,165,167,181,182,193,195,210,218,225\},
\end{multline*}
containing the right--active additive rules $60,90,150$, their affine complements
($105,165,195,\dots$), and the nonlinear rules $18,22,26,82,126,146,154,182$. The window--exceeded class is exactly
$\{30,45,75,135\}$ --- all in the left-linear family. In binary arithmetic, 135 is
complement-conjugate to 30, and 45 and 75 are mirrors. Remark~\ref{war:census} specifies the data behind these lists
(Fig.~\ref{fig:scan}).
\end{remark}

\Needspace{5\baselineskip}
\begin{remark}[Classification mod 3]
The growing class is
\begin{multline*}
\{18,22,26,60,82,89,90,105,123,\\
146,150,154,182,210,218,225\}.
\end{multline*}
The window--exceeded class is again exactly $\{30,45,75,135\}$. The intersection of the two measured growing lists consists of
\begin{multline*}
\{18,22,26,60,82,90,105,\\
146,150,154,182,210,218,225\}.
\end{multline*}
These fourteen have the growing label at both tested primes; rules $89$ and $123$ join only mod~$3$. A large mixed class appears mod~$3$
($115$ rules in the rerun), led by the affine rules: Rule~165 measures
$\{2,2,6,2,6,2,18,2,18,\dots\}=2\cdot3^{\,j}$ with zero transients. Modulo three, the uniform background of Rule 165 alternates
$0\to1\to0$, since $f(0,0,0)=1$ and $f(1,1,1)=0$. For this rule the affine solution is the Rule 90 orbit plus
the alternating uniform background. Let the binomial term have least period $P$, a power of three.
If $q$ is a period of the affine sum, then $2q$ is a period of
the binomial term, so $P\mid q$. The binomial term is therefore
unchanged by the shift $q$; the background forces $q$ to be even.
Conversely $2P$ is a period. The least period is thus $2P$,
including $P=1$ for a constant or zero binomial term.
This argument is particular to Rule 165; a background factor need
not survive in every diagonal of an arbitrary rule. (Rerun: Rule~165 mod~$3$
gives exactly $2,2,6,2,6,2,18,2,18,\dots$.)
\end{remark}

\begin{figure}[htbp]\centering
\EmbeddedFigure{\textwidth}{9}
\caption{The finite period census for all $256$ elementary
rules, with $W=700$ and depths $0\le m\le64$. In each panel the cell
at row $a$ and column $b$ represents $R=16a+b$. Colors are the five
finite classes defined in Section~\ref{sec:scan}; they do not assert
asymptotic growth. The four window-exceeded rules are
$30,45,75,135$ in both panels. The mixed class occurs modulo three.
The grid makes individual rules locatable within the full census.}
\label{fig:scan}
\end{figure}

\noindent\textbf{Finite scope.} The census labels refer to the explicit period
detection criterion on the stated finite depth and width, not to asymptotic
classes of infinite orbits. The census covers all 256 rules modulo
2 and 3, at width 700 through depth 64.
Failure to detect a non-dyadic period on that window is not an all-depth theorem.

\section{Scope of the finite census}

\Needspace{5\baselineskip}
\begin{remark}[Finite scope]\label{war:census}
The comparison uses the single seed, moduli two and three, $W=700$
and $M=64$, giving 512 period profiles. The local polynomial is
Definition~\ref{def:lift}; the rotated update and evolving background
are specified in Section~\ref{sec:recurrent}. For each sampled word,
test candidate periods $1\le q\le\lfloor W/3\rfloor$ in increasing
order and set
\[
 \tau_q=1+\max\bigl(\{-1\}\cup
 \{t:0\le t<W-q,\ s_t\ne s_{t+q}\}\bigr).
\]
Accept the first $q$ with $2\tau_q\le W$ and $3q\le W-\tau_q$,
as in Definition~\ref{def:crit}. If no candidate qualifies, the
window does not determine a period by this criterion.
One profile takes $O((M+1)W^2)$ residue comparisons after
$O(W(M+1))$ local updates. The samples occupy
$O(W(M+1)\log p)$ bits; scanning a candidate needs only a constant
number of additional indices. Factoring a composite modulus is a
separate cost.
\end{remark}

\Needspace{5\baselineskip}
\begin{remark}[Physical boundaries]\label{rem:boundary}
The rotated recurrence avoids artificial spatial edges. A direct
physical calculation can instead start on a sufficiently wide
interval and shrink it by one cell at each end after each update.
Every retained cell then has its complete causal past. Freezing a
narrow boundary can change both transients and period labels,
especially for rules with nonzero background.
\end{remark}

\section{Findings}

\subsection{Two growth regimes, one skeleton}
The exact additive examples have period proportional to depth up
to the jumps between successive prime powers. Their period
logarithms therefore grow logarithmically. The Rule 30 and Rule 45
profiles rise faster over the displayed range and contain plateaus.
Theorem~\ref{thm:fib} proves the Fibonacci interpolation
bound, and Corollary~\ref{cor:fibperiod} supplies an exponential
period ceiling for Rule 30. Neither is an asymptotic equality.
The unboundedness results in \cite{U2} likewise do not determine
the spacing of successive period jumps.

\subsection{Orientation matters: mirror pairs split}
Slope--$1$ diagonals probe the right--moving side of the cone, so the classification is
not mirror--invariant, and the splits are informative. Rule~30 (window exceeded) against
its mirror Rule~86 ($2$-power growing): the right diagonals of one are the left diagonals
of the other, so the split records faster detected growth on the right edge
than on the left over this window --- the slow side being the well--known regularity of Rule~30's
left border, described by the left--column system of the companion--function paper. Rule~60 (growing) against
its mirror $102$ (constant): Rule~102's activity leans entirely left, so every
right--moving diagonal exits its cone and freezes. Rules~45 and~75 mirror each other and
both exceed the window: unresolved at sufficiently many depths on both cuts in this experiment.
Reflection preserves the automaton dynamics while exchanging the
diagonal families being compared.

\subsection{Finite--window census}
For clarity, the five categories give the following complete counts:
\begin{table}[t]\centering
\caption{The finite period census, $W=700,M=64$.}\label{tab:census}
\fontsize{9}{10.5}\selectfont
\begin{tabular}{|l|r|r|}
\hline
Detected-period category&mod 2&mod 3\\
\hline
Constant&146&104\\
$p$-power bounded&81&17\\
$p$-power growing&25&16\\
Mixed&0&115\\
Window-exceeded&4&4\\
\hline
\end{tabular}
\end{table}
These are counts at $W=700,M=64$ with the order of tests in
Definition~\ref{def:crit}. The common growing list has fourteen
rules at these two primes. The same four rules exceed the window.
Changing the depth or width can change a label.

\section{Concluding remarks}
The all-window basis problem has a concrete binary answer: exactly
24 elementary rules in the stated single-seed construction are
universal, and each also works over every modulus. The proof turns
an infinite family of matrix tests into local triangularity and
unit conditions. Two truth-table bits decide the first condition;
the center values decide the second. The modular solver then
constructs the coordinates under those conditions, while the
remaining odd-prime classification is reduced to eight candidates.
This reduction is the main structural outcome of the paper.

The comparison of bases then requires more than a count of nonzero
coefficients. Pascal uniquely converts OR convolution to pointwise
multiplication among triangular binary coordinate maps. The same
transform sends increment to strict prefix summation, giving an explicit
description of both operations needed by the support recurrence.
The examples of inverse transforms show why a dense matrix can
still have a simple evaluation procedure. Conversely, a shorter
coordinate vector does not by itself bound the work needed to
construct or use it.

The recurrence interpretation adds a second consequence of Newton
coordinates. The Rule 30 polynomial system yields a Fibonacci
bound on interpolation order, and binomial periodicity converts
that bound into prime-modulus period control. Exact additive
periods provide cases in which the relation between support and
period can be followed completely. More generally,
Proposition~\ref{prop:exactnewtonperiod} identifies both the exact
recurrence order and the prime-modulus period from the highest
surviving Newton coefficient. It separates integer degree from the
order remaining after modular cancellation. Together, these results explain
which aspects of the diagonal arrays come from a general
coordinate identity and which depend on the chosen rule.

The finite comparisons also delimit possible optimization claims.
Pascal weight need not be minimal among the tested bases, even
when the complete Pascal support is contained in the observation
window. A proposed optimality statement must therefore specify
the target, the admissible bases and the cost being minimized.
Likewise, the census modulo two and three describes period
profiles at its stated depth and width; changes beyond those
windows remain possible.

The classification and the comparisons point to different next
steps. Resolving the eight remaining odd-prime candidates would
extend the all-window theorem. An optimization theorem would need
to control the cost of coordinate conversion and evaluation as
well as coordinate weight. Extending a finite period pattern
requires a recurrence argument at all depths. The results here
provide a defined family of bases and exact operator identities
on which to formulate those questions; the counterexamples and
window comparisons specify where additional hypotheses are needed.

\section{Further work}

The classification reduces the remaining single-seed universality
question to the eight rules in $\mathcal E$ at odd primes. The observed
Rule 84 pattern modulo three requires an invariant or an all-time
recurrence proof. A finite nonzero prefix is insufficient. Once the
prime cases are settled, Corollary~\ref{cor:crt} transfers the unit
criterion to composite moduli; Proposition~\ref{prop:crtperiod}
separately describes how exact periods combine at prime powers.

Proposition~\ref{prop:finiteseed} extends sixteen rules to arbitrary
finite right-hand seeds with a unit left endpoint. The next seed
classification should address the remaining rules and identify how
neighboring seed values can destroy a unit diagonal. For larger
neighborhoods, a comparable local polynomial identity would give a
useful rule-level criterion for staircase matrices. The linear-algebra
theorem already applies once those hypotheses are supplied.

Optimization requires information about both the target and the chosen
window. One aim is to bound $\min_{R\in\mathcal U}w_R(D_m)$ for an
explicit relation between $n$ and $m$. Almost-all counting arguments
cannot establish such bounds for the particular Rule 30 targets.
The counterexample at $m=30,n=4096$ rules out universal weight
minimality of Pascal, while the Rule 124 example at $m=4$ and the
Rule 188 example at $m=30$ exhibit different cancellations. Explaining
those patterns may identify the windows on which Pascal is a minimizer
and whether any eventual comparison holds for fixed depth as $n$ grows.
Finite support in one basis does not itself imply stabilization of
weights in another.

Finally, coordinate weight should be accompanied by transform cost.
For Rule 124, the measured inverse density and failure of one dyadic
factorization leave other fast transforms possible. A new factorization,
or a lower bound in a specified computational model, would resolve
more than another density measurement. Neither the nilpotence
obstruction nor computational universality of a cellular automaton
supplies that finite-matrix lower bound.

\section*{Declarations}

\noindent\textbf{Funding.}
This research received no funding. The author is an independent researcher.

\noindent\textbf{Competing interests.}
The author declares no competing interests.

\end{document}